\documentclass[11pt]{article}
\pdfoutput=1 %for arxiv

\usepackage[utf8]{inputenc}
\usepackage[margin=1in]{geometry}
\geometry{letterpaper}

\usepackage{amsfonts,amsmath,amssymb,amsthm,mathtools,stmaryrd}
\usepackage{dsfont}

\usepackage{tocloft}
\usepackage[nottoc]{tocbibind} %Bibliography in ToC

\usepackage{enumitem}
\setlist{listparindent=\parindent,parsep=0pt,itemsep=1em}
\setlist[itemize]{label=$-$,noitemsep}
\setlist[enumerate]{itemsep=1mm}
\setlist[description]{leftmargin=\parindent}

%allows us to use \textsc in section titles
\usepackage{bold-extra}

\usepackage{float}
\allowdisplaybreaks
\usepackage{microtype,xspace,bm}
\usepackage{doi}
\usepackage{hyperref}
\usepackage[dvipsnames]{xcolor}
\hypersetup{
	colorlinks=true,
	urlcolor=MidnightBlue,
	linkcolor=MidnightBlue,
	citecolor=MidnightBlue,
	unicode
}
\pdfstringdefDisableCommands{%
\def\cap{∩ }
}

\usepackage[capitalise,nameinlink,noabbrev]{cleveref}
\crefname{equation}{}{}

%Package for restating theorems
\usepackage{thmtools}
\usepackage{thm-restate}

%%% FIGURES SETTINGS
\usepackage{tikz}
\usepackage{graphicx}
\usetikzlibrary{positioning,arrows.meta,math,shapes.geometric,decorations.pathmorphing,decorations.pathreplacing,calligraphy}
%%%%%%%%%%%%%%
%%PARAMETERS%%
%%%%%%%%%%%%%%
\newcommand{\spacex}{1.1}
\newcommand{\spacey}{1.1}
\newcommand{\labelspace}{1.5}

\newlength{\nodeline}
\setlength{\nodeline}{0.34mm}
\newlength{\arrowline}
\setlength{\arrowline}{0.4mm}

%%%%%%%%%%%%%
%%COLORS%%%%%
%%%%%%%%%%%%%
\definecolor{color_gadget_PHP}{RGB}{219, 48, 122}
\colorlet{color_gadget_PHP_inner}{color_gadget_PHP!10!white}
\colorlet{color_gadget_PHP_label}{color_gadget_PHP!70!black}

\definecolor{color_gadget_SOPL}{RGB}{255, 153, 0}
\colorlet{color_gadget_SOPL_inner}{color_gadget_SOPL!10!white}
\colorlet{color_gadget_SOPL_label}{color_gadget_SOPL!70!black}

\definecolor{color_gadget_PATHPHP}{RGB}{0, 153, 255}
\colorlet{color_gadget_PATHPHP_inner}{color_gadget_PATHPHP!10!white}
\colorlet{color_gadget_PATHPHP_label}{color_gadget_PATHPHP!70!black}

\definecolor{color_gadget_ITER}{RGB}{153, 204, 0}
\colorlet{color_gadget_ITER_inner}{color_gadget_ITER!10!white}
\colorlet{color_gadget_ITER_label}{color_gadget_ITER!70!black}

\definecolor{color_gadget_EOPL}{RGB}{0, 153, 255}
\colorlet{color_gadget_EOPL_inner}{color_gadget_EOPL!10!white}
\colorlet{color_gadget_EOPL_label}{color_gadget_EOPL!70!black}

%%%%%%%%%%%
%%NODES%%%%
%%%%%%%%%%%
\tikzstyle{node} = [circle, line width=\nodeline, draw = black, fill = white, inner sep = 0mm, minimum size = 3.5mm]
\tikzstyle{node_small} = [node, circle, line width = 0.25mm, minimum size = 2.5mm]
\tikzstyle{solution} = [fill=red!90!,draw=black!50!red]
\tikzstyle{side} = [fill=Goldenrod,draw=Brown]
\tikzstyle{node_text} = [] % no_style node, to debug

\tikzstyle{node_regular} = [node]
\tikzstyle{node_regular_small} = [node_small]

\tikzstyle{node_solution} = [node, solution]
\tikzstyle{node_green} = [node, fill=Green!20!LimeGreen,draw=black!80!Green]
\tikzstyle{node_notice} = [node, draw = Green!20!LimeGreen, line width=2pt, fill=none, minimum size = 5.93mm,dotted]
\tikzstyle{node_solution_small} = [node_small,solution]

\tikzstyle{node_a} = [node, side, rectangle, minimum size = 4.3mm]
\tikzstyle{node_a_solution} = [node_a, solution]
\tikzstyle{node_a_small} = [node_small, side, rectangle, minimum size = 2.15mm]
\tikzstyle{node_a_solution_small} = [node_a_small,solution]

\tikzstyle{node_b} = [node, side, diamond, minimum size = 6.2mm]
\tikzstyle{node_b_small} = [node_a_small, diamond, minimum size = 3.1mm]

\tikzstyle{node_notice_small} = [node_notice, line width=2*\nodeline, minimum size = 8mm]

\tikzstyle{naive} = [minimum size = 4.5mm,rectangle]
\tikzstyle{node_regular_intro} = [node,fill=Gray!10!white]

%%%%%%%%%%%
%%EDGES%%%%
%%%%%%%%%%%
\tikzstyle{edge} = [-{Latex[round]}, line width=\arrowline]

\tikzstyle{edge_regular} = [edge]
%\tikzstyle{edge_regular_small} = [-{Latex[round]}, line width = 0.25mm, shorten < = 3pt, shorten >=3pt]
\tikzstyle{edge_regular_small} = [-{Latex[round]}, line width = 0.25mm]

\tikzstyle{edge_php} = [edge, color=color_gadget_PHP!70!black]
\tikzstyle{edge_php_small} = [edge_php, edge_regular_small]

\tikzstyle{edge_eopl} = [edge, color=color_gadget_EOPL!70!black]
\tikzstyle{edge_eopl_small} = [edge_eopl, edge_regular_small]

\tikzstyle{edge_iter} = [edge, color=color_gadget_ITER!70!black]
\tikzstyle{edge_iter_small} = [edge_iter, edge_regular_small]

\tikzstyle{edge_pathphp_small} = [line width = 0.25mm, -{Latex[round]}, decorate, decoration={snake, segment length=2.5mm, amplitude=1mm, pre length=7pt,post length=8pt}, shorten < = 3pt, shorten >=3pt, color=color_gadget_PATHPHP_label]

%%%%%%%%%%%%
%%ZONES%%%%%
%%%%%%%%%%%%
\tikzstyle{gadget} = [rounded corners, line width = 0.4mm, dashed]

\tikzstyle{gadget_PHP} = [gadget, draw = color_gadget_PHP, fill=color_gadget_PHP_inner]
\tikzstyle{gadget_PHP_small} = [gadget_PHP, line width = 0.2mm]

\tikzstyle{gadget_SOPL} = [gadget, draw = color_gadget_SOPL, fill=color_gadget_SOPL_inner]
\tikzstyle{gadget_SOPL_small} = [gadget_SOPL, line width = 0.2mm]

\tikzstyle{gadget_PATHPHP} = [gadget, draw = color_gadget_PATHPHP, fill=color_gadget_PATHPHP_inner]
\tikzstyle{gadget_PATHPHP_small} = [gadget_PATHPHP, line width = 0.2mm]

\tikzstyle{gadget_ITER} = [gadget, draw = color_gadget_ITER, fill=color_gadget_ITER_inner]
\tikzstyle{gadget_ITER_small} = [gadget_ITER, line width = 0.2mm]

\tikzstyle{gadget_EOPL} = [gadget, draw = color_gadget_EOPL, fill=color_gadget_EOPL_inner]
\tikzstyle{gadget_EOPL_small} = [gadget_EOPL, line width = 0.2mm] % some global settings such as color etc.

\usepackage[font=small,labelfont=bf]{caption} %customise text style on fig captions
\usepackage{subcaption}

\usepackage{stackengine}

\usepackage{centernot}

%%% THEOREM ENVIRONMENTS
\theoremstyle{definition}
\newtheorem{definition}{Definition}

\theoremstyle{plain}
\newtheorem{theorem}{Theorem}
\newtheorem{lemma}{Lemma}
\newtheorem{corollary}{Corollary}

\newtheorem{claim}{Claim}

\theoremstyle{remark}

\newtheorem{fact}{Fact}

\Crefname{claim}{Claim}{Claims}
\Crefname{definition}{Definition}{Definitions}

%%% CLASSES

\newcommand{\newclass}[2]{\newcommand{#1}{{\text{\upshape\sffamily #2}}\xspace}}
\renewcommand{\P}{{\text{\upshape\sffamily P}}\xspace}
\newclass{\NP}{NP}
\newclass{\coNP}{coNP}
\newclass{\FP}{FP}
\newclass{\TFNP}{TFNP}
\newclass{\PLS}{PLS}
\newclass{\PPA}{PPA}
\newclass{\PPAD}{PPAD}
\newclass{\PPADS}{PPADS}
\newclass{\PPP}{PPP}
\newclass{\PWPP}{PWPP}
\newclass{\CLS}{CLS}
\newclass{\EOPL}{EOPL}
\newclass{\SOPL}{SOPL}
\newclass{\UEOPL}{UEOPL}
\newclass{\cA}{A}
\newclass{\cB}{B}
\newclass{\BPP}{BPP}
\newclass{\queryD}{D}
\newclass{\queryNP}{C}

%%% PROBLEMS

\newcommand{\newprob}[2]{\newcommand{#1}{{\text{\upshape\scshape #2}}\xspace}}
\newprob{\eol}{EoL}
\newprob{\eolLong}{End-of-Line}
\newprob{\sol}{SoL}
\newprob{\solLong}{Sink-of-Line}
\newprob{\iter}{Iter}
\newprob{\sod}{SoD}
\newprob{\sodLong}{Sink-of-Dag}
\newprob{\kkt}{KKT}
\newprob{\pA}{A}
\newprob{\pB}{B}
\newprob{\grid}{Grid}
\newprob{\eopl}{EoPL}
\newprob{\eoplLong}{End-of-Potential-Line}
\newprob{\ueoplLong}{Unique-EoPL}
\newprob{\ueopl}{UEoPL}
\newprob{\eoml}{EoML}
\newprob{\eomlLong}{End-of-Metered-Line}
\newprob{\sopl}{SoPL}
\newprob{\soplLong}{Sink-of-Potential-Line}
\newprob{\pigeon}{Pigeon}
\newprob{\lonely}{Lonely}
\newprob{\reversiblepigeon}{RPigeon}
\newprob{\reversiblepigeonlong}{Reversible-Pigeon}
\newprob{\factoring}{Factoring}
\newprob{\nash}{Nash}
\newprob{\Or}{Or}

%% proof systems
\newcommand{\SA}{\ensuremath{\text{SA}}}

%%% OTHER COMMANDS
% the CUBE
\newcommand{\B}{\ensuremath{\{0,1\}}}
\newcommand{\T}{\ensuremath{\{0,1,*\}}}

\newcommand{\size}{\textup{size}}

\newcommand{\poly}{\textup{poly}}

\newcommand{\leaves}{\textup{leaves}}
\newcommand{\sols}{\textup{Sol}}
\newcommand{\nul}{\textup{\textsf{null}}}

\newcommand{\set}[1]{\ensuremath{\{#1\}}}
\newcommand{\st}{\medspace | \medspace}

\newcommand{\E}{{\mathbb{E}}}

\renewcommand{\gets}{\shortleftarrow}

\newcommand{\Z}{\mathbb{Z}}
\newcommand{\R}{\mathbb{R}}

\newcommand{\N}{\mathbb{N}}
\newcommand{\F}{\mathbb{F}}
\newcommand{\calR}{\mathcal{R}}
\newcommand{\calI}{\mathcal{I}}
\newcommand{\calF}{\mathcal{F}}

% To adjust bibliography spacing:
% \let\OLDthebibliography\thebibliography
% \renewcommand\thebibliography[1]{
%   \OLDthebibliography{#1}
%   \setlength{\parskip}{1pt}
%   \setlength{\itemsep}{0.5ex}
% }

\begin{document}

\newgeometry{margin=1.3in,top=1.7in,bottom=1in}

\begin{center}
{\huge Separations in Proof Complexity and $\TFNP$}
\\[9mm] \large

\setlength\tabcolsep{1em}
\begin{tabular}{cccc}
Mika G\"o\"os&
Alexandros Hollender&
Siddhartha Jain&
Gilbert Maystre\\[-1mm]
\small\slshape EPFL &
\small\slshape University of Oxford &
\small\slshape UT Austin &
\small\slshape EPFL
\end{tabular}

\vspace{1mm}
\begin{tabular}{ccc}
William Pires&
Robert Robere&
Ran Tao\\[-1mm]
\small\slshape Columbia University &
\small\slshape McGill University &
\small\slshape Carnegie Mellon University
\end{tabular}

\vspace{9mm}

\vspace{5mm}

\normalsize
\end{center}

\noindent
{\bf Abstract.}~
It is well-known that Resolution proofs can be efficiently simulated by Sherali--Adams~(SA) proofs. We show, however, that any such simulation needs to exploit huge coefficients: Resolution cannot be efficiently simulated by SA when the coefficients are written in unary. We also show that \emph{Reversible Resolution} (a variant of MaxSAT Resolution) cannot be efficiently simulated by Nullstellensatz (NS).

These results have consequences for total $\NP$ search problems. First, we characterise the classes $\PPADS$, $\PPAD$, $\SOPL$ by unary-SA, unary-NS, and Reversible Resolution, respectively. Second, we show that, relative to an oracle, $\PLS\not\subseteq\PPP$, $\SOPL\not\subseteq\PPA$, and $\EOPL\not\subseteq\UEOPL$. In particular, together with prior work, this gives a complete picture of the black-box relationships between all classical $\TFNP$ classes introduced in the 1990s.

\vspace{15mm}

\setlength{\cftbeforesecskip}{0pt}
\renewcommand\cftsecfont{\mdseries}
\renewcommand{\cftsecpagefont}{\normalfont}
\renewcommand{\cftsecleader}{\cftdotfill{\cftdotsep}}
\setcounter{tocdepth}{1}
\tableofcontents

\thispagestyle{empty}
\setcounter{page}{0}

\newpage
\restoregeometry

\section{Separations in Proof Complexity} \label{sec:intro-proof}

The main results of this work are two separations between standard propositional proof systems, as summarised in~\cref{fig:systems}. Moreover, these results can be further interpreted as black-box separations in the theory of total~$\NP$ search problems ($\TFNP$), as we explain later in \cref{sec:intro-tfnp}. This connection between $\TFNP$ and proof complexity, which has proved fruitful in past works and which we further explore here, also yields a new type of result in proof complexity, which we call \emph{intersection theorems}; see \cref{sec:intro-intersection-thms}.

\subsection{Resolution vs.\ Sherali--Adams}
Our first separation is between the most basic and well-studied proof system \emph{Resolution} (see the textbooks~\cite{Jukna2012,Krajicek2019} for an introduction) and the semi-algebraic proof system \emph{Sherali--Adams}~\cite{Sherali1994,Dantchev2012} (see the monograph~\cite{Fleming2019} for an introduction). Let us briefly recall these systems. Each system aims to refute a given CNF contradiction (unsatisfiable CNF formula) $F\coloneqq C_1\land\ldots\land C_m$ over the~$n$ boolean variables $x=(x_1,\ldots,x_n)$.

\begin{description}
\item[Resolution (Res).]
A Resolution refutation of $F$ starts with the set of clauses of $F$ and repeatedly applies the \emph{resolution rule} $C\lor x_i, D\lor \bar{x}_i \vdash C\lor D$. That is, if we have already deduced premise clauses $C\lor x_i$ and $D\lor \bar{x}_i$ for some $i$, then we can further deduce the clause $C\lor D$. Once this rule has been applied enough times to produce the empty clause $\bot$, the refutation is complete. The \emph{size} of the refutation is the number of deduction steps, and its \emph{width} is the maximum width~$|C|$ (number of literals) of any clause $C$ appearing in the refutation.

\item[Sherali--Adams (SA).]
Sherali--Adams refutes unsatisfiable sets of polynomial equations $\{a_i(x)=0: i\in[m]\}$ with real coefficients, $a_i\in\R[x]$. A CNF contradiction~$F$ can be translated into this language by encoding each clause, say, $C\coloneqq (x_1\lor \overline{x}_2 \lor x_3)$, as the equation $(1-x_1)x_2(1-x_3)=0$, and by enforcing each variable $x_i$ to take boolean values by the equation $x_i^2-x_i=0$. An SA refutation of $\{a_i(x)=0\}$ is a polynomial identity of the form\footnote{This particular form is valid for refuting sets of polynomial equations, and can be easily obtained from more general forms used for refuting sets of polynomial inequalities.}
\begin{equation}\label{eq:sa}
\sum_{i\in[m]} p_i(x)\cdot a_i(x) ~=~ 1 + J(x),
\end{equation}
where $p_i\in\R[x]$ are polynomials and $J$ is a \emph{conical junta}: a nonnegative linear combination of terms, that is, $J(x)=\sum_j \alpha_j \cdot t_j(x)$ where $\alpha_j\in\R_{\geq0}$ are nonnegative coefficients and each~$t_j$ is a conjunction of literals; for example, $t_j(x)=x_1\overline{x}_2x_3 = x_1(1-x_2)x_3$.
The \emph{size} of the refutation is the combined total number of monomials in $p_i$, $a_i$, and $t_j$ (viewed as a polynomial) and its \emph{degree} is the maximum of $\deg(p_i)+\deg(a_i)$ and of $\deg(t_j)$ over all $i,j$.
\end{description}
It is a basic fact that SA is strictly more powerful than Resolution. First, Resolution is \emph{$p$-simulated} by SA, that is, with only polynomial overhead in proof width/degree and size. Indeed, if~$F$ can be refuted by width-$w$ Resolution, then~$F$ can be refuted by SA in degree $w+1$~\cite{Dantchev2009}. Moreover, if one allows \emph{twin variables} in SA, the simulation can also be made efficient relative to size~\cite{Atserias2016}. Second, SA is not $p$-simulated by Resolution: there are $n$-variate CNF contradictions~$F$ (e.g., graph pigeonhole principles) that can be refuted by SA in constant degree but such that any Resolution refutation of $F$ requires width~$\Omega(n)$ and size $\exp(\Omega(n))$~\cite{Atserias2019}.

Our first result highlights a previously overlooked inefficiency in the way that SA simulates Resolution. We show that any low-degree simulation needs to exploit huge coefficients.
\begin{restatable}{theorem}{ResVuSA}
\label{thm:res-sa}
There are $n$-variate CNF formulas $F$ that can be refuted by constant-width Resolution, but such that any  SA refutation of $F$ in degree $n^{o(1)}$ requires coefficients of magnitude $\exp(n^{\Omega(1)})$.
\end{restatable}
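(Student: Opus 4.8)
The plan is to route the statement through the correspondence between proof systems and $\TFNP$. By the characterisation of \PPADS\ by unary-\SA\ (one of the paper's main results), ``Resolution is not $p$-simulated by \SA\ with small coefficients'' is equivalent to the black-box separation $\PLS\not\subseteq\PPADS$. So I would take $F=F_n$ to be the CNF encoding of a combinatorial principle whose canonical falsified-clause search problem is easy for \PLS\ but hard for \PPADS: a relativised local-search instance — a \sod-style DAG with a source, successor pointers and a potential — encoded so that $F$ is an $n$-variate formula with $\poly(n)$ clauses of width $O(1)$ (e.g.\ with the potential written in unary, one variable per vertex/level pair), and with the DAG ``genuinely branching'' — exponentially many source-to-sink paths and a hidden potential — so that its search problem is not solvable by the weaker, line-following mechanism available to \PPADS. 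The Resolution upper bound is then essentially free: a width-$O(1)$ decision DAG solves the search problem of $F$ by following the pointers from the source along increasing potential, and its standard translation into a Resolution derivation yields a refutation of polynomial size and constant width.

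For the lower bound I would first reduce the coefficient bound to a \emph{size} lower bound for unary-\SA. Suppose $F$ has an \SA\ refutation of degree $d=n^{o(1)}$ with all coefficients of magnitude $\le M$. Since $d=n^{o(1)}$ and $\log n=n^{o(1)}$, the number of monomials of degree $\le d$ is $\exp(n^{o(1)})$, so, collecting the $p_i$, $a_i$ and the terms of $J$, the refutation has at most $\exp(n^{o(1)})$ monomials; reading its coefficients in unary turns it into a unary-\SA\ refutation of size $\le\exp(n^{o(1)})\cdot M$. By the \PPADS\ characterisation, a unary-\SA\ refutation of $F$ of size $S$ and degree $d$ gives a decision-tree reduction of the search problem of $F$ to \sol\ of complexity $\poly(S,d)$. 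Hence it suffices to prove this search problem requires \PPADS-refutations of size $\exp(n^{\Omega(1)})$ in degree $n^{o(1)}$: that forces $\exp(n^{o(1)})\cdot M\ge\exp(n^{\Omega(1)})$, i.e.\ $M=\exp(n^{\Omega(1)})$.

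It remains to prove a \PPADS\ (equivalently unary-\SA) lower bound for a principle that is easy for Resolution. I would try to obtain this by lifting: prove a black-box lower bound of $\exp(N^{\Omega(1)})$ on the \PPADS\ query complexity of the unrelativised base principle on $N=n^{\Theta(1)}$ vertices, and set up the encoding of $F$ so that a lifting theorem applies — a small unary-\SA\ refutation of $F$ would yield a low-\PPADS-query-complexity solution of the base principle, a contradiction. The base lower bound is an adversary argument exploiting the asymmetry between the models: a \sol/\PPADS\ solver carries a single current vertex with a \emph{deterministic} successor and so cannot branch, whereas the \sod\ DAG branches, with the potential the only certificate of progress; planting many dead-end branches and hiding the potential makes any sub-exponential line-following strategy get lost.

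The hard part is this last step, and within it making the ``DAG-to-line'' blow-up both genuinely exponential and robust to the composition: \PPADS-reductions are powerful and can re-encode a DAG as a line in nontrivial ways, so the base principle and the gadget must be designed together, and one needs enough quantitative control through the lift to conclude a bound on coefficient \emph{magnitudes} rather than on bit-lengths — which is exactly where the equivalence of \PPADS\ with unary-\SA\ does the work. A more self-contained alternative would be to exhibit directly, for $F$ itself, a degree-$d$ pseudo-expectation robust enough to certify that every genuine \SA\ refutation must use a coefficient of magnitude $\exp(n^{\Omega(1)})$; I expect the lifting route to be cleaner, since it can reuse an off-the-shelf $\PLS$-versus-$\PPADS$ oracle separation.
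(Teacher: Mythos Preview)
Your plan reverses the actual direction of implication: the paper proves the proof-complexity separation \emph{first}, directly, and derives $\PLS^{dt}\not\subseteq\PPADS^{dt}$ as a \emph{corollary} via the uSA/\PPADS characterisation. There is no ``off-the-shelf $\PLS$-versus-$\PPADS$ oracle separation'' to reuse---this was open prior to the paper (the introduction says so explicitly) and is one of its new results. So the step you defer is the theorem itself. There is also a technical gap in your reduction: ``reading coefficients in unary'' to get a uSA proof of size $\le\exp(n^{o(1)})\cdot M$ presupposes integer coefficients, but SA is defined over $\R$. The LP argument in the paper's appendix converts any real degree-$d$ SA proof to an integer one with coefficients $\exp(n^{O(d)})$ \emph{regardless} of $M$, so a uSA size lower bound yields at best a degree bound for real SA, not a coefficient bound.

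What the paper does instead is work with the real SA refutation $\sum_i p_ia_i=1+J$ of $\sod_{n^2}$ directly and force the conical junta $J$ to take an exponentially large \emph{value}. The engine is an \emph{approximate}-Nullstellensatz lower bound for $\sopl_n$: no degree-$n^{o(1)}$ polynomial combination of its axioms can equal $1\pm\tfrac12$ on all inputs, proved by a randomised reduction to the approximate degree of $\Or$. Embedding a $\sopl_n$ instance in one $n\times n$ subgrid of $\sod_{n^2}$ then forces $1+J(x_1)\ge 1.5$ for some input $x_1$; re-routing the sinks of that instance into a fresh subgrid and iterating $n$ times yields $1+J(x_n)\ge 1.4^{\,n}$. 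Since $J$ has at most $\exp(n^{o(1)})$ terms, some junta coefficient must be $\exp(n^{\Omega(1)})$. Your adversary sketch (``plant dead-end branches and hide the potential'') does not supply this; the robust approximate-NS bound, and the fact that it survives iteration inside $\sod$, is the missing idea.
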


\cref{thm:res-sa} is qualitatively tight in that the singly-exponential lower bound~$\exp(n^{\Omega(1)})$ cannot be improved much. Namely, if a CNF formula can be refuted by a degree-$d$ SA proof, then there also exist a degree-$d$ SA proof with integer coefficients of magnitude $\exp(n^{O(d)})$ (see~\cref{sec:coeff-ub}). To our knowledge, \cref{thm:res-sa} is the first exponential coefficient lower bound for a constant-width CNF formula in any semi-algebraic proof system. (Examples of systems of polynomial equations---not coming from CNFs---requiring even doubly-exponential coefficients were known previously~\cite{ODonnell2017,Raghavendra2017,Hakoniemi2021}.)

We also note that the conclusion of \cref{thm:res-sa} can be slightly strenthened using standard \emph{lifting/xorification} techniques~\cite[\S4]{BenSasson2009} to show that any SA refutation of $F$ must either use exponentially many monomials or exponentially large coefficients. This trade-off has consequences for the \emph{unary Sherali--Adams}~(uSA) system where we restrict the coefficients to be integers written in \emph{unary} and where their magnitude counts towards proof size (more precisely, the size of a uSA proof is the sum of the magnitudes of all coefficients appearing in the proof). Thus we conclude that Resolution is not $p$-simulated by uSA. In particular, this answers a question raised in a concurrent work by Bonacina and Bonet~\cite{Bonacina2022a}. For comparison, proving a similar lower bound for the Cutting Planes system (separating CP from unary-CP) is a long-standing open problem.

\begin{figure}[t]
\centering
\begin{tikzpicture}[scale=2]
\tikzset{inner sep=0,outer sep=0.5}

% Set bounding box
%\node (a) at (-7.1,0) {\mbox{}};
%\node (b) at (7.1,0) {\mbox{}};

\tikzstyle{a}=[inner sep=4pt,fill=white!93!Cerulean,rounded corners=3pt]

\begin{scope}[yscale=1]
\large
\node[a] (TreeRes) at (0,0) {TreeRes};
\node[a] (RevRes) at (-1,1) {RevRes};
\node[a] (Res) at (-2,2) {Res};
\node[a] (uNS) at (1,1) {uNS};
\node[a] (uSA) at (0,2) {uSA};
\node[a] (NS) at (2,2) {$\Z$-NS};
\node[a] (FNS) at (2,3) {$\F$-NS};
\node[a] (SA) at (0,3) {SA};
\end{scope}

\path[-{Stealth[length=6pt]},line width=0.7pt,gray]
(TreeRes) edge (RevRes)
(TreeRes) edge (uNS)
(RevRes) edge (Res)
(RevRes) edge (uSA)
(uNS) edge (uSA)
(NS) edge (SA)
(uNS) edge (NS)
(NS) edge (FNS)
(uSA) edge (SA)
(Res) edge (SA);

\small
\hypersetup{hidelinks}
\tikzset{new/.style={-{Stealth[length=6pt]},dashed,line width=1pt,YellowOrange}}
\draw[new,bend right=7]
(Res) edge
node[midway,below,inner sep=4pt,sloped] {\cref{thm:res-sa}}
(uSA);
\draw[new,bend right=20]
(RevRes) edge
node[pos=0.17,below,sloped,inner sep=3pt] {\cref{thm:revres-ns}}
(FNS);

\end{tikzpicture}
\newcommand{\sA}{\mathrm{A}}
\newcommand{\sB}{\mathrm{B}}
\vspace{1em}
\caption{Our new separations of proof systems. An arrow $\sA\rightarrow \sB$ means that $\sA$ is $p$-simulated by $\sB$, that is, with polynomial overhead in width/degree and size (when allowing twin variables). A dashed arrow~$\sA \dashrightarrow \sB$ means that $\sA$ is not $p$-simulated by $\sB$.}
\label{fig:systems}
\end{figure}
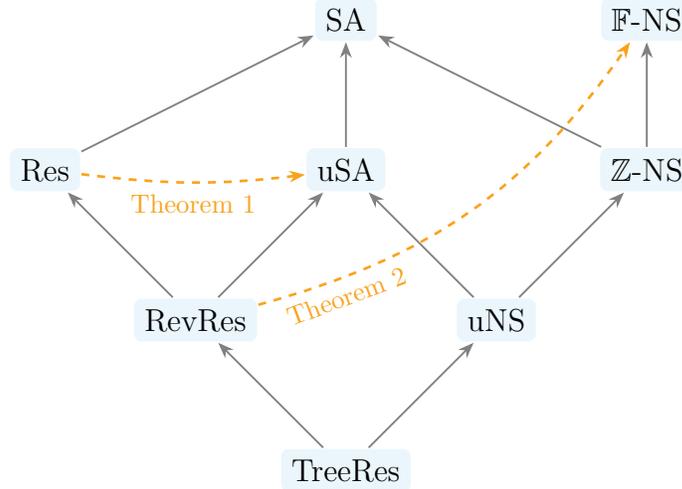

\subsection{Reversible Resolution vs.\ Nullstellensatz}

Our second separation is between the standard algebraic proof system Nullstellensatz~\cite{Beame1994} and a subsystem of Resolution that we call \emph{Reversible Resolution}. The latter is closely related to fragments of Resolution that have been introduced to model the reasoning used by \emph{MaxSAT} solvers (which find an assignment that satisfies as many clauses as possible). Prior work has defined several distinct such \emph{MaxSAT Resolution} systems~\cite{Bonet2007,Larrosa2008,Filmus2023}. Our variant is yet slightly different (see~\cref{sec:char-revres} for a comparison to prior systems). Ultimately, our definition is motivated by results that will be discussed in \cref{sec:intro-tfnp}: Reversible Resolution captures an important~\TFNP class, and, moreover, it equals the ``intersection'' of Resolution and uSA.

\begin{description}
\item[Reversible Resolution (RevRes).]
In this restricted fragment of Resolution we only allow the \emph{symmetric resolution rule} $C\lor x_i, C\lor \bar{x}_i \vdash C$ and its inverse $C \vdash C\lor x_i, C\lor \bar{x}_i$. Moreover, we stipulate that an application of either rule \emph{consumes its premises} in the following sense. The refutation begins with a \emph{multiset} of clauses of $F$---we may choose the multiplicity of each clause freely at start---and a single application of a deduction rule removes a single occurrence of each premise clause from the multiset and then adds the concluded clauses back to the multiset. Once we produce at least one empty clause, the refutation is complete. The \emph{size} and \emph{width} of the refutation are defined as before.
\item[Nullstellensatz ($\F$-NS).]
Let $\F$ be a field. An $\F$-Nullstellensatz refutation of a set of polynomial equations $\{a_i(x)=0:i\in[m]\}$ over $\F$ is given by a set of polynomials $\{p_i(x)\}\subseteq \F[x]$ such that
\begin{equation} \label{eq:ns}
\sum_{i\in[m]} p_i(x)\cdot a_i(x) ~=~ 1.
\end{equation}
The \emph{size} of the refutation is the combined total number of monomials in $p_i$ and $a_i$ and its \emph{degree} is the maximum of $\deg(p_i)+\deg(a_i)$ over all $i$.
\end{description}
Reversible Resolution is $p$-simulated by uSA. Indeed, the usual simulations of Resolution by SA~\cite{Dantchev2009,Atserias2016} have the neat property that if they are applied to a RevRes proof instead, the resulting coefficients become bounded by the size of the RevRes proof (see also~\cite{Filmus2023} for a simulation in a closely related MaxSAT system). This also means that RevRes is strictly less powerful than Resolution, as per our first separation result.

It is a classic result that Resolution is not $p$-simulated by $\F$-NS over any field $\F$. This is witnessed by~CNF formulas expressing the \emph{sink-of-dag} ($\sod$) principle~\cite{Clegg1996,Buss1998} or the \emph{pebbling} principle~\cite{BureshOppenheim2002,Rezende2019}. Our second result strengthens these classical separations showing that RevRes cannot be simulated by low-degree $\F$-NS.

\begin{restatable}{theorem}{RevResVNS}
\label{thm:revres-ns}
There are $n$-variate CNF formulas $F$ that can be refuted by constant-width polynomial-size RevRes, but such that any $\F$-NS refutation (over any $\F$) of $F$ requires degree $n^{\Omega(1)}$.
\end{restatable}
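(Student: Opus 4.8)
\medskip
\noindent\textbf{Proof plan.}
The plan is to take the hard formulas $F$ to be propositional encodings of a structured \emph{sink-of-potential-line} principle --- morally the canonical complete problem for the class $\SOPL$ --- in which the successor and potential circuits secretly embed a ``hard core'' chosen to defeat low-degree Nullstellensatz. With $F$ set up this way, the theorem splits into two nearly independent tasks: \emph{(i)} exhibiting a constant-width, polynomial-size RevRes refutation of $F$, which should follow almost for free from the characterisation of RevRes by $\SOPL$ established elsewhere in the paper; and \emph{(ii)} proving the $n^{\Omega(1)}$ Nullstellensatz degree lower bound, and doing so uniformly over every field, which is where the real content lies.

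For \emph{(i)}, the point is that the standard totality argument for the sink-of-potential-line principle is already ``reversible'': starting at the source one walks along the line, and the strictly decreasing potential simultaneously forces the walk to terminate at a sink and lets each configuration on the line be re-derived from its successor, so that only the symmetric resolution rule $C\lor x_i,\, C\lor\bar x_i\vdash C$ and its inverse are ever invoked. The multiset-with-consumption discipline of RevRes is exactly what is needed to carry clauses along the walk without duplication. Since the line is specified by local gadgets, each deduction step involves only $O(1)$ literals, so the refutation has constant width, and its length is polynomial in the number of gadgets, hence in $n$. The one subtlety is to ensure that the hard core embedded in the circuits does not inflate the width --- which is why the gadget and the lower-bound argument of \emph{(ii)} must be chosen in tandem.

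For \emph{(ii)}, the main obstacle, I would argue via the ``design'' (dual) characterisation of Nullstellensatz degree: to rule out degree-$d$ refutations it suffices to produce, for $d=n^{\Omega(1)}$, a linear functional $R$ on polynomials of degree at most $d$ with $R(1)=1$ and $R(p\cdot a_i)=0$ for every axiom $a_i$ of $F$ and every multiplier $p$ with $\deg(p\cdot a_i)\le d$. The construction of such an $R$ would build on the existing Nullstellensatz lower bounds for pebbling-type formulas --- in particular the link between Nullstellensatz degree and reversible pebbling price --- propagated through the sink-of-potential-line wrapper, with the hard core instantiated, say, on a bounded fan-in DAG whose reversible pebbling price is $n^{\Omega(1)}$; alternatively one could route the argument through a Nullstellensatz degree lifting theorem applied to a base principle whose lifted degree measure is already $n^{\Omega(1)}$. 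The delicate extra requirement, not present in those references, is that the bound must hold over \emph{all} fields simultaneously, so nothing characteristic-specific (such as Tseitin being trivial over $\F_2$) is permitted; the fix is to make the combinatorial recipe for $R$ field-independent, so that the same construction yields a valid degree-$d$ design over $\mathbb{Q}$ and, after clearing denominators and reducing, over $\F_p$ for every prime $p$, with the finitely many exceptional primes treated by hand. Producing this uniform family of designs --- and, inside it, pinning down a hard core that is at once Nullstellensatz-hard over every field and harmless for the narrow RevRes refutation of \emph{(i)} --- is the step I expect to be the crux. Putting \emph{(i)} and \emph{(ii)} together then yields \cref{thm:revres-ns}.
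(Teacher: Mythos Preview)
Your high-level choice of formula --- a sink-of-potential-line ($\sopl$) principle --- matches the paper, and your reasoning for the RevRes upper bound is essentially right (the paper uses a sparse bounded-degree variant $\sopl_G$ to get constant width). But the ``hard core'' embedding you envision is unnecessary and is where your plan diverges from the paper: plain $\sopl$ is already NS-hard, so nothing needs to be hidden inside the circuits, and indeed it is not clear that embedding a pebbling instance in the successor/predecessor circuits would transfer the pebbling NS lower bound to the resulting formula at all --- the NS degree sees the outer $\sopl$ structure, not what the circuits compute internally.

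For the lower bound the paper takes two routes, neither of which is the design-from-pebbling or lifting route you sketch. Over $\F=\R$ it proves something stronger: a degree lower bound for \emph{$\epsilon$-approximate} Nullstellensatz, via a randomised decision-to-search reduction showing that a low-degree approximate NS refutation of $\sopl_n$ would yield a low-degree approximating polynomial for $\Or_{n-1}$, contradicting the Nisan--Szegedy bound. For arbitrary $\F$ the paper argues indirectly via the intersection theorem (\cref{thm:intersection-thm}): take $F_n\coloneqq\sod_n\wedge\sol_n$ on disjoint variables; prior work gives $\F$-NS degree $n^{\Omega(1)}$ for each conjunct over every field, and a simple product-of-designs argument shows the conjunction inherits this bound. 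Since $F_n$ trivially lies in both $\PLS^{dt}$ and $\PPADS^{dt}$, the intersection theorem puts it in $\SOPL^{dt}$, hence $F_n$ reduces to $\sopl_{n^{O(1)}}$ by shallow decision trees; so if $\sopl$ had low $\F$-NS degree, $F_n$ would too --- contradiction. Your proposal to build a single field-independent design for $\sopl$ directly might be workable, but the paper sidesteps that entirely by leveraging the $\SOPL=\PLS\cap\PPADS$ collapse, which is the idea you are missing.
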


Again, we note that standard lifting techniques can be used to strengthen the degree lower bound in \cref{thm:revres-ns} to an exponential size lower bound. We conclude that RevRes is not $p$-simulated by $\F$-NS. In particular, this strengthens a previous result by Filmus et.\ al.~\cite{Filmus2023} who showed that RevRes (actually, their closely related MaxSAT system) is not $p$-simulated by tree-like Resolution.

\subsection{Techniques}

Our separation between Resolution and uSA (\cref{sec:proof-res-sa}) builds on the separation between RevRes and $\F$-NS (\cref{sec:proof-revres-ns}). We prove the latter separation for $\F=\R$ in a particularly robust form, namely we show that it holds even if we allow some small amount of ``error'' in the NS proof. We introduce what we call \emph{$\epsilon$-approximate Nullstellensatz} ($\epsilon$-NS) refutations where we relax the polynomial identity~\cref{eq:ns} over $\F=\R$ to hold only approximately:
\begin{equation}
\sum_{i\in[m]} p_i(x)\cdot a_i(x) ~=~ 1\pm\epsilon, \quad\qquad\forall x\in\{0,1\}^n.
\end{equation}
In the above expression and for the remainder of the article, ``$= 1\pm \epsilon$'' stands for ``$\in [1 - \epsilon, 1 + \epsilon]$'', meaning that the LHS is a polynomial that takes values in $[1 - \epsilon, 1 + \epsilon]$ when evaluated on boolean inputs. For example, an SA refutation where $J(x) \leq \epsilon$ for all boolean inputs $x$ is also an $\epsilon$-NS refutation (since $J(x) \geq 0$ trivially holds).

We show that there is no low-degree approximate NS proof for the formulas that encode the so called \emph{sink-of-potential-line} ($\sopl$) principle. These formulas are easy for RevRes, and in fact, we later show they are \emph{complete} for RevRes (see \cref{thm:characterisations}). Naturally, our lower-bound proof borrows techniques from polynomial approximation theory. We give a randomised \emph{decision-to-search reduction}, showing how a low-degree $\epsilon$-NS refutation of~$\sopl$ would imply a low-degree approximating polynomial for the $\Or$ function. It is well-known, however, that the $n$-bit $\Or$ requires large approximate polynomial degree, namely $\Omega(\sqrt{n})$. This proof idea is inspired by previous works~\cite{Raz1992,Huynh2012,Goos2018cbs,Itsykson2021} that followed a similar strategy in the context of communication complexity: they studied randomised reductions from set-disjointness (communication analogue of~$\Or$) to various communication search problems. Finally, we also give a separate (non-robust) proof that $\sopl$ is hard for $\F$-NS over any field $\F$ using the intersection theorem (\cref{thm:intersection-thm}).

Our lower bound for $\epsilon$-NS, say with $\epsilon\coloneqq 1/2$, now helps us prove \cref{thm:res-sa}. We consider an~SA refutation~\cref{eq:sa} of the $\sod$ principle (which is a stronger principle than $\sopl$). The non-existence of a low-degree $\epsilon$-NS refutation for $\sopl$ immediately implies that in any SA refutation of~$\sod$, the conical junta~$J$ has to assume a value at least $\epsilon$ on some input: the~RHS equals~$1+J(x)\geq 1+\epsilon = 1.5$ for some $x$. Our idea is to now \emph{iterate} the~$\epsilon$-NS lower-bound argument by combining several $\sopl$ instances inside $\sod$ with the aim of finding large values on the RHS. After~$i$ iterations, we show the~RHS equals $1+J(x_i)\geq 1.5^{\Omega(i)}$ for some carefully constructed input~$x_i$ that embeds $i$ copies of~$\sopl$. Setting $i=\poly(n)$ concludes the proof.

\section{Separations in \texorpdfstring{$\TFNP$}{TFNP}} \label{sec:intro-tfnp}

A major motivation for our proof complexity separations in \cref{sec:intro-proof} is that they have consequences in terms of black-box separations between subclasses of \TFNP. Together with prior work, our new separations resolve all the black-box relationships between classes depicted in~\cref{fig:classes}. To explain this connection in detail, we start with a short introduction to \TFNP.

\subsection{Introduction to \TFNP}

The class \TFNP consists of all \emph{total \NP search problems}, that is, search problems where a solution is guaranteed to exist, and where it can be efficiently checked whether a given candidate solution is feasible. Some very important problems lie in \TFNP, for example, \factoring (given a number, compute a prime factor) or \nash (given a bimatrix game, compute a Nash equilibrium).

A crucial observation is that no \TFNP problem can be \NP-hard, unless $\NP = \coNP$~\cite{Megiddo1991}. Furthermore, it is believed that \TFNP is unlikely to have complete problems~\cite{Pudlak15-Herbrand}. As a result, in order to understand the complexity of important \TFNP problems, researchers have defined syntactic subclasses of \TFNP, such as \PLS~\cite{Johnson1988}, \PPAD, \PPADS, \PPA, \PPP~\cite{Papadimitriou1994}. These subclasses are defined using canonical complete problems that correspond to very simple existence principles.
\begin{itemize}[leftmargin=4em]
\item[\PLS:] Every directed acyclic graph has a sink.
\item[\PPAD:] Every directed graph with an unbalanced node (outdegree $\neq$ indegree) must have another unbalanced node.
\item[\PPADS:] Every directed graph with a positively unbalanced node (outdegree $>$ indegree) must have a negatively unbalanced node (outdegree $<$ indegree).
\item[\PPA:] Every undirected graph with an odd-degree node must have another odd-degree node.
\item[\PPP:] Every function mapping $[n+1]$ to $[n]$ must have a collision. \emph{(Pigeonhole Principle)}
\end{itemize}
These existence principles naturally give rise to corresponding total search problems. For example, for \PPAD that would be: given a directed graph and an unbalanced node in that graph, find another unbalanced node. These problems are defined so that the search space (the set of nodes) has size exponential in the size of the input. Otherwise, it would be trivial to find a solution in polynomial time. In more detail, this is achieved by having the input of the problem consist of a boolean circuit that can be used to compute the neighbours of any given node.

The theory of $\TFNP$ classes has been successful in capturing the complexity of many important natural problems. Indeed, in a celebrated result~\cite{Daskalakis2009,ChenDT09-Nash}, it was shown that \nash is complete for \PPAD. Following this breakthrough, other problems from game theory~\cite{DengQS12-cake,ChenDO15-anonymous-games,Mehta2018-constant-rank} and economics~\cite{CodenottiSVY08-economies-games,ChenDDT09-Arrow-Debreu,ChenPY17-non-monotone-markets} were also proved \PPAD-complete. Similarly, \PLS has been found to capture the complexity of various interesting problems, mainly ones where a local optimum of some sort is sought~\cite{Krentel89-TSP,Krentel90-weighted-CNF,Schaeffer91-local-search,FabrikantPT04-pure-Nash}. Finally, various problems in fair division are \PPA-complete~\cite{FRG22-NS-CH-ham}, while some problems related to cryptography have been shown \PPP-complete~\cite{SotirakiZZ18-PPP}.

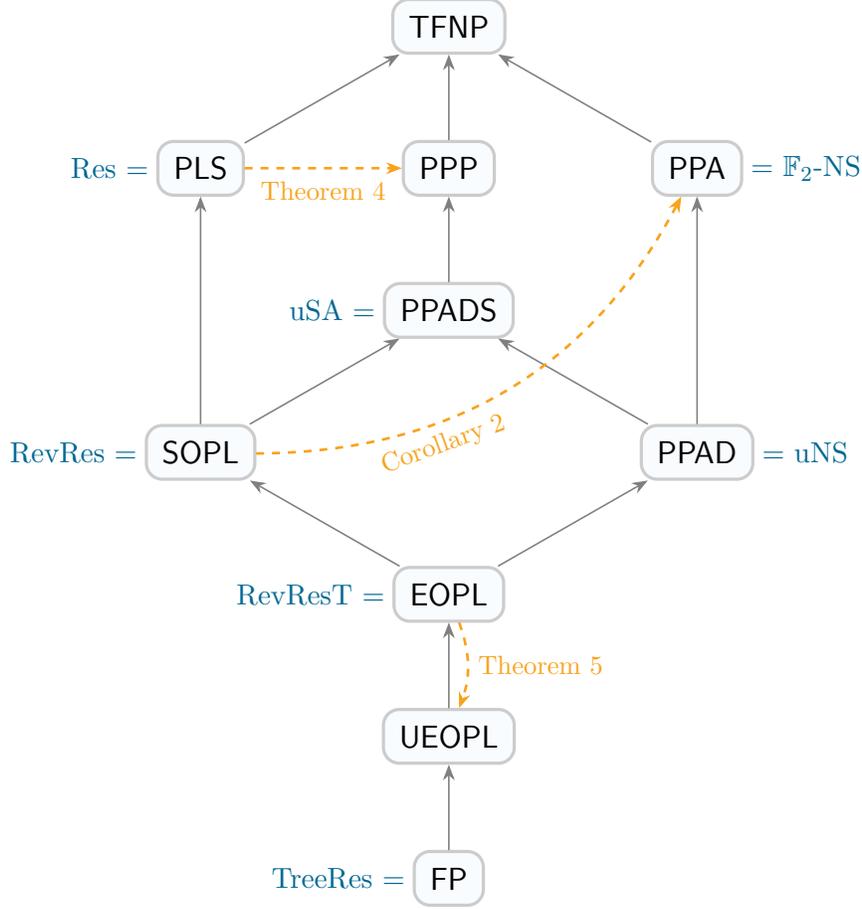
\begin{figure}[t]
\centering
\begin{tikzpicture}[scale=1.1]
\tikzset{inner sep=0,outer sep=3}

% Set bounding box
%\node (a) at (-7.1,0) {\mbox{}};
%\node (b) at (7.1,0) {\mbox{}};

\tikzstyle{a}=[inner sep=6pt, inner ysep=6pt,outer sep=0.5pt,
draw=black!20!white, fill=Cerulean!2!white, very thick, rounded corners=6pt, align=center]

\begin{scope}[yscale=1.145]
\large
\node[a] (FP) at (0,-1.5) {$\FP$};
\node[a] (UEOPL) at (0,0) {$\UEOPL$};
\node[a] (EOPL) at (0,1.5) {$\EOPL$};
\node[a] (SOPL) at (-3,3) {$\SOPL$};
\node[a] (PPAD) at (3,3) {$\PPAD$};
\node[a] (PPADS) at (0,4.5) {$\PPADS$};
\node[a] (PLS) at (-3,6) {$\PLS$};
\node[a] (PPP) at (0,6) {$\PPP$};
\node[a] (PPA) at (3,6) {$\PPA$};
\node[a] (TFNP) at (0,7.5) {$\TFNP$};
\end{scope}

\path[-{Stealth[length=6pt]},line width=.6pt,gray]
(FP) edge (UEOPL)
(UEOPL) edge (EOPL)
(EOPL) edge (SOPL)
(EOPL) edge (PPAD)
(SOPL) edge (PLS)
(SOPL) edge (PPADS)
(PPAD) edge (PPADS)
(PPAD) edge (PPA)
(PPADS) edge (PPP)
(PLS) edge (TFNP)
(PPP) edge (TFNP)
(PPA) edge (TFNP);

\small
\hypersetup{hidelinks}
\tikzset{new/.style={-{Stealth[length=6pt]},dashed,line width=1pt,YellowOrange}}
%\tikzset{old/.style={-{Stealth[length=6pt]},dashed,line width=0.6pt,Gray}}
\draw[new]
(PLS) edge
node[midway,below,inner sep=2pt]{\cref{thm:pls-ppp}}
(PPP);
\draw[new,bend right=31]
(SOPL) edge
node[pos=0.34,below,sloped,inner sep=1pt] {\cref{cor:sopl-ppa}}
(PPA);
\draw[new,bend left=20]
(EOPL) edge
node[right,inner sep=1pt] {\cref{thm:eopl-ueopl}}
(UEOPL);

%\draw[old]
%(PPA) edge
%node[midway,below,inner sep=2pt]{\cite{Beame1998}}
%(PPP);
%\draw[old,bend left=20]
%(UEOPL) edge
%node[right,inner sep=1pt] {\cite{Hubacek2020}}
%(FP);
%\draw[old,bend left=31]
%(PPAD) edge
%node[pos=0.2,below,sloped,inner sep=1pt] {\cite{Morioka2001}}
%(PLS);

\normalsize
\tikzset{model/.style={color=MidnightBlue}}

\node [model, left=0 of FP] {TreeRes $=$};
\node [model, left=0 of PLS] {Res $=$};
\node [model, right=0 of PPA] {$=$ $\mathbb{F}_2$-NS};
\node [model, left=0 of PPADS] {uSA $=$};
\node [model, right=0 of PPAD] {$=$ uNS};
\node [model, left=0 of SOPL] {RevRes $=$};
\def\stackalignment{r}
\node [model, left=0 of EOPL] {RevResT $=$};

\end{tikzpicture}
\vspace{5mm}
\caption{Class inclusion diagram for $\TFNP$. An arrow $\cA\rightarrow\cB$ means $\cA\subseteq \cB$ relative to all oracles. A~dashed arrow~$\cA\dashrightarrow\cB$ means $\cA\not\subseteq \cB$ relative to some oracle. We have only drawn new separations proved in this paper. Together with prior oracle separations~\cite{Beame1998,Morioka2001,Buresh2004}, this resolves all black-box relationships between the classes featured in the diagram. In the black-box model, some classes can be captured using propositional proof systems, as indicated in blue.}
\label{fig:classes}
\end{figure}

\subsubsection*{New classes and collapses}

More recently, newer classes \CLS~\cite{Daskalakis2011}, \EOPL~\cite{Fearnley2020,Hubacek2020}, \SOPL~\cite{Goos2018} were defined, motivated chiefly by problems that were unlikely to be complete for any of the classical classes discussed above. Indeed, it was noted that many interesting problems lie in both \PLS and~\PPAD, but are unlikely to be complete for \PLS $\cap$ \PPAD, a seemingly completely artificial class. To remedy this situation, \CLS, and later \EOPL, were defined as more natural subclasses of \PLS $\cap$ \PPAD. However, in a surprising turn of events, it was discovered that $\CLS = \PLS \cap \PPAD$~\cite{Fearnley2021} and also that $\EOPL = \PLS \cap \PPAD$ and $\SOPL = \PLS \cap \PPADS$~\cite{GoosHJMPRT22-collapses}. In other words, the new classes can be completely defined in terms of the classical ones.

In order to rule out further surprising collapses in the future, it would thus makes sense, whenever one defines a new subclass, to also provide some kind of evidence that the new class is indeed new, and does not collapse to existing classes. Clearly, any unconditional separation is completely out of reach, since it would immediately imply that $\P \neq \NP$. However, it turns out that one can indeed prove separations relative to oracles by proving unconditional separations between black-box versions of the classes.

\subsubsection*{The black-box model}

Recall that \TFNP subclasses are defined in terms of very simple existence principles that are turned into (white-box) total search problems by having the input be implicitly described by a boolean circuit. Another---sometimes more natural---choice is to have the input be described by a black box, instead of a white box. For example, in the case of \PPAD, instead of being given the description of a circuit that can be used to compute neighbours, we can consider the model where we can query an oracle (black-box) to ask for the neighbours of a node.

More formally, a total \emph{query} search problem is a sequence of relations $R_n \subseteq \{0,1\}^n \times O_n$, one for each size $n \in \N$, such that for all inputs $x \in \{0,1\}^n$ there is an output~$o \in O_n$ such that $(x,o) \in R_n$. Here $O_n$ is a finite set of outputs and we say that $o$ is a solution to instance $x$, when~$(x,o) \in R_n$. We think of an instance $x \in \{0,1\}^n$ as a very long bitstring that can only be accessed through queries to individual bits. In this context, an efficient algorithm is a deterministic algorithm that, for any~$x \in \{0,1\}^n$, finds a solution $o$ to $x$ by performing a small number of queries to $x$, namely at most $\poly(\log n)$ queries. Thus, efficient algorithms correspond to decision trees (with leaves labelled by elements of $O_n$) of depth at most $\poly(\log n)$. Note that this model is non-uniform: the problem admits an efficient algorithm, if for each $n \in \N$, there exists a shallow decision tree solving~$R_n$.

The notion of total search problems as defined above does not quite correspond to \TFNP yet, because it is missing the requirement for efficient verification of solutions. We enforce this in the following natural way. A total search problem $\textsc{R} = (R_n)_n$ is in $\TFNP^{dt}$, if for each $o \in O_n$ there is a decision tree $T_o$ with depth $\poly(\log n)$ such that for every $x \in \{0,1\}^n$, $T_o(x) = 1$ if and only if~$(x,o) \in R_n$. We define the class~$\PPAD^{dt}$ as the set of all $\TFNP^{dt}$ problems that have an efficient decision-tree reduction to (the query version of) the canonical complete problem for \PPAD. We denote by $\PPAD^{dt}(R_n)$ the decision tree complexity of a reduction from $R_n$ to the canonical~$\PPAD^{dt}$-complete problem (see \cref{sec:definitions} for a precise definition). Thus, problem $\textsc{R}=(R_n)_n$ lies in~$\PPAD^{dt}$ if and only if~$\PPAD^{dt}(R_n) = \poly(\log n)$. The decision-tree analogues of the other classes are defined in the same way.

\subsubsection*{Black-box separations}
In the black-box model, it is now possible to prove \emph{unconditional} separations, e.g., that $\PPAD^{dt} \not\subseteq \PLS^{dt}$ by showing that there is no shallow decision-tree reduction from some problem in $\PPAD^{dt}$ to a complete problem for $\PLS^{dt}$. Importantly, a black-box separation also provides some evidence that the separation might hold in the white-box setting too, in the following sense: \emph{any black-box separation implies a corresponding separation in the white-box model relative to some oracle}~\cite{Beame1998}. Moreover, all existing containment results (including the recent collapses~\cite{Fearnley2021,GoosHJMPRT22-collapses}) also hold in the black-box setting. Thus, a black-box separation is quite significant, since it rules out any collapse using existing techniques.

Previously, Beame et al.~\cite{Beame1998} proved all possible separations between the classes $\PPA^{dt}$, $\PPAD^{dt}$, $\PPADS^{dt}$, $\PPP^{dt}$. Subsequently, Morioka~\cite{Morioka2001} extended these results by proving that $\PPAD^{dt}$ is not reducible to $\PLS^{dt}$. This implies that none of $\PPA^{dt}$, $\PPAD^{dt}$, $\PPADS^{dt}$ and $\PPP^{dt}$ are contained in $\PLS^{dt}$. Buresh-Oppenheim and Morioka~\cite{Buresh2004} further proved that $\PLS^{dt}$ is not contained in $\PPA^{dt}$. It has so far remained open whether $\PLS^{dt} \subseteq \PPADS^{dt}$ or $\PLS^{dt} \subseteq \PPP^{dt}$.

\subsubsection*{Connection to proof complexity}
Propositional proof complexity is a major tool for proving black-box separations. There is a natural correspondence between total query search problems and CNF contradictions. In one direction, a CNF contradiction $F \coloneqq C_1 \land \dots \land C_m$ over the variables $x=(x_1, \dots, x_n)$ naturally gives rise to a corresponding total search problem $S(F)$: given an assignment $x\in\{0,1\}^n$, find an unsatisfied clause of $F$. Formally, we define $S(F) \subseteq \{0,1\}^n \times [m]$ by $(x,i) \in S(F)$ if and only if~$C_i(x) = 0$. Thus, a sequence of unsatisfiable CNF formulas $\textsc{F} = (F_n)$, where $F_n$ has $n$ variables, defines the total search problem $S(\textsc{F}) = (S(F_n))$. Note that $S(\textsc{F}) \in \TFNP^{dt}$ if $F_n$ has width $\poly(\log n)$.

In the other direction, a problem $\textsc{R} = (R_n)$ in $\TFNP^{dt}$ can be written equivalently as $S(\textsc{F})$ for some sequence of CNF contradictions $\textsc{F}=(F_n)$. Specifically, for $R_n \subseteq \{0,1\}^n \times O_n$ we define the formula $F_n\coloneqq \bigwedge_{o \in O_n} \neg T_o(x)$, where we note that $T_o(x)$ can naturally be written as a DNF formula of width at most $\poly(\log n)$ (with one term per accepting leaf of $T_o$), and thus $\neg T_o(x)$ can be written as a CNF formula of the same width.

\subsection{New characterisations}
The above connection to proof complexity opens up the possibility to characterise search problem classes by propositional proof systems, in the following sense: \emph{the problem $(S(F_n))_n$ lies in class~$X$ if and only if the CNF formulas $(F_n)_n$ have small refutations in proof system $Y$}. To make this more precise, for any proof system $\textsc{P}$ and a CNF formula $F$, we define
\[
\textsc{P}(F) ~\coloneqq~ \min_{\text{$\textsc{P}$-proof $\Pi$ of $F$}}
\big[\log\size(\Pi) + \deg(\Pi)\big].
\]
Here, $\deg(\Pi)$ should be understood as \emph{width} when $\text{P}$ is Resolution (or RevRes) and as \emph{depth} when~$\text{P}$ is tree-like Resolution. Prior work has established the following characterisations.
\begin{itemize}
    \item $\FP^{dt}(S(F)) = \Theta(\textup{TreeRes}(F))$~\cite{Lovasz1995}.
	 \item $\PLS^{dt}(S(F)) = \Theta(\textup{Res}(F))$~\cite{BussKT2014}.
    \item $\PPA^{dt}(S(F)) = \Theta(\mathbb{F}_2\textup{-NS}(F))$~\cite{Goos2018}.
    \item $\PPA_p^{dt}(S(F)) = \Theta(\mathbb{F}_p\textup{-NS}(F))$ for every prime $p$~\cite{Kamath2020}.
\end{itemize}

We contribute the following new characterisations. For one of them, we need to introduce one more proof system, \emph{Reversible Resolution with Terminals} (RevResT), defined in~\cref{sec:char-revres}.
\begin{restatable}{theorem}{Chars} \label{thm:characterisations}
For any unsatisfiable CNF formula $F$, we have:
\begin{itemize}
    \item $\PPAD^{dt}(S(F)) = \Theta(\textup{uNS}(F))$.
    \item $\PPADS^{dt}(S(F)) = \Theta(\textup{uSA}(F))$.
    \item $\SOPL^{dt}(S(F)) = \Theta(\textup{RevRes}(F))$.
    \item $\EOPL^{dt}(S(F)) = \Theta(\textup{RevResT}(F))$.
\end{itemize}
\end{restatable}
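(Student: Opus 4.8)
The plan is to prove each identity $X^{dt}(S(F)) = \Theta(\textup{Y}(F))$ as a pair of matching inequalities, following the template behind the known characterisations $\PLS^{dt}(S(F)) = \Theta(\textup{Res}(F))$ \cite{BussKT2014} and $\PPA^{dt}(S(F)) = \Theta(\mathbb{F}_2\textup{-NS}(F))$ \cite{Goos2018}. For a refutation $\Pi$ in a system $\textup{Y} \in \{\textup{uNS}, \textup{uSA}, \textup{RevRes}, \textup{RevResT}\}$ write $\mu(\Pi) \coloneqq \log\size(\Pi) + \deg(\Pi)$, and let $G^X_N$ be the CNF formula asserting that a given size-$N$ instance of the canonical $X$-complete problem has no solution, so that the canonical $X$-complete query problem equals $S(G^X) = (S(G^X_N))_N$; by definition, a reduction witnessing $X^{dt}(S(F)) \le d$ is a family of depth-$d$ decision trees computing the coordinates of some instance of $G^X_N$ (with $\log N = O(d)$) from the bits of $x$, together with depth-$d$ decision trees mapping solutions back to falsified clauses of $F$.

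First I would show $X^{dt}(S(F)) = O(\textup{Y}(F))$ (``proofs give reductions''). Fix a $\textup{Y}$-refutation $\Pi$ of $F$ with $\mu(\Pi) = s$. Build an instance of the canonical $X$-complete problem whose vertices are indexed by the syntactic objects of $\Pi$ — the monomials of the $p_i a_i$ and the junta terms $t_j$ in the (u)NS/(u)SA case, or the clauses of the working multiset in the RevRes/RevResT case — equipped with the adjacency/flow/potential structure read directly off the algebraic identity \cref{eq:sa}, resp.\ the sequence of deduction steps. Three points make this work: (i) each vertex is described by $O(\deg(\Pi)) = O(s)$ bits, and testing whether it is ``active'' at input $x$ (its monomial evaluating to $1$, a clause being currently falsified, etc.) is a decision tree of depth $O(\deg(\Pi))$; (ii) because coefficients are written in unary, resp.\ multiplicities in a RevRes multiset count towards size, the number of vertices is at most $\size(\Pi)$, so $\log N = O(s)$; (iii) the validity of $\Pi$ forces every solution of the constructed instance — an unbalanced node for $\PPAD$, a negatively unbalanced node for $\PPADS$, a sink or potential/terminal violation for $\SOPL$/$\EOPL$ — to point at a clause of $F$ falsified by $x$. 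This is a depth-$O(s)$ reduction, hence $X^{dt}(S(F)) = O(\textup{Y}(F))$.

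Next I would show $\textup{Y}(F) = O(X^{dt}(S(F)))$ (``reductions give proofs''). The first ingredient is a uniform family of $\textup{Y}$-refutations of $G^X_N$ with $\mu = O(\log N)$: the underlying existence principle is local (each node's neighbourhood depends on $O(\log N)$ input bits), so the ``obvious'' argument for the principle translates into the algebraic/combinatorial syntax of $\textup{Y}$ with $\poly(N)$-many monomials/clauses and degree $O(\log N)$. The second ingredient is closure under substitution: given a depth-$d$ reduction from $S(F)$ to $S(G^X_N)$ with $\log N = O(d)$, substitute the depth-$d$ decision trees (as width-$d$ DNF/CNF formulas) for the coordinate variables of $G^X_N$ in the canonical refutation; one checks that this yields a $\textup{Y}$-refutation of $F$ with only a $\poly$ blow-up in size and an $O(d)$ additive loss in degree, using that Resolution, NS and SA (with twin variables), and their reversible variants, all admit such substitutions. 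Combining the two ingredients gives $\textup{Y}(F) = O(d) = O(X^{dt}(S(F)))$.

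I expect the real work to be in carrying out these two constructions \emph{tightly} for the non-standard systems. For $\textup{uNS}$ and $\textup{uSA}$ one must check that the canonical $\PPAD$- and $\PPADS$-complete formulas admit refutations whose integer, resp.\ nonnegative, coefficients are genuinely $\poly(N)$-bounded, and that the vertex-counting of step (ii) is correct in the presence of signs and slack — the sign of a monomial's total multiplicity encoding source-vs-sink for $\PPAD$, and the one-sided slack of the conical junta $J$ in \cref{eq:sa} encoding exactly what distinguishes $\PPADS$ from $\PPAD$. For $\textup{RevRes}$ and $\textup{RevResT}$ the delicate point is aligning the ``consumes its premises'' multiset dynamics and the freely chosen starting multiplicities with the potential structure of $\sopl$ (and, for $\textup{RevResT}$, the terminals distinguishing the $\EOPL$-complete problem from $\sopl$), so that the working multiset after step $t$ is exactly the level-$t$ configuration of the $\SOPL$/$\EOPL$ instance and empty clauses correspond precisely to its solutions. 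Verifying these correspondences — and that $\textup{RevResT}$ as defined in \cref{sec:char-revres} captures exactly the extra terminal endpoints — is the crux; the remaining bookkeeping parallels \cite{BussKT2014,Goos2018,Kamath2020}.
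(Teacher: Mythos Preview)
Your plan is broadly correct and matches the paper's strategy for the ``proofs give reductions'' direction: the paper builds the $\eol$/$\sol$/$\sopl$/$\eopl$ instance with one vertex per (signed) monomial or per clause-in-configuration, exactly as you describe, and your points (i)--(iii) capture the essential bookkeeping.

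Where you diverge is in the ``reductions give proofs'' direction. You propose a two-step argument: (a) give a canonical $\textup{Y}$-refutation of $G^X_N$, and (b) invoke closure of $\textup{Y}$ under decision-tree substitution. The paper instead does a \emph{direct} construction in all four cases. For $\textup{uNS}$ and $\textup{uSA}$ the paper defines, for each node $v$ of the formulation, a $\{-1,0,1\}$-valued function $S_v$ (source/sink indicator) computed by a depth-$O(d)$ tree, writes $S_v$ as a signed sum of conjunctions each containing a falsified-clause factor, and observes that $\sum_v S_v = \pm 1$ (plus a conical junta in the $\sol$ case) is the desired refutation. This is morally ``substitution into the canonical proof,'' so your approach would work here too.

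For $\textup{RevRes}$ and $\textup{RevResT}$, however, your step (b) is a genuine gap: closure of Reversible Resolution under decision-tree substitution is \emph{not} a known lemma, and it is not obvious, because the symmetric rule $C\lor x, C\lor\bar x \vdash C$ has no evident analogue when $x$ is replaced by a depth-$d$ function. The paper avoids this by first proving (as a warm-up) that $\sopl_n$ itself has an $O(\log n)$-width polynomial-size RevRes refutation, via an explicit layer-by-layer argument that derives ``layer $i{-}1$ is all inactive'' from ``layer $i$ is all inactive,'' and then redoes this argument for an arbitrary $\sopl$-formulation of $S(F)$, replacing every ``query the pointer $s_{i,j}$'' step by ``simulate the decision tree for $s_{i,j}$'' using a reversible decision-tree weakening lemma ($C \vdash \{C\lor C_P : P\in\mathcal{P}(T)\}$ and its inverse). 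That lemma is precisely the tool you would need to make your substitution step go through for RevRes, so the work is the same either way; but you should be aware that this is where the content lives, and that the paper does not package it as a substitution lemma.
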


Together with our proof complexity separations from \cref{sec:intro-proof}, we immediately obtain the following black-box separations (which yield white-box oracle separations as discussed above).
\begin{samepage}
\begin{corollary} \label{cor:pls-ppads}
$\PLS^{dt} \not\subseteq \PPADS^{dt}$.
\end{corollary}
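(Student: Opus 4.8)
The plan is to read off the separation from the proof-system characterisations together with \cref{thm:res-sa}. Let $\textsc{F}=(F_n)_n$ be the family of $n$-variate CNF formulas guaranteed by \cref{thm:res-sa}: each $F_n$ is refutable by constant-width Resolution, but any SA refutation of $F_n$ in degree $n^{o(1)}$ requires a coefficient of magnitude $\exp(n^{\Omega(1)})$. In particular each $F_n$ has constant width, so $S(\textsc{F})=(S(F_n))_n \in \TFNP^{dt}$. I would show that $S(\textsc{F})$ lies in $\PLS^{dt}$ but not in $\PPADS^{dt}$, which is exactly the claim.

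First, $S(\textsc{F}) \in \PLS^{dt}$. A constant-width Resolution refutation on $n$ variables involves only $n^{O(1)}$ distinct clauses, hence may be assumed to have polynomial size; combined with its constant width this gives $\textup{Res}(F_n) = O(\log n)$. Plugging this into the characterisation $\PLS^{dt}(S(F)) = \Theta(\textup{Res}(F))$ of \cite{BussKT2014} yields $\PLS^{dt}(S(F_n)) = O(\log n) = \polylog(n)$, so $S(\textsc{F}) \in \PLS^{dt}$.

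Second, $S(\textsc{F}) \notin \PPADS^{dt}$. By the characterisation $\PPADS^{dt}(S(F)) = \Theta(\textup{uSA}(F))$ from \cref{thm:characterisations}, it suffices to show $\textup{uSA}(F_n) = n^{\Omega(1)}$. Towards a contradiction, suppose some uSA refutation $\Pi$ of $F_n$ satisfied $\log\size(\Pi) + \deg(\Pi) = n^{o(1)}$. Then in particular $\deg(\Pi) = n^{o(1)}$, so \cref{thm:res-sa} forces $\Pi$ to use a coefficient of magnitude $\exp(n^{\Omega(1)})$. Since a uSA proof is charged the full magnitude of each of its (unary) coefficients, this gives $\size(\Pi) \geq \exp(n^{\Omega(1)})$, hence $\log\size(\Pi) = n^{\Omega(1)}$, contradicting the assumption. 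Therefore $\textup{uSA}(F_n) = n^{\Omega(1)} = \omega(\polylog n)$, and $S(\textsc{F}) \notin \PPADS^{dt}$.

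Combining the two halves, $S(\textsc{F})$ separates $\PLS^{dt}$ from $\PPADS^{dt}$. I do not anticipate any genuine obstacle: once \cref{thm:res-sa} and the two characterisations are available, the corollary is immediate. The only points deserving a sentence of justification are that a constant-width Resolution refutation is automatically of polynomial size (so $\textup{Res}(F_n) = O(\log n)$, not merely $O(n)$), and that the exponential \emph{coefficient} lower bound of \cref{thm:res-sa} converts into a uSA \emph{size} lower bound precisely because unary coefficients count in full towards proof size.
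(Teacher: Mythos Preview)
Your proposal is correct and follows essentially the same route as the paper: combine \cref{thm:res-sa} with the characterisations $\PLS^{dt}(S(F))=\Theta(\textup{Res}(F))$ and $\PPADS^{dt}(S(F))=\Theta(\textup{uSA}(F))$ to exhibit a search problem in $\PLS^{dt}\setminus\PPADS^{dt}$. Your two supplementary observations (constant width $\Rightarrow$ polynomial Resolution size, and the coefficient lower bound translating into a uSA size lower bound because unary coefficients are counted in full) are exactly the points the paper relies on implicitly.
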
%
\begin{corollary} \label{cor:sopl-ppa}
$\SOPL^{dt} \not\subseteq \PPA^{dt}$.
\end{corollary}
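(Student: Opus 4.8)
The plan is to produce a single family of CNF contradictions that separates the two classes, by feeding the witness formulas of \cref{thm:revres-ns} into the two relevant characterisations. Concretely, I would take $\textsc{F} = (F_n)_n$ to be the family of $\sopl$ formulas guaranteed by \cref{thm:revres-ns}: each $F_n$ has $n$ variables, admits a constant-width polynomial-size RevRes refutation, yet every $\F$-NS refutation of $F_n$, over every field $\F$, requires degree $n^{\Omega(1)}$. The search problem witnessing the separation is then $S(\textsc{F}) = (S(F_n))_n$.

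For the positive side, membership in $\SOPL^{dt}$: since $F_n$ has a RevRes refutation of width $O(1)$ and size $\poly(n)$, the combined width-plus-log-size measure gives $\textup{RevRes}(F_n) = \log\poly(n) + O(1) = O(\log n)$, and the characterisation $\SOPL^{dt}(S(F)) = \Theta(\textup{RevRes}(F))$ from \cref{thm:characterisations} yields $\SOPL^{dt}(S(F_n)) = O(\log n) = \poly(\log n)$, so $S(\textsc{F}) \in \SOPL^{dt}$. For the negative side, non-membership in $\PPA^{dt}$: I would instantiate \cref{thm:revres-ns} at the field $\mathbb{F}_2$ — permissible because the degree lower bound is stated over \emph{every} field — so any $\mathbb{F}_2$-NS refutation of $F_n$ has degree $n^{\Omega(1)}$, hence $\mathbb{F}_2\textup{-NS}(F_n) \geq n^{\Omega(1)}$ under the same measure. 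The prior characterisation $\PPA^{dt}(S(F)) = \Theta(\mathbb{F}_2\textup{-NS}(F))$ recalled above then forces $\PPA^{dt}(S(F_n)) = n^{\Omega(1)}$, which is not $\poly(\log n)$; therefore $S(\textsc{F}) \notin \PPA^{dt}$. Putting the two halves together gives $\SOPL^{dt} \not\subseteq \PPA^{dt}$.

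There is no genuine obstacle remaining in the corollary itself: all the work is carried by \cref{thm:revres-ns} (the hard $\F$-NS degree lower bound for $\sopl$, obtained from the $\epsilon$-approximate-NS argument with the randomised decision-to-search reduction to $\Or$, together with the intersection-theorem argument that also covers positive characteristic) and by \cref{thm:characterisations} (the reductions realising the RevRes$\,\leftrightarrow\,\SOPL^{dt}$ correspondence). The only points needing care are bookkeeping: one must confirm that \cref{thm:revres-ns} really does give an NS lower bound over $\mathbb{F}_2$ in particular (it does, since the statement quantifies over all fields), and that its RevRes upper bound is simultaneously constant-width and polynomial-size, so that $\textup{RevRes}(F_n)$ is honestly $O(\log n)$ for the width-plus-log-size measure. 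If one wants a stronger conclusion, the lifted form of \cref{thm:revres-ns} upgrades the $\PPA^{dt}$ bound from $n^{\Omega(1)}$ to an exponential query lower bound, but the degree bound already suffices for the stated separation.
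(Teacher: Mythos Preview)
Your proposal is correct and matches the paper's intended argument: the corollary is stated as an immediate consequence of combining \cref{thm:revres-ns} (the RevRes upper bound and the $\F$-NS degree lower bound, instantiated at $\F=\mathbb{F}_2$) with the characterisations $\SOPL^{dt}(S(F)) = \Theta(\textup{RevRes}(F))$ from \cref{thm:characterisations} and $\PPA^{dt}(S(F)) = \Theta(\mathbb{F}_2\textup{-NS}(F))$ from prior work. The bookkeeping points you flag are exactly the ones that matter, and you handle them correctly.
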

\end{samepage}

Additional characterizations, as well as separation results, were obtained in the subsequent works~\cite{Hubacek2024, Li2024}.

\subsection{Two further separations}

We show two more black-box separations involving classes $\PPP^{dt}$ and $\UEOPL^{dt}$, which currently lack elegant proof system characterisations. The first separation strengthens \cref{cor:pls-ppads}. 
\begin{restatable}{theorem}{PLSPPP} \label{thm:pls-ppp}
$\PLS^{dt} \not\subseteq \PPP^{dt}$.
\end{restatable}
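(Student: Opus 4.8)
The plan is to exhibit a family of $n$-variate CNF formulas $F$ that are easy for Resolution---so that $S(F)\in\PLS^{dt}$ via the characterisation $\PLS^{dt}(S(F))=\Theta(\textup{Res}(F))$---but whose search problem admits no shallow decision-tree reduction to the pigeonhole problem $\pigeon$. The natural candidate is a suitable variant of the formula family used for \cref{thm:res-sa}, namely the iterated $\sod$ construction that embeds many $\sopl$ gadgets: these already have constant-width, polynomial-size Resolution refutations (this is the starting point of the proof of \cref{thm:res-sa}), so $S(F)\in\PLS^{dt}$, and it remains only to rule out $S(F)\in\PPP^{dt}$.

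The key step is a conversion lemma: \emph{a depth-$d$ decision-tree reduction from $S(F)$ to the $\pigeon$ instance on universe $[N+1]\to[N]$ yields an SA refutation of $F$ of degree $O(d\log N)$ with integer coefficients of magnitude $\poly(N)$} (and $N^{O(d)}$ monomials). To prove it I would compose the hypothetical reduction with the classical degree-$2$, unit-coefficient ``double-counting'' SA refutation of $\textup{PHP}^{N+1}_{N}$, which certifies $\sum_{i}\sum_{h}x_{ih}\ge N+1$ from the pigeon axioms and $\sum_{h}\sum_{i}x_{ih}\le N$ from the hole axioms, with a conical junta absorbing multiply-occupied holes. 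Under the reduction each hole-occupancy term $x_{ih}=\llbracket f(i)=h\rrbracket$ becomes a depth-$O(d\log N)$ decision tree over the variables of $F$, hence a nonnegative combination of conjunctions of literals of that degree; substituting, clearing the boolean axioms, and using the standard normal-form fact that a conical junta which is $\ge 1$ everywhere on the cube equals $1$ plus a conical junta (of essentially the same degree) puts the identity into SA form, while counting decision-tree leaves and monomial expansions gives the coefficient and size bounds. For a \emph{moderate} universe ($N\le\exp(n^{o(1)})$) this already finishes the proof: the resulting SA refutation has degree $O(d\log N)=\poly\log n\cdot n^{o(1)}=n^{o(1)}$ and coefficients $\poly(N)=\exp(n^{o(1)})$, directly contradicting \cref{thm:res-sa}; the monomial-count strengthening of \cref{thm:res-sa} (exponentially many monomials \emph{or} exponentially large coefficients) extends this to a wider range of $N$.

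The main obstacle---and what I expect to be the technical heart of the argument---is the regime of \emph{large} universe $N$, say $N$ exponential in $n$. There the conversion lemma becomes too lossy: its degree bound exceeds $\sqrt n$, yet the iterated-$\sod$ construction cannot by itself yield SA degree lower bounds much beyond $\sqrt n$ (a $\sopl$ gadget on $s$ variables only obstructs degree up to about $\sqrt s\le\sqrt n$, since the lower bound routes through the $\Omega(\sqrt s)$ approximate degree of $\Or$), so no contradiction survives this way. To cover large $N$ I would aim for one of two things: (a) a \emph{universe-compression} lemma stating that any depth-$d$ reduction to $\pigeon_N$ can be converted, at the cost of $O(d)$ extra depth, into a reduction to $\pigeon_{N'}$ with $N'\le 2^{\poly\log n}$---intuitively by composing $f$ with a carefully chosen collision-respecting hash on the universe together with a shallow decoder for the induced collisions---after which the moderate-universe argument applies; or (b) a direct adversary argument against the iterated-$\sod$ formula, maintaining a partial assignment along the $\sopl$ gadgets that keeps all queried pigeon values distinct and no clause of $F$ falsified for $n^{\Omega(1)}$ queries, thereby showing that \emph{no} reduction of depth $n^{o(1)}$ exists for \emph{any} $N$. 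Either route must engage directly with the unbounded pigeon universe---the feature by which $\PPP$ surpasses $\PPADS$---which is precisely why this separation does not follow from the uSA characterisation of $\PPADS^{dt}$ in \cref{thm:characterisations} alone and genuinely needs more than \cref{thm:res-sa}.
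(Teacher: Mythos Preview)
Your conversion lemma is false as stated, and the error is precisely the ``standard normal-form fact'' you invoke: a degree-$d$ conical junta $J$ with $J(x)\ge 1$ everywhere need \emph{not} equal $1+J'$ for a conical junta $J'$ of degree $O(d)$. To see that the lemma cannot hold, apply it to the identity reduction from $\pigeon_N$ to itself (depth $d=1$). You would obtain a degree-$O(\log N)$ SA refutation of the $\pigeon_N$ CNF with $\poly(N)$ monomials and $\poly(N)$-bounded integer coefficients; by the characterisation $\PPADS^{dt}(S(F))=\Theta(\textup{uSA}(F))$ in \cref{thm:characterisations} this would place $\pigeon$ in $\PPADS^{dt}$ and collapse $\PPP^{dt}$ to $\PPADS^{dt}$, contradicting the black-box separation of~\cite{Beame1998}. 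Concretely, after your substitution the polynomial you must certify as a low-degree conical junta is exactly the collision-count excess $\sum_{u<v}\llbracket s_u=s_v\rrbracket-1$, and such a certificate would \emph{itself} be a uSA refutation of $\pigeon$. This is also why the paper lists a proof-system characterisation of $\PPP$ as open (\cref{sec:open-problems}). Separately, your ``large-$N$'' obstacle is a red herring: by the definition of $\PPP^{dt}$ one always has $\log N\le\poly(\log n)$, so only the moderate-universe regime ever arises---but the conversion lemma already fails there.

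The paper's route is different and is exactly what is needed to bridge the $\PPP$/$\PPADS$ gap for this particular hard problem. Rather than convert the $\pigeon$-reduction into a proof, it converts it into a $\reversiblepigeon$-reduction (which is $\PPADS^{dt}$-complete) by \emph{manufacturing} the missing reverse pointers. The enabling property is that $\sod$ is \emph{glueable}: after a small completion of each decision tree in the reduction, any two non-witnessing leaves can be merged without revealing a solution and while still leaving a residual search problem of query complexity $>k$. Glueability forces the non-witnessing leaves mapped to any fixed hole $h$ to be pairwise inconsistent (otherwise the solution tree $S_{i,i'}$ would solve the glued residual problem in depth $k$), so a depth-$d^2$ decision tree $H_h$ can identify the unique pigeon that maps non-trivially to $h$. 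These $H_h$ serve as predecessor pointers, yielding $\sod\in\PPADS^{dt}$ and contradicting \cref{cor:pls-ppads}. Thus the feature of $\PPP$ that must be neutralised here is not the universe size but the absence of hole-predecessors, and glueability of the hard instance is what lets one reconstruct them.
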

\vspace{-3mm}
\begin{restatable}{theorem}{EOPLUEOPL} \label{thm:eopl-ueopl}
$\EOPL^{dt} \not\subseteq \UEOPL^{dt}$.
\end{restatable}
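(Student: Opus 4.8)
The plan is to exhibit a family of unsatisfiable CNF formulas $F_n$, of width $O(\log n)$, such that $S(F_n)\in\EOPL^{dt}$ while $\UEOPL^{dt}(S(F_n)) = n^{\Omega(1)}$. For the first half we invoke the characterisation $\EOPL^{dt}(S(F))=\Theta(\textup{RevResT}(F))$ from \cref{thm:characterisations}, so it suffices to design $F_n$ so that it admits a polynomial-size RevResT refutation. We take $F_n$ to encode an \EOPL-style potential-line principle augmented with a ``hidden fork'' gadget: informally, the instance describes a potential line from a fixed source together with a gadget that, when examined, reveals which of two locally indistinguishable continuations of the line is genuine and which is a decoy, arranged so that an endpoint of the genuine line is always a valid solution but so that locating it --- or certifying a decoy to be a decoy --- requires evaluating an $\Or$ on $N=n^{\Omega(1)}$ bits. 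A RevResT refutation of $F_n$ is then produced by the usual ``walk along the line, pair up symmetric clauses, discharge terminals'' recipe; since RevResT may consume its premises and branch, it can traverse both continuations for free, so this direction is routine.

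The core of the argument is the lower bound $\UEOPL^{dt}(S(F_n))=n^{\Omega(1)}$, which we must prove directly because $\UEOPL^{dt}$ --- unlike $\PPAD^{dt}$, $\PPADS^{dt}$, or $\SOPL^{dt}$ --- has no known proof-system characterisation. Assume toward a contradiction a depth-$d$ decision-tree reduction $\rho$ from $S(F_n)$ to the canonical $\UEOPL^{dt}$-complete problem. We run a randomised decision-to-search reduction in the spirit of the $\epsilon$-NS lower bound behind \cref{thm:revres-ns}, converting $\rho$ into a randomised depth-$O(d)$ procedure that evaluates the hidden $\Or$. The crucial use of the uniqueness promise is that on \emph{every} input the instance produced by $\rho$ must be a single line with strictly increasing potential issuing from the prescribed source: $\rho$ therefore cannot simply route the line around both continuations, and is forced either to send the unique line through the genuine continuation (thereby committing to the $\Or$ value) or to plant a promise-violation --- a repeated potential, a predecessor mismatch, and so on --- whose location, by design of the gadget, also encodes the $\Or$ value. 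Feeding either outcome back through the gadget yields a shallow $\Or$ solver, contradicting the $\Omega(\sqrt{N})$ lower bound on the approximate degree (equivalently, randomised query complexity) of the $N$-bit $\Or$.

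The main obstacle is this lower-bound step, and in particular engineering $F_n$ so that both requirements hold at once: the fork gadget must be transparent enough for RevResT to pass through at no cost, yet opaque enough that presenting the structure as a \emph{single} $\UEOPL$ line --- the entire content of the promise --- is provably impossible without revealing an $\Or$ bit. Getting the decision-to-search bookkeeping to respect both sides, and checking that \emph{all} flavours of $\UEOPL$ solution (the end of the line together with each type of promise-violation) map back to $F_n$-solutions that are cheap for $\EOPL$ but expensive to reach under the uniqueness constraint, is where the real work is concentrated.
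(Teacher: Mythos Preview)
Your approach is quite different from the paper's, and the core lower-bound step contains a genuine gap.

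The paper does not construct a bespoke hard formula at all. It proves a general lemma: if a \emph{glueable} problem lies in $\UEOPL^{dt}$, then it already lies in $\FP^{dt}$. Here ``glueable'' means that any small set of partial assignments can be completed by shallow decision trees so that the union of any two non-witnessing completions is still non-witnessing and still leaves the search problem hard. The point is that the extra ``two parallel lines'' solution type in $\ueopl$ forces there to be at most one non-witnessing active node per row of the reduced instance (two such nodes would already yield a type-(5) solution and hence a solution to the original problem, contradicting glueability); one can then binary-search down the rows to locate the end of the line in $\poly(\log n)$ depth. The paper finishes by showing that (a re-encoded) $\eopl$ is glueable and invoking the known query lower bound $\EOPL^{dt}\neq\FP^{dt}$ from \cite{Hubacek2020}. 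No $\Or$-embedding, no RevResT upper bound, no explicit formula.

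Your proposal, by contrast, hinges on the sentence ``on every input the instance produced by $\rho$ must be a single line''. This is false: $\ueopl$ is a total search problem, not a promise problem. The reduction $\rho$ is free to output an instance with many lines; all that is required is that the solution map $g$ send every $\ueopl$ solution --- including any type-(5) witness $(i,j),(i,j')$ --- to a valid solution of $S(F_n)$. Nothing stops $\rho$ from embedding your fork verbatim, so that both continuations sit on the same rows, and then disposing of the resulting type-(5) witnesses by a fixed recipe that never consults the hidden $\Or$ bits. You do anticipate this in your fallback clause (``or to plant a promise-violation whose location \dots\ also encodes the $\Or$ value''), but that is precisely the statement that needs proving, and you give no argument for why \emph{every} pair of same-row active nodes in the reduced instance must determine the $\Or$. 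Since the gadget itself is not specified, there is nothing to check; and the natural candidates fail for exactly the reason above. The ``real work'' you flag as concentrated in this step is, in fact, the entire proof.
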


(An early preprint of this work did not include the above theorems. In an independent work, Bonacina and Thapen~\cite{Bonacina2022} also proved \cref{thm:pls-ppp}, deriving it from \cref{cor:pls-ppads} using essentially the same proof as we do.)

\cref{thm:pls-ppp} settles the last open oracle separation question between the five original~$\TFNP$ classes introduced in~\cite{Johnson1988,Papadimitriou1994}. This question was re-asked recently by Daskalakis in his Nevanlinna Prize lecture~\cite[Open Question 6]{Daskalakis2019}. Previously, Buresh-Oppenheim and Morioka~\cite{Buresh2004} showed a partial result in the direction of \cref{thm:pls-ppp}, namely, that there is no reduction from $\PLS^{dt}$ to~$\PPP^{dt}$ that preserves the number of solutions in each instance. Finally, \cref{thm:eopl-ueopl} answers a question of~\cite{Fearnley2020} who introduced the class $\UEOPL$. They conjectured that $\EOPL\not\subseteq\UEOPL$ and asked whether this could be shown relative to an oracle.

\subsection{Intersection theorems in proof complexity}\label{sec:intro-intersection-thms}

Our new characterisations can be combined with the collapses $\SOPL = \PLS \cap \PPADS$ and $\EOPL = \PLS \cap \PPAD$~\cite{GoosHJMPRT22-collapses} (which hold in the black-box model) to produce completely new types of results in propositional proof complexity that we call \emph{intersection theorems}.

Stated plainly, the first of these results says that a CNF formula $F$ admits an efficient (small degree and size) Reversible Resolution refutation \emph{if and only if} if it admits an efficient Resolution refutation \emph{and} an efficient unary Sherali--Adams refutation.
In other words, Reversible Resolution is the ``intersection'' of Resolution and unary Sherali--Adams. We can similarly show that Reversible Resolution with Terminals is the ``intersection'' of Resolution and unary Nullstellensatz.

\begin{restatable}{theorem}{Intersection} \label{thm:intersection-thm}
For any unsatisfiable CNF formula $F$, we have:
\begin{itemize}
\item $\textup{RevRes}(F) = \Theta(\textup{Res}(F) + \textup{uSA}(F))$.
\item $\textup{RevResT}(F) = \Theta(\textup{Res}(F) + \textup{uNS}(F))$.
\end{itemize}
\end{restatable}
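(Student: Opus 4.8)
The plan is to prove the two intersection identities by routing everything through the search-problem characterisations of Theorem \ref{thm:characterisations} together with the known black-box collapses $\SOPL^{dt} = \PLS^{dt} \cap \PPADS^{dt}$ and $\EOPL^{dt} = \PLS^{dt} \cap \PPAD^{dt}$ from \cite{GoosHJMPRT22-collapses}. Concretely, for the first identity, apply $S(\cdot)$ to the given CNF contradiction $F$ and read each of the three proof-system measures as a $\TFNP^{dt}$ complexity: Theorem \ref{thm:characterisations} gives $\SOPL^{dt}(S(F)) = \Theta(\mathrm{RevRes}(F))$ and $\PPADS^{dt}(S(F)) = \Theta(\mathrm{uSA}(F))$, while the prior characterisation $\PLS^{dt}(S(F)) = \Theta(\mathrm{Res}(F))$ of \cite{BussKT2014} handles Resolution. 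Thus the claimed equality $\mathrm{RevRes}(F) = \Theta(\mathrm{Res}(F) + \mathrm{uSA}(F))$ is, after translation, exactly the statement $\SOPL^{dt}(S(F)) = \Theta\big(\PLS^{dt}(S(F)) + \PPADS^{dt}(S(F))\big)$, which is precisely the content of the collapse $\SOPL^{dt} = \PLS^{dt} \cap \PPADS^{dt}$ when that collapse is stated quantitatively at the level of reduction-complexities of a single problem. The second identity is identical, replacing $(\mathrm{uSA}, \PPADS, \SOPL)$ by $(\mathrm{uNS}, \PPAD, \EOPL)$ and $\mathrm{RevRes}$ by $\mathrm{RevResT}$, and using $\EOPL^{dt} = \PLS^{dt} \cap \PPAD^{dt}$.

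The two inequalities should be handled separately so that the logarithmic/additive bookkeeping is transparent. The easy direction is $\mathrm{Res}(F) + \mathrm{uSA}(F) = O(\mathrm{RevRes}(F))$: this just says $\mathrm{RevRes}$ is $p$-simulated by both Resolution (trivially, since $\mathrm{RevRes}$ is a syntactic restriction of Resolution, so any RevRes refutation is already a Resolution refutation of no larger size and width) and by $\mathrm{uSA}$ — the latter is exactly the simulation recalled in the excerpt after the definition of RevRes, where the usual Dantchev--Martin / Atserias--Lauria--Nordström-style simulation of Resolution by SA, when applied to a RevRes proof, produces coefficients bounded by the proof size. One should check that this simulation incurs only a constant-factor blow-up in degree and a polynomial blow-up in size, so that after taking $\log$ and adding the degree term the $\Theta$ is preserved. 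The hard direction, $\mathrm{RevRes}(F) = O(\mathrm{Res}(F) + \mathrm{uSA}(F))$, is where the collapse theorem does the real work: given a small Resolution refutation \emph{and} a small uSA refutation of $F$, one must manufacture a small RevRes refutation. Via the characterisations this is equivalent to: given an efficient $\PLS^{dt}$-reduction and an efficient $\PPADS^{dt}$-reduction of $S(F)$, produce an efficient $\SOPL^{dt}$-reduction — and this is exactly what (the query-model version of) \cite{GoosHJMPRT22-collapses} provides.

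The main obstacle, and the step that needs genuine care rather than invocation, is verifying that everything is tight up to constants in the $\log\size + \deg$ metric, not merely up to polynomials. The characterisations in Theorem \ref{thm:characterisations} are stated as $\Theta(\cdot)$ in this combined measure, and the collapses of \cite{GoosHJMPRT22-collapses} must likewise be available in the form ``$\SOPL^{dt}(R) = \Theta(\PLS^{dt}(R) + \PPADS^{dt}(R))$ for every $R \in \TFNP^{dt}$'' — i.e. as a statement about a fixed problem with explicit control on reduction depth and on the number of solutions/queries, rather than just the class-level inclusion $\SOPL^{dt} \subseteq \PLS^{dt} \cap \PPADS^{dt}$ plus its reverse. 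Provided that quantitative form is in hand (it is, since black-box reductions compose with only additive overhead in depth and the reductions in the collapse proofs are decision trees of the stated depth), the proof is essentially a diagram chase: translate $F \rightsquigarrow S(F)$, apply the four characterisations to replace proof measures by reduction complexities, invoke the two collapse identities, and translate back. I would write it out in exactly that order, stating the quantitative collapse as a cited lemma and spending the bulk of the prose on the uSA-simulation constants for the easy direction.
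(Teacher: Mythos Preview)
Your proposal is correct and follows essentially the same approach as the paper: route through the characterisations of \cref{thm:characterisations} (and the $\PLS^{dt}$/Resolution characterisation of \cite{BussKT2014}), handle the easy direction via the direct simulations of RevRes by Resolution and by uSA, and handle the hard direction by invoking the quantitative black-box form of the collapses from \cite{GoosHJMPRT22-collapses} to pass from a pair of $\sod$- and $\sol$-formulations to an $\sopl$-formulation. The paper formalises this as \cref{thm:intersection-formal}, explicitly extracting the depth/size bounds from the collapse proofs before composing them with the characterisations, exactly as you outline.
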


To our knowledge, these are the first theorems of their type, that is, showing that efficient proofs exist in one system $\textsc{P}_0$ \emph{if and only if} efficient proofs exist in two other systems $\textsc{P}_1$ and $\textsc{P}_2$. This is all the more striking given that all three of these proof systems are quite natural, being motivated from boolean logic and SAT-solving ($\textup{Res}$), linear programming ($\textup{uSA}$), and MaxSAT solving ($\textup{RevRes}$).
Moreover, the proof of this theorem (\cref{sec:intersection}) crucially uses \emph{both} perspectives of proof systems and total search problems.
Starting with propositional proofs in Resolution and unary Sherali--Adams, we convert them to efficient formulations of $S(F)$ in $\PLS^{dt}$ and $\PPADS^{dt}$, respectively. 
We then apply the collapse theorem to argue there is an efficient formulation of~$S(F)$ in $\SOPL^{dt}$, which we can finally convert back to a $\textup{RevRes}$ proof.
We see no apparent way to prove this theorem directly using classic proof complexity techniques.

\subsection{Open problems} \label{sec:open-problems}

In our opinion, exploring the interplay between $\TFNP$ and propositional proof complexity holds untapped potential. The results in this work arose from our core belief that \emph{a natural concept introduced in one theory should have a natural counterpart in another theory.} This philosophy suggests many further directions for research and serves as a guiding principle for formulating new beautiful connections between the two theories. For example:
\begin{enumerate}[label=(\arabic*)]
\item Can \cref{thm:res-sa} be strengthened to show that the Sum-of-Squares system needs huge coefficients to simulate Resolution in low degree?
\item Can we characterise the class $\PPP$ by a proof system?
\item Does unary-NS $p$-simulate $\Z$-NS for refuting CNF formulas?
\item Can we prove other intersection theorems in propositional proof complexity?
\item Do Sum-of-Squares and Polynomial Calculus characterise some $\TFNP$ classes?
\item Are there communication complexity analogues of our results? The recent column~\cite{Rezende2022} surveys the connections between total search problems and characterisations of various circuit models in the language of communication complexity (via Karchmer--Wigderson games).
\end{enumerate}
We note here that Buss, Fleming and Impagliazzo~\cite{BussFI2022} have recently provided an answer to question (5) by giving a $\TFNP$ characterization of Polynomial Calculus. In fact, they show a more general connection: every well-behaved proof system which can prove its own soundness is characterized by a $\TFNP$ problem, and vice-versa. This also answers question (2), although ideally we would like to characterize $\PPP$ by a more natural proof system than the one obtained through this generic connection.

\section{Definitions} \label{sec:definitions}

In this section we give formal definitions of the total search problems that we consider in this work.
We emphasise that unlike the standard uniform setting of $\TFNP$, we will be interested in non-uniform variants of $\TFNP$ classes defined by decision trees.

\subsection{Decision tree $\TFNP$}

\begin{definition}
	A \emph{total (query) search problem} is a sequence of relations $\textsc{R} = \set{R_n \subseteq \B^{n} \times O_n}$, where $O_n$ are finite sets, such that for all $x \in \B^n$ there is an $o \in O_n$ such that $(x, o) \in R_n$.
	A total search problem $\textsc{R}$ is in $\TFNP^{dt}$ if for each $o \in O_n$ there is a decision tree $T_o$ with depth $\poly(\log n)$ such that for every $x \in \B^n$, $T_o(x) = 1$ iff $(x, o) \in \textsc{R}$.
\end{definition}

While total search problems are formally defined as sequences $\textsc{R} = (R_n)$, it will often make sense to speak of an individual search problem $R_n$ in the sequence.
We will therefore slightly abuse notation and also call $R_n$ a total search problem.
It will also be convenient to encode total search problems with inputs and outputs chosen from domains other than $\B^n$. One common example will be total search problems where the inputs are chosen from $[n]^n$.
We can simulate this simply by encoding all elements of the non-boolean domain in binary in the usual way.
In all examples in this paper, performing this encoding will change the complexities of the involved problems by no more than a $O(\log n)$ factor.
We also allow the $n$-th problem $R_n$ in a sequence to have $\poly(n)$ input bits (instead of $n$) for notational convenience.

The canonical examples of total search problems in $\TFNP^{dt}$ are the search problems associated with an unsatisfiable CNF formula $F$.

\begin{definition}
	For any unsatisfiable CNF formula $F \coloneqq C_1 \land \cdots \land C_m$ over $n$ variables, define $S(F) \subseteq \B^n \times [m]$ by $(x, i) \in S(F)$ if and only if $C_i(x) = 0$.
\end{definition}

Therefore, given any sequence of unsatisfiable CNF formulas $\textsc{F} = \set{F_1, F_2, \ldots}$ we get a total search problem $S(\textsc{F}) = \set{S(F_1), S(F_2), \dots}$ in the natural way.
Observe that $S(\textsc{F}) \in \TFNP^{dt}$ if each unsatisfiable CNF formula has width $\poly(\log n)$. Conversely, these examples are also \emph{complete}, in the sense that any search problem in $\TFNP^{dt}$ can be re-encoded as unsatisfiable CNF formulas.
\begin{definition}
	For any total search problem $R\subseteq\{0,1\}^n\times O$ with solution verifiers $T_o$, $o\in O$, its encoding as an unsatisfiable CNF formula is given by $F\coloneqq \bigwedge_{o \in O} \neg T_o(x)$ where we think of $\neg T_o(x)$ written as a CNF formula (of width determined by the decision tree depth of $T_o$).
\end{definition}

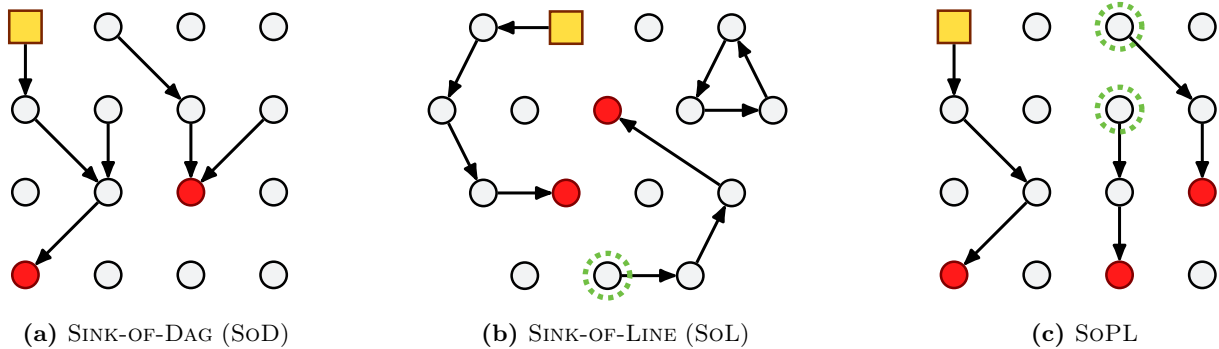
\begin{figure}[t]
\centering
\begin{subfigure}[b]{0.25\textwidth}
    \centering
    \begin{tikzpicture}[y=-1cm, scale=1]
% node grid
\coordinate (p11) at (\labelspace, \labelspace); \coordinate (p12) at (\labelspace + \spacex, \labelspace); \coordinate (p13) at (\labelspace + 2*\spacex, \labelspace); \coordinate (p14) at (\labelspace + 3*\spacex, \labelspace); 
\coordinate (p21) at (\labelspace, \labelspace + \spacey); \coordinate (p22) at (\labelspace + \spacex, \labelspace + \spacey);	 \coordinate (p23) at (\labelspace + 2*\spacex, \labelspace + \spacey); \coordinate (p24) at (\labelspace + 3*\spacex, \labelspace + \spacey); 
\coordinate (p31) at (\labelspace, \labelspace + 2*\spacey); \coordinate (p32) at (\labelspace + \spacex, \labelspace + 2*\spacey); \coordinate (p33) at (\labelspace + 2*\spacex, \labelspace + 2*\spacey); \coordinate (p34) at (\labelspace + 3*\spacex, \labelspace + 2*\spacey); 
\coordinate (p41) at (\labelspace, \labelspace + 3*\spacey); \coordinate (p42) at (\labelspace + \spacex, \labelspace + 3*\spacey); \coordinate (p43) at (\labelspace + 2*\spacex, \labelspace + 3*\spacey); \coordinate (p44) at (\labelspace + 3*\spacex, \labelspace + 3*\spacey); 

% node drawing
\tikzstyle{node_regular} = [node_regular_intro]

\node[node_a]        (P11) at (p11) {};
\node[node_regular]  (P12) at (p12) {};
\node[node_regular]  (P13) at (p13) {};
\node[node_regular] (P14) at (p14) {};
\node[node_regular]  (P21) at (p21) {};
\node[node_regular]  (P22) at (p22) {};
\node[node_regular]  (P23) at (p23) {};
\node[node_regular]  (P24) at (p24) {};
\node[node_regular]  (P31) at (p31) {};
\node[node_regular]  (P32) at (p32) {};
\node[node_solution] (P33) at (p33) {};
\node[node_regular]  (P34) at (p34) {};
\node[node_solution]  (P41) at (p41) {};
\node[node_regular]  (P42) at (p42) {};
\node[node_regular]  (P43) at (p43) {};
\node[node_regular]  (P44) at (p44) {};

% edges drawing
\draw[edge_regular] (P11) -- (P21);
\draw[edge_regular] (P21) -- (P32);
\draw[edge_regular] (P22) -- (P32);
\draw[edge_regular] (P32) -- (P41);
\draw[edge_regular] (P12) -- (P23);
\draw[edge_regular] (P23) -- (P33);
\draw[edge_regular] (P24) -- (P33);

\end{tikzpicture}
    \vspace{2mm}
    \caption{$\sodLong$ ($\sod$)}
    \label{figure:intro_sod}
\end{subfigure}%
\begin{subfigure}[b]{0.49\textwidth}
    \centering
    \begin{tikzpicture}[y=-1cm, scale=1]
% node grid
\coordinate (p11) at (0, 0); 				 \coordinate (p12) at (\spacex, 0); 				 \coordinate (p13) at (2*\spacex, 0); 				\coordinate (p14) at (3*\spacex, 0);
\coordinate (p20) at (-\spacex/2, \spacey); \coordinate (p21) at (\spacex/2, \spacey);	 \coordinate (p22) at (3*\spacex/2, \spacey); 	 \coordinate (p23) at (5*\spacex/2, \spacey); 	\coordinate (p24) at (7*\spacex/2, \spacey);
\coordinate (p31) at (0, 2*\spacey); \coordinate (p32) at (\spacex, 2*\spacey); \coordinate (p33) at (2*\spacex, 2*\spacey); \coordinate (p34) at (3*\spacex, 2*\spacey);
\coordinate (p40) at (-\spacex/2, 3*\spacey); \coordinate (p41) at (\spacex/2, 3*\spacey); \coordinate (p42) at (3*\spacex/2, 3*\spacey); \coordinate (p43) at (5*\spacex/2, 3*\spacey); \coordinate (p44) at (7*\spacex/2, 3*\spacey);

% node drawing
\tikzstyle{node_regular} = [node_regular_intro]

\node[node_regular] (P11) at (p11) {};
\node[node_a]       (P12) at (p12) {};
\node[node_regular] (P13) at (p13) {};
\node[node_regular] (P14) at (p14) {};
\node[node_regular] (P20) at (p20) {};
\node[node_regular] (P21) at (p21) {};
\node[node_solution] (P22) at (p22) {};
\node[node_regular] (P23) at (p23) {};
\node[node_regular] (P24) at (p24) {};
\node[node_regular] (P31) at (p31) {};
\node[node_solution] (P32) at (p32) {};
\node[node_regular] (P33) at (p33) {};
\node[node_regular] (P34) at (p34) {};
\node[node_regular] (P41) at (p41) {};
\node[node_regular] (P42) at (p42) {};
\node[node_regular] (P43) at (p43) {};

% edges drawing
\draw[edge_regular] (P12) -- (P11);
\draw[edge_regular] (P11) -- (P20);
\draw[edge_regular] (P20) -- (P31);
\draw[edge_regular] (P31) -- (P32);

\draw[edge_regular] (P42) -- (P43);
\draw[edge_regular] (P43) -- (P34);
\draw[edge_regular] (P34) -- (P22);

\draw[edge_regular] (P24) -- (P14);
\draw[edge_regular] (P14) -- (P23);
\draw[edge_regular] (P23) -- (P24);

% notice node
%\node[node_notice] at (p20) {};
\node[node_notice] at (p42) {};
\end{tikzpicture}
    \vspace{0.5mm}
    \caption{$\solLong$ ($\sol$)}
    \label{figure:intro_sol}
\end{subfigure}
\begin{subfigure}[b]{0.25\textwidth}
    \centering
    \begin{tikzpicture}[y=-1cm, scale=1]
% node grid
\coordinate (p11) at (0, 0); 				 \coordinate (p12) at (\spacex, 0); 				 \coordinate (p13) at (2*\spacex, 0); 				\coordinate (p14) at (3*\spacex, 0);
\coordinate (p21) at (0, \spacey);	 \coordinate (p22) at (\spacex, \spacey); 	 \coordinate (p23) at (2*\spacex, \spacey); 	\coordinate (p24) at (3*\spacex, \spacey);
\coordinate (p31) at (0, 2*\spacey); \coordinate (p32) at (\spacex, 2*\spacey); \coordinate (p33) at (2*\spacex, 2*\spacey); \coordinate (p34) at (3*\spacex, 2*\spacey);
\coordinate (p41) at (0, 3*\spacey); \coordinate (p42) at (\spacex, 3*\spacey); \coordinate (p43) at (2*\spacex, 3*\spacey); \coordinate (p44) at (3*\spacex, 3*\spacey);

% node drawing
\tikzstyle{node_regular} = [node_regular_intro]

\node[node_a]       (P11) at (p11) {};
\node[node_regular] (P12) at (p12) {};
\node[node_regular] (P13) at (p13) {};
\node[node_regular] (P14) at (p14) {};
\node[node_regular] (P21) at (p21) {};
\node[node_regular] (P22) at (p22) {};
\node[node_regular] (P23) at (p23) {};
\node[node_regular] (P24) at (p24) {};
\node[node_regular] (P31) at (p31) {};
\node[node_regular] (P32) at (p32) {};
\node[node_regular] (P33) at (p33) {};
\node[node_solution] (P34) at (p34) {};
\node[node_solution] (P41) at (p41) {};
\node[node_regular] (P42) at (p42) {};
\node[node_solution] (P43) at (p43) {};
\node[node_regular] (P44) at (p44) {};

% edges drawing
\draw[edge_regular] (P11) -- (P21);
\draw[edge_regular] (P21) -- (P32);
\draw[edge_regular] (P32) -- (P41);

\draw[edge_regular] (P23) -- (P33);
\draw[edge_regular] (P33) -- (P43);

\draw[edge_regular] (P13) -- (P24);
\draw[edge_regular] (P24) -- (P34);

% notice node
\node[node_notice] at (p13) {};
\node[node_notice] at (p23) {};
\end{tikzpicture}
    \vspace{2mm}
    \caption{$\sopl$}
    \label{figure:intro_sopl}
\end{subfigure}%
\caption{Examples of total search problems. The distinguished source node is drawn as a yellow square. Red nodes are associated with solutions. (For visual clarity, we highlight the actual sink nodes for $\sod$ rather than their predecessors.) Nodes circled in green would be solutions for $\eol$ and $\eopl$, respectively.}
\label{figure:intro_problems}
\end{figure}

\subsection{Search problem zoo} \label{sec:zoo}
We now define several search problems that will be of interest to us. See also \cref{figure:intro_problems} for helpful illustrations of some of them. We start with the problems that are complete for the classical classes introduced in~\cite{Johnson1988,Papadimitriou1994}.

\begin{description}
\item[$\PPP$: $\pigeon$ ($\pigeon_n$).]
This problem features $n$ pigeons, denoted by $[n]$, and as input we are given, for each pigeon $u\in[n]$ a hole $s_u \in [n-1]$. The goal is to output
\begin{enumerate}
\item $u,v \in [n]$, if $u \neq v$ and $s_u = s_v$. \hfill \emph{(pigeon collision)}
\end{enumerate}

\item[$\PPADS$: $\solLong$ ($\sol_n$).]
This problem is defined on a set of $n$ nodes, denoted by $[n]$, where the node $1$ is ``distinguished''.
For input, we are given a successor $s_u \in [n]$ for each node $u \in [n]$ and a predecessor $p_u \in [n]$ for each node $u \neq 1$. 
Given this list of successor/predecessor pointers we create a directed graph $G$ where we add an edge $(u, v)$ if and only if $s_u = v$ and $p_v = u$. We say $u$ is a \emph{proper sink} if it has in-degree $1$ and out-degree $0$, and it is a \emph{proper source} if it has in-degree $0$ and out-degree $1$.
The goal of the search problem is to output any of the following
\begin{enumerate}
	\item $1$, if $1$ is not a proper source node in $G$, or \hfill \emph{(no distinguished source)}
	\item $i \neq 1$, if $i$ is a proper sink node in $G$. \hfill \emph{(proper sink)}
\end{enumerate}

\item[$\PPAD$: $\eolLong$ ($\eol_n$).]
Same as $\sol$, except we add the following feasible solution.
\begin{enumerate}[resume]
	\item $i \neq 1$, if $i$ is a proper source node in $G$.
	\hfill \emph{(proper source)}
\end{enumerate}

\item[$\PLS$: $\sodLong$ ($\sod_n$).]
This problem is defined on the $[n] \times [n]$ grid, where the node $(1,1)$ is ``distinguished''.
As input, for each grid node $u=(i,j) \in [n] \times [n]$, we are given a \emph{successor} $s_u \in [n] \cup \set{\nul}$, interpreted as naming a node $(i+1,s_u)$ on the next row.
We say a node $u$ is \emph{active} if $s_u \neq \nul$, otherwise it is \emph{inactive}. A node $u$ is a \emph{proper sink} if $u$ is inactive but some active node has $u$ as a successor.
The goal of the search problem is to output any of the following
\begin{enumerate}
	\item \label{sod1} $(1,1)$, if $(1,1)$ is inactive \hfill \emph{(inactive distinguished source)}
	\item \label{sod2} $(n, j)$, if $(n,j)$ is active, \hfill \emph{(active sink)}
	\item \label{sod3} $(i, j)$ for $i \leq n-1$, if $(i,j)$ is active and its successor is a proper sink. \hfill \emph{(proper sink)}
\end{enumerate}
\end{description}
For $\sod$, it is helpful to think of the successors $s_u$ as describing a fan-out $1$ dag on an $n \times n$ grid of nodes such that all edges are between adjacent rows.
Active nodes are those nodes which have some edge leaving them.
Then, if we require that $(1, 1)$ is active and all nodes on row $n$ are inactive, the goal is to find a \emph{proper sink}, that is, an active node with an inactive successor node.

We next define complete problems for the more modern classes introduced in~\cite{Hubacek2020,Fearnley2020,Goos2018}. They are variations of the $\sod$ problem where all nodes in the grid have predecessor pointers and we only add an edge if the successor and predecessor pointers agree.
In particular, this implies that every node has fan-out \emph{and} fan-in $1$.

\begin{description}
\item[$\SOPL$: $\soplLong$ ($\sopl_n$).]
	As input we are given a \emph{successor} $s_{u} \in [n] \cup \set{\nul}$ for each $u \in [n] \times [n]$ and a \emph{predecessor} $p_u \in [n] \cup \set{\nul}$ for each $u \in \set{2, \dots, n} \times [n]$.
	A node $(i,j) \in [n-1] \times [n]$ is \emph{active} if $s_{(i,j)} = k \neq \nul$ and $p_{(i+1,k)} = j$, otherwise it is \emph{inactive}; a node $(i,j) \in \set{n} \times [n]$ is active if $s_{(i,j)} \neq \nul$ and inactive otherwise.
	A node $u$ is a \emph{proper sink} if $u$ is inactive but some active node has $u$ as a successor.
	The goal is to output any of the following
	\begin{enumerate}
	\item $(1,1)$, if $(1,1)$ is inactive, \hfill \emph{(inactive distinguished source)}
	\item $(n, j)$, if $(n, j)$ is active, \hfill \emph{(active sink)}
	\item $(i, j)$, if $(i,j)$ is a proper sink. \hfill \emph{(proper sink)}
\end{enumerate}
\item[$\EOPL$: $\eoplLong$ ($\eopl_n$).]
Add the following feasible solutions to $\sopl$.
	A node $(i,j)$ is a \emph{proper source} if $(i,j)$ is active and, either, $i = 1$ or $1 < i < n$ and there is no active node with $(i, j)$ as a successor.
	\begin{enumerate}[resume]
		\item $(i,j)$, if $(i, j)\neq(1,1)$ and $(i,j)$ is a proper source. \hfill \emph{(proper source)}
	\end{enumerate}
\item[$\UEOPL$: $\ueoplLong$ ($\ueopl_n$).]
Add the following feasible solution to $\eopl$.
\begin{enumerate}[resume]
	\item $(i, j)$ and $(i, j')$, if $j \neq j'$ and both nodes are active. \hfill \emph{(two parallel lines)}
\end{enumerate}
\end{description}

\subsection{Reductions and formulations}
Given any problem defined above we can consider complexity classes of total search problems obtained by taking reductions to these problems.
In this work we are particularly interested in the case where the reduction is defined by a low-depth decision tree.

\begin{definition}
	\label{def:formulation}
	Let $R \subseteq \B^n \times O$ and $S \subseteq \B^m \times O'$ be total search problems.
	An \emph{$S$-formulation of $R$} is a decision-tree reduction $(f_i, g_o)_{i \in [m], o \in O'}$ from $R$ to $S$.
	Formally, for each $i \in [m]$ and $o \in O'$ there are functions $f_i\colon \B^n \rightarrow \B$ and $g_o\colon \B^n \rightarrow O$ such that
	\[ (x, g_o(x)) \in R \impliedby (f(x), o) \in S\]
	where $f(x) \in \B^m$ is the string whose $i$-th bit is $f_i(x)$.
	The \emph{depth} of the reduction is
	\[
	d ~\coloneqq~ \max\big( \set{D(f_i) : i \in [m]} \cup \set{D(g_o) : o \in O'}\big),
	\]
	where $D(h)$ denotes the decision-tree depth of $h$.
	The \emph{size} of the reduction is $m$, the number of input bits to $S$. The \emph{complexity} of the reduction is $\log m + d$.
	We write $S^{dt}(R)$ to denote the minimum complexity of an $S$-formulation of $R$.

	We extend these notations to sequences in the natural way.
	If $R$ is a single search problem and~$\textsc{S} = (S_m)$ is a sequence of search problems, then we denote by $\textsc{S}^{dt}(R)$ the minimum of $S^{dt}_m(R)$ over all $m$.
	If $\textsc{R} = (R_n)$ is also a sequence, then we denote by $\textsc{S}^{dt}(\textsc{R})$ the function $n\mapsto \textsc{S}^{dt}(R_n)$.
\end{definition}
Using the previous definition we can now define complexity classes of total search problems via reductions.
For total search problems $\textsc{R} = (R_n), \textsc{S} = (S_n)$, we write
\[
\textsc{S}^{dt} ~\coloneqq~ \set{\textsc{R} : \textsc{S}^{dt}(\textsc{R}) = \poly(\log n)}.
\] 
We can now define the decision-tree variants of the standard classes: $\PPP^{dt} = \pigeon^{dt}$, $\PPADS^{dt} = \sol^{dt}$, and so on, according to the problems defined in \cref{sec:zoo}.

\section{Reversible Resolution vs.\ Nullstellensatz}
\label{sec:proof-revres-ns}

In this section we prove \cref{thm:revres-ns}, restated below.
\RevResVNS*

We prove \cref{thm:revres-ns} in two ways. First (\cref{sec:apx-ns,sec:ns-proof,sec:ub-revres}), we give a particularly robust proof in the special case $\F=\R$, which will be useful in \cref{sec:proof-res-sa} when we prove our other separation result. Second (\cref{sec:lb-fns}), we give a (non-robust) proof for all $\F$ using the intersection theorem. In both proofs we consider the $\sopl$ principle and show that it does not admit a low-degree NS proof, and that it can be refuted in low-width small-size RevRes.

\subsection{Approximate Nullstellensatz} \label{sec:apx-ns}

We define a generalisation of $\R$-NS that we call \emph{$\epsilon$-approximate Nullstellensatz} ($\epsilon$-NS) where $\epsilon\in(0,1)$ is an error parameter. An $\epsilon$-NS refutation of a set of real polynomial equations $\{a_i(x)=0:i\in[m]\}$ is a set of polynomials $\{p_i(x)\}$ such that
\begin{equation} \label{eq:apx-ns}
\sum_{i\in[m]} p_i(x)\cdot a_i(x) ~=~ 1\pm\epsilon, \quad\qquad\forall x\in\{0,1\}^n
\end{equation}
where we recall that ``$= 1\pm \epsilon$'' stands for ``$\in [1 - \epsilon, 1 + \epsilon]$'', meaning that the LHS is a polynomial that takes values in $[1 - \epsilon, 1 + \epsilon]$ when evaluated on boolean inputs.
The $\epsilon$-NS system is not a standard proof system in the sense of Cook and Reckhow~\cite{Cook1979}. In particular, it is not hard to show (using the PCP theorem) that testing the condition in \cref{eq:apx-ns} is in fact $\coNP$-complete. Another feature of the new system is that the error parameter can be efficiently reduced using standard error reduction techniques for polynomial approximation. For example, if we compose any $\epsilon$-NS proof $\sum_ip_ia_i=1\pm\epsilon$ with the univariate polynomial $q(z)\coloneqq z(2-z)$, we obtain an $\epsilon^2$-NS proof $q(\sum_ip_ia_i)=1\pm\epsilon^2$.

\subsection{Lower bound for $\epsilon$-NS} \label{sec:ns-proof}
Recall that the input to $\sopl_n$ consists of successor pointers $s_u\in[n]\cup\{\nul\}$ and predecessor pointers $p_u\in[n]\cup\{\nul\}$ for each grid node $u\in [n]\times[n]$. For the purposes of NS, we encode this input in binary by a string~$y\in\{0,1\}^{n'}$ over $n' =O(n^2\log n)$ variables. Moreover, we can think of~$\sopl_n$ as an unsatisfiable set of polynomial equations $\{a_i(y)=0\}$ each of degree $O(\log n)$. These equations can be obtained by taking the unsatisfiable CNF encoding of $\sopl_n$ (\cref{def:SOPL-CNF}) and encoding each clause as the corresponding polynomial equation in the usual way.

Our goal is to prove the following lemma.
\begin{lemma} \label{lem:apx-ns}
Every $\frac{1}{2}$-NS refutation of $\sopl_n$ requires degree $n^{\Omega(1)}$.
\end{lemma}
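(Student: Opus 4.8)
The plan is to prove \cref{lem:apx-ns} via a randomised decision-to-search reduction from the $\Or$ function on $\Theta(n)$ bits to $\sopl_n$, combined with the classical $\Omega(\sqrt{n})$ lower bound on the approximate polynomial degree of $\Or$. Suppose for contradiction that there is a $\frac12$-NS refutation $\sum_i p_i(y)a_i(y) = 1\pm\frac12$ of $\sopl_n$ of degree $d = n^{o(1)}$. The idea is to use such a refutation to build, for any input $z \in \{0,1\}^N$ with $N = \Theta(n)$, a distribution over restricted $\sopl_n$ instances so that the refutation ``evaluated on a random instance'' yields a low-degree (in $z$) polynomial that approximates $\Or(z)$.

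First I would fix a gadget construction that turns a bit $z_k$ into a local piece of a $\sopl$ instance. The natural choice: take a single long path laid out on the grid that, depending on the bits $z_k$, either continues all the way to row $n$ (so the instance is ``satisfiable-looking'' — no proper sink, no active sink, distinguished source is active) when $\Or(z)=0$, or gets ``broken'' at some point creating a proper sink when some $z_k=1$. Crucially the embedding should be randomised so that the location of a potential break is hidden: pick a uniformly random permutation/placement of the $N$ coordinates along the path, and set the $i$-th input bit $y_i$ of $\sopl_n$ to be either a constant or one of the $z_k$'s (or its negation), in such a way that (i) each $y_i$ depends on at most one $z_k$, so substituting into a monomial of $p_ia_i$ of degree $d$ gives a monomial in $z$ of degree $\leq d$; and (ii) if $\Or(z)=0$ the resulting $\sopl_n$ instance has \emph{no} solution among the clauses/axioms, while if $\Or(z)=1$ every solution of the instance is ``localised'' near the broken coordinate.

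Next, the key averaging step. For a fixed $z$, let $Y(z)$ be the random $\sopl_n$ instance. Since $\sum_i p_i(Y)a_i(Y) = 1\pm\frac12$ pointwise, taking expectation over the randomness gives $\sum_i \E[p_i(Y)a_i(Y)] = 1\pm\frac12$ as well, and $P(z)\coloneqq \E_{\text{rand}}[\sum_i p_i(Y(z))a_i(Y(z))]$ is a polynomial in $z$ of degree $\leq d$. When $\Or(z)=0$, every axiom $a_i$ is satisfied by $Y(z)$, i.e. $a_i(Y(z))=0$ for all $i$ regardless of the randomness, so $P(z)=0$ — but this contradicts $P(z)=1\pm\frac12$. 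So the reduction as stated cannot make \emph{all} axioms vanish; the correct formulation is the standard one used in \cite{Huynh2012,Goos2018cbs,Itsykson2021}: when $\Or(z)=0$ we want the instance to be genuinely solution-free so that $\sum_i p_i a_i$ cannot equal $1\pm\frac12$ there either — wait, that's the same issue. The right move is instead: when $\Or(z)=1$, arrange that with high probability over the randomness the \emph{only} satisfied-to-be-falsified axioms are ones we can identify, so that $\E[\sum_i p_i(Y)a_i(Y)]$ separates the two cases; concretely one shows $P(z)$ is bounded away from its $\Or(z)=0$ value only when $\Or(z)=1$, giving $|P(z) - c| \geq$ const iff $\Or(z)=1$ for a suitable constant $c$, hence a degree-$d$ approximant of $\Or$. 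Then $d = \Omega(\sqrt N) = \Omega(\sqrt n)$, contradicting $d = n^{o(1)}$.

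The main obstacle — and the part requiring the most care — is designing the randomised embedding of $\sopl_n$ so that it is simultaneously (a) \emph{locally defined}: each input bit of $\sopl_n$ is a function of at most one $z_k$ (or a constant), so degree is preserved under substitution; (b) \emph{gap-creating}: the value $\E_{\text{rand}}[\sum_i p_i(Y)a_i(Y)]$ genuinely distinguishes $\Or(z)=0$ from $\Or(z)=1$ by a constant, which typically requires that on the $\Or(z)=1$ side the solution set of the instance is confined to a small, randomly-hidden region so that a low-degree proof cannot ``see'' it with constant probability — this is where the randomness of the permutation and a careful accounting of which $p_i a_i$ terms survive the expectation is essential; and (c) compatible with the rigid $\sopl$ format (grid, successor/predecessor agreement, distinguished source at $(1,1)$, all of row $n$ inactive), which constrains how the path may be drawn and broken. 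Once the embedding is in hand, the remaining steps — linearity of expectation, degree bookkeeping, and invoking the $\Or$ approximate-degree lower bound — are routine. I would also note that $\epsilon=\frac12$ is not special: by the error-reduction remark (composing with $q(z)=z(2-z)$) any constant $\epsilon<1$ would do, and indeed one can even push $\epsilon$ polynomially close to $1$, which is what makes the bound robust enough for the later application to \cref{thm:res-sa}.
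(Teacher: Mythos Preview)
Your high-level plan---randomised reduction from $\Or$ plus the approximate-degree lower bound---matches the paper, but the specific embedding you sketch cannot work, and you correctly sense this halfway through without finding the fix. The problem is that $\sopl_n$ is a \emph{total} search problem: every $y$ has at least one solution, so some $a_i(y)\neq 0$ always, and there is no way to make the $\Or(z)=0$ case ``solution-free''. Your attempts to salvage this by looking at $P(z)=\E[\sum_i p_i(Y)a_i(Y)]$ run into the fact that this quantity is $1\pm\frac12$ for \emph{every} $z$, so it cannot separate the two cases.

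The missing idea is to stop looking at the whole sum $\sum_i p_ia_i$ and instead single out \emph{one} term. The paper's reduction plants a designated solution $u$ (an active sink at the end of a path down the grid) and studies the polynomial $q(y,u)\coloneqq p_{i_u}(y)a_{i_u}(y)$ for the unique axiom $a_{i_u}$ that $u$ violates. The instances are built so that $\Or(z)=0$ yields exactly one solution (namely $u$) while $\Or(z)=1$ yields at least two; then, if the planted $u$ were uniform over the solution set, one computes $\E[q(\y,\u)] = (1\pm\epsilon)\cdot\E[|\sols(\y)|^{-1}]$, which is $\approx 1$ when $\Or=0$ and $\leq \frac{1+\epsilon}{2}$ when $\Or=1$. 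The actual reduction (paths down columns, then randomly permute each row) is not literally ideal in this sense, but a short argument shows it is \emph{locally} indistinguishable from ideal: any degree-$n^{o(1)}$ monomial misses two consecutive rows, and rerouting the edges there makes $\u$ uniform over $\sols(\y)$ without changing what the monomial sees. This local-ideal step is the technical heart and is entirely absent from your proposal.
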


It suffices to prove the lemma for error $\epsilon\coloneqq 0.01$, because of efficient error reduction. Fix any~$\epsilon$-NS refutation $\sum_i p_i(y) a_i(y) = 1 \pm \epsilon$ of degree $k$ for $\sopl_n$. Our goal is to show a lower bound on $k$. We will give a randomised \emph{decision-to-search reduction}, in the style of~\cite{Raz1992,Huynh2012,Goos2018cbs,Itsykson2021}, showing that a low-degree $\epsilon$-NS refutation would imply a low-degree approximating polynomial for the~$(n-1)$-bit $\Or$ function. The following well-known fact then concludes the proof.

\begin{fact}[\cite{Nisan1994}] \label{fact:or-deg}
Suppose that $p$ is an $n$-variate real polynomial such that $p(x)=\Or_n(x)\pm 1/3$ for all $x\in\{0,1\}^n$. Then $\deg(p)\geq\Omega(\sqrt{n})$.
\end{fact}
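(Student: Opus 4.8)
The plan is to prove this via the classical symmetrization argument of Nisan and Szegedy~\cite{Nisan1994}, which reduces the multivariate approximation problem to a univariate one and then applies extremal polynomial inequalities. First I would note that, since $x_i^2 = x_i$ on $\{0,1\}^n$, we may assume $p$ is multilinear without increasing $d \coloneqq \deg(p)$. The central tool is Minsky--Papert symmetrization: define the univariate function $q(k) \coloneqq \E_{|x| = k}[p(x)]$, where the expectation is over a uniformly random $x \in \{0,1\}^n$ of Hamming weight $k$. Since each degree-$t$ multilinear monomial $\prod_{i \in S} x_i$ satisfies $\E_{|x| = k}\big[\prod_{i \in S} x_i\big] = \binom{n-t}{k-t}/\binom{n}{k}$, which is a degree-$t$ polynomial in $k$, the function $q$ is the restriction to $\{0,1,\dots,n\}$ of a single real polynomial $Q$ with $\deg(Q) \le d$.

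Because $\Or_n$ vanishes precisely on the all-zeros input, averaging the hypothesis $p(x) = \Or_n(x) \pm 1/3$ over each weight level yields $|Q(0)| \le 1/3$ and $|Q(k) - 1| \le 1/3$ for all integers $k \in \{1,\dots,n\}$. In particular $Q$ is bounded by $4/3$ in absolute value at all $n+1$ integer points $0,1,\dots,n$, while $|Q(1) - Q(0)| \ge 1/3$, so the mean value theorem produces a real $\xi \in (0,1)$ with $|Q'(\xi)| \ge 1/3$. It then remains to show that a degree-$d$ univariate polynomial that is bounded by a constant at the integer points $0, 1, \dots, n$ and has derivative of magnitude $\Omega(1)$ somewhere in $[0,n]$ must satisfy $d = \Omega(\sqrt n)$. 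We may assume $d \le \sqrt n$, since otherwise we are already done. In this regime, a classical estimate of Ehlich--Zeller (and Coppersmith--Rivlin) guarantees that a degree-$d$ polynomial bounded by $M$ at the $n+1$ equally spaced points $0,1,\dots,n$ is bounded by $O(M)$ on the entire interval $[0,n]$; applying this with $M = 4/3$ gives $\sup_{t \in [0,n]} |Q(t)| = O(1)$. Finally, the Markov brothers' inequality, rescaled from $[-1,1]$ to $[0,n]$, states that $\sup_{[0,n]} |Q'| \le (2 d^2 / n)\, \sup_{[0,n]} |Q|$; combined with $|Q'(\xi)| \ge 1/3$ this forces $d^2 = \Omega(n)$, i.e.\ $d = \Omega(\sqrt n)$, as required.

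The conceptual heart of the argument is symmetrization, which collapses the $n$-variate question to a one-dimensional one with no loss of degree. The step I expect to be the main obstacle is controlling $Q$ on the \emph{whole} interval $[0,n]$ from only its values at the $n+1$ integer nodes; this is exactly what the Ehlich--Zeller/Coppersmith--Rivlin interpolation bound supplies, and it is the reason the threshold $d \approx \sqrt n$ — and hence the final exponent $1/2$ — appears: once the interpolation nodes become too sparse relative to the degree, they no longer pin down the polynomial, but in that case $d$ is already as large as we need. For a fully self-contained treatment one can instead derive the Ehlich--Zeller estimate directly from Markov's inequality applied on appropriate subintervals, but for the purposes of this paper it suffices to invoke the classical bound.
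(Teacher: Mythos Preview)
Your proposal is correct and is exactly the classical Nisan--Szegedy symmetrization argument from the cited reference~\cite{Nisan1994}. Note that the paper itself does not supply a proof of this fact at all: it is stated as a \texttt{fact} environment with a citation and then used as a black box, so there is no ``paper's own proof'' to compare against beyond the original source you have faithfully reproduced.
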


\newcommand{\f}{\bm{f}}
\newcommand{\y}{\bm{y}}
\renewcommand{\u}{\bm{u}}

\paragraph{Definition of reduction.}
We define a depth-$d$ deterministic reduction as a pair $(f,u)$ such that
\begin{enumerate}[label={(\arabic*)}]
\item \label{it:r1}
$f\colon\{0,1\}^{n-1}\to\{0,1\}^{n'}$ is a function that maps an input $x$ of $\Or_{n-1}$ to an input $y=f(x)$ of~$\sopl_n$. Moreover, each output bit $f_i(x)\in\{0,1\}$ is a depth-$d$ decision tree function of $x$.
\item For any input $x$, the only solutions of $y=f(x)$ are active sinks on the last row $\{n\}\times [n]$. We write $\sols(y)\subseteq\{n\}\times [n]$ for the set of solutions in $y$. Moreover, $u\in\sols(y)$ is a solution called the \emph{planted} solution. (Note that $u$ does not depend on $x$.)
\item If $\Or(x)=0$, then $y=f(x)$ contains a unique solution, namely $\sols(y)=\{u\}$.
\item \label{it:r4}
If $\Or(x)=1$, then $y=f(x)$ contains at least two solutions, $|\sols(y)|\geq 2$.
\end{enumerate}
We then define a depth-$d$ \emph{randomised} reduction $\calR$ as a probability distribution over depth-$d$ deterministic reductions $(\f,\u)\sim \calR$. For every $x$, we write $\calR_x$ for the distribution of $(\f(x),\u) = (\y,\u)$. We say that a pair $(\y,\u)$ is \emph{ideal} if it satisfies the following.
\begin{quote}
\emph{Ideal $(\y,\u)$:}~ Let $y$ be any outcome of $\y$ and consider $\u$ conditioned on $\y=y$, namely, $\u'\coloneqq(\u\mid\y=y)$. Then $\u'$ is uniformly distributed over $\sols(y)$; in short, $\u'\sim\sols(y)$.
\end{quote}
We say $\calR$ is \emph{ideal} if $\calR_x$ is ideal for every $x$.

\paragraph{Ideal reduction \boldmath$\Rightarrow$ Approximation to $\Or$.}
Next, we show that if we had an ideal reduction, we could construct an approximating polynomial for $\Or$. We write $i_u$ for the unique~$i$ such that the polynomial equation $a_i(y)=0$ encodes the $\sopl_n$ constraint that $u$ is not an active sink. Namely, this corresponds to the equation $s_u = 0$, where the bit $s_u \in \{0,1\}$ of the input $y$ encodes whether or not $u$ is active (see \cref{def:SOPL-CNF}). If we think of $u\in\{n\}\times[n]$ as encoded by an~$O(\log n)$-bit string, we can define an $[n'+O(\log n)]$-variate polynomial
\begin{equation} \textstyle
q(y,u)
~\coloneqq~ p_{i_u}(y) a_{i_u}(y)
~=~ \sum_i\mathds{1}[i=i_u] p_i(y)a_i(y).
\end{equation}
Here, for every $i$, the indicator function $\mathds{1}[i=i_{u}]\in\{0,1\}$ is computed by an $O(\log n)$-degree polynomial. This means $q$ has degree $\deg(q)\leq O(k\log n)$. If $(\y,\u)$ is ideal, then
\begin{align}
\E[q(\y,\u)]
&~=~\textstyle \E_{y\sim\y}\big[\E_{u'\sim (\u\mid\y=y)}[p_{i_{u'}}(y) a_{i_{u'}}(y)]\big] \notag \\
&~=~\textstyle \E_{y\sim\y}\big[\E_{u'\sim \sols(y)}[p_{i_{u'}}(y) a_{i_{u'}}(y)]\big] \notag \\
&~=~\textstyle \E_{y\sim\y}\big[|\sols(y)|^{-1}\sum_{u'\in \sols(y)}p_{i_{u'}}(y) a_{i_{u'}}(y)\big] \notag \\
&~=~\textstyle \E_{y\sim\y}\big[|\sols(y)|^{-1}\sum_i p_i(y) a_i(y)\big] \notag \\
&~=~\textstyle \E_{y\sim\y}\big[|\sols(y)|^{-1}]\cdot (1\pm\epsilon) \notag \\
&~=~\textstyle (1\pm\epsilon)\cdot \E\big[|\sols(\y)|^{-1}\big] \label{eq:ideal}
\end{align}
where we used the fact that $\sum_{u'\in \sols(y)}p_{i_{u'}}(y) a_{i_{u'}}(y) = \sum_i p_i(y) a_i(y)$, because $a_i(y) = 0$ for all $i \notin \{i_{u'}: u' \in \sols(y)\}$, given that $y$ satisfies all the $\sopl_n$ constraints, except the equations requiring that $u'$ not be an active sink, for $u' \in \sols(y)$.

Suppose for a moment we had an ideal depth-$d$ randomised reduction $\calR$. Then, we could construct the polynomial
\[\textstyle
r(x) ~\coloneqq~ \E_{\calR_x}[q(\y,\u)] ~=~ \sum_{f,u}\Pr_{\calR}[(\f,\u)=(f,u)]\cdot q(f(x),u).
\]
We have $\deg(r)\leq O(dk\log n)$. Moreover, if $\Or(x)=0$ then $r(x)=1\pm\epsilon$; and if $\Or(x)=1$ then~$r(x)\in[0,(1+\epsilon)/2]$, since $\E\big[|\sols(\y)|^{-1}\big] \in [0,1/2]$. Thus for $\epsilon=0.01$, if we consider $t(x) \coloneqq 1-r^2(x)$ we get that $t$ approximates~$\Or$ to within error $1/3$. Using \cref{fact:or-deg}, we deduce that $k\geq \Omega(\sqrt{n}/(d\log n))$.

In summary, all that remains is to find an ideal reduction of shallow depth. Unfortunately, we do not know how to design an ideal reduction for $\sopl$. We instead give a reduction that is \emph{locally indistinguishable} from an ideal one, which will suffice for us.

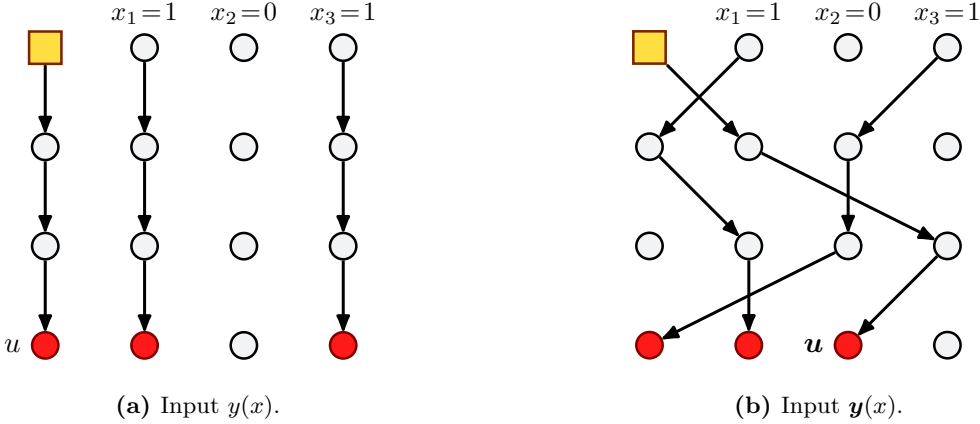
\begin{figure}
\centering
\begin{subfigure}[b]{0.5\textwidth}
    \centering
    \begin{tikzpicture}[y=-1cm, scale=1.2]
% node grid
\coordinate (p11) at (0, 0); 				 \coordinate (p12) at (\spacex, 0); 				 \coordinate (p13) at (2*\spacex, 0); 				\coordinate (p14) at (3*\spacex, 0);
\coordinate (p21) at (0, \spacey);	 \coordinate (p22) at (\spacex, \spacey); 	 \coordinate (p23) at (2*\spacex, \spacey); 	\coordinate (p24) at (3*\spacex, \spacey);
\coordinate (p31) at (0, 2*\spacey); \coordinate (p32) at (\spacex, 2*\spacey); \coordinate (p33) at (2*\spacex, 2*\spacey); \coordinate (p34) at (3*\spacex, 2*\spacey);
\coordinate (p41) at (0, 3*\spacey); \coordinate (p42) at (\spacex, 3*\spacey); \coordinate (p43) at (2*\spacex, 3*\spacey); \coordinate (p44) at (3*\spacex, 3*\spacey);

% node drawing
\tikzstyle{node_regular} = [node_regular_intro]

\node[node_a]       (P11) at (p11) {};
{\small
\node[node_regular, label=above:{$x_1\!=\!1$}] (P12) at (p12) {};
\node[node_regular, label=above:{$x_2\!=\!0$}] (P13) at (p13) {};
\node[node_regular, label=above:{$x_3\!=\!1$}] (P14) at (p14) {};
}
\node[node_regular] (P21) at (p21) {};
\node[node_regular] (P22) at (p22) {};
\node[node_regular] (P23) at (p23) {};
\node[node_regular] (P24) at (p24) {};
\node[node_regular] (P31) at (p31) {};
\node[node_regular] (P32) at (p32) {};
\node[node_regular] (P33) at (p33) {};
\node[node_regular] (P34) at (p34) {};
\node[node_solution, label=left:{$u$}] (P41) at (p41) {};
\node[node_solution] (P42) at (p42) {};
\node[node_regular] (P43) at (p43) {};
\node[node_solution] (P44) at (p44) {};

% edges drawing
\draw[edge_regular] (P11) -- (P21);
\draw[edge_regular] (P21) -- (P31);
\draw[edge_regular] (P31) -- (P41);

\draw[edge_regular] (P12) -- (P22);
\draw[edge_regular] (P22) -- (P32);
\draw[edge_regular] (P32) -- (P42);

\draw[edge_regular] (P14) -- (P24);
\draw[edge_regular] (P24) -- (P34);
\draw[edge_regular] (P34) -- (P44);

\end{tikzpicture}
    \vspace{2mm}
    \caption{Input $y(x)$.}
    \label{figure:Rtilde_before}
\end{subfigure}%
\begin{subfigure}[b]{0.5\textwidth}
    \centering
    \begin{tikzpicture}[y=-1cm, scale=1.2]
% node grid
\coordinate (p11) at (0, 0); 				 \coordinate (p12) at (\spacex, 0); 				 \coordinate (p13) at (2*\spacex, 0); 				\coordinate (p14) at (3*\spacex, 0);
\coordinate (p21) at (0, \spacey);	 \coordinate (p22) at (\spacex, \spacey); 	 \coordinate (p23) at (2*\spacex, \spacey); 	\coordinate (p24) at (3*\spacex, \spacey);
\coordinate (p31) at (0, 2*\spacey); \coordinate (p32) at (\spacex, 2*\spacey); \coordinate (p33) at (2*\spacex, 2*\spacey); \coordinate (p34) at (3*\spacex, 2*\spacey);
\coordinate (p41) at (0, 3*\spacey); \coordinate (p42) at (\spacex, 3*\spacey); \coordinate (p43) at (2*\spacex, 3*\spacey); \coordinate (p44) at (3*\spacex, 3*\spacey);

% node drawing
\tikzstyle{node_regular} = [node_regular_intro]

\node[node_a]       (P11) at (p11) {};
{\small
\node[node_regular, label=above:{$x_1\!=\!1$}] (P12) at (p12) {};
\node[node_regular, label=above:{$x_2\!=\!0$}] (P13) at (p13) {};
\node[node_regular, label=above:{$x_3\!=\!1$}] (P14) at (p14) {};
}
\node[node_regular] (P21) at (p21) {};
\node[node_regular] (P22) at (p22) {};
\node[node_regular] (P23) at (p23) {};
\node[node_regular] (P24) at (p24) {};
\node[node_regular] (P31) at (p31) {};
\node[node_regular] (P32) at (p32) {};
\node[node_regular] (P33) at (p33) {};
\node[node_regular] (P34) at (p34) {};
\node[node_solution] (P41) at (p41) {};
\node[node_solution] (P42) at (p42) {};
\node[node_solution, label=left:{$\bm{u}$}] (P43) at (p43) {};
\node[node_regular] (P44) at (p44) {};

% edges drawing
\draw[edge_regular] (P11) -- (P22);
\draw[edge_regular] (P22) -- (P34);
\draw[edge_regular] (P34) -- (P43);

\draw[edge_regular] (P12) -- (P21);
\draw[edge_regular] (P21) -- (P32);
\draw[edge_regular] (P33) -- (P41);

\draw[edge_regular] (P14) -- (P23);
\draw[edge_regular] (P23) -- (P33);
\draw[edge_regular] (P32) -- (P42);

\end{tikzpicture}
    \vspace{2mm}
    \caption{Input $\y(x)$.}
    \label{figure:Rtildeafter}
\end{subfigure}
\caption{Randomised reduction $\calR$. First, we compute $y(x)$ deterministically from $x$. This input always contains a path down the left-most column, which terminates at the active sink $u$ (\emph{planted} solution). Moreover, for every $i$ with $x_i=1$ there is a path down the $(i+1)$-st column. The number of active sinks is~$|\sols(y)|=1+|x|$, where $|x|$ denotes the Hamming weight. In the second step, we randomly permute every row of nodes, except the first one. This yields the random output $(\y,\u)$ of the reduction.}
\label{fig:reduction}
\end{figure}

\paragraph{A locally ideal reduction.}
Consider the following depth-$1$ randomised reduction $\calR$; see \cref{fig:reduction}.
\begin{enumerate}
\item Let $y=y(x)$ be the input to $\sopl_n$ that has a directed path running down the first column of nodes, starting at distinguished node $(1,1)$ and terminating at the active sink $u\coloneqq (n,1)$ (say $u$ is made active by being assigned $1$ as successor). Moreover, we activate a path in $y$ down column $i\geq2$ iff $x_{i-1}=1$. Note that $y$ is a depth-$1$ decision tree function of $x$, and $u$ does not depend on $x$ at all.
\item Let $\y=\y(x)$ be obtained from $y$ so that, for each row except the first, $i\in[n]\setminus\{1\}$, randomly permute the nodes~$\{i\}\times[n]$ on that row (updating the successor/predecessor pointers). Let $\u$ be the sink node that $u$ is mapped to.
\item Output $(\f,\u)$ where $\f(x)\coloneqq \y(x)$.
\end{enumerate}
It is easy to check that $\calR$ satisfies items \ref{it:r1}--\ref{it:r4} for every outcome of randomness. In particular, we have $|\sols(\y)|=1+|x|$. Unfortunately, $\calR$ is \emph{not} ideal: $\u$ is always the active sink at the end of the path starting at the distinguished node. What we would really like instead is that $\calR_x$ was distributed as the ideal pair $(\y,\u)\sim\calI_x$ defined by the following procedure: Sample $(\y,\u')\sim \calR_x$; define $\u$ such that for every outcome $y$, $(\u\mid \y=y)\sim\sols(y)$; and output $(\y,\u)$.

Define two functions $\{0,1\}^{n-1}\to\R$ by
\begin{align}
r(x) ~\coloneqq~&\textstyle \E_{\calR_x}[q(\y,\u)], \\
r'(x) ~\coloneqq~&\textstyle \E_{\calI_x}[q(\y,\u)].
\end{align}
We know that $r$ has low degree as a polynomial, $\deg(r)\leq O(k\log n)$, and $r'$ has the ideal output behaviour, $r'(x)= (1\pm\epsilon)\cdot \E\big[|\sols(\f(x))|^{-1}\big]$ by~\cref{eq:ideal}. The following claim shows that, in fact, $r=r'$, and hence we can get the best of both worlds. By the discussion above, we are then able to construct an~$O(k\log n)$-degree approximating polynomial for $\Or$, which concludes the proof of \cref{lem:apx-ns}.
\begin{claim}
\label{claim:1a}
We have $r(x)=r'(x)$ for all $x\in\{0,1\}^{n-1}$.
\end{claim}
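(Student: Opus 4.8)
The plan is to prove the claim by showing that, even though the reduction $\calR$ fails to be ideal, it is \emph{locally indistinguishable} from one: for every $x$ and every set $Q$ of at most $t\coloneqq (n-2)\cdot\Theta(\log n)$ bit-coordinates of the pair $(\y,\u)$, the two distributions of $(\y,\u)|_Q$ induced by $\calR_x$ and by $\calI_x$ are \emph{identical}. This suffices because $\calR_x$ and $\calI_x$ have the same $\y$-marginal, and differ only in how $\u$ is situated relative to $\y$ --- under $\calR_x$, $\u$ is the endpoint of the unique active path leaving the distinguished node $(1,1)$, whereas under $\calI_x$ it is a uniformly random active sink of $\y$. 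To detect this difference from the revealed coordinates one would have to \emph{certify} that $\u$ is (or is not) joined to $(1,1)$ by an active path, which forces reading predecessor pointers along an entire ascending chain through all $n-1$ rows --- more coordinates than lie in $Q$. I would make this rigorous by exhibiting, for any fixing of the coordinates in $Q$, an explicit measure-preserving bijection on the residual sample space (the row permutations $\sigma_2,\dots,\sigma_n$) that ``re-plants'' the distinguished path onto a different active sink while leaving the coordinates in $Q$ untouched; such a bijection transforms the $\calR_x$-conditional law into the $\calI_x$-conditional law, and averaging over fixings of $Q$ gives the desired equality of $Q$-marginals.

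\textbf{From local indistinguishability to the claim.} Granting the above, the claim is immediate: $q(y,u)=p_{i_u}(y)\,a_{i_u}(y)$ has degree $O(k\log n)$ in the bits of $(y,u)$ (degree $\le k$ in $y$, and the index $i_u$ and the ``$u$ is an active sink'' check $a_{i_u}$ together read only $O(\log n)$ bits of $u$), hence is a linear combination of monomials each depending on at most $O(k\log n)\le t$ coordinates of $(\y,\u)$. Therefore $\E[q(\y,\u)]$ is determined by the $t$-local marginals of $(\y,\u)$, and since these agree for $\calR_x$ and $\calI_x$ we conclude $r(x)=\E_{\calR_x}[q(\y,\u)]=\E_{\calI_x}[q(\y,\u)]=r'(x)$. (If instead $O(k\log n)>t$, then already $k=n^{\Omega(1)}$ and \cref{lem:apx-ns} holds outright; so when proving the claim we may freely assume $k\le t/\Theta(\log n)$.)

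\textbf{Main obstacle.} The crux is the local-indistinguishability sub-claim, i.e.\ the construction of the measure-preserving bijection on the conditional sample space. One must argue that, whichever $\le t$ coordinates of $(\y,\u)$ are fixed, the remaining randomness still has enough freedom to route the planted endpoint to any prescribed active sink without disturbing the fixed coordinates: the bookkeeping --- tracking which rows are ``touched'' by the fixed coordinates, and splicing the two partial paths together across the untouched rows while leaving all pointers outside $Q$ consistent --- is the technical heart of the argument. (A more algebraic alternative would be to first symmetrise the $\epsilon$-NS refutation over the automorphism group of $\sopl_n$ --- column permutations fixing the distinguished node --- and then exploit the row-permutation symmetry of $\calR$ directly; but the naive symmetry argument only relates $q(\y,\text{planted})$ to $q$ of a \emph{relabelled} instance at a different sink, so this route needs extra care and one must also check the symmetrised refutation still satisfies the approximate identity.)
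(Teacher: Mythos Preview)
Your approach is the paper's approach: show that $\calR_x$ and $\calI_x$ have identical marginals on the variables read by any single monomial of $q$, via a measure-preserving transformation that re-routes the planted path without touching those variables. The paper's implementation, however, is much simpler than what you sketch. Since each monomial $m$ has degree $O(k\log n)=o(n)$ (else the lemma is already proved), there exist two \emph{consecutive} rows $i,i{+}1\in[n/3,2n/3]$ that $m$ does not read at all. One then replaces the $1{+}|x|$ edges joining the active nodes on row $i$ to those on row $i{+}1$ by a fresh uniformly random bijection $A\to B$. This single local re-routing already produces a sample from $\calI_x$ --- the upper and lower halves of the $1{+}|x|$ paths are now matched uniformly, so conditioned on the resulting input $\y'$ the node $\u$ is uniform over the sinks --- and it only alters variables associated with rows $i,i{+}1$, hence outside $m$. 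No multi-row splicing or bookkeeping is needed.

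Two smaller remarks. First, your threshold $t=(n-2)\cdot\Theta(\log n)$ is unnecessary and your justification for it (one must ``certify'' connectivity by reading a full pointer in every row) confuses worst-case certification with statistical distinguishability; the paper needs only $|Q|=o(n)$, and it is not clear your stronger claim is even true. Second, the transformation is more naturally a \emph{coupling} that injects fresh randomness (the new bijection between rows $i$ and $i{+}1$) rather than a bijection on the $\sigma$-space alone: the sample space of $\calI_x$ is strictly larger than that of $\calR_x$, so no bijection of the latter can by itself realise the former.
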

\begin{proof}
By linearity of expectation, it suffices to show $\E_{\calR_x}[m(\y,\u)]=\E_{\calI_x}[m(\y,\u)]$ for any monomial~$m$ of $q$ and every $x$. Fix a monomial $m$. We claim that $\calR_x$ and $\calI_x$ have the same marginal distribution over the variables read by $m$, which would prove the claim. We may assume that $\deg(m)\leq O(k\log n)\leq o(n)$ because otherwise \cref{lem:apx-ns} is proved. Hence there exist two consecutive rows $i,i+1\in[n/3,2n/3]$ such that $m$ does not read any variables associated with either row. Starting with a sample $(\y,\u)\sim\calR_x$ we can generate a sample from $\calI_x$ as follows: Consider active nodes $A\subseteq\{i\}\times[n]$ and $B\subseteq\{i+1\}\times[n]$ on rows $i$ and $i+1$ in $\y$ and the $|A|=|B|=1+|x|$ many directed edges joining them (defined by successor pointers for row $i$ and predecessor pointers for row $i+1$). Reroute these edges by choosing a random bijection $A\to B$, and denote the resulting input by $\y'$. Then $(\y',\u)\sim\calI_x$. This proves our claim about the marginals, since our modification to the input $\y$ was done outside the variables read by $m$.
\end{proof}

\subsection{Upper bound for RevRes} \label{sec:ub-revres}

Our characterisation of $\SOPL^{dt}$ by RevRes in \cref{sec:char-revres} involves proving that $\sopl_n$ (understood as an~$O(\log n)$-width CNF contradiction) admits an $O(\log n)$-width polynomial-size RevRes refutation (\cref{thm:sopl-in-revres}). If we want to further optimise this down to a constant-width polynomial-size RevRes refutation, as claimed by \cref{thm:revres-ns}, then we can consider instead a \emph{sparse constant-width} variant of~$\sopl_n$. Indeed, the following sparsifying construction is standard and so we only sketch it.

We start by defining a bounded-degree dag $G$ that models the connectivity structure of the~$[n]\times [n]$ grid with successor/predecessor pointers. The nodes of $G$ include all the grid nodes $[n]\times[n]$. Moreover, for each $u\in[n-1]\times[n]$ we include in $G$ a \emph{successor tree} $S_u$ that is a full binary tree with $n$ leaves, and has edges directed from the root towards the leaves. Similarly, for each $u\in([n]\setminus\{1\})\times[n]$ we include in $G$ a \emph{predecessor tree} $P_u$ whose edges are directed from leaves towards the root. We identify the root nodes of $S_u$ and $P_u$ with $u$. Moreover, for grid nodes $(i,j)$ and $(i+1,k)$ appearing on consecutive rows, we identify the $k$-th leaf of $S_{(i,j)}$ and the $j$-th leaf of $P_{(i+1,k)}$. This completes the description of $G$. Note that the in/out-degree of every node is at most $2$.

We can now define a search problem $\sopl_G$ relative to $G$. As input, each node $u$ in $G$ gets a successor $s_u$ and a predecessor $p_u$ picked from $\{0,1\}\cup\{\nul\}$. For example, $s_u=0$ ($s_u=1$) means that $u$'s successor is the left (right) child of $u$ in $G$. The constraints of $\sopl_G$ can now be written in constant width. The RevRes upper bound in \cref{thm:sopl-in-revres} can be adapted to yield a constant-width polynomial-size refutation of $\sopl_G$. Moreover, the original grid version $\sopl_n$ can be reduced to the graph version~$\sopl_G$ using an $O(\log n)$-depth decision tree reduction; see, for example,~\cite[\S4.2]{Fleming2022} for details (but for $\sod$ instead of $\sopl$). The existence of this reduction implies that $\sopl_G$ needs large $\epsilon$-NS degree, because we showed that $\sopl_n$ does.

This concludes the proof of \cref{thm:revres-ns} in case $\F=\R$.

\subsection{Lower bound for $\F$-NS} \label{sec:lb-fns}

We now prove the lower bound in \autoref{thm:revres-ns} for any field $\F$.

\begin{lemma}
$\F\textup{-NS}(\sopl_n)\geq n^{\Omega(1)}$.
\end{lemma}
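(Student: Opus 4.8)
The plan is to reduce the statement over an arbitrary field to two canonical cases -- the rationals and the prime fields -- and then to dispatch the prime-characteristic case, which is the genuinely new content, using the intersection theorem.

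The first observation is that the minimum degree of an $\F$-NS refutation of a CNF formula depends only on the characteristic of $\F$. Existence of a degree-$d$ refutation is equivalent to solvability of a linear system whose matrix has integer entries (the coefficients of the axiom polynomials), and by elementary linear algebra this system is solvable over $\F$ if and only if it is solvable over the prime subfield of $\F$; a min-degree refutation may then be taken with coefficients in that prime subfield, and of size $n^{O(d)}$. So it suffices to prove an $n^{\Omega(1)}$ degree lower bound over $\F = \mathbb{Q}$ and over $\F = \mathbb{F}_p$ for every prime $p$. The case $\F = \mathbb{Q}$ is then immediate from \cref{lem:apx-ns}: an exact degree-$k$ Nullstellensatz refutation of $\sopl_n$ over $\mathbb{Q}$, read over $\R$, is in particular a $\frac{1}{2}$-approximate NS refutation, so $k \ge n^{\Omega(1)}$; and this transfers to every characteristic-$0$ field, since passing to the subfield $\mathbb{Q}$ cannot increase the refutation degree.

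For $\F = \mathbb{F}_p$ I would first pass to the search-problem side using the characterisation $\mathbb{F}_p\textup{-NS}(F) = \Theta(\PPA_p^{dt}(S(F)))$ (\cite{Goos2018} for $p = 2$, \cite{Kamath2020} in general), so that it is enough to show $\PPA_p^{dt}(\sopl_n) \ge n^{\Omega(1)}$. Assume for contradiction that $\sopl_n$ has a $\PPA_p^{dt}$-formulation of complexity $n^{o(1)}$. Since $\sopl$ is complete for $\SOPL^{dt}$ -- equivalently, complete for $\textup{RevRes}$ by \cref{thm:characterisations} -- composing this formulation with the reductions witnessing that completeness would put all of $\SOPL^{dt}$ inside $\PPA_p^{dt}$ with only polynomial overhead; in proof-complexity language, $\textup{RevRes}$ would be $p$-simulated by $\mathbb{F}_p\textup{-NS}$. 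Feeding this into \cref{thm:intersection-thm}, which identifies $\textup{RevRes}$ with the ``intersection'' $\textup{Res}\cap\textup{uSA}$, one would deduce that every CNF possessing small Resolution \emph{and} small unary Sherali--Adams refutations also has a small $\mathbb{F}_p\textup{-NS}$ refutation. The contradiction then must come from a CNF family that is cheap for $\textup{Res}$, cheap for $\textup{uSA}$ (hence cheap for $\textup{RevRes}$), yet requires $\F\textup{-NS}$ degree $n^{\Omega(1)}$ over every field -- the natural candidates being pebbling-type formulas, whose $\F$-NS degree lower bounds over all fields are classical \cite{BureshOppenheim2002,Rezende2019}.

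I expect this last point to be the main obstacle: the familiar NS-hard families (the iteration/$\sod$ principle, pyramid pebbling) are \emph{not} cheap for unary Sherali--Adams -- by \cref{cor:pls-ppads}, $\sod$ itself is already hard for $\textup{uSA}$ -- so one must carefully pick, or tailor, a witness that is simultaneously easy for $\textup{Res}$, easy for $\textup{uSA}$, and hard for $\F$-NS. A tempting alternative would be to adapt the randomised decision-to-search reduction behind \cref{lem:apx-ns} directly to $\mathbb{F}_p$, but there the averaging step, which divides by $|\sols(\bm{y})|$, breaks down in positive characteristic; invoking the intersection theorem is precisely the device that circumvents this difficulty.
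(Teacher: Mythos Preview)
Your framework is right, and you correctly diagnose where the argument stalls: you need a CNF family that is simultaneously easy for Resolution, easy for unary Sherali--Adams, and $n^{\Omega(1)}$-hard for $\F$-NS over every field. But you stop at ``one must carefully pick, or tailor, a witness'' without producing one, so the proof is incomplete as written. The paper fills exactly this gap with a construction you may find slightly anticlimactic: take $F_n \coloneqq \sod_n \land \sol_n$ on \emph{disjoint} variable sets. Then $F_n$ is trivially easy for Resolution (refute the $\sod_n$ half and ignore $\sol_n$) and trivially easy for uSA (refute the $\sol_n$ half and ignore $\sod_n$), so by the intersection theorem $F_n$ admits an $O(\log n)$-complexity $\sopl$-formulation. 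For the NS lower bound, both $\sod_n$ and $\sol_n$ are classically known to require $\F$-NS degree $n^{\Omega(1)}$ over every field, and a short design-tensoring argument shows that a conjunction of contradictions on disjoint variables inherits the larger of the two NS degree lower bounds: if $\varphi$ and $\varphi'$ are $d$-designs for $F$ and $G$ respectively, then $\Phi(m(x)m'(y)) \coloneqq \varphi(m(x))\varphi'(m'(y))$, extended linearly, is a $d$-design for $F\land G$. Composing the assumed low-degree $\F$-NS refutation of $\sopl$ with the decision-tree reduction from $F_n$ then yields a low-degree $\F$-NS refutation of $F_n$, contradiction.

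Two remarks on the detours you take. First, the characteristic split is unnecessary: the paper's argument is uniform in $\F$, since the cited lower bounds for $\sod_n$ and $\sol_n$ hold over all fields and the reduction-composition step is field-agnostic. Second, passing through the $\PPA_p$ characterisation is also avoidable; once you have a formula $F_n$ with an efficient $\sopl$-formulation, you compose the hypothetical $\F$-NS proof of $\sopl$ with the decision-tree reduction directly (substitution of depth-$d$ decision trees into a degree-$k$ NS proof gives degree $O(dk)$), without needing to name the search class that $\F$-NS captures.
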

\begin{proof}
Prior work has shown that $\sod_n$ (understood as an $O(\log n)$-CNF) requires $n^{\Omega(1)}$-degree $\F$-NS refutations~\cite{Buss1998,BureshOppenheim2002,Rezende2019}, and similarly that $\sol_n$ (undestood as an $O(\log n)$-CNF) requires $n^{\Omega(1)}$-degree $\F$-NS refutations~\cite{Beame1998,Beame1998a}. Define the CNF formula
\[
F_n ~\coloneqq~ \sod_n\land \sol_n
\]
where $\sod_n$ and $\sol_n$ are defined on disjoint sets of variables. The following claim (proved below) states that $F_n$ requires $n^{\Omega(1)}$-degree $\F$-NS refutations, or, in other words, $\F\textup{-NS}(F_n) \geq n^{\Omega(1)}$.

\begin{claim}\label{clm:meet}
Let $F$ and $G$ be two CNF contradictions over disjoint sets of variables. If $F$ and $G$ require $\F$-NS refutations of degree $\geq d$, then $F\land G$ requires $\F$-NS refutations of degree $\geq d$.
\end{claim}

By the definition of $F_n$, we have
\begin{align*}
\Theta(\textup{Res}(F_n))~=~\PLS^{dt}(S(F_n)) ~=~& \sod^{dt}(S(F_n)) ~\leq~ O(\log n),\\
\Theta(\textup{uSA}(F_n))~=~\PPADS^{dt}(S(F_n)) ~=~& \sol^{dt}(S(F_n)) ~\leq~ O(\log n).
\end{align*}
By the intersection theorem (\cref{thm:intersection-thm}) corresponding to $\SOPL = \PLS \cap \PPADS$ we conclude that $S(F_n)$ has an efficient $\sopl$-formulation:
\[
\Theta(\textup{RevRes}(F_n)) ~=~ \SOPL^{dt}(S(F_n)) ~=~ \sopl^{dt}(S(F_n)) ~\leq~ O(\log n).
\]
If we had $\F\textup{-NS}(\sopl_n)\leq n^{o(1)}$, then because $S(F_n)$ reduces to $\sopl_{n^{O(1)}}$ via an $O(\log n)$-depth decision tree reduction, we would have $\F\textup{-NS}(F_n)\leq n^{o(1)}$, which is a contradiction.
\end{proof}
\begin{proof}[Proof of \cref{clm:meet}]
The least degree of an $\F$-NS refutation of a set of polynomial equations $\calF\coloneqq \{a_i(x)=0\}$ can be characterised by the maximum $d$ such that $\calF$ admits a \emph{$d$-design}~\cite[\S2]{Buss1998}, that is, an~$\F$-linear map~$\varphi\colon \F[x]\to\F$ satisfying (i) $\varphi(1)=1$, and (ii) $\varphi(q(x)\cdot a_i(x))=0$ for all $a_i$ and~$q\in \F[x]$ such that $\deg(q)+\deg(a_i)< d$. Let $\varphi$ and $\varphi'$ be $d$-designs for $F$ and $G$ (encoded as sets of polynomial equations) over variables $x$ and $y$, respectively. For each monomial $m(x)m'(y)$ in variables $x,y$, we define $\Phi(m(x)m'(y))\coloneqq\varphi(m(x))\cdot\varphi(m'(y))$. We can extend this definition linearly into a map~$\Phi\colon \F[x,y]\to\F$. We claim that $\Phi$ is a $d$-design for $F\land G$. Indeed, for (i) we have $\Phi(1)=\varphi(1)\varphi'(1)=1$. For (ii) it suffices to check the condition for each monomial $q(x,y)=m(x)m'(y)$ and an axiom $a_i(x)$ of $F$ (the case of $G$ is analogous) with $\deg(q)+\deg(a_i)< d$. We have $\Phi(m(x)m'(y)\cdot a_i(x))=\varphi(m(x)\cdot a_i(x))\varphi'(m'(y))=0\cdot \varphi'(m'(y))=0$.
\end{proof}

\section{Resolution vs.\ Sherali--Adams} \label{sec:proof-res-sa}

In this section we prove \cref{thm:res-sa}, restated below.
\ResVuSA*

We consider the $\sod$ principle. We first show that it requires large coefficients to refute in low-degree SA, and then we recall why it has low-width Resolution refutations.

\subsection{Lower bound for SA}
We consider the $\sod_{n^2}$ search problem on the grid $[n^2]\times [n^2]$.  We think of this large grid as being further subdivided into $n^2$ many subgrids, each of size $n\times n$. The $(i,j)$-subgrid consists of nodes
\[
((i-1)n,(j-1)n)+[n]\times[n]
~\coloneqq~\big\{((i-1)n+i',(j-1)n+j'):(i',j')\in[n]\times[n]\big\}.
\]
Recall that the input to this search problem consists of a successor $s_u\in[n^2]\cup\{\nul\}$ for each grid node~$u$. For the purposes of SA, we encode this input by a string~$x\in\{0,1\}^{n'}$ over~$n' =O(n^4\log n)$ variables. Moreover, we can think of $\sod_{n^2}$ as a set of unsatisfiable polynomial equations~$\{a_i(x)=0\}$ each of degree $O(\log n)$. Our goal is to prove the following lemma.
\begin{lemma}\label{lem:sa-lb}
Any degree-$n^{o(1)}$ SA proof of $\sod_{n^2}$ requires coefficients of magnitude $\exp(\Omega(n))$.
\end{lemma}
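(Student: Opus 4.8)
The plan is to iterate the $\epsilon$-NS lower bound from \cref{lem:apx-ns} across the $n^2$ many $n\times n$ subgrids inside $\sod_{n^2}$. Fix any SA refutation $\sum_i p_i(x)a_i(x)=1+J(x)$ of $\sod_{n^2}$ of degree $k=n^{o(1)}$; we want to show that some $p_i$ has a coefficient of magnitude $\exp(\Omega(n))$, equivalently (up to the polynomial degree-to-coefficient bounds in \cref{sec:coeff-ub}) that $J$ attains a value $\exp(\Omega(n))$ somewhere on $\{0,1\}^{n'}$. The key observation is that inside any single $n\times n$ subgrid we can plant a $\sopl_n$-instance: a partial assignment to the grid's successor pointers that wires that subgrid like a $\sopl_n$ input, while routing all other subgrids into a single long ``trivial'' path so that the only solutions of $\sod_{n^2}$ localise to the chosen subgrid. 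Restricting the SA refutation to such a partial assignment collapses it to an NS-like identity in the free variables of that subgrid, and by \cref{lem:apx-ns} (with $\epsilon=1/2$) the conical junta $J$ must already be forced up to value $\geq 1/2$ on some completion, because the RHS $1+J(x)$ would otherwise give a $\frac12$-NS refutation of $\sopl_n$ in degree $k=n^{o(1)}$.

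Next I would set up the iteration. Build a sequence of partial inputs $x_0\subseteq x_1\subseteq\cdots\subseteq x_t$ with $t=\poly(n)$, where $x_0$ routes everything trivially and, at step $\ell$, I take the completion of $x_{\ell-1}$ on the $\ell$-th fresh subgrid that the $\epsilon$-NS argument hands me — i.e.\ the one on which the junta is already large. The point is that the $\epsilon$-NS argument is ``local'': it only ever reads/forces variables belonging to the currently active subgrid, and the remaining subgrids stay wired trivially, so the value certified at step $\ell-1$ is not destroyed. Carefully, one shows $1+J(x_\ell)\geq (3/2)\cdot(1+J(x_{\ell-1}))$, because the restricted SA identity at the active subgrid reads $\sum_i p_i a_i = 1+J$ where the left side, being a valid (approximate) NS-type certificate for $\sopl_n$ on that subgrid, is forced to exceed $3/2$ somewhere — but now with the baseline shifted by the already-accumulated value of $J$ on the frozen part. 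Rescaling the $\epsilon$-NS argument by the current RHS value is exactly the ``iterate the $\epsilon$-NS lower-bound argument'' idea described in \cref{sec:intro-proof}. After $t=\poly(n)$ steps we obtain $1+J(x_t)\geq (3/2)^{\Omega(t)}=\exp(n^{\Omega(1)})$; combined with $\sum_i p_i(x_t)a_i(x_t)=1+J(x_t)$ and $|a_i(x_t)|\le 1$ on the Boolean cube, and the degree bound on the $p_i$'s, this forces a coefficient of magnitude $\exp(n^{\Omega(1)})$, and in particular $\exp(\Omega(n))$ after choosing the sub-grid dimension appropriately. (If one wants the cleaner $\exp(\Omega(n))$ rather than $\exp(n^{\Omega(1)})$, take $t=\Theta(n)$ subgrids and arrange the grid as $[n^2]\times[n^2]$ subdivided into $\Theta(n)$ blocks each contributing a constant multiplicative factor; this is the statement of the lemma.)

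The main obstacle I expect is the bookkeeping that makes the iteration genuinely \emph{local}, i.e.\ proving that restricting to the active subgrid turns the global SA identity into an honest $\frac12$-NS-type certificate for $\sopl_n$ on the free variables of that subgrid, uniformly in the already-fixed part. Three things need care here: (i) the ``planting'' gadget — wiring one subgrid as $\sopl_n$ and the rest as one trivial path — must be arranged so that the \emph{only} solutions of $\sod_{n^2}$ after the partial restriction are precisely the $\sopl_n$-solutions inside the active subgrid, so that the axioms $a_i$ surviving the restriction are exactly (binary encodings of) the $\sopl_n$ axioms plus trivially-satisfied ones; (ii) the randomised decision-to-search reduction of \cref{sec:ns-proof} must be composable — one applies it to the active subgrid's free variables while treating the frozen variables as constants, and the ``locally ideal'' claim (\cref{claim:1a}) still goes through because the monomials of $q$ still miss two consecutive middle rows \emph{of the active subgrid}; (iii) one must verify that the degree blow-up per iteration is only the $O(k\log n)$ factor from the reduction and does not compound multiplicatively over the $t$ iterations (it doesn't, because each iteration is a fresh reduction on a fresh subgrid, all of degree $k$). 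Once these are pinned down, the geometric accumulation $1+J(x_\ell)\geq \tfrac32(1+J(x_{\ell-1}))$ is essentially immediate from \cref{lem:apx-ns} applied with the shifted target value $\tfrac32(1+J(x_{\ell-1}))$ in place of $3/2$.
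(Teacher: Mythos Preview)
Your high-level picture is right: fix the SA refutation, embed $\sopl_n$ instances in successive subgrids, and use \cref{lem:apx-ns} at each stage to force $1+J$ up by a constant factor. But the proposal is missing the step that actually makes the iteration go through, and the obstacles you flag are not the real ones.

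The genuine difficulty is the \emph{clean-up between stages}. After stage~$\ell$ hands you a particular $y_\ell$ with $1+J(x'_\ell)$ large, you must (a)~redirect the active sinks of $y_\ell$ to the top-left corner of some $(\ell{+}1,j)$-subgrid, and (b)~guarantee that $1+J$ stays large \emph{for every} assignment $y_{\ell+1}$ to that next subgrid, not just for the empty one. Both operations can a~priori kill junta terms and drop $J$. The paper handles~(a) by first preprocessing $J$ so that every term is \emph{node-aligned}, \emph{curious} (reads the successor of any last-row node it reads), and \emph{non-witnessing}; under these hypotheses one checks that any term with $t(x'_\ell)=1$ cannot read a node in $\sols(y_\ell)$ at all, hence is unaffected by the redirection. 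For~(b) one uses an averaging argument over the $n$ choices of column~$j$: each degree-$n^{o(1)}$ term touches at most $n^{o(1)}$ subgrids, so a random $j$ avoids $99\%$ of the junta mass, and the surviving sub-junta $J_j$ is \emph{independent} of $y_{\ell+1}$. This is what buys the uniform lower bound $1+J(x_\ell\gets y_{\ell+1})\ge 1.4$ needed to rescale and re-apply \cref{lem:apx-ns}. Your sentence ``the remaining subgrids stay wired trivially, so the value certified at step $\ell-1$ is not destroyed'' is exactly the point that needs proof, and it does not follow from locality alone.

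Your obstacles (ii) and (iii) are red herrings. \cref{lem:apx-ns} is used as a black box at each stage; you never re-run or compose the randomised reduction, and there is no degree blow-up whatsoever in the iteration --- the SA proof is fixed once and for all, and you are only choosing inputs on which to evaluate it. The number of stages is exactly~$n$ (one per meta-row of subgrids), giving $1+J(x_n)\ge 1.4^n$ directly.
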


Suppose we are given a degree-$n^{o(1)}$ SA refutation of $\sod_{n^2}$ over the reals,
\begin{equation} \label{eq:sa-ref}
\sum_{i\in[m]} p_i(x)a_i(x) ~=~ 1 + J(x).
\end{equation}
Our idea is to apply \cref{lem:apx-ns} iteratively in stages to find a sequence of inputs $x_1,\ldots,x_n$ with a RHS value $1+J(x_i)\geq 2^{\Omega(i)}$. Hence \cref{lem:sa-lb} follows at stage $i=n$, since there are at most $\exp(n^{o(1)})$ many monomials, and so one of them must have a coefficient of exponential magnitude.

We start by preprocessing the SA refutation~\cref{eq:sa-ref} for technical convenience. We may assume wlog that each term $t$ appearing in $J=\sum_t \alpha_t t$ satisfies the following.
\begin{enumerate}
\item $t$ is \emph{node-aligned}: if $t$ reads some variable associated with a node $u$, then it reads all the $O(\log n)$ variables associated with $u$. To ensure this, we may replace a term $t$ with an equivalent sum of two terms, $t=tx_i+t\bar{x}_i$, which reads one more variable. Adding more literals to terms like this will only increase the degree of the proof by an $O(\log n)$ factor.
\item $t$ is \emph{curious}: if $t$ reads a node $u$ that lies on the last row of a subgrid, that is, $u\in\{in\}\times[n^2]$ for some $i\in[n]$, then $t$ also reads the successor $s_u$ of $u$ (if any) on the next row. Similarly as above, this can be ensured by at most doubling the degree of the proof.
\item $t$ is \emph{non-witnessing}: it does not witness a solution to the search problem. Formally, $t$ witnesses a violation $a_i\neq 0$ if for all $x$, $t(x)=1\Rightarrow a_i(x)\neq 0$ (or contrapositively, $a_i(x)=0\Rightarrow t(x)=0$). To ensure this, if $t$ is witnessing, we can factor\footnote{The existence of such a factorization is easy to see here since both $a_i$ and $t$ are conjunctions of literals. More generally, the existence of such a factorization is guaranteed for any multilinear polynomials $a_i$ and $t$ satisfying $a_i(x)=0\Rightarrow t(x)=0$ on the boolean hypercube, where we simplify expressions using the constraints $x_i^2-x_i=0$. One way to prove this is by using the fact that two multilinear polynomials are syntactically identical if and only if they agree on the boolean hypercube.} $t=p'_i a_i$ and move $t$ to the LHS of the proof.
\end{enumerate}

\paragraph{First stage.}
Let $y_1$ be an input to $\sopl_n$ defined on nodes $[n]\times[n]$. We can embed $y_1$ inside an input to $\sod_{n^2}$ as follows. We write $(\nul^*\gets y_1)$ for the input to $\sod_{n^2}$ where we start with an assignment of $\nul$ to all nodes $[n^2]\times[n^2]$ (denoted $\nul^*$), and then overwrite the top-left~$(1,1)$-subgrid with the successor pointers in $y_1$ (aligning the distinguished nodes of $\sod_{n^2}$ and $\sopl_n$). In this reduction, we can forget the predecessor pointers, as they are not part of the input to $\sod$. Now every solution of $(\nul^*\gets y_1)$ for $\sod_{n^2}$ corresponds naturally to a solution of~$y_1$ for $\sopl_n$. (A minor detail is that the \emph{active sinks} in $y_1$ correspond to \emph{proper sinks} in $(\nul^*\gets y_1)$.) Using this reduction, we can view our SA refutation of $\sod_{n^2}$ also as a refutation of $\sopl_n$.

We claim that there is some input $y_1$ to $\sopl_n$ such that for $x'_1\coloneqq(\nul^*\gets y_1)$ we have a RHS value $1+J(x'_1)\geq 1.5$. Suppose not: then the RHS is always in $[1,1.5]$ for all $y_1$, which means we have a low-degree $\frac{1}{2}$-NS proof of $\sopl_n$. But this contradicts \cref{lem:apx-ns}.

We have now found an input $x'_1=(\nul^*\gets y_1)$ with RHS at least $1.5$. Before we iterate this argument in the second stage, we have to clean up $x'_1$ slightly.

\begin{figure}
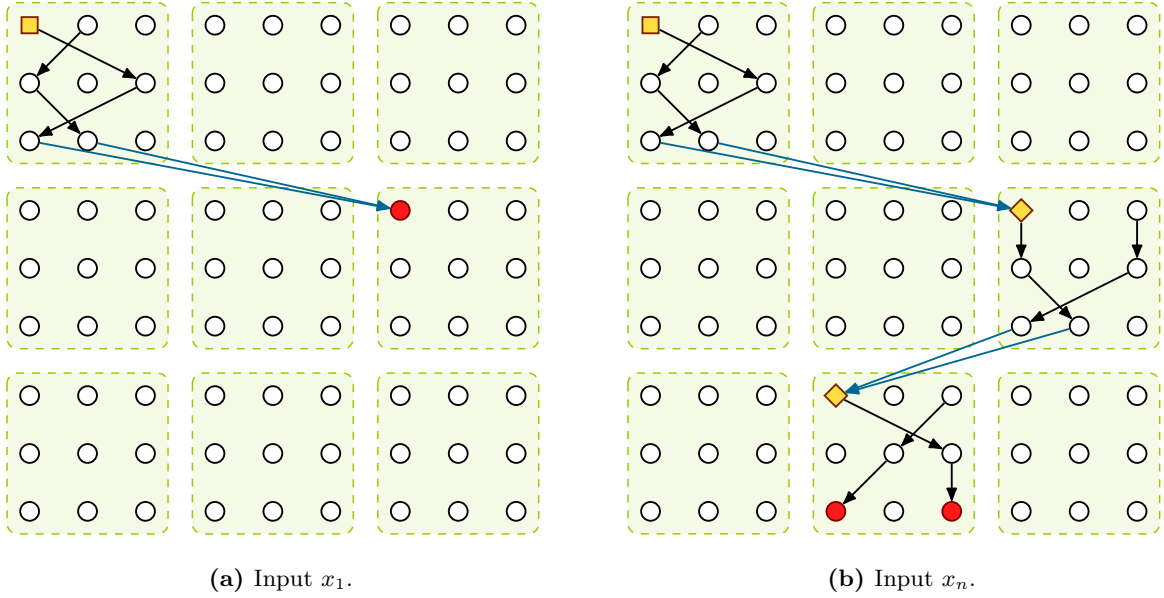

    \centering
\begin{subfigure}[b]{0.5\textwidth}
    \centering
    \input{figures/embedded_instance.tex}
    \vspace{2mm}
    \caption{Input $x_1$.}
    \label{figure:embedded}
\end{subfigure}%
\begin{subfigure}[b]{0.5\textwidth}
    \centering
    \input{figures/iterated_instance.tex}
    \vspace{2mm}
    \caption{Input $x_n$.}
    \label{figure:iterated}
\end{subfigure}
    \caption{Illustration of the proof of \cref{lem:sa-lb}. (a) In the first stage, we construct an input $x_1$ to $\sod_{n^2}$ that embeds an input $y_1$ to $\sopl_n$ in the top-left $(1,1)$-subgrid, and moreover, all the active sinks of $y_1$ are assigned as successor the top-left corner of some $(2,j)$-subgrid. (b) The completed construction after $n$ stages.}
    \label{fig:sa-lb}
\end{figure}

\paragraph{First stage: Clean-up.}
Recall that the instances considered in the proof of \cref{lem:apx-ns} consist of some number of directed paths that terminate at sinks $\sols(y_1)\subseteq\{n\}\times[n]$. We will modify~$x'_1$ by making the nodes $\sols(y_1)$ point to the same \emph{top-left corner} of a $(2,j)$-subgrid for some $j\in[n]$. Indeed, let $\rho_j\colon\sols(y_1)\to[n^2]$ be the partial assignment that assigns $(n,(j-1)n)+(1,1)$ (top-left corner of the $(2,j)$-subgrid) as the successor of all nodes in $\sols(y_1)$. Let $(x'_1\gets\rho_j)$ be the input obtained from $x'_1$ by applying $\rho_j$. (We actually have $x'_1=x'_1\gets\rho_1$, as this is how we decided to make every node in $\sols(y_1)$ an active sink in $y_1$.) By defining $x_1\coloneqq (x'_1\gets\rho_j)$ for a carefully chosen~$j\in[n]$ (see~\cref{figure:embedded}), we establish the following properties for the start of the next stage.
\begin{enumerate}[label=(1\alph*)]
    \item\label{l1} The only solutions in $x_1$ are proper sinks pointing to the corner of the $(2,j)$-subgrid.
    \item\label{l2} We have $1+J(x_1\gets y_2) \geq 1.4$ for any partial assignment $y_2$ to nodes in the $(2,j)$-subgrid.
\end{enumerate}
Property \ref{l1} is true by construction and we prove property \ref{l2} below. 
\begin{claim}
There exists a $j\in[n]$ such that \ref{l2} holds.
\end{claim}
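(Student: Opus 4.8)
The plan is an averaging argument over the $n$ choices of $j$, prefaced by a structural fact that makes the clean-up $\rho_j$ itself harmless. Write $J=\sum_t\alpha_t t$ with all $\alpha_t\ge 0$; recall that every $t$ is node-aligned, curious and non-witnessing, that $1+J(x_1')\ge 1.5$ (so $J(x_1')\ge\tfrac12$), and that in $x_1'$ every $u\in\sols(y_1)$ is active and points to the corner node $(n+1,1)$, which is itself assigned $\nul$. First I would prove the key claim: \emph{no term $t$ with $t(x_1')=1$ reads any node of $\sols(y_1)$}. Indeed, suppose $t$ read such a $u$. Since $u$ lies on the last row of the $(1,1)$-subgrid, node-alignment pins $s_u$ to its $x_1'$-value $(n+1,1)$, and curiosity (together with node-alignment applied to that node) makes $t$ read node $(n+1,1)$ as well, pinning it to $\nul$. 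Then on every input $x$ with $t(x)=1$ the node $u$ is active and points to the inactive node $(n+1,1)$, so $(n+1,1)$ is a proper sink and $u$ (active, on row $n\le n^2-1$, with a proper sink as successor) is a solution of $\sod_{n^2}$; thus $t$ would be witnessing, contradicting the preprocessing. As a consequence, since $\rho_j$ only rewrites successor variables of nodes in $\sols(y_1)$, every $t$ with $t(x_1')=1$ satisfies $t(x_1'\gets\rho_j)=t(x_1')$, and if moreover $t$ reads no node of the $(2,j)$-subgrid then $t\big((x_1'\gets\rho_j)\gets y_2\big)=t(x_1')=1$ for every partial assignment $y_2$ of that subgrid.

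Next I would run the averaging. For $j\in[n]$ set $w_j\coloneqq\sum\{\alpha_t:t(x_1')=1,\ t\text{ reads some node of the }(2,j)\text{-subgrid}\}$. After the preprocessing each $t$ still has degree $n^{o(1)}$, so — being node-aligned — it reads at most $n^{o(1)}$ nodes, hence reads a node of at most $n^{o(1)}$ of the subgrids $(2,1),\dots,(2,n)$. Summing over $j$,
\[
\sum_{j=1}^n w_j\ \le\ n^{o(1)}\!\!\sum_{t:\,t(x_1')=1}\!\!\alpha_t\ =\ n^{o(1)}\,J(x_1'),
\]
so some $j^*\in[n]$ has $w_{j^*}\le n^{o(1)}J(x_1')/n=J(x_1')/n^{1-o(1)}$. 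Fix $j\coloneqq j^*$ and $x_1\coloneqq(x_1'\gets\rho_{j^*})$. For an arbitrary partial assignment $y_2$ of the $(2,j^*)$-subgrid, discarding every term that vanishes at $x_1'$ or reads that subgrid (all such terms contribute $\ge 0$) and invoking the previous paragraph for the rest,
\[
J(x_1\gets y_2)\ \ge\!\!\sum_{\substack{t:\,t(x_1')=1\\ t\text{ reads no node of }(2,j^*)}}\!\!\!\alpha_t\ =\ J(x_1')-w_{j^*}\ \ge\ J(x_1')\Big(1-\tfrac{1}{n^{1-o(1)}}\Big)\ \ge\ \tfrac12\cdot\tfrac{9}{10}\ =\ 0.45
\]
for $n$ large, whence $1+J(x_1\gets y_2)\ge 1.45\ge 1.4$, which is exactly property~\ref{l2}.

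I expect the key claim of the first paragraph to be the main obstacle: it must manufacture a genuine $\sod_{n^2}$ solution out of a term that merely mentions a sink of $y_1$, and this uses all three normal-form properties at once — node-alignment to fix $s_u$, curiosity to expose that its successor is $\nul$, non-witnessing to derive the contradiction — together with the fact that in $x_1'$ all of $\sols(y_1)$ funnels into the single node $(n+1,1)$. The averaging step is then just bookkeeping; the only (mild) point worth noting is that the weight lost to $w_{j^*}$ is a vanishing \emph{fraction} of $J(x_1')$ rather than a fixed constant, which is why the argument survives even if $J(x_1')$ turns out to be large, and why $1.4$ (rather than $1.5$) is the natural target.
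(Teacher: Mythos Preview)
Your proof is correct and follows essentially the same approach as the paper: the same curiosity/non-witnessing argument showing that any term with $t(x'_1)=1$ cannot read a node of $\sols(y_1)$, followed by the same averaging over $j\in[n]$ to locate a $(2,j)$-subgrid touched by only an $n^{o(1)}/n$ fraction of the accepting weight. The only cosmetic difference is that you use the one-sided version of the key claim (which is all the lower bound needs), whereas the paper states the two-sided identity $t(x'_1)=t(x'_1\gets\rho_j)$ for every term; your final constant $1.45$ is accordingly slightly sharper than the paper's $1.4$.
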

\begin{proof}
Let us first prove that for every term $t$ appearing in $J=\sum_t\alpha_t t$, we have
\begin{equation} \label{eq:terms}
    t(x'_1) ~=~ t(x'_1\gets\rho_j)\qquad \forall j.
\end{equation}
It suffices to show that any term $t$ in $J$ with $t(x'_1)=1$ (or $t(x'_1\gets\rho_j)=1$) does not read any nodes in~$\sols(y_1)$. Assume for contradiction that such a $t$ reads a node $u\in \sols(y_1)$. Then, because $t$ is curious, it also reads $u$'s successor node (note that $s_u\neq\nul$ in both $x'_1$ and $x'_1\gets\rho_j$) on the next row. This successor node is set to $\nul$ in $x'_1$ (and $(x'_1\gets\rho_j)$) and hence $t$ witnesses that $u$ is a solution (proper sink). But this contradicts our assumption that $t$ is non-witnessing. This proves~\cref{eq:terms}.

Define $J_j \coloneqq \sum_{\smash{t\in T_j}}\alpha_t t$ where $T_j$ is the set of terms $t$ in $J$ that do \emph{not} read any node from the~$(2,j)$-subgrid.  Note that each $t$ can read from at most $\deg(t)\leq n^{o(1)}$ many different subgrids, and hence if we choose $\bm{j}\sim[n]$ at random, $\Pr[t\in T_{\bm{j}}]\geq 99\%$. We now have
\[\textstyle 
\E[1+J_{\bm{j}}(x'_1)] ~=~ 1+\sum_t\Pr[t\in T_{\bm{j}}]\alpha_t t(x'_1) ~\geq~ 99\%\cdot (1+J(x'_1)) ~\geq~ 99\%\cdot 1.5 ~\geq~ 1.4.
\]
By averaging, there is some fixed $j\in[n]$ such that $1+J_j(x'_1)\geq 1.4$. Defining $x_1\coloneqq (x'_1\gets\rho_j)$ for this particular $j$, we have, for every assignment $y_2$ to the $(2,j)$-subgrid,
\[
1+J(x_1\gets y_2)
~\geq~ 1+J_j(x_1\gets y_2)
~=~ 1+J_j(x_1)
~\overset{\cref{eq:terms}}{=}~ 1+J_j(x'_1)
~\geq~ 1.4. \qedhere
\]
\end{proof}

\paragraph{Second stage.}
Here we start with the input $x_1$ satisfying \ref{l1}--\ref{l2} for some $j\in[n]$. Let $y_2$ be any input to~$\sopl_n$. We think of $y_2$ (ignoring predecessor pointers) as embedded in the $(2,j)$-subgrid. Consider the input~$(x_1\gets y_2)$ where the distinguished node of $y_2$ is aligned with corner of the $(2,j)$-subgrid, which is the only sink in $x_1$ by \ref{l1}. Then every solution of $(x_1\gets y_2)$ for $\sod_{n^2}$ corresponds to a solution of $y_2$ for $\sopl_n$. Hence we can view our SA refutation of $\sod_{n^2}$ as a refutation of $\sopl_n$ (this time in the $(2,j)$-subgrid). Moreover, we have from~\ref{l2} that the RHS of the proof evaluates to~$1+J(x_1\gets y_2)\geq 1.4$ for all $y_2$. If we scale our original SA proof by a factor~$1/1.4$, we get another polynomial identity
\begin{equation}
\frac{1}{1.4}\sum_{i\in[m]} p_i(x)a_i(x) ~=~ \frac{1}{1.4}(1 + J(x)),
\end{equation}
where the RHS evaluates to at least $1$ on any input of the form $x=(x_1\gets y_2)$. Using \cref{lem:apx-ns} we can now conclude that there must exist an input $x_2'=(x_1\gets y_2)$ such that $\frac{1}{1.4}(1 + J(x'_2))\geq 1.5$, or equivalently, $1 + J(x_2') \geq 1.5\cdot1.4$.

\paragraph{Second stage: Clean-up.}
Using exactly the same argument as in the first clean-up stage, we conclude that $x'_2$ can be cleaned up into $x_2$ such that for some $j\in[n]$ (different $j$ than in first stage):
\begin{enumerate}[label=(2\alph*)]
    \item The only solutions in $x_2$ are proper sinks pointing to the corner of the $(3,j)$-subgrid.
    \item We have $1+J(x_2\gets y_3) \geq 1.4^2$ for any partial assignment $y_3$ to nodes in the $(3,j)$-subgrid.
\end{enumerate}

By continuing this argument in the same fashion, we can eventually, at stage $n$, find an input~$x_n$ with $1+J(x_n) \geq 1.4^n$ (see~\cref{figure:iterated}). This concludes the proof of \cref{lem:sa-lb}.

\subsection{Upper bound for Resolution}

It is well-known that $\sod_n$ (understood as an $O(\log n)$-width CNF contradiction) admits an~$O(\log n)$-width Resolution refutation (e.g.,~\cite[Theorem 8.18]{Kamath2020}). If we want to further optimise this down to a constant-width refutation, as claimed by \cref{thm:res-sa}, then we can consider a sparse variant of~$\sod_n$ similarly as we did in~\cref{sec:ub-revres}. We omit the details.

\section{Proofs of Characterisations} \label{sec:characterisations}

In this section we prove \cref{thm:characterisations}, restated below.
\Chars*

Recall that the notation $\textsc{A}^{dt}(\textsc{B})$ for total search problems $\textsc{A}, \textsc{B}$ is the minimum complexity (namely,~$\log \text{size} + \text{depth}$) of an $\textsc{A}$-formulation of $\textsc{B}$.
Similarly, for a proof system $\textup{P}$ and CNF formula $F$ the notation $\textup{P}(F)$ is the minimum of $\log \text{size}(\Pi) + \deg(\Pi)$ where $\Pi$ is a $\textup{P}$-proof of $F$.

\subsection{Unary Nullstellensatz and $\PPAD$}

We first argue that unary Nullstellensatz corresponds to the decision tree class $\PPAD^{dt}$.

\begin{theorem}
	\label{thm:unaryns-ppad}
	Let $F$ be an unsatisfiable CNF formula.
	Then,
	\begin{itemize}
		\item If $F$ has a degree-$d$ size-$L$ uNS proof, then $S(F)$ has a depth-$O(d)$ $\eol_{O(L)}$-formulation.
		\item If $S(F)$ has a depth-$d$ $\eol_{L}$-formulation, then $F$ has a degree-$O(d)$ size-$L2^{O(d)}$ uNS proof.
	\end{itemize}
	In particular, $\PPAD^{dt}(S(F)) = \Theta(\textup{uNS}(F))$.
\end{theorem}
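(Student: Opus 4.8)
The plan is to establish the two implications separately; the displayed equation $\PPAD^{dt}(S(F)) = \Theta(\textup{uNS}(F))$ then follows by unwinding the definitions, since $\textup{uNS}(F) = \min_\Pi[\log\size(\Pi)+\deg(\Pi)]$ and $\PPAD^{dt}(S(F)) = \eol^{dt}(S(F))$ is the analogous minimum of $\log(\text{size})+\text{depth}$ over $\eol$-formulations. I would follow the template behind the known characterisation $\PPA_p^{dt}(S(F)) = \Theta(\F_p\textup{-NS}(F))$ of \cite{Kamath2020}, with one conceptual change: ``counting edges mod $p$'' is replaced by ``counting directed edges exactly, relative to a distinguished source'', which is precisely the combinatorial content of $\eol$. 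This is also why the \emph{unary} size of the coefficients is the right parameter: it is what controls the number of nodes of the instance we build.

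\textbf{From a uNS proof to an $\eol$-formulation.} Take a degree-$d$ size-$L$ uNS refutation $\sum_i p_i a_i + \sum_j q_j(x_j^2-x_j)=1$. First normalise: reduce modulo $x_j^2=x_j$ so all monomials are multilinear of degree $\le d$ (folding away the variable axioms costs only constant factors), and write each $p_i$ as a $\Z$-linear combination of multilinear monomials with integer coefficients recorded in unary. Evaluating on $\B^n$ and using that the clause indicator $a_i$ is $0/1$-valued with $a_i(x)=0$ exactly when $C_i(x)=1$ gives $\sum_{i:\,C_i(x)=0}p_i(x)=1$ for every $x$. I would then build the $\eol$ instance on a vertex set of \emph{tokens}: one token $(i,\mu,\nu,k)$ for each monomial $\mu$ of $p_i$, each monomial $\nu$ of $a_i$, and each $k$ indexing one of the $|c_{i,\mu}d_{i,\nu}|$ unary copies of that product term, together with $O(1)$ dedicated nodes realising the right-hand side $1$; there are $O(L)$ tokens since the coefficients are unary, which gives the $\eol_{O(L)}$ bound. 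A token is \emph{active} at $x$ when its (structural, $x$-independent) reduced monomial $\mathrm{mlin}(\mu\nu)$ is satisfied, a depth-$O(d)$ test. Reading the identity coefficient-by-coefficient shows that the tokens with a common reduced monomial $\neq 1$ have signed total $0$ while those with reduced monomial $1$ have signed total $1$; I use this to wire the tokens into directed paths, pairing a ``$+$'' token with a ``$-$'' token inside each monomial class and anchoring the single leftover ``$+$'' token of the class $1$ to the distinguished source. Each token is a small gadget that is internally balanced when inactive and contributes exactly one proper endpoint when active, so that (i) the distinguished source is always a proper source, hence never a solution of $\eol$, and (ii) every other degree-imbalanced node lies in an active token and is decoded, in depth $O(1)$, to the falsified clause it names; if a falsified clause happens to have $p_i(x)=0$ we simply do not find it there, which is harmless because the active tokens have signed total $1\neq 0$ and so at least one solution survives. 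This reduction has size $O(L)$ and depth $O(d)$.

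\textbf{From an $\eol$-formulation to a uNS proof.} Suppose $S(F)$ has a depth-$d$ $\eol_L$-formulation with instance bits $f_j(x)$ and decoders $g_o(x)$. First pad each decoder so that whenever a leaf outputs a clause index $i$, the path to that leaf has queried all variables of $C_i$ and falsifies it; this costs $O(\mathrm{width}(F))=O(d)$ extra depth, since $d\ge\deg(a_i)=\mathrm{width}(C_i)$. For each node $v\neq 1$, the quantity $\chi_v(x):=\mathrm{indeg}(v)-\mathrm{outdeg}(v)$ in the instance $\eol(f(x))$ is $\{-1,0,1\}$-valued, computable by a depth-$O(d)$ decision tree, and nonzero only when $v$ is a solution of $\eol(f(x))$, in which case $g_v(x)$ is a falsified clause. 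Expand this decision tree into a multilinear polynomial $\sum_\ell b_\ell M_\ell$ over its leaves; on every leaf $\ell$ with $b_\ell\neq 0$ outputting $i$, the path monomial factors as $M_\ell=M'_\ell\cdot a_i$, because the path falsifies $C_i$ and therefore contains the literal product defining $a_i$. Setting $p_i:=\sum_{v\neq 1}\sum_{\ell:\,g_v|_\ell=i}b_\ell\,M'_\ell$ yields $\sum_i p_i a_i=\sum_{v\neq 1}\chi_v$ as polynomials. By the handshake lemma and the fact that node $1$ is always a proper source, $\sum_{v\neq 1}\chi_v(x)=1$ for all $x\in\B^n$; since both sides are polynomials agreeing on the cube, $\sum_i p_i a_i + \sum_j q_j(x_j^2-x_j)=1$ for suitable $q_j$ with polynomially bounded coefficients, a uNS refutation of $F$. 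Its degree is $O(d)$, its number of monomials is $L\cdot 2^{O(d)}$, and all coefficients have magnitude $L\cdot 2^{O(d)}$, so its unary size is $L\cdot 2^{O(d)}$.

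\textbf{Main obstacle.} The real work is in the first direction: turning the $x$-dependent activation pattern of the tokens into an honest, shallowly describable $\eol$ instance whose distinguished source is correctly anchored to the right-hand-side constant and which has no spurious solutions. This is exactly the point where one must reproduce --- and adapt from modular counting to exact integer counting relative to a source --- the gadget construction underlying $\F_p\textup{-NS}\to\PPA_p^{dt}$, and it is where the unary encoding of the coefficients is used crucially to keep the vertex count polynomial. By contrast, the second direction is comparatively routine once the decoders are padded so that each accepting leaf physically contains the literals of the falsified clause it names.
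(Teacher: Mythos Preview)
Your overall strategy matches the paper's, and most of the argument is on target, but both directions contain gaps.

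\textbf{Second direction.} The claim ``node $1$ is always a proper source'' is false: node $1$ has no predecessor pointer, so $\mathrm{indeg}(1)=0$ always, but $\mathrm{outdeg}(1)=1$ only when $p_{s_1(x)}(x)=1$, which the formulation need not guarantee. When node $1$ is isolated, $\sum_{v\neq 1}\chi_v(x)=0$, not $1$, and your identity fails on such $x$. The fix (which is exactly what the paper does) is to include a term for $v=1$: set $S_1(x)=1$ when $1$ is not a proper source and $S_1(x)=0$ otherwise. That case is itself an $\eol$ solution, so $g_1$ decodes it to a falsified clause and your factoring $M_\ell=M'_\ell\,a_i$ goes through for that leaf too. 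With this correction, $\sum_v S_v(x)=1$ identically.

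\textbf{First direction.} Your outer pairing of $+$/$-$ tokens by monomial class is right, but it alone does not justify claim (ii): a token $(i,\mu,\nu,k)$ can be active---its reduced monomial satisfied---while $C_i$ is satisfied, and decoding it to $C_i$ is then wrong. Concretely, with $C_1=x_1$, $C_2=\bar x_1$, $p_1=p_2=1$, on input $x_1=1$ the leftover ``$+$'' token of class $1$ (which comes from $p_1 a_1$) becomes a sink, yet $C_1$ is satisfied. The missing mechanism is what the paper calls the \emph{inner matching}: the successor/predecessor trees at a token with first index $i$ additionally query the variables of $C_i$; if $C_i(x)=1$ (so $p_i\overline C_i\restriction\rho=0$ under the induced restriction $\rho$) they pair all the index-$i$ tokens among themselves according to that cancellation, guaranteeing none of them is a source or sink. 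Only when $C_i(x)=0$ is the inner matching left empty, and then any unbalanced index-$i$ token correctly decodes to the falsified $C_i$. This costs at most $d$ extra queries per node and is the heart of the construction you correctly flag as the main obstacle---but your sketch does not actually contain it.
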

\begin{corollary}
	For any sequence $F_n$ of $\poly(\log n)$-width CNF formulas, $F_n$ has a degree-$\poly(\log n)$, size-$n^{\poly(\log n)}$ unary Nullstellensatz proof if and only if $S(F) \in \PPAD^{dt}$.
\end{corollary}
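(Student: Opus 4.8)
The plan is to derive this corollary directly from \cref{thm:unaryns-ppad}, which is the substantive statement: the corollary is obtained by unwinding the definitions of the decision-tree class $\PPAD^{dt}$ and of the quantity $\textup{uNS}(F)$, together with the elementary facts that $n^{\poly(\log n)} = 2^{\poly(\log n)}$ and that $\poly(\log n)$ is closed under addition, multiplication by constants, and composition with a fixed polynomial (so $O(\poly(\log n)) = \poly(\log n)$ and $2^{\poly(\log n)} \cdot 2^{\poly(\log n)} = 2^{\poly(\log n)} = n^{\poly(\log n)}$).

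First I would record that $S(\textsc{F}) \in \TFNP^{dt}$ is meaningful: since each $F_n$ has width $\poly(\log n)$, the solution verifier $T_i$ — which reads the $\leq\poly(\log n)$ variables occurring in the clause $C_i$ and accepts iff $C_i(x)=0$ — is a decision tree of depth $\poly(\log n)$. By definition, ``$S(\textsc{F}) \in \PPAD^{dt}$'' then says $\PPAD^{dt}(S(F_n)) = \poly(\log n)$, i.e.\ there is a sequence of $\eol_{m_n}$-formulations of $S(F_n)$ whose complexity $\log m_n + d_n$ is $\poly(\log n)$, equivalently $d_n = \poly(\log n)$ and $m_n = 2^{\poly(\log n)}$. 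On the other side, $\textup{uNS}(F_n) = \min_\Pi[\log\size(\Pi) + \deg(\Pi)]$ over uNS refutations $\Pi$ of $F_n$, so $\textup{uNS}(F_n) = \poly(\log n)$ holds precisely when $F_n$ admits a uNS refutation of degree $\poly(\log n)$ and size $2^{\poly(\log n)} = n^{\poly(\log n)}$ — exactly the hypothesis/conclusion appearing in the corollary.

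It then remains only to invoke \cref{thm:unaryns-ppad}, which gives $\PPAD^{dt}(S(F_n)) = \Theta(\textup{uNS}(F_n))$ for every $n$; hence one side is $\poly(\log n)$ if and only if the other is, which is the claimed equivalence. If one prefers not to appeal to the asymptotic statement and instead use the two explicit implications of \cref{thm:unaryns-ppad}, the translation is just as short: a degree-$d$, size-$L$ uNS proof with $d, \log L = \poly(\log n)$ yields a depth-$O(d)$ $\eol_{O(L)}$-formulation, of complexity $\log(O(L)) + O(d) = \poly(\log n)$; conversely a depth-$d$ $\eol_{L}$-formulation with $d, \log L = \poly(\log n)$ yields a degree-$O(d)$, size-$L\cdot 2^{O(d)} = 2^{\poly(\log n)}$ uNS proof, using only the closure properties above.

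There is essentially no obstacle: all the content lives in \cref{thm:unaryns-ppad}, and the corollary is a routine passage between the ``quantitative'' formulation (asymptotics of $\PPAD^{dt}(S(F_n))$ and $\textup{uNS}(F_n)$) and the ``qualitative'' membership formulation ($S(\textsc{F}) \in \PPAD^{dt}$ versus $F_n$ having quasipolynomial-size, polylog-degree uNS proofs). The only mild point of care is tracking the several occurrences of $\poly(\log n)$ through the $O(\cdot)$ and exponential blow-ups in the two directions of the theorem, which the stated closure properties handle.
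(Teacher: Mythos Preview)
Your proposal is correct and takes essentially the same approach as the paper, which in fact gives no explicit proof for this corollary and treats it as immediate from \cref{thm:unaryns-ppad}. Your unwinding of the definitions of $\PPAD^{dt}$ and $\textup{uNS}(F)$ together with the closure properties of $\poly(\log n)$ is exactly the intended (and only) content.
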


We prove \cref{thm:unaryns-ppad} in the next two lemmas.
The proof of this theorem is itself modelled on a similar characterization of $\PPA$-formulations by $\mathbb{F}_2$-Nullstellensatz, proved by \cite{Beame1998, Goos2018}.
It turns out to be easier to show that $\eol$-formulations imply Nullstellensatz proofs, so we do that first.
Furthermore, we will assume that all of our Nullstellensatz proofs are \emph{multilinearized}: that is, we work modulo the $x_i^2 - x_i = 0$ equations, and so the individual degree of any variable in the proof is at most $1$. 
It is well-known that making this assumption will not change the degree or size of the proof by more than a constant factor \cite{Buss1998}.

\begin{lemma}
	\label{lem:ppad-unary-ns}
	Let $F$ be an unsatisfiable CNF formula.
	If there is a depth-$d$ $\eol_L$-formulation of $S(F)$ then there is a unary Nullstellensatz refutation of $F$ with degree $O(d)$ and size $L2^{O(d)}$.
\end{lemma}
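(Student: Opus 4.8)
The goal is to turn a depth-$d$ $\eol_L$-formulation of $S(F)$ into a unary Nullstellensatz refutation of $F$ of degree $O(d)$ and size $L2^{O(d)}$. The natural strategy mirrors the known $\mathbb{F}_2$-NS vs.\ $\PPA$ characterisation of Beame et al.\ and G\"o\"os et al., adapted from the ``parity of odd-degree nodes'' counting argument to the ``flow conservation / balance'' counting argument appropriate for $\eol$ (directed graphs, unbalanced nodes). The key point is that an $\eol$ instance comes with a global counting identity: the number of unbalanced nodes (counted with the sign $\mathrm{outdeg}-\mathrm{indeg}$) is zero, equivalently $\sum_v (\mathrm{outdeg}(v)-\mathrm{indeg}(v)) = 0$. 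We want to translate this identity into a polynomial identity $\sum_i p_i a_i = 1$, where the $a_i$ are the clause-axioms of $F$ (multilinearised), and where the low-degree decision-tree functions $f_e$ defining the formulation become low-degree polynomials.

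First I would set up the polynomial dictionary. Given the $\eol_L$-formulation, each potential directed edge $(u,v)$ of the instance graph is controlled by a boolean indicator that is a depth-$d$ decision tree in the variables $x$ of $F$; write it as a multilinear polynomial $E_{uv}(x)$ of degree at most $d$, with $\le 2^d$ monomials, each monomial being a conjunction of $\le d$ literals of $x$. Similarly the predicate ``$v$ is a solution of the $\eol$ instance'' is a depth-$O(d)$ decision tree, hence a polynomial of degree $O(d)$. The soundness of the formulation says: whenever $v$ is flagged as an $\eol$-solution, the composed string $g(x)$ is a solution of $S(F)$, i.e.\ some clause $C_i$ of $F$ is falsified; and conversely we may use the $g_o$ maps to know which clause. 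Now consider the signed balance polynomial $B_v(x) \coloneqq \sum_{u} E_{uv}(x) - \sum_{u} E_{vu}(x)$ (in-degree minus out-degree of $v$). Over boolean $x$, $\sum_v B_v(x) \equiv 0$ identically, but this is only a consequence of the edge-consistency constraints of $\eol$ — so more precisely I would work with the formulation's guarantee that the underlying graph is well-defined (each node has in/out-degree $\le 1$, using successor/predecessor agreement), which lets me write $B_v$ as $\pm\mathbf{1}[v \text{ is unbalanced}]$ up to sign, i.e.\ an indicator of ``$v$ is a source or sink''. The distinguished node $1$ contributes a fixed $-1$ (it is a proper source by totality assumptions, or flags a solution), and every other unbalanced node $v$ is, by soundness, a node where $g(x)$ falsifies some clause $a_{i(v)}$ of $F$, so $B_v(x)$ can be written as $B_v(x) = q_v(x)\cdot a_{i(v)}(x)$ for some polynomial $q_v$ of degree $O(d)$: indeed $B_v(x)=1$ forces $a_{i(v)}(x)\ne 0$ over $\{0,1\}^n$, and any such ``witnessing'' polynomial factors through the axiom by multilinearisation.

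Then I would assemble the identity. Summing $\sum_v B_v(x) = 0$ over all nodes, the distinguished-node term gives the constant $-1$ (possibly $+1$ after fixing signs), and every other term is $q_v a_{i(v)}$; rearranging yields $\sum_v q_v(x) a_{i(v)}(x) = 1$, which after collecting, for each clause index $i$, all contributions $p_i \coloneqq \sum_{v : i(v)=i} q_v$, is exactly a Nullstellensatz refutation $\sum_i p_i a_i = 1$. Degree is $O(d)$ as claimed. For size: each $E_{uv}$ has $\le 2^{O(d)}$ monomials, $B_v$ sums $O(1)$ of these (degree $\le 1$ in/out), so $q_v$ and hence the total proof has at most $L \cdot 2^{O(d)}$ monomials. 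Finally I must check the coefficients are small integers written in unary: all coefficients here are $\pm 1$ (the $E_{uv}$ are $\{0,1\}$-indicators, the balance is an integer combination with $\pm1$ weights, and the factoring-through-axiom step introduces only $\pm1$ coefficients since the axiom polynomials are products of literals), so each $p_i$ is an integer-coefficient polynomial whose coefficients sum in magnitude to $L2^{O(d)}$ — which is exactly the unary size bound. One subtlety to handle carefully is the ``$\le 2$'' in-/out-degree structure and the $\nul$-pointers: I would verify that the graph-consistency predicates of $\eol$ (successor and predecessor pointers agreeing) are themselves depth-$O(d)$ and that they guarantee $B_v \in \{-1,0,1\}$ pointwise so that the factoring-through-an-axiom step is legitimate; making this bookkeeping precise — i.e.\ identifying exactly which clause $a_{i(v)}$ each unbalanced node witnesses and ensuring the witnessing is ``pointwise over $\{0,1\}^n$'' so that divisibility by the axiom holds — is the main obstacle. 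The rest is routine once that dictionary between $\eol$ solutions and falsified clauses is pinned down.
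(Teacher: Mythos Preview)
Your approach is essentially the paper's: there one defines for each node $v$ a $\{-1,0,+1\}$-valued function $S_v$ (equal to your balance $B_v$ for $v\neq v^*$, and to $B_{v^*}+1$ at the distinguished node, so that $\sum_v S_v \equiv 1$), computes it by a single depth-$O(d)$ decision tree, and factors each nonzero leaf through the clause-axiom that $g_v$ outputs at that leaf.

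Two details in your sketch need tightening to actually yield the stated bounds. First, your expression $B_v=\sum_u E_{uv}-\sum_u E_{vu}$ has $2L$ summands, not $O(1)$; the parenthetical ``degree $\le 1$ in/out'' is the right intuition but the wrong argument. To get $2^{O(d)}$ monomials per node you must compute $B_v$ by a \emph{single} depth-$O(d)$ tree (query $s_v$ and $p_v$, then the reciprocal pointers $p_{s_v}$ and $s_{p_v}$), not as a sum over all potential neighbours. Second, the factorisation ``$B_v=q_v\cdot a_{i(v)}$'' is ill-posed as written, because $g_v(x)$ may output different clause indices for different $x$; you flag this as the main obstacle, and the paper's resolution is exactly what you should do: append the $g_v$ tree at each nonzero leaf, so that each leaf $\ell$ of the combined depth-$O(d)$ tree carries a \emph{fixed} clause $C_\ell$, and the leaf's indicator conjunction $D_\ell$ literally contains the literals falsifying $C_\ell$, giving $D_\ell=D'_\ell\,\overline{C}_\ell$ for free.
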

\begin{proof}
	Suppose $F \coloneqq C_1 \wedge \cdots \wedge C_m$ is on $n$ variables $x_1, \dots, x_n$, and let $\overline {C}_i$ be the negation of $C_i$ represented as a polynomial.
	Assume that there is a depth-$d$ $\eol_L$-formulation of $S(F)$.
	Let $V \coloneqq [L]$ be the set of nodes in the $\eol$ formulation and let $v^* = 1$ denote the distinguished source node.
	Each node $v \in V$ is equipped with successor and predecessor functions $s_v, p_v : \B^n \rightarrow V$, respectively, each computed by decision trees of depth at most $d$, as well as a solution decision tree $g_v : \B^n \rightarrow [m]$ that outputs a corresponding solution of $S(F)$.
	For any input assignment $x \in \B^n$ let $G_x$ denote the directed graph obtained by evaluating all the successor and predecessor decision trees on input $x$ and adding an edge $(u, v)$ iff $s_u(x) = v$ and $p_v(x) = u$.

	For each $v \in V$ define the function $S_v : \B^n \rightarrow \set{-1, 0, 1}$ by \[ S_v(x) \coloneqq \begin{cases} -1 & \text{if } v \neq v^* \text{ is a source in } G_x \\
		1 & \text{if } v \neq v^* \text{ is a proper sink in } G_x \text{ or } v = v^* \text{ and } v^* \text{ is not a source} \\
		0 & \text{otherwise}.
	\end{cases}\]
	We compute $S_v$ for each node $v$ by a depth at most $5d$ decision tree as follows.
	First, we compute $s_v(x) = u$ and $p_v(x) = w$, and then compute $p_u(x)$ and $s_w(x)$.
	From this information we can determine the output value of $S_v$, and we have used at most $4d$ queries.
	If $S_v = 0$ then the leaf of the decision tree is labelled with $0$.
	Otherwise, if $S_v \neq 0$ then $v$ is a solution to $\eol$, and so in this case we will also run the decision tree for $g_v$ and label each leaf with either $1$ or $-1$ according to the output value of $S_v$.
	Overall this requires at most $5d$ queries. 

	Now, for any leaf $\ell$ in the decision tree for $S_v$ let $D_\ell$ denote the polynomial representation of the conjunction of literals on the path from the root of the tree to $\ell$.
	Observe that we can represent \[ S_v = \sum_{(-1)\text{-leaf } \ell} - D_\ell + \sum_{\text{1-leaf } \ell} D_\ell,\] where the first sum is over leaves of $S_v$ labelled with $-1$ and the second is over leaves of $S_v$ labelled with $1$.
	If $\ell$ is a non-zero leaf then $v$ is a solution to the $\eol$ instance, so let $C_\ell$ denote the solution of $S(F)$ output by the decision tree $g_v$ at this leaf.
	Observe that at every non-zero leaf $\ell$, the clause $C_\ell$ must be falsified by the assignment on the path to $\ell$, since $C_\ell$ is a solution to $S(F)$ by the correctness of the $\eol$ formulation and by the fact that we ran the $g_v$ decision tree in $S_v$.
	This implies that for each non-zero leaf $\ell$ of $S_v$ we can write $D_\ell = D'_\ell\overline C_\ell$, and thus \[ S_v = \sum_{(-1)\text{-leaf } \ell} - D_\ell + \sum_{\text{1-leaf } \ell} D_\ell = \sum_{(-1)\text{-leaf } \ell} - D'_\ell \overline C_\ell + \sum_{\text{1-leaf } \ell} D'_\ell \overline C_\ell. \]
	If we sum up these polynomials for each $v \in V$ and gather terms then \[ \sum_{v \in V} S_v = \sum_{i=1}^m p_i \overline C_i \] for some polynomials $p_i$.
	Note that each polynomial has degree at most $5d$ since they are obtained from the underlying $S_v$ decision trees.

	To see that $\sum_{i=1}^m p_i\overline C_i$ is a unary Nullstellensatz refutation of $F$, observe that since each $S_v$ came from an $\eol$ formulation we have
	\[ \sum_{i=1}^m p_i(x) \overline C_i(x) = \sum_{v \in V} S_v(x) = (\#\text{ sinks in }G_x) - (\#\text{ non-distinguished sources in } G_x) = 1 \] for any input $x \in \B^n$.
	Finally, we observe that all coefficients used in this proof are integers, and the number of distinct monomials produced is at most $|V|2^{O(d)} = L2^{O(d)}$ from expanding the depth-$d$ decision trees as polynomials.
\end{proof}

The more difficult direction is the converse, proved next.

\begin{lemma}
	\label{lem:unary-ns-ppad}
	Let $F$ be an unsatisfiable CNF formula.
	If there is a unary Nullstellensatz refutation of $F$ with degree $d$ and size $L$ then there is a depth-$O(d)$ $\eol_{O(L)}$-formulation of $S(F)$.
\end{lemma}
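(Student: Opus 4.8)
The plan is to reverse the construction of \cref{lem:ppad-unary-ns}: an algebraic certificate $\sum_i p_i\overline C_i = 1$ should be read off as an $\eol$ instance whose vertices are the ``units'' of the (unary) coefficients and whose successor/predecessor pointers are shallow decision trees. First I would normalise the proof. By the multilinearisation assumption this costs only constant factors in degree and size, so we may write each $p_i = \sum_j c_{ij} m_{ij}$ with nonzero integer coefficients $c_{ij}$ and multilinear monomials $m_{ij}$. Since $\overline C_i$ is a single product of literals, each $t_{ij} \coloneqq m_{ij}\overline C_i$ is again a conjunction of literals (discard it if the literals clash), of width $\deg(m_{ij})+\deg(\overline C_i)\le d$. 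We obtain a polynomial identity $\sum_{ij} c_{ij}t_{ij} = 1$, and since every $t_{ij}$ is $\{0,1\}$-valued on the cube, evaluating at any $x\in\B^n$ yields the purely combinatorial fact
\[
\#\{(i,j,k): c_{ij}>0,\ k\le c_{ij},\ t_{ij}(x)=1\}
\;=\;
\#\{(i,j,k): c_{ij}<0,\ k\le |c_{ij}|,\ t_{ij}(x)=1\} + 1 .
\]

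Next I would build the $\eol_{O(L)}$-formulation. The vertices are a distinguished source $v^*$ together with one token $(i,j,k)$ per unit of $|c_{ij}|$; the total is $1+\sum_{ij}|c_{ij}| = O(L)$. Each token carries the sign of $c_{ij}$ and is \emph{active} on input $x$ iff $t_{ij}(x)=1$, a test that reads only the $\le d$ literals of $t_{ij}$. The pointers will be chosen so that, for every $x$, the graph $G_x$ is a disjoint union of directed paths whose start vertices are exactly $v^*$ and the active negative tokens and whose end vertices are exactly the active positive tokens, with all remaining tokens internal or isolated (hence neither proper source nor proper sink). Since $v^*$ has no predecessor pointer it is always a proper source, hence never a solution; every other proper source or sink is an active token $(i,j,k)$, and its solution decoder simply outputs clause $i$, which is a genuine solution of $S(F)$ because $t_{ij}(x)=1$ forces $\overline C_i(x)=1$. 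The displayed counting identity is precisely the arithmetic consistency condition that makes such a path cover combinatorially possible, and it also matches $\sum_{v}S_v = \sum_{ij}c_{ij}t_{ij}=1$ with $S_v$ defined as in \cref{lem:ppad-unary-ns}, so once the pointers are in place the reduction is correct and gives $\PPAD^{dt}(S(F)) = O(\textup{uNS}(F))$; combined with \cref{lem:ppad-unary-ns} this yields the claimed $\Theta$.

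\textbf{Main obstacle.} The hard part will be realising the path structure with \emph{local} pointers: the set of active tokens, and therefore the pairing of sources to sinks, is a global function of $x$, whereas each pointer may inspect only $O(d)$ bits. I would handle this exactly as in the $\mathbb F_2$-$\textup{NS}\leftrightarrow\PPA$ correspondence of \cite{Beame1998,Goos2018}, adapted to the directed, sign-carrying setting: fix an arbitrary linear order on the tokens, introduce a bounded number of relay vertices (still $O(L)$ in total), and route one backbone line $v^*\to\cdots$ through them, so that an active positive token ends the segment currently passing through it and an active negative token starts a fresh one, with inactive tokens acting as pass-throughs and the relays absorbing the re-splicing. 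Each pointer then depends only on the activity of a constantly-bounded neighbourhood of tokens in the fixed order, so it remains a depth-$O(d)$ decision tree, and the counting identity guarantees that the splices assemble into a genuine union of directed paths with exactly the intended endpoints. Verifying this splicing and its bounded locality is the technical heart of the argument; everything else is bookkeeping of degree ($O(d)$, from reading the terms $t_{ij}$) and size ($O(L)$ vertices).
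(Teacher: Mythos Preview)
Your high-level setup (one token per unit of $|c_{ij}|$, activity $t_{ij}(x)=1$, and the observation that any active token immediately certifies a falsified clause) is correct and is exactly what the paper uses. But the construction of the successor/predecessor pointers has a genuine gap. Routing all tokens along a single backbone line with relays leaves an unavoidable ``tail'' problem: whatever vertex sits at the end of your fixed order becomes a proper sink whenever the last token is inactive, and no depth-$O(d)$ decoder at that vertex can output a falsified clause, since all it sees locally is an inactive term. (The same failure appears at intermediate relays, but there you can decode by peeking at the adjacent active token; at the end of the line there is no such neighbour.) The counting identity is only a global conservation law: it guarantees that \emph{some} sink in the graph is decodable, not that \emph{every} sink is, which is what an $\eol$-formulation requires.

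Your analogy to the $\mathbb F_2$-NS/$\PPA$ correspondence is also misleading: that construction is not a ``single line with relays'' either. The paper's construction (and the $\PPA$ one it is modelled on) uses a two-level matching that has no dangling end. The key idea you are missing is to expand each $p_i\overline C_i$ fully into \emph{monomials} $q_{ij}$ rather than stopping at conjunctions $t_{ij}$. Because $\sum_{ij}c_{ij}q_{ij}=1$ holds as a polynomial identity, for every monomial $q\neq 1$ the total signed multiplicity is zero, so one can fix a \emph{static}, input-independent ``outer'' perfect matching pairing each ``$+$'' copy of $q$ with a ``$-$'' copy in some other group (and $v^*$ with the lone $+1$); the edge is present iff $q(x)=1$. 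An ``inner'' matching inside each group $V_i$ is then activated only when $C_i(x)=1$: restricting $p_i\overline C_i$ by the values of $C_i$'s variables kills the polynomial, so the surviving monomials cancel in pairs and can be matched after querying only those $\le d$ variables. The alternation of outer and inner edges forces every proper source or sink to lie in some $V_i$ with $C_i(x)=0$, which is exactly what makes the solution map well-defined. Replacing your linear backbone by this outer/inner matching is the missing step.
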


\begin{figure}
    \centering
    \begin{tikzpicture}[y=-1cm, scale=1]
\newcommand{\hspacex}{0.7*\spacex}
\newcommand{\hspacey}{0.7*\spacey}

\newcommand{\polygonshiftsmall}{.3}

% node grid
\coordinate (p11) at (0, 0); 				 \coordinate (p12) at (\hspacex, 0); 				 \coordinate (p13) at (2*\hspacex, 0); 				\coordinate (p14) at (3.2*\hspacex, 0);    \coordinate (p15) at (4.2*\hspacex, 0);    \coordinate (p16) at (5.2*\hspacex, 0); \coordinate (p17) at (6.4*\hspacex, 0);    \coordinate (p18) at (7.4*\hspacex, 0);    \coordinate (p19) at (8.4*\hspacex, 0); 

% \coordinate (p21) at (0, \hspacey); 				 \coordinate (p22) at (\hspacex, \hspacey); 				 \coordinate (p23) at (2*\hspacex, \hspacey); 				\coordinate (p24) at (3.2*\hspacex, \hspacey);    \coordinate (p25) at (4.2*\hspacex, \hspacey);    \coordinate (p26) at (5.2*\hspacex, \hspacey); \coordinate (p27) at (6.4*\hspacex, \hspacey);    \coordinate (p28) at (7.4*\hspacex, \hspacey);    \coordinate (p29) at (8.4*\hspacex, \hspacey); 

\coordinate (p31) at (0, 2*\hspacey); 				 \coordinate (p32) at (\hspacex, 2*\hspacey); 				 \coordinate (p33) at (2*\hspacex, 2*\hspacey); 				\coordinate (p34) at (3.2*\hspacex, 2*\hspacey);    \coordinate (p35) at (4.2*\hspacex, 2*\hspacey);    \coordinate (p36) at (5.2*\hspacex, 2*\hspacey); \coordinate (p37) at (6.4*\hspacex, 2*\hspacey);    \coordinate (p38) at (7.4*\hspacex, 2*\hspacey);    \coordinate (p39) at (8.4*\hspacex, 2*\hspacey); 

\coordinate (p41) at (0, 3.2*\hspacey); 				 \coordinate (p42) at (\hspacex, 3.2*\hspacey); 				 \coordinate (p43) at (2*\hspacex, 3.2*\hspacey); 				\coordinate (p44) at (3.2*\hspacex, 3.2*\hspacey);    \coordinate (p45) at (4.2*\hspacex, 3.2*\hspacey);    \coordinate (p46) at (5.2*\hspacex, 3.2*\hspacey); \coordinate (p47) at (6.4*\hspacex, 3.2*\hspacey);    \coordinate (p48) at (7.4*\hspacex, 3.2*\hspacey);    \coordinate (p49) at (8.4*\hspacex, 3.2*\hspacey); 

\coordinate (p51) at (0, 4.2*\hspacey); 				 \coordinate (p52) at (\hspacex, 4.2*\hspacey); 				 \coordinate (p53) at (2*\hspacex, 4.2*\hspacey); 				\coordinate (p54) at (3.2*\hspacex, 4.2*\hspacey);    \coordinate (p55) at (4.2*\hspacex, 4.2*\hspacey);    \coordinate (p56) at (5.2*\hspacex, 4.2*\hspacey); \coordinate (p57) at (6.4*\hspacex, 4.2*\hspacey);    \coordinate (p58) at (7.4*\hspacex, 4.2*\hspacey);    \coordinate (p59) at (8.4*\hspacex, 4.2*\hspacey); 

\coordinate (p61) at (0, 5.2*\hspacey); 				 \coordinate (p62) at (\hspacex, 5.2*\hspacey); 				 \coordinate (p63) at (2*\hspacex, 5.2*\hspacey); 				\coordinate (p64) at (3.2*\hspacex, 5.2*\hspacey);    \coordinate (p65) at (4.2*\hspacex, 5.2*\hspacey);    \coordinate (p66) at (5.2*\hspacex, 5.2*\hspacey); \coordinate (p67) at (6.4*\hspacex, 5.2*\hspacey);    \coordinate (p68) at (7.4*\hspacex, 5.2*\hspacey);    \coordinate (p69) at (8.4*\hspacex, 5.2*\hspacey); 

% \coordinate (p71) at (0, 6.4*\hspacey); 				 \coordinate (p72) at (\hspacex, 6.4*\hspacey); 				 \coordinate (p73) at (2*\hspacex, 6.4*\hspacey); 				\coordinate (p74) at (3.2*\hspacex, 6.4*\hspacey);    \coordinate (p75) at (4.2*\hspacex, 6.4*\hspacey);    \coordinate (p76) at (5.2*\hspacex, 6.4*\hspacey); \coordinate (p77) at (6.4*\hspacex, 6.4*\hspacey);    \coordinate (p78) at (7.4*\hspacex, 6.4*\hspacey);    \coordinate (p79) at (8.4*\hspacex, 6.4*\hspacey); 

% \coordinate (p81) at (0, 7.4*\hspacey); 				 \coordinate (p82) at (\hspacex, 7.4*\hspacey); 				 \coordinate (p83) at (2*\hspacex, 7.4*\hspacey); 				\coordinate (p84) at (3.2*\hspacex, 7.4*\hspacey);    \coordinate (p85) at (4.2*\hspacex, 7.4*\hspacey);    \coordinate (p86) at (5.2*\hspacex, 7.4*\hspacey); \coordinate (p87) at (6.4*\hspacex, 7.4*\hspacey);    \coordinate (p88) at (7.4*\hspacex, 7.4*\hspacey);    \coordinate (p89) at (8.4*\hspacex, 7.4*\hspacey); 

% \coordinate (p91) at (0, 8.4*\hspacey); 				 \coordinate (p92) at (\hspacex, 8.4*\hspacey); 				 \coordinate (p93) at (2*\hspacex, 8.4*\hspacey); 				\coordinate (p94) at (3.2*\hspacex, 8.4*\hspacey);    \coordinate (p95) at (4.2*\hspacex, 8.4*\hspacey);    \coordinate (p96) at (5.2*\hspacex, 8.4*\hspacey); \coordinate (p97) at (6.4*\hspacex, 8.4*\hspacey);    \coordinate (p98) at (7.4*\hspacex, 8.4*\hspacey);    \coordinate (p99) at (8.4*\hspacex, 8.4*\hspacey); 

%Zones
\draw[gadget_ITER_small] ([shift={(-\polygonshiftsmall, -\polygonshiftsmall)}] p12) -- ([shift={(-\polygonshiftsmall, \polygonshiftsmall)}] p52) -- ([shift={(\polygonshiftsmall, \polygonshiftsmall)}] p53) -- ([shift={(\polygonshiftsmall, -\polygonshiftsmall)}] p13) -- cycle;

\draw[gadget_ITER_small] ([shift={(-\polygonshiftsmall, -\polygonshiftsmall)}] p14) -- ([shift={(-\polygonshiftsmall, \polygonshiftsmall)}] p54) -- ([shift={(\polygonshiftsmall, \polygonshiftsmall)}] p56) -- ([shift={(\polygonshiftsmall, -\polygonshiftsmall)}] p16) -- cycle;

\draw[gadget_ITER_small] ([shift={(-\polygonshiftsmall, -\polygonshiftsmall)}] p17) -- ([shift={(-\polygonshiftsmall, \polygonshiftsmall)}] p57) -- ([shift={(\polygonshiftsmall, \polygonshiftsmall)}] p59) -- ([shift={(\polygonshiftsmall, -\polygonshiftsmall)}] p19) -- cycle;

% node drawing
\tikzstyle{node_regular} = [node_regular_intro]

\node[node_a, label=left:{$v^*$}]       (P11) at (p11) {};
\node[node_regular] (P12) at (p12) {};
\node[node_regular] (P13) at (p13) {};
\node[node_regular] (P14) at (p14) {};
\node[node_regular] (P15) at (p15) {};
\node[node_regular] (P16) at (p16) {};
\node[node_regular] (P17) at (p17) {};
\node[node_regular] (P18) at (p18) {};
%\node[node_regular] (P19) at (p19) {};

%\node[node_regular] (P21) at (p21) {};
%\node[node_regular] (P22) at (p22) {};
%\node[node_regular] (P23) at (p23) {};
%\node[node_regular] (P24) at (p24) {};
%\node[node_regular] (P25) at (p25) {};
%\node[node_regular] (P26) at (p26) {};
%\node[node_regular] (P27) at (p27) {};
%\node[node_regular] (P28) at (p28) {};
%\node[node_regular] (P29) at (p29) {};

%\node[node_regular] (P31) at (p31) {};
\node[node_regular] (P32) at (p52) {};
\node[node_regular] (P33) at (p53) {};
\node[node_regular] (P34) at (p54) {};
\node[node_regular] (P35) at (p55) {};
\node[node_regular] (P36) at (p56) {};
\node[node_regular] (P37) at (p57) {};
\node[node_regular] (P38) at (p58) {};
\node[node_regular] (P39) at (p59) {};

\draw[edge_regular, MidnightBlue] (P32) -- (P12);
\draw[edge_regular, MidnightBlue] (P33) -- (P13);
\draw[edge_regular, MidnightBlue] (P34) -- (P16);
\draw[edge_regular, MidnightBlue] (P35) -- (P14);
\draw[edge_regular, MidnightBlue] (P36) -- (P15);

\draw[edge_regular] (P11) -- (P32);
\draw[edge_regular] (P12) -- (P34);
\draw[edge_regular] (P13) -- (P35);
\draw[edge_regular] (P14) -- (P38);
\draw[edge_regular] (P15) -- (P39);
\draw[edge_regular] (P16) -- (P37);
\draw[edge_regular] (P17) -- (P33);
\draw[edge_regular] (P18) -- (P36);

\node at (1.5*\hspacex, 5.2*\hspacey) {$V_1$};
\node at (p65) {$V_2$};
\node at (p68) {$V_3$};

\end{tikzpicture}
    \vspace{2mm}
    \caption{High level illustration of the \eol instance constructed in the proof of \cref{lem:unary-ns-ppad}. Edges of the outer matching are shown in black, while those from the inner matching are in blue. In this example, the clause corresponding to $V_3$ is not satisfied by assignment $x$ and as a result no internal edges are added in $V_3$. Note that \eol solutions indeed only occur in $V_3$.}
    \label{fig:enter-label}
\end{figure}

\begin{proof}
	Let $F = C_1 \wedge \cdots \wedge C_m$ and consider a degree-$d$, size-$L$ unary Nullstellensatz refutation of $F$, which we write as \[ \sum_{i=1}^m p_i\overline C_i = 1 \] where each $p_i$ is a multilinear polynomial over $x_1, \dots, x_n$ and all coefficients are integers.

	To build the $\eol$ formulation, we expand the above proof out into its constituent monomials with multiplicity.
	That is, for each $i \in [m]$ write the polynomial \[ p_i\overline C_i = \sum_{j} c_{i,j}q_{i,j}\] where $c_{i,j} \in \Z$ and $q_{i,j}$ is a monomial obtained by expanding the polynomial directly and performing all necessary cancellations.
	Each node in our $\eol$ formulation will represent one of the above monomials $q_{i,j}$ and is considered a ``$+$'' or a ``$-$'' node, depending on that monomial's sign.
	In total, we create $m + 1$ sets of nodes $V^*, V_1, \dots V_m$, defined as follows.
	The set $V^*$ only contains the distinguished source node $v^*$, which we consider as a ``$-$'' node.
	For each $i \in [m]$ the set $V_i$ contains a node for each monomial $q_{i,j}$ from the above expansion with multiplicity.
	So, in particular, we add $|c_{i,j}|$ copies of the monomial $q_{i,j}$ to $V_i$ for each monomial $q_{i,j}$ in the above expansion.
	Let $V$ denote the set of all nodes produced by this construction.
	For every node $v \in V$, the decision tree for $g_v$ will query no variables and output $C_i$ if $v \in V_i$ and an arbitrary clause if $v = v^*$; our construction will explicitly prevent the source node $v^*$ from being a solution.

	Now, we must describe the successor and predecessor decision trees $s_v, p_v$ at each node.
	It will be easier to describe the possible edges in $G_x$ on an input $x \in \B^n$; all of the edges are organized into two different matchings as detailed next. See \cref{fig:enter-label} for a high-level illustration.
	\begin{description}
    \item[Outer Matching.]
			In this matching we add edges between nodes in different node groups. 
			All directed edges will be oriented from ``$-$'' nodes to ``$+$'' nodes.
			Since the polynomials form a Nullstellensatz refutation over $\mathbb{Z}$, we know that each time the monomial $q$ appears with a ``$+$'' sign, it must also appear with a ``$-$'' sign, except for the single $1$ term. 
			Thus by treating the distinguished source $v^*$ as ``$-1$'', we can create a perfect matching $M$ on the nodes of $V$ where all matched nodes are between a ``$+$'' and a ``$-$'' node standing for the same monomial. 
			Since we have gathered terms within the expansions $p_i\overline C_i$, all occurrences of monomials $q$ within a set $V_i$ have the same sign, and thus all the edges in this matching will be between nodes in different sets.
			Formally, in the $\eol$ formulation, for each edge $e = (u,v)$ in $M$ corresponding to a monomial $q$, we add a directed edge from the ``$-$'' to the ``$+$'' node if and only if $q(x) = 1$. 
			This condition can be determined by $s_u$ and $p_v$ by querying the variables occurring in $q$. 

		\item[Inner Matching.]
			In this matching we add directed edges from ``$+$'' nodes to ``$-$'' nodes within the same node group.
			Consider any set $V_i$.
			Formally, at each node occurring in the group $V_i$, we query all variables of the corresponding clause $C_i$ in both the successor and predecessor functions for that node.
			For any $x \in \B^n$, if $C_i(x) = 1$ then $\overline C_i = 0$ and thus $p_i(x) \overline C_i(x) = 0$.
			This means that under the partial restriction $\rho$ consistent with $x$ at the variables of $C_i$, all monomials remaining in $p_i\overline C_i \restriction \rho$ must cancel.
			We can therefore fix a perfect matching between the negative and positive instances of monomials in $V_i$ under $\rho$, representing the cancellation of monomials under $\rho$. Then, each edge of this matching is included in the graph if and only if the monomials corresponding to its endpoints evaluate to $1$ at $x$ (note that the two endpoints will both evaluate to the same value, since they are matched under $\rho$).
			On the other hand, if $C_i(x) = 0$ then we will simply not add any edges to the internal matching of $V_i$.
	\end{description}
	Let $x \in \B^n$ be any assignment to the variables of $F$.
	The edges of any node $v  \in  V_i$ associated with a monomial $q$ are determined by querying the variables of $C_i$ and $q$.
	This implies that the depth of each decision tree $T_v$ is at most $d$, and the size is clearly $O(L)$ since every monomial in the proof is represented as a node.

	We now verify correctness of the $\eol$ formulation.
	Since it is well-defined, on every input $x$ the graph $G_x$ will have a solution.
	Let $v$ be such a solution (either a sink or proper source node) in $G_x$.
	By construction, $v \neq v^*$ since the node $v^*$ is always a source node.
	This implies that $v \in V_i$ for some $i \in [m]$, and so $v$ must be associated with a monomial $q$.
	By the construction of the inner and outer matching, $v$ can only be a source or sink node in $V_i$ if the inner matching is empty.
	But this can only happen if $C_i(x) = 0$, and thus $C_i$ is a valid solution to $S(F)$.
\end{proof}

\subsection{Unary Sherali--Adams and $\PPADS$}

We now show that low-degree unary Sherali--Adams proofs characterise $\PPADS^{dt}$.
The proof of this fact follows the proof from the previous section quite closely, but requires some extra work to handle the extra conical junta terms.

\begin{theorem}
	\label{thm:unarysa-ppads}
	Let $F$ be an unsatisfiable CNF formula.
	Then,
	\begin{itemize}
		\item If $F$ has a degree-$d$, size-$L$ unary Sherali--Adams proof, then $S(F)$ has a depth-$O(d)$ $\sol_{O(L)}$-formulation.
		\item If $S(F)$ has a depth-$d$ $\sol_{L}$-formulation, then $F$ has a degree-$O(d)$, size-$L2^{O(d)}$ unary Sherali--Adams proof.
	\end{itemize}
	In particular, $\PPADS^{dt}(S(F)) = \Theta(\textup{uSA}(F))$.
\end{theorem}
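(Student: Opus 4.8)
The plan is to follow the proof of \cref{thm:unaryns-ppad} (the $\PPAD^{dt}$/uNS characterisation) essentially line by line, treating the conical-junta terms of a Sherali--Adams refutation as the feature that distinguishes $\sol$ from $\eol$. As in that theorem, I would split the statement into two lemmas, one for each direction, and deduce the stated corollary about $\poly(\log n)$-width formula sequences afterwards.

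\emph{From an $\sol$-formulation to a uSA proof.} Given a depth-$d$ $\sol_L$-formulation of $S(F)$ with node set $V=[L]$ and distinguished source $v^*=1$, define for each $v\in V$ a function $S_v\colon\B^n\to\{-1,0,1\}$ taking value $+1$ when $v$ is a proper sink of $G_x$ (or when $v=v^*$ is isolated), $-1$ when $v\neq v^*$ is a proper source, and $0$ otherwise. Exactly as in \cref{lem:ppad-unary-ns}, each $S_v$ is computed by a depth-$O(d)$ decision tree obtained by evaluating the relevant successor/predecessor trees, and since the distinguished node carries no predecessor pointer it is never a proper sink, so $\sum_{v\in V}S_v(x)=1$ for all $x$. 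Expanding each $S_v$ as a signed sum $\sum_\ell \pm D_\ell$ of its root-to-leaf conjunctions: at a $+1$-leaf the node $v$ is a proper sink, hence a genuine $\sol$-solution, so running the solution tree $g_v$ yields a clause $C_\ell$ falsified along the path and we factor $D_\ell=D'_\ell\overline C_\ell$; at a $-1$-leaf the node is a proper source, which --- crucially, unlike in $\eol$ --- is \emph{not} a $\sol$-solution, so we simply retain the nonnegative monomial $D_\ell$. Summing over $v$ yields a polynomial identity $\sum_i p_i\overline C_i=1+J$, where $J\coloneqq\sum_v\sum_{-1\text{-leaf }\ell}D_\ell$ is a conical junta (with integer, hence unary, coefficients); the degree is $O(d)$ and the total number of monomials is $L\cdot 2^{O(d)}$.

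\emph{From a uSA proof to an $\sol$-formulation.} Given a multilinear size-$L$, degree-$d$ uSA refutation $\sum_i p_i\overline C_i=1+\sum_j\alpha_j t_j$ with $\alpha_j\in\Z_{\geq 0}$ written in unary, I would mimic \cref{lem:unary-ns-ppad}: expand each $p_i\overline C_i$ into its signed monomials with multiplicity (gathering terms within each $p_i\overline C_i$), create a $\pm$-node per occurrence in a group $V_i$, a distinguished node $v^*$, and, for each junta term $t_j$, a group $W_j$ of $\alpha_j$ copies of a ``$t_j$-node''. On input $x$ the graph $G_x$ is built from an inner matching inside each $V_i$ (present only when $C_i(x)=1$, realising the cancellation of the monomials of $p_i\overline C_i$ under the restriction fixing the variables of $C_i$) together with an outer matching pairing equal monomials of opposite sign across different groups, with the surplus of positive monomials (which equals $1+J$ as a polynomial) routed to $v^*$ and the junta nodes. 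A node of $G_x$ can be a proper sink only if it lies in a group $V_i$ with $C_i(x)=0$, so its solution tree, outputting $C_i$, is correct; the distinguished node $v^*$ and the active junta nodes are proper sources, which are legal non-solutions for $\sol$ --- precisely the $\eol$-to-$\sol$ relaxation that mirrors the ``$+J$'' on the right-hand side. All successor/predecessor trees have depth $O(d)$ and there are $O(L)$ nodes.

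The main obstacle is making the outer matching interact correctly with the junta nodes in the converse direction. In the Nullstellensatz setting every positive monomial occurrence is matched to a negative occurrence of the \emph{same} monomial, so the rule ``the edge is present iff that monomial evaluates to $1$'' is unambiguous at both endpoints. Here the leftover positive occurrences must instead be matched against $t_j$-nodes, but a conjunction of literals $t_j=\prod_i x_i\prod_i(1-x_i)$ expands into several signed monomials (indexed by subsets of its negated literals), so signs and ``active'' conditions no longer align term by term. The fix is to build, for each junta term, a small fixed gadget that certifies ``$t_j$ is active'' purely through monomial edges --- using that $t_j$ is decided by a decision tree of depth $\deg(t_j)\leq d$ --- while guaranteeing that every gadget node remains a legal non-solution (a proper source, or an interior/isolated node, never a proper sink). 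Once this gadget is designed, the verification of degree, size, and correctness is routine and parallels \cref{thm:unaryns-ppad}.
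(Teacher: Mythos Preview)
Your first direction is correct and matches the paper's argument up to a harmless sign flip.

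For the second direction, the paper takes a cleaner route that dissolves the obstacle you identify. Before building the $\sol$-instance, the paper puts the refutation into the normal form $\sum_{i=1}^m J_i\overline C_i = 1 + J_0$ where every $J_i$ (including $J_0$) is a conical junta: simply split each $p_i = J_i^+ - J_i^-$ and absorb each $J_i^-\overline C_i$ into the right-hand side (this is \cref{lem:sa-normal-form} in the paper, and costs nothing in size or degree). Writing the junta term as $J_0\overline C_0$ with $\overline C_0 \coloneqq -1$ makes the junta structurally identical to the clause terms. Now the paper builds one node-group $V_{i,j}$ for each individual summand $-\lambda_{i,j} D_{i,j}\overline C_i$, querying \emph{all} variables of $D_{i,j}$ and $C_i$ in the inner matching. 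Because $D_{i,j}$ is a single conjunction, the value of each group is sign-definite: always $\leq 0$ for $i>0$ and always $\geq 0$ for $i=0$. With the orientation ``outer $+\!\to\!-$, inner $-\!\to\!+$'', this forces every proper sink to live in a group with $i>0$ and $C_i$ falsified, while the junta groups only ever produce extra sources---precisely the $\eol$-to-$\sol$ relaxation you want, with no gadget required.

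Your alternative (one $t_j$-node per junta term plus a bespoke gadget) can presumably be made to work, but the obstacle you flag is self-inflicted: it disappears once you either expand each $t_j$ into monomials (so the outer matching is again monomial-to-monomial) or---more elegantly---pass to the normal form so that there is no separate junta case at all. A second subtlety you glossed over is that with the $V_i$-per-clause grouping the inner matching must also be conditioned on the node's own monomial evaluating to~$1$; otherwise, when $C_i(x)=1$ but the monomial is $0$, you create a stray sink/source at a satisfied clause. The paper's per-term $V_{i,j}$ grouping sidesteps this by querying all of $D_{i,j}$'s variables, so the inner matching sees the full on/off pattern within the group.
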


\begin{corollary}
	For any sequence $F$ of $\poly(\log n)$-width CNF formulas, $F$ has a $\poly(\log n)$-degree, $n^{\poly(\log n)}$-size unary Sherali--Adams proof if and only if $S(F) \in \PPADS^{dt}$.
\end{corollary}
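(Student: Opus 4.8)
The plan is to prove \cref{thm:unarysa-ppads} by adapting the proof of \cref{thm:unaryns-ppad} (the $\PPAD^{dt}$--$\textup{uNS}$ characterisation), replacing $\eol$ by $\sol$ throughout and bookkeeping the conical junta. The guiding observation is that a $\textup{uSA}$ refutation $\sum_i p_i\overline{C}_i = 1 + J$ of $F$ is a $\textup{uNS}$-style identity plus nonnegative slack $J = \sum_j \alpha_j t_j$ on the right-hand side. In the $\eol$ world every node imbalanced in the ``wrong'' way is a solution, so such slack is fatal; in the $\sol$ world only \emph{negatively} imbalanced nodes (proper sinks) are solutions, while positively imbalanced nodes (proper sources) are tolerated, so the slack can be carried by a batch of extra source nodes. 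This is precisely why $\sol$, hence $\PPADS$, is the correct target.

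\emph{First direction: $\sol$-formulations yield $\textup{uSA}$ proofs.} Given a depth-$d$ $\sol_L$-formulation of $F$, we reuse the argument of \cref{lem:ppad-unary-ns}: to each node $v$ attach a depth-$O(d)$ decision tree computing $S_v\colon\B^n\to\{-1,0,1\}$, which evaluates $s_v$, $p_v$ and then $p_{s_v}$, $s_{p_v}$ to classify $v$ as a proper sink (or the distinguished node failing to be a proper source), a non-distinguished proper source, or neither --- outputting $+1$, $-1$, $0$ respectively --- and additionally runs $g_v$ at every $(+1)$-leaf. The shape of a $\sol$-graph (a disjoint union of simple paths, cycles, and isolated vertices, with the distinguished node always of in-degree $0$) gives $\sum_v S_v(x) = 1$ for every $x$, splitting on whether the distinguished node is a proper source. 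Writing $S_v$ as a signed sum of the path-conjunctions $D_\ell$: at each $(+1)$-leaf the node is a genuine $\sol$-solution, so the clause $C_\ell$ returned by $g_v$ is falsified along the path, and we factor $D_\ell = D'_\ell\overline{C}_\ell$; at each $(-1)$-leaf the node is only a proper source, which is \emph{not} a $\sol$-solution, so we leave $-D_\ell$ as is. Summing over $v$ and grouping the $\overline{C}_\ell$-terms into polynomials $p_i$ gives
\[ \sum_{i=1}^m p_i\,\overline{C}_i \;-\; \sum_{\ell\ \text{a}\ (-1)\text{-leaf}} D_\ell \;=\; \sum_v S_v \;=\; 1 , \]
that is, $\sum_i p_i\overline{C}_i = 1 + J$ with $J \coloneqq \sum_{\ell} D_\ell$ ranging over $(-1)$-leaves. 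Every $D_\ell$ is a conjunction of literals and all of its coefficients are $+1$, so $J$ is a conical junta; the coefficients are integral; the degree is $O(d)$; and after gathering terms the number of monomials and the total unary coefficient mass are $L\cdot 2^{O(d)}$. This is the required $\textup{uSA}$ refutation.

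\emph{Second direction: $\textup{uSA}$ proofs yield $\sol$-formulations.} This mirrors \cref{lem:unary-ns-ppad}. Expand the $\textup{uSA}$ refutation into monomials with multiplicity: $p_i\overline{C}_i = \sum_k c_{i,k}q_{i,k}$ with $c_{i,k}\in\Z$, and $J = \sum_j\alpha_j t_j$ with $\alpha_j\in\N$. Build node groups $V^* = \{v^*\}$ and $V_1,\dots,V_m$ exactly as in \cref{lem:unary-ns-ppad} (a $\pm$-signed node per signed monomial occurrence, with multiplicity $|c_{i,k}|$), and add one new group $V_J$ containing, for each $j$, $\alpha_j$ copies of a node labelled by the term $t_j$; we regard $v^*$ and all of $V_J$ as source nodes, with edge orientations inherited from \cref{lem:unary-ns-ppad} so that the distinguished node is a proper source. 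Because $\sum_i p_i\overline{C}_i = 1 + J$ holds over $\Z$, for each monomial $q$ the number of ``$+$''-occurrences in $\bigcup_i V_i$ equals the number of ``$-$''-occurrences there plus the number of $V_J$-nodes labelled $q$ plus $[q = 1]$; hence one fixes a perfect matching of the ``$+$''-side against the ``$-$''-side together with $V_J$ and $v^*$ (the latter absorbing the lone surplus empty-monomial node). On input $x$ this is instantiated as the \emph{outer matching} (a matched source-side node points to its partner iff its monomial/term is $1$ on $x$), and within each $V_i$ the \emph{inner matching} of \cref{lem:unary-ns-ppad} is kept verbatim (active only when $C_i(x)=1$, where $\overline{C}_i(x)=0$ forces the active monomials of $p_i\overline{C}_i$ to cancel). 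One checks all in/out-degrees are at most $1$; that $v^*$ always has the required outgoing edge --- its empty-monomial edge being always active --- so $v^*$ is never a solution; and, crucially, that each $V_J$-node has in-degree $0$ on every input and so can only ever be a (non-solution) proper source or isolated, never a proper sink. Hence totality of $\sol$ forces a proper sink which, as in \cref{lem:unary-ns-ppad}, lies in some $V_i$ with empty inner matching, i.e.\ $C_i(x)=0$, and outputting $C_i$ certifies a solution of $S(F)$. The decision trees query only the variables of the relevant monomial and of $C_i$ (which has width $\leq d$ since it is used), so depth is $O(d)$, and size is $O(L)$ --- one node per monomial occurrence, counting the unary multiplicities $\alpha_j$ of $J$.

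Combining both directions gives $\PPADS^{dt}(S(F)) = \Theta(\textup{uSA}(F))$, and the stated corollary follows. I expect the main obstacle to be the bookkeeping in the second direction: fitting the new junta nodes $V_J$ into the matching structure of \cref{lem:unary-ns-ppad} without disturbing the degree-$\leq 1$ invariant, ensuring they can only surface as non-solution sources (or isolated vertices), and keeping the distinguished node $v^*$ out of the solution set --- all while certifying that any genuine $\sol$-solution still exhibits a clause of $F$ falsified by the input.
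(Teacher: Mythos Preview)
Your first direction is correct and matches the paper's argument up to a sign convention. The second direction has the right guiding idea --- let the conical-junta slack surface as extra sources, which $\sol$ tolerates --- but there is a genuine gap in how you set up $V_J$. You create one $V_J$-node per junta \emph{term} $t_j$ (with multiplicity $\alpha_j$) and then invoke a monomial-level outer matching: ``for each monomial $q$ the number of `$+$'-occurrences in $\bigcup_i V_i$ equals the number of `$-$'-occurrences there plus the number of $V_J$-nodes labelled $q$ plus $[q=1]$.'' But the $t_j$ are conjunctions of literals, not monomials; for instance $t_j = x_1\overline{x}_2 = x_1 - x_1x_2$ contributes to two different monomials with opposite signs. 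So $V_J$-nodes do not carry a single monomial label, and the monomial coefficient of $q$ in $J$ can be negative (here the coefficient of $x_1x_2$ is $-1$), which makes your displayed count identity false. Consequently no static monomial-level perfect matching exists between the clause side and a one-node-per-term $V_J$; the surplus ``$-$'' monomials that arise from expanding the $t_j$ have nowhere to go and can become illegitimate sinks.

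The paper closes exactly this gap with a little more structure. It first passes to the normal form $\sum_{i=1}^m J_i\overline{C}_i = 1 + J_0$ with each $J_i$ a conical junta (\cref{lem:sa-normal-form}), rewrites this as $\sum_{i\geq 0}\sum_j -\lambda_{i,j}D_{i,j}\overline{C}_i = -1$ with $\overline{C}_0 \coloneqq -1$, and forms a separate node group $V_{i,j}$ per term $D_{i,j}$ rather than per clause. The key gain is that every monomial in $D_{i,j}\overline{C}_i$ has all its variables among those of $D_{i,j}$ and $C_i$; querying this fixed $O(d)$-size set determines the value of \emph{every} monomial in the group simultaneously, so the inner matching can be laid down as a genuine partial matching on the active monomials. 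This leaves only ``$-$'' nodes uncovered (hence sinks) when $i>0$ --- certifying $C_i(x)=0$ --- and only ``$+$'' nodes uncovered (hence harmless sources) when $i=0$, which is precisely the junta case. Your plan is repairable along the same lines (expand each $t_j$ into monomials, make one subgroup per term with its own inner matching), but as written the conflation of terms with monomials is the missing step.
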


Before we prove the theorem, it will be convenient to have the following simple normal form for Sherali--Adams proofs.
Just like in the previous section we will assume that all Sherali--Adams proofs are multilinearized, and it is known that this assumption does not change the degree or size of the proof by more than a constant factor \cite{Fleming2019}.

\begin{lemma}
	\label{lem:sa-normal-form}
	Let $F$ be an unsatisfiable CNF formula. 
	If $\sum_{i=1}^m p_i \overline C_i = 1 + J$ is a unary Sherali--Adams refutation of $F$ with degree $d$ and size $L$, then there is a degree-$d$, size-$L$ unary Sherali--Adams refutation of $F$ of the form $\sum_{i=1}^m J_i\overline C_i = 1 + J_0$ where $J_i$ is a conical junta for each $i = 0, 1, \dots, m$.
\end{lemma}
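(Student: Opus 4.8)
The plan is to use the standard device of splitting each multiplier polynomial into its positive and negative monomial parts and absorbing the negative part into the conical junta on the right-hand side. Since the refutation is assumed to be multilinearized, write each $p_i$ uniquely in the multilinear monomial basis as $p_i = p_i^+ - p_i^-$, where $p_i^+$ collects the monomials with positive integer coefficient and $p_i^-$ collects the absolute values of those with negative coefficient. Both $p_i^+$ and $p_i^-$ are conical juntas, being nonnegative integer combinations of monomials $\prod_{j\in S}x_j$ (each such monomial being the conjunction of the positive literals $\{x_j : j\in S\}$).

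The key observation is that the negation $\overline C_i$ of a clause is itself a single term, i.e.\ a conjunction of literals (for $C_i = x_1\lor\overline x_2\lor x_3$ we have $\overline C_i = \overline x_1 x_2 \overline x_3 = (1-x_1)x_2(1-x_3)$). Hence $p_i^-\cdot\overline C_i$ is the product of the conical junta $p_i^-$ with a term, and so it is again a conical junta: distributing, each summand is a nonnegative multiple of the product of two terms, which after multilinearization is either a single term or, if the two terms assign opposite polarities to some variable, is identically zero and can be discarded. Therefore $J_0 \coloneqq J + \sum_{i=1}^m p_i^-\overline C_i$ is a conical junta. Setting $J_i \coloneqq p_i^+$ and rearranging the hypothesis $\sum_i (p_i^+ - p_i^-)\overline C_i = 1 + J$ gives the desired identity $\sum_{i=1}^m J_i\overline C_i = 1 + J_0$.

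It remains to verify degree and size. For degree: every monomial of $p_i^+$ is a monomial of $p_i$, so $\deg(J_i) + \deg(\overline C_i) \le \deg(p_i) + \deg(\overline C_i) \le d$; and every term of $J_0$ is either a term of the original $J$ (of degree $\le d$) or a term of some $p_i^-\overline C_i$, which is the multilinearization of a product of a monomial of $p_i^-$ with $\overline C_i$ and hence has degree at most $\deg(p_i^-) + \deg(\overline C_i) \le d$. For size, using the unary measure (the total magnitude of all coefficients): the coefficient sum of $J_i = p_i^+$ is at most that of $p_i$, and the coefficient sum of $J_0$ is at most that of $J$ plus $\sum_i$ the coefficient sum of $p_i^-$, since passing from $p_i^-$ to $p_i^-\overline C_i$ only merges equal terms (preserving the sum of nonnegative coefficients) or deletes vanishing ones (decreasing it). Adding up, the new refutation has size at most the original size $L$.

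There is no genuine difficulty here; the only point needing care is the bookkeeping through multilinearization, namely checking that collapsing repeated-variable factors and dropping inconsistent products of terms can only decrease, and never increase, both the degree and the unary coefficient sum.
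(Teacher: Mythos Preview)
Your proof is correct and follows essentially the same approach as the paper: split each $p_i$ into its positive and negative monomial parts $p_i^+ - p_i^-$, set $J_i \coloneqq p_i^+$, and absorb the $p_i^-\overline C_i$ terms into $J_0$. The paper's proof is terser and omits the explicit degree and size verification you carry out, but the argument is the same.
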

\begin{proof}
	For each $i \in [m]$ we can expand $p_i = \sum_{j} c_{i,j} q_{i,j}$ where $c_{i,j}$ are integers and $q_{i,j}$ are monomials. 
	Each monomial $q_{i,j}$ is a conjunction, so the expressions \[ J_i^{-} = \sum_{j: c_{i,j} < 0} |c_{i,j}|q_{i,j}, \quad J^+_i = \sum_{j: c_{i,j} > 0} c_{i,j}q_{i,j}\]
	are conical juntas for each $i \in [m]$.
	Writing $p_i = J_i^+ - J_i^-$, substituting into the Sherali--Adams refutation, and rearranging completes the proof.
\end{proof}

We now begin the proof of \cref{thm:unarysa-ppads}.
As before we split the proof into two lemmas, one for each direction of the characterisation.
The easier direction is again that an $\sol$-formulation implies a unary Sherali--Adams proof, and it almost exactly follows the proof of \cref{lem:ppad-unary-ns}.

\begin{lemma}
	Let $F$ be an unsatisfiable CNF formula.
	If there is a depth-$d$ $\sol_{L}$-formulation of $S(F)$ then there is a unary Sherali--Adams refutation of $F$ with degree $O(d)$ and size $L2^{O(d)}$.
\end{lemma}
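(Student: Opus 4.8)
The plan is to adapt the argument of \cref{lem:ppad-unary-ns} essentially verbatim, with one structural change. In an $\eol$-formulation every proper source is a \emph{solution}, so its contribution can be ``charged'' to a falsified clause $\overline C_i$ of $F$ and kept on the left-hand side of the Nullstellensatz identity. In an $\sol$-formulation the non-distinguished proper sources are \emph{not} solutions, so we cannot charge them to clauses; instead we push those contributions to the right-hand side, where -- crucially -- they assemble into a conical junta. This is precisely the extra slack that Sherali--Adams has over Nullstellensatz, which is exactly why $\sol$ (rather than $\eol$) is the search problem that matches uSA.

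Concretely, given a depth-$d$ $\sol_L$-formulation of $S(F)$ with node set $V$, distinguished node $v^*$, successor/predecessor decision trees $s_v,p_v\colon\B^n\to V$ and back-translation decision trees $g_v$, let $G_x$ be the induced graph on input $x$ (edge $(u,v)$ iff $s_u(x)=v$ and $p_v(x)=u$). As in the $\eol$ case, $G_x$ is a disjoint union of simple paths and cycles, so the number of proper sinks of $G_x$ equals the number of proper sources. For each $v\in V$ define $S_v\colon\B^n\to\{-1,0,1\}$ to be $+1$ if $v\neq v^*$ is a proper sink of $G_x$, or if $v=v^*$ and $v^*$ is not a proper source (in both cases $v$ is an $\sol$-solution); $-1$ if $v\neq v^*$ is a proper source of $G_x$ (not an $\sol$-solution); and $0$ otherwise. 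Exactly as in \cref{lem:ppad-unary-ns}, $S_v$ is computed by a depth-$O(d)$ decision tree: query $s_v(x)=u$ and $p_v(x)=w$, then $p_u(x)$ and $s_w(x)$ to decide the source/sink status of $v$, and at the leaves where $v$ is a solution additionally run $g_v$. Counting proper sinks against proper sources (and handling $v^*$, which has in-degree $0$ in $G_x$, via the cases above) gives $\sum_{v\in V}S_v(x)=1$ for every $x\in\B^n$.

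Now expand each $S_v$ as a signed sum of its leaf conjunctions $D_\ell$. At each $(+1)$-leaf $\ell$ the node $v$ is a solution, so the clause $\overline C_\ell$ output by $g_v$ is falsified along the path to $\ell$, whence $D_\ell=D'_\ell\,\overline C_\ell$ for some conjunction $D'_\ell$; at each $(-1)$-leaf $\ell$ the node $v$ is merely a proper source and $D_\ell$ is just the conjunction of queried literals. Summing and rearranging $\sum_v S_v=1$ and gathering the $(+1)$-leaves by the clause they certify yields
\[
\sum_{i=1}^m J_i\,\overline C_i ~=~ 1 ~+~ J_0,
\]
where $J_i$ is the sum of the $D'_\ell$ over $(+1)$-leaves certifying $C_i$ and $J_0$ is the sum of the $D_\ell$ over all $(-1)$-leaves; each $J_i$ and $J_0$ is a nonnegative-integer combination of conjunctions, i.e.\ a conical junta, so this is a unary Sherali--Adams refutation of $F$ (in fact already in the normal form of \cref{lem:sa-normal-form}). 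The degree is $O(d)$ since every $D_\ell$ comes from a depth-$O(d)$ decision tree; the size is $L\cdot 2^{O(d)}$ since there are $O(L)$ nodes, each contributing $2^{O(d)}$ leaves, each of which multilinearizes into $2^{O(d)}$ monomials; and all coefficients are integers, so this is genuinely a \emph{unary} proof.

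The only real obstacle is bookkeeping: one must track the signs carefully so that the proper-source contributions land on the right with a positive coefficient (hence form an honest conical junta $J_0$), and one must treat the distinguished node $v^*$ separately, using that it has in-degree $0$ and is therefore never a proper sink and is either a proper source (contributing $0$) or itself a solution (contributing $+1$). Both points are dispatched by the sign conventions in the definition of $S_v$ above; the remainder of the argument is a direct transcription of \cref{lem:ppad-unary-ns}.
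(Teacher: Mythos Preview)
Your proof is correct and follows essentially the same approach as the paper's: both build depth-$O(d)$ decision trees $S_v$ recording the source/sink status of each node, sum them to a constant, factor the solution leaves by a falsified clause, and push the non-solution (proper-source) leaves into the conical junta $J_0$. The only difference is a sign convention---you set $S_v=+1$ at sinks and $-1$ at sources so that $\sum_v S_v=1$ directly, whereas the paper flips these signs and obtains $\sum_v S_v=-1$ before rearranging; this is inessential.
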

\begin{proof}
	The proof of this lemma is essentially the same as the proof of \cref{lem:ppad-unary-ns}, so we will simply sketch it and note what needs to be modified.
	Suppose $F \coloneqq C_1 \wedge \cdots \wedge C_m$ and let $\overline {C}_i$ be the negation of $C_i$ represented as a polynomial. 
	We have an $\sol$-formulation for $S(F)$, and so we have decision trees computing successors $s_v$ and predecessors $p_v$ for each of the nodes $v \in V$.
	As in the proof of \cref{lem:ppad-unary-ns}, for each $v \in V$ we define a depth at most $5d$ decision tree $S_v$, defined by \[ S_v(x) = \begin{cases} 1 & \text{if } v \neq v^* \text{ is a source in } G_x \\
		-1 & \text{if either } v \text{ is a proper sink in } G_x \text{ or  if } v = v^* \text{ and } v^* \text{ is not a source} \\
		0 & \text{otherwise},
	\end{cases}\]
	where we note that we have switched the ``$-1$'' and the ``$+1$'' in the definition of $S_v$ when compared to \cref{lem:ppad-unary-ns}.
	As before, $S_v(x)$ can be determined by first running the decision trees for $s_v(x) = u$ and $p_v(x) = w$, then the decision trees for $p_u(x), s_w(x)$, and finally the decision tree for $g_v(x)$ if the node $v$ is a solution to $\sol$.
	From this, we can again represent
	\[ S_v = \sum_{(-1)\text{-leaf } \ell} - D_\ell + \sum_{\text{1-leaf } \ell} D_\ell,\] where the first sum is over leaves of $S_v$ labelled with $-1$ and the second is over leaves of $S_v$ labelled with $1$.
	However, now a node $v$ is only a solution to $\sol$ if $S_v(x) = -1$, and so for each $(-1)$-leaf $\ell$ of $S_v$ we can write $D_\ell = D'_\ell \cdot \overline C_\ell$ where $C_\ell$ is the clause of $F$ falsified at that leaf.
	This allows us to write \[ S_v = \sum_{(-1)\text{-leaf } \ell} - D_\ell + \sum_{\text{1-leaf } \ell} D_\ell = \sum_{(-1)\text{-leaf } \ell} - D'_\ell\cdot \overline C_v + \sum_{\text{1-leaf } \ell} D_\ell. \]
	If we sum up these polynomials for each $v \in V$ and gather terms we get \[ \sum_{v \in V} S_v = \sum_{i=1}^m -J_i \overline C_i + J_0 \] for some degree-$O(d)$ conical juntas $J_0, J_1, \dots, J_m$.
	As in the proof of \cref{lem:ppad-unary-ns} we have that $\sum_{v} S_v(x) = -1$ and the size and degree calculations are identical. 
\end{proof}

The proof of the converse direction is also similar to the proof of \cref{lem:unary-ns-ppad}, but requires some more substantial modification when compared to the previous proof.
The main issue is how to handle the extra conical junta terms $J_0$ in the unary Sherali--Adams refutation.
As in the proof of \cref{lem:unary-ns-ppad}, we will create a graph representing all the monomials in the unary Sherali--Adams proof.
However, we will do some extra work to ensure that the nodes corresponding to monomials from the conical junta term $J_0$ will always be source nodes.
This ensures that any solutions will occur at nodes corresponding to some falsified clause in the formula.
\begin{lemma} 
	Let $F$ be an unsatisfiable CNF formula.
	If there is a unary Sherali--Adams refutation of $F$ with degree $d$ and size $L$ then there is a degree-$O(d)$ $\sol_{O(L)}$-formulation of $S(F)$.
\end{lemma}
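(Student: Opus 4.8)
The plan is to follow, essentially verbatim, the graph construction used to prove \cref{lem:unary-ns-ppad}, adding one extra ingredient to accommodate the conical‑junta term on the right‑hand side. First, by \cref{lem:sa-normal-form} I would put the refutation into the form $\sum_{i=1}^m J_i\overline C_i = 1 + J_0$ with every $J_i$ ($i=0,\dots,m$) a conical junta, and — as in \cref{lem:unary-ns-ppad} — assume it is multilinear, of degree $O(d)$ and size $O(L)$. Expanding each $J_i\overline C_i$ and $J_0$ into multilinear monomials with integer coefficients, the nodes of the $\sol$ instance are the resulting monomial occurrences counted with multiplicity (a monomial with coefficient $c$ contributes $|c|$ copies, each labelled $+$ or $-$ by the sign of $c$), grouped as follows: a group $V_i$ for the monomials of $J_i\overline C_i$ for each $1\le i\le m$; a distinguished singleton $\{v^*\}$ for the ``$1$'' on the right; and, for each conjunction $t_\ell$ in $J_0=\sum_\ell\alpha_\ell t_\ell$, a group $V_0^{(\ell)}$ holding $\alpha_\ell$ copies of every monomial appearing in the expansion of $t_\ell$. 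The node $v^*$ is to be the distinguished source, and the solution tree $g_v$ will report the clause index $i$ at every $v\in V_i$.

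The edges on an input $x$ are given by two locally computable families of matchings, exactly as in \cref{lem:unary-ns-ppad}. The \emph{outer matching} is fixed and independent of $x$: since $\sum_i J_i\overline C_i - J_0 - 1 \equiv 0$ over $\Z$, for each monomial $q$ the signed occurrences of $q$ balance, and I pair them up, putting the edge into $G_x$ precisely when $q(x)=1$ and orienting every outer edge so that the $+$‑occurrences coming from the $J_i\overline C_i$'s (and the $-$‑occurrences coming from $J_0$) are its head, while the $-$‑occurrences from the $J_i\overline C_i$'s and the $+$‑occurrences from $J_0$ — in particular $v^*$ — are its tail. The \emph{inner matching inside $V_i$} ($i\ge1$) captures the cancellation forced when $C_i$ is satisfied: if $C_i(x)=1$ then $\overline C_i(x)=0$, so under the restriction $\rho$ fixing the variables of $C_i$ to their values in $x$ the polynomial $(J_i\overline C_i)|_\rho$ is identically zero; hence the monomials surviving $\rho$ fall into cancelling $+/-$ pairs, and for each such pair whose common restricted monomial is active at $x$ I add an inner edge oriented so as to cancel the outer edges of its two endpoints; if $C_i(x)=0$ I add no inner edges in $V_i$. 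Each of these choices depends only on the $\le d$ variables of $C_i$ and the $\le d$ variables of a single monomial, so all successor/predecessor decision trees have depth $O(d)$.

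The step that truly differs from \cref{lem:unary-ns-ppad}, and the one I expect to be the main obstacle, is the inner matching inside the groups $V_0^{(\ell)}$, whose sole purpose is to guarantee that \emph{no} node originating from $J_0$ is ever a proper sink (so that, by the orientation above, every $J_0$‑node is balanced or a proper source). Here I would exploit that $t_\ell$ is a single conjunction of literals: querying its $\le d$ variables, there are three cases. If $t_\ell(x)=1$ then the only monomial of $t_\ell$ active at $x$ has coefficient $+1$ in $t_\ell$, hence is a $+$‑occurrence from $J_0$ and therefore a proper source, and no inner edge is needed. If $t_\ell(x)=0$ because a positive literal is falsified, then every monomial of $t_\ell$ contains that variable and so is inactive at $x$, and again no inner edge is needed. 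Otherwise $t_\ell(x)=0$ with all positive literals satisfied, so some negative literal $\overline x_b$ is violated; choosing the canonical such $b$, I pair each monomial $q$ of $t_\ell$ with the monomial obtained from $q$ by inserting or deleting the factor $x_b$ (two monomials that agree at $x$ since $x_b=1$, and that carry opposite signs), adding the inner edge whenever their common value at $x$ is $1$ and orienting it from the $-$‑occurrence to the $+$‑occurrence so as to cancel the endpoints' outer edges. A short case check then shows every active node of $V_0^{(\ell)}$ is the coefficient‑$+1$ monomial (a proper source) or is matched by an inner edge (hence balanced), so it is never a proper sink — this is exactly the ``extra work'' that nonnegativity of $J_0$ makes possible, and the point at which a naive copy of \cref{lem:unary-ns-ppad} would break, because the negative monomials produced by expanding $J_0$ would otherwise create spurious proper sinks unassociated with any falsified clause.

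Finally I would assemble the correctness argument. On any $x$, $G_x$ has in‑ and out‑degree at most $1$; $v^*$ has out‑degree $1$ and in‑degree $0$, i.e.\ is a proper source; every node in a $V_0^{(\ell)}$ is balanced or a proper source; and, by the inner matchings in the $V_i$'s, every node of $V_i$ ($i\ge1$) is balanced when $C_i(x)=1$ and can be a proper sink only when $C_i(x)=0$. Since the sum over all nodes of (out‑degree minus in‑degree) is $0$ while $v^*$ contributes $+1$ and every other node contributes a nonnegative amount except the possible proper sinks inside the $V_i$'s, at least one such proper sink exists; it lies in a group $V_i$ with $C_i(x)=0$, so $g_v$ outputs an index $i$ with $C_i(x)=0$, a valid solution of $S(F)$. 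All trees have depth $O(d)$ and, by a routine count as in \cref{lem:unary-ns-ppad}, the instance has $O(L)$ nodes, yielding the desired depth‑$O(d)$ $\sol_{O(L)}$‑formulation of $S(F)$; combined with the converse lemma this gives $\PPADS^{dt}(S(F))=\Theta(\textup{uSA}(F))$.
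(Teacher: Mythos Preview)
Your proposal is correct and follows essentially the same blueprint as the paper: a fixed outer matching coming from the polynomial identity, and local inner matchings that guarantee no node originating from $J_0$ is ever a proper sink, so that every proper sink sits in some $V_i$ with $C_i(x)=0$.

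The differences are cosmetic. First, the paper groups more finely, one group $V_{i,j}$ per conjunction $D_{i,j}$ (for all $i=0,\ldots,m$), and in the inner matching queries both the $C_i$- and $D_{i,j}$-variables; when $C_i(x)=0$ it still builds a partial matching, leaving exactly $|c|$ unmatched $-$-nodes. Your coarser choice of a single $V_i$ per clause (for $i\ge1$), with no inner edges when $C_i(x)=0$, is a legitimate simplification: it may create more proper sinks, but every one of them still correctly certifies $C_i(x)=0$. Second, for the $J_0$-groups the paper just says ``fix a partial matching'' using that $\lambda_{0,j}D_{0,j}(x)\ge0$; your explicit $x_b$-pairing involution is a concrete implementation of the same idea and is arguably cleaner.
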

\begin{proof}
	Suppose $F \coloneqq C_1 \wedge \cdots \wedge C_m$ is on $n$ variables and consider a unary Sherali--Adams refutation \[ \sum_{i=1}^m -J_i\overline C_i + J_0 = -1 \] of $F$ where each $J_i$ for $i = 0, 1, \ldots, m$ are integral conical juntas.
	For notational convenience, we will let $\overline C_0 \coloneqq -1$ and we will expand each conical junta $J_i$ as a non-negative sum of conjunctions. 
	While this notation is somewhat unusual, it allows us to write the refutation in a uniform way as \[ \sum_{i=1}^m -J_i\overline C_i + J_0 = \sum_{i=0}^m \sum_{j=1}^{t_i} -\lambda_{i,j} D_{i,j} \overline C_i = -1\]
	where $t_i$ is a non-negative integer, $\lambda_{i,j}$ is a positive integer, and $D_{i,j}$ is a conjunction for every $i, j$.

	To build the $\sol$ formulation, we expand the above proof out into its constituent monomials with multiplicity.
	As in the proof of \cref{lem:unary-ns-ppad}, each node in our $\sol$ formulation will represent a monomial in the proof and is either a ``$+$'' or a ``$-$'' node, depending on that monomial's sign.
	This time, however, we create a group of nodes $V_{i,j}$ for each $i = 0, 1, \dots, m$ and each $j \in [t_i]$, as well as a special group $V^*$.
	The group $V^*$ only contains the distinguished node $v^*$, which we now consider as a ``$+$'' node.
	On the other hand, for each $i, j$, the group $V_{i,j}$ will correspond to the polynomial $-\lambda_{i,j}D_{i,j}\overline C_{i,j}$. 
	We expand this polynomial into a sum of monomials $ -\lambda_{i,j} D_{i,j}\overline C_{i,j} = \sum_q c_q q$ for some integers $c_q$ and monomials $q$, and for each monomial $q$ in this expansion we create $|c_q|$ nodes in $V_{i,j}$, each of which are ``$+$'' nodes if $c_q > 0$ 
	and ``$-$'' nodes otherwise.
	Let $V$ denote the set of all nodes produced by this construction.
	For any node $v \in V$, if $v \in V_{i,j}$ for some $i > 0$ then the solution decision tree $g_v$ will query no variables and simply output $C_i$	as the solution to $S(F)$.
	Otherwise, $g_v$ will output an arbitrary solution, as in this case by construction of the formulation the node $v$ will never be a solution to $\sol$.

	Now, we must describe the successor and predecessor decision trees at each node.
	As in the proof of \cref{lem:unary-ns-ppad}, it will be easier to describe the possible edges in $G_x$ as all of the edges are organized into two different matchings. 
	\begin{description}
		\item[Outer Matching.] 
			The definition of the outer matching is the same as in \cref{lem:unary-ns-ppad}.
			In this matching we add edges between nodes in different node groups. 
			All directed edges will be oriented from ``$+$'' nodes to ``$-$'' nodes.
			Since the polynomials form an $\SA$ refutation over $\mathbb{Z}$, we know that each time the monomial $q$ appears with a ``$+$'' sign, it must also appear with a ``$-$'' sign, except for the single $-1$ term. 
			Thus by considering $v^*$ as ``$+1$'', we can create a perfect matching $M$ of the nodes of $V$ where all edges are between a ``$+$'' and a ``$-$'' node standing for the same monomial. 
			Since we have gathered terms within the expansions of $-\lambda_{i,j} D_{i,j} \overline C_i$, all occurrences of monomials $q$ within a single group $V_{i,j}$ have the same sign and thus all the matchings are between nodes in different sets.
			For each edge $e$ in $M$, we will add a directed edge between the ``$+$'' and the ``$-$'' node if and only if $q(x) = 1$; this can be determined by querying all variables in $q$.

		\item[Inner Matching.]
			The inner matching is constructed similarly as in the proof of \cref{lem:unary-ns-ppad}, but requires some modification.
			As in that proof, in the inner matching we add directed edges from ``$-$'' nodes to ``$+$'' nodes within the same node group.
			However, we will now be careful to force any solution (i.e.~a sink node) to occur at a ``$-$'' node in $G_x$.
			By our construction, the $V^*$ group has no ``$-$'' nodes, and all ``$-$'' nodes in the group $V_{0, j}$ for any $j \in [t_0]$ will have successors, and thus any sink node must be associated with $V_{i,j}$ for some~$i > 0$.

			Consider any set of the form $V_{i,j}$, since $V^*$ has a single node corresponding to $+1$ and so no internal edges will be matched.
			Formally, at each node occurring in the group $V_{i,j}$, we query all variables of $\overline C_i$ and $D_{i,j}$ (note that when $i = 0$, $\overline C_0 = 1$ and so we only query $D_{i,j}$ variables).
			For any assignment $x \in \B^n$, if $C_i(x) = 1$ then $\overline C_i(x) = 0$ and thus $D_{i,j}(x) \overline C_i(x) = 0$.
			This means that under the partial restriction $\rho$ consistent with $x$ at the variables of $C_i$, all monomials in $D_{i,j}\overline C_i \restriction \rho$ must cancel to $0$.
			We can therefore fix a directed perfect matching between the negative and positive copies of monomials in $V_{i,j}$, as in the proof of \cref{lem:unary-ns-ppad}.
			
			On the other hand, if $\overline C_i(x) \neq 0$ then $-\lambda_{i,j}D_{i,j}(x)\overline C_i(x) = c$ for some integer $c$.
			If $i > 0$ then $c \leq 0$, and so in this case, there will be $|c|$ copies of ``$-$'' monomials	in $V_i$ that are not cancelled by $+$ monomials internally.
			We can then fix a directed partial matching between monomials accordingly, but leaving the $|c|$ ``$-$'' monomials without successors if required (these will become sink nodes).
			If $i = 0$ then $c \geq 0$ since $\overline C_i = -1$, and so in this case there may be more ``$+$'' monomials than ``$-$'' monomials evaluating to $1$.
			We can therefore fix a directed partial matching between monomials, now leaving some ``$+$'' monomials without predecessors (these will become new source nodes), but all ``$-$'' monomials will have successors and so they will not become proper sink nodes.
	\end{description}
	As we have described above, we will need at most $d$ queries in any decision tree in the reduction, and also the number of nodes in the final $\sol$ instance is no more than the size (number of monomials) of the underlying unary Sherali--Adams proof.

	We finally verify correctness of the $\sol$-formulation.
	This is a well-defined $\sol$ formulation and thus on every input $x \in \B^n$ the graph $G_x$ will have a solution $v \in V$.
	This must be a sink node by the definition of $\sol$ and therefore, by construction, $v$ must be a ``$-$'' node since ``$+$'' nodes always have successors by the construction of the outer matching.
	As we have described in the definition of the inner matching, any ``$-$'' node $v \in V_{0, j}$ for any $j$ will have a successor, and thus $v \in V_{i, j}$ for some $i > 0$.
	But then, by definition of the inner matching, if $v$ is a sink node in $V_{i,j}$ for $i > 0$ then $C_i(x) = 0$ and the label of $v$ is $C_i$, thus the $\sol$ formulation correctly outputs a solution to $S(F)$.
\end{proof}

\subsection{Reversible Resolution, $\SOPL$, and $\EOPL$} \label{sec:char-revres}

In this section we define the Reversible Resolution systems (RevRes and RevResT), and prove our final characterisations capturing $\SOPL^{dt}$ and $\EOPL^{dt}$.

\begin{theorem}
	\label{thm:revres-sopl}
	Let $F$ be an unsatisfiable CNF formula.
	Then,
	\begin{itemize}
		\item If $F$ has a width-$d$, size-$L$ Reversible Resolution proof (with Terminals, resp.), then $S(F)$ has a depth-$O(d)$ $\sopl_{O(L)}$-formulation ($\eopl$-formulation, resp.).
		\item If $S(F)$ has a depth-$d$ $\sopl_L$-formulation ($\eopl_L$-formulation, resp.), then $F$ has a width-$O(d)$, size-$L^{O(1)}2^{O(d)}$ Reversible Resolution proof (with Terminals, resp.).
	\end{itemize}
	In particular, $\SOPL^{dt}(S(F)) = \Theta(\textup{RevRes}(F))$ and $\EOPL^{dt}(S(F)) = \Theta(\textup{RevResT}(F))$.
\end{theorem}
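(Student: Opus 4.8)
The plan is to prove \cref{thm:revres-sopl} via two lemmas, one for each direction, following closely the templates already used for unary Nullstellensatz and $\PPAD$ (\cref{lem:ppad-unary-ns,lem:unary-ns-ppad}) and for unary Sherali--Adams and $\PPADS$ (\cref{thm:unarysa-ppads}). Throughout it is convenient to view a RevRes refutation of $F$ as a time-indexed sequence of multisets $M_0, M_1, \dots, M_T$: $M_0$ is the chosen initial multiset of clauses of $F$, each $M_{t+1}$ arises from $M_t$ by one application of the symmetric resolution rule or its inverse (consuming the premises, adding the conclusions), and $M_T$ contains $\bot$. We also fix here the definition of RevResT: it is RevRes augmented with ``terminal'' clauses that may be freely added to or removed from the multiset, a device tailored so that the system captures exactly $\eopl$; the RevResT/$\EOPL$ statement is then proved in lockstep with the $\SOPL$ statement, with the terminal steps contributing precisely the extra proper-source solutions of $\eopl$.

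\textbf{From proofs to formulations.} Fix $x \in \B^n$ and call a clause occurrence in some $M_t$ \emph{falsified} if $x$ violates it. The key observation is that the number of falsified occurrences is \emph{the same in every $M_t$}: in a step $C \vee x_i,\ C \vee \bar x_i \vdash C$ the two premises jointly carry one falsified occurrence precisely when $C$ is falsified, so the count is preserved, and the inverse step is symmetric. Since each rule consumes its premises, the falsified occurrences of $M_t$ and $M_{t+1}$ are joined by a matching in which each falsified occurrence has a unique predecessor in $M_{t-1}$ and a unique successor in $M_{t+1}$. Hence the falsified occurrences form a disjoint union of paths running from time $0$ to time $T$, exactly one of which ends at the always-falsified occurrence $\bot \in M_T$; following it back reaches, at time $0$, a falsified clause of $F$. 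This is precisely the shape of an $\sopl$ instance whose grid has side $O(L)$: the level records the time step and the position within a level records the index in the multiset; the distinguished source is $\bot$; a node is ``active'' iff its occurrence is falsified; the unique active node on the last level is a falsified clause of $F$ (the solution, reported by $g_v$); and there are no stray proper sinks or sources because the falsified count is invariant. To compute a node's successor/predecessor pointers one only queries the $\le d$ variables of the clause labelling it, so the decision trees have depth $O(d)$. For RevResT, adding or deleting a terminal clause is exactly what breaks the invariance of the falsified count, letting paths begin or end mid-proof, which matches the additional proper-source solutions of $\eopl$.

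\textbf{From formulations to proofs.} This is the more delicate direction; it follows the structure of \cref{lem:unary-ns-ppad} and its $\PPADS$ analogue, but must be carried out using only symmetric resolution and its inverse. Given a depth-$d$ $\sopl_L$-formulation of $S(F)$, expand the instance into its ``atoms'': at each node $v$ take the combined decision tree that reads enough variables to decide whether $v$ is active and, if so, which edges are incident to it; each leaf $\ell$ yields a width-$O(d)$ clause $D_\ell$, the clause falsified exactly on the subcube of $\ell$. We then run the $\sopl$ instance ``in reverse'', processing the grid one level at a time and maintaining a multiset of clauses that encodes the active nodes of the current level together with their predecessor-edge data. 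At level $1$ we begin from the clauses of $F$: either $(1,1)$ is active and we record the corresponding clauses, or $(1,1)$ is inactive, a solution, and $g_{(1,1)}$ lets us resolve down to $\bot$. Inductively, the outer matching tells us which level-$i$ active node $v$ points to which level-$(i+1)$ node $w$: if $w$'s predecessor pointer agrees we transform the clauses for ``$v$ active'' into clauses for ``$w$ active'' by symmetric resolutions and their inverses, consuming the old clauses; if it disagrees, $v$ is a proper sink, a solution, and $g_v$ again reaches $\bot$. The inner matchings correspond to local cancellations realized by pairs of symmetric-resolution/inverse steps. At level $N$, active nodes are solutions, so the derivation ends at $\bot$. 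The width is $O(d)$ and the size is $L^{O(1)} 2^{O(d)}$, the exponential factor coming from expanding the decision trees; for RevResT/$\eopl$ the proper-source solutions are handled by the terminal steps.

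\textbf{Main obstacle.} The principal difficulty lies in the formulation-to-proof direction and in respecting that RevRes is far more restrictive than ordinary Resolution: every inference must be a \emph{symmetric} resolution (or its inverse) and must \emph{consume} its premises. One has to argue that the level-by-level transformation decomposes entirely into such steps---using crucially that a decision-tree leaf querying $x_i$ is always paired with a sibling leaf querying $x_i$ with the opposite answer, so that both premises $C \vee x_i$, $C \vee \bar x_i$ of a symmetric resolution are simultaneously present---and that the grid potential of $\sopl$ acts as a strict ordering making the consumption of premises harmless, since no clause ever needs to be revisited. A secondary obstacle is fixing the RevResT system so that its terminal steps correspond exactly to, and only to, the proper-source solutions of $\eopl$, so that the same two arguments establish the $\EOPL$ characterisation as well.
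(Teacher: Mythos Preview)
Your first direction (RevRes $\to$ $\sopl$-formulation) is essentially the paper's argument: the invariance of the falsified-clause count under the symmetric rules is exactly what produces the line structure, and the paper builds the formulation the same way (indexing rows in reverse so that $(1,1)$ corresponds to $\bot$).

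There are, however, two genuine gaps.

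\textbf{The definition of RevResT is wrong.} You take RevResT to be RevRes augmented with ``terminal'' clauses that may be freely added to or removed from the multiset. That would make RevResT \emph{stronger} than RevRes, but $\EOPL\subseteq\SOPL$, so the system characterising $\EOPL^{dt}$ must be \emph{weaker}. The paper's RevResT is a \emph{restriction} of RevRes: the final multiset must contain, besides $\bot$, only weakenings of clauses of $F$. With this definition the $\eopl$ correspondence in the first direction is immediate---the extra proper-source solutions of $\eopl$ are the falsified non-$\bot$ clauses in the final configuration, which now witness falsified $F$-clauses. Your explanation (``adding or deleting a terminal clause breaks the invariance, letting paths begin or end mid-proof'') is based on the wrong system; in the correct RevResT the falsified count remains invariant across all configurations.

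\textbf{The formulation-to-proof direction misses the key mechanism.} You borrow the inner/outer-matching vocabulary from the uNS and uSA characterisations, but there those matchings build a graph instance \emph{from} a proof; the analogy does not transfer to constructing a RevRes derivation. The sentence ``transform the clauses for `$v$ active' into clauses for `$w$ active' by symmetric resolutions and their inverses'' hides the whole difficulty: the clauses encoding layer $i$ and layer $i{-}1$ live over disjoint pointer variables, so no symmetric step connects them directly. The paper's device is a \emph{meet-in-the-middle via reversibility}. One maintains, for each layer $i$, a family $\mathcal{I}_i$ of clauses encoding ``every node on layer $i$ is inactive''. From $\mathcal{I}_i$ one weakens (simulating the predecessor/successor decision trees) to a family $\mathcal{A}_i=\mathcal{T}_i\cup\mathcal{J}_i$ that now mentions \emph{both} the layer-$i$ and layer-$(i{-}1)$ pointers. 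Independently, from $\mathcal{I}_{i-1}$ one weakens to $\mathcal{B}_i=\mathcal{T}_i\cup\mathcal{F}_i$ sharing the \emph{same} core $\mathcal{T}_i$, and crucially every clause in $\mathcal{F}_i$ encodes a proper-sink solution and is therefore a weakening of an $F$-clause (via the $g$-trees). One then runs the second weakening \emph{in reverse} to pass from $\mathcal{T}_i\cup\mathcal{F}_i$ back to $\mathcal{I}_{i-1}$, supplying $\mathcal{F}_i$ from the initial multiset. This reversal is precisely what lets the argument live in RevRes, and it simultaneously handles RevResT: the leftover $\mathcal{J}_i$ clauses encode proper-source violations of $\eopl$, hence are themselves weakenings of $F$-clauses, so the final configuration meets the RevResT constraint. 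Your sketch does not identify this step, and without it the level-by-level transformation cannot be carried out using only symmetric resolution and its inverse.
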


\begin{corollary}
	For any sequence $F$ of $\poly(\log n)$-width CNF formulas, $F$ has a $\poly(\log n)$-width, $n^{\poly(\log n)}$-size Reversible Resolution proof (with Terminals, resp.) if and only if $S(F) \in \SOPL^{dt}$ ($S(F) \in \EOPL^{dt}$, resp.).
\end{corollary}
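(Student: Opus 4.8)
The plan is to follow the template of \cref{thm:unaryns-ppad,thm:unarysa-ppads}: prove the two bullet points as separate lemmas for the $\textup{RevRes}$/$\sopl$ pair, and then obtain the $\textup{RevResT}$/$\eopl$ pair by a parallel argument that additionally tracks the ``terminal'' clauses of $\textup{RevResT}$. The conceptual engine is an invariant exhibiting the ``line with potential'' structure of $\sopl$ inside any $\textup{RevRes}$ refutation. View a $\textup{RevRes}$ refutation of $F$ as a sequence of multisets $M_0, M_1, \dots, M_T$ of clauses, where $M_0$ consists of clauses of $F$ (with the chosen multiplicities), $M_t$ is obtained from $M_{t-1}$ by one application of the symmetric rule or its inverse (consuming the premise occurrences), and $\bot \in M_T$. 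For any assignment $x \in \B^n$ the number of clause-occurrences in $M_t$ falsified by $x$ is independent of $t$: in the symmetric rule $C\vee x_i,\, C\vee \bar x_i \vdash C$ exactly one of the two wide premises is falsified iff the narrow conclusion $C$ is, and symmetrically for the inverse rule. Hence the falsified clause-occurrences thread through the proof along vertex-disjoint directed paths, with the reversed time index $T-t$ as a strictly monotone potential; such a thread can begin only at a falsified \emph{initial} clause and can end only at an occurrence that is never consumed, and $\bot\in M_T$ is such an occurrence. This is exactly an $\sopl$ instance.

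\textbf{Bullet 1 ($\textup{RevRes}$ proof $\Rightarrow$ $\sopl$ formulation).} From a width-$d$, size-$L$ $\textup{RevRes}$ refutation of $F$ we build an $\sopl_{O(L)}$ instance whose rows are indexed by the potential $T-t$ and whose columns index the (at most $O(L)$) clause-occurrences present at each step; a node $(t,c)$ is declared \emph{active} exactly when $c$ is falsified by $x$, and successor/predecessor pointers follow the falsified thread. Since a node's activity and the location of its thread-neighbours are determined by the $\le d$ literals in the clauses participating in the relevant rule, every decision tree of the reduction has depth $O(d)$. The distinguished source is the occurrence of $\bot$ in $M_T$, which is always falsified (hence always active) and always has a thread-predecessor (it is produced by resolving $x_i$ against $\bar x_i$). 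No node on the last row is active (threads die at time $0$), so every solution of the $\sopl$ instance is a proper sink, which by the invariant is a falsified \emph{initial} clause of $F$; its solution decision tree simply outputs that clause, a solution of $S(F)$. Spurious threads not containing $\bot$ contribute only extra sources (harmless for $\sopl$) and extra proper sinks which are again falsified initial clauses of $F$.

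\textbf{Bullet 2 ($\sopl$ formulation $\Rightarrow$ $\textup{RevRes}$ proof).} Given a depth-$d$ $\sopl_L$-formulation of $S(F)$, we read off a $\textup{RevRes}$ refutation whose clauses encode the conjunctions of literals read along decision-tree paths of the successor/predecessor trees (so of width $O(d)$) and whose deduction steps ``walk the line'': the inverse rule case-splits on the next queried variable, the symmetric rule merges branches that rejoin, the multiset bookkeeping of ``consuming premises'' mirrors the in-degree-$1$/out-degree-$1$ structure of $\sopl$, and the successor/predecessor duality mirrors the reversibility of $\textup{RevRes}$. We seed $M_0$ with the clauses of $F$ together with the clause describing the distinguished source; since every leaf witnessing an $\sopl$-solution is routed by the solution trees to a falsified clause of $F$, the only stable endpoint of the walk is $\bot$, and the refutation has size $L^{O(1)}2^{O(d)}$ after expanding the decision trees. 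This is the analogue of the ``harder direction'' \cref{lem:unary-ns-ppad}, and is where I expect the main obstacle to lie: because reversibility forbids the free weakening and copying available in the corresponding $\textup{Res}\Leftrightarrow\PLS$ translation, one must carefully pair every clause the construction introduces with one it consumes, so that the derivation is genuinely a sequence of symmetric-rule and inverse-rule applications on a multiset rather than an ordinary Resolution DAG.

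\textbf{$\textup{RevResT}$ vs.\ $\eopl$.} A $\textup{RevResT}$ proof differs from a $\textup{RevRes}$ proof only by also admitting designated ``terminal'' clauses in the multiset, which occupy the role played in $\eopl$ by the extra \emph{proper-source} solutions on top of $\sopl$: in Bullet 1 a falsified terminal becomes a proper source of the $\eopl$ instance, output via its solution decision tree, and in Bullet 2 a discovered proper source of $\eopl$ is converted back into a falsified terminal clause. With these local modifications both directions carry over verbatim, and combining the four statements yields $\SOPL^{dt}(S(F)) = \Theta(\textup{RevRes}(F))$ and $\EOPL^{dt}(S(F)) = \Theta(\textup{RevResT}(F))$.
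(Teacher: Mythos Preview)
Your Bullet~1 is essentially the paper's construction (their \cref{lem:revres-sopl}): clause-occurrences become grid nodes, the reversed time index is the row, ``falsified by $x$'' is activity, and the falsified thread gives the successor/predecessor pointers. One cosmetic point: in the paper's encoding the falsified \emph{initial} clauses sit on the last row and are \emph{active sinks} (type-2 $\sopl$ solutions), not proper sinks; your sentence ``no node on the last row is active'' is off, but this is an indexing slip, not a real error.

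Bullet~2 is where there is a genuine gap. You correctly flag it as the hard direction, but ``walk the line: the inverse rule case-splits on the next queried variable, the symmetric rule merges branches that rejoin'' is not yet a construction, and ``seed $M_0$ with \dots the clause describing the distinguished source'' is confused (the initial multiset may contain only clauses of $F$). The difficulty you name---that reversibility forbids free copying, so every introduced clause must be paired with one consumed---is exactly right, but you do not say how to achieve the pairing. The paper's solution is a layer-by-layer induction that is not suggested by your sketch. For each row $i$ it maintains the CNF $\mathcal{I}_i$ encoding ``every node on row $i$ is inactive'' (expressed through the decision trees of the formulation). To pass from $\mathcal{I}_i$ to $\mathcal{I}_{i-1}$ it weakens $\mathcal{I}_i$ along the predecessor-then-successor trees of row $i$ to obtain a large set $\mathcal{A}_i$, and \emph{independently} weakens $\mathcal{I}_{i-1}$ along the successor-then-predecessor trees to obtain $\mathcal{B}_i$. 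The key observation is that $\mathcal{T}_i\coloneqq\mathcal{A}_i\cap\mathcal{B}_i$ is exactly the subset where the pointers match, while $\mathcal{F}_i\coloneqq\mathcal{B}_i\setminus\mathcal{T}_i$ witnesses proper-sink solutions and hence consists of weakenings of clauses of $F$, and $\mathcal{J}_i\coloneqq\mathcal{A}_i\setminus\mathcal{T}_i$ is ``junk'' left over. One then runs: from $\mathcal{I}_i\cup\mathcal{F}_i$ weaken to $\mathcal{T}_i\cup\mathcal{J}_i\cup\mathcal{F}_i$, and then run the $\mathcal{I}_{i-1}\to\mathcal{B}_i$ derivation \emph{in reverse} on $\mathcal{T}_i\cup\mathcal{F}_i$ to obtain $\mathcal{I}_{i-1}\cup\mathcal{J}_i$. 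This ``weaken from both sides and meet in the middle'' trick is what makes the bookkeeping balance; it is the missing idea in your proposal.

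Finally, your description of $\textup{RevResT}$ is slightly inverted: it is a \emph{restriction} of $\textup{RevRes}$ (every clause in the final configuration other than $\bot$ must be a weakening of a clause of $F$), not an extension that ``admits designated terminal clauses''. The paper's $\eopl$ argument works because the junk clauses $\mathcal{J}_i$ above are precisely proper-\emph{source} witnesses, which in an $\eopl$-formulation are solutions and hence route back to weakenings of $F$-clauses; that is what upgrades the $\textup{RevRes}$ proof to $\textup{RevResT}$.
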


\subsubsection*{Reversible Resolution and MaxSAT}
We begin by formally defining Reversible Resolution refutations and comparing them to MaxSAT systems from the literature \cite{Bonet2007,Larrosa2008,Filmus2023}.

\begin{definition}
	Let $F$ be an unsatisfiable CNF formula.
	If $C$ is a clause then the \emph{reversible weakening rule} is the proof rule $C \vdash C \vee x, C \vee \overline x$, and the \emph{reversible resolution rule} is the proof rule $C \vee x, C \vee \overline x \vdash C$.
	A \emph{reversible resolution refutation} (RevRes) of $F$ is a sequence of multisets of clauses $\mathcal{C}_1, \mathcal{C}_2, \ldots, \mathcal{C}_t$ such that the following holds:
	\begin{enumerate}
		\item Every clause in $\mathcal{C}_1$ occurs in $F$, possibly with multiplicity.
		\item The multiset $\mathcal{C}_t$ contains the empty clause $\bot$.
		\item For each $i = 1, 2, \ldots, t-1$, the multiset $\mathcal{C}_{i+1}$ is obtained from $\mathcal{C}_i$ by selecting clauses in $\mathcal{C}_{i}$ and \emph{replacing} them with the result of one of the two reversible rules applied to those clauses.
	\end{enumerate}
	The proof is a \emph{reversible resolution refutation with terminals} (RevResT) if every clause in $\mathcal{C}_t$ other than $\bot$ is a weakening of a clause from $F$.
	The size of the proof is $\sum_{i=1}^t |\mathcal{C}_i|$ --- the number of clauses in all configurations.
	The width of the proof is the maximum width of any clause occuring in any configuration.
\end{definition}

The key difference between the reversible resolution rule and the standard resolution rule is that the output of the reversible rule (as a CNF formula) is logically equivalent to the input of the rule.
Despite this restriction, it is clear that we can use the reversible rule to simulate tree-like resolution.
If we use clauses $C \lor x$ and $D \lor \overline x$ to derive $C \lor D$, then we can derive this in RevRes as follows.
First, for each literal in $C \lor x$, apply the reversible weakening rule repeatedly to derive $C \lor D \lor x$ (along with some extra clauses which we can ignore).
Similarly, derive $D \lor C \lor \overline x$. 
Then apply the reversible resolution rule to these two clauses to derive $C \lor D$.

However, \cref{thm:res-sa} implies that RevRes \emph{cannot} efficiently simulate Resolution.
Intuitively, this is because of property (3) in the definition of a reversible refutation: we must \emph{replace} the clauses used in the rule with new clauses.
Therefore we cannot ``duplicate'' derived clauses for free, which is essential to obtain the full power of Resolution.

Indeed, the RevRes proof system is a slight strengthening of the proof system \emph{MaxSAT Resolution with Weakening} (also denoted $\textup{MaxResW}$) studied in the literature on MaxSAT solvers \cite{Bonet2007,Larrosa2008,Filmus2023}.
The principal difference between MaxSAT Resolution and standard Resolution is that MaxSAT Resolution seeks to preserve the number of satisfied clauses under any assignment.
For completeness, we define the MaxSAT Resolution proof system next.
\begin{definition}
	Let $A = a_1 \lor \cdots \lor a_s$ and $B = b_1 \lor \cdots \lor b_t$ be clauses over boolean literals $a_i, b_j$.
	The \emph{MaxSAT resolution rule} is the proof rule that, given $x \vee A$ and $\overline x \vee B$, deduces the following set of clauses:
	\begin{align*}
		&a_1 \lor \cdots \lor a_s \lor b_1 \lor \cdots \lor b_t \\
		&x \lor A \lor \bigvee_{i=1}^j b_i \lor \overline b_{j+1} \quad \forall j = 0, 1, \dots, t \\
		&\overline x \lor B \lor \bigvee_{i=1}^j a_i \lor \overline a_{j+1} \quad \forall j = 0, 1, \dots, s.
	\end{align*}
	A \emph{MaxRes refutation} of an unsatisfiable CNF $F$ is a sequence of multisets of clauses $\mathcal{C}_1, \dots, \mathcal{C}_t$ where $\mathcal{C}_1$ contains exactly the clauses in $F$, $\mathcal{C}_t$ contains a copy of the empty clause $\bot$, and the configuration $\mathcal{C}_i$ for $i > 1$ is obtained from $\mathcal{C}_{i-1}$ by applying the MaxSAT resolution rule to some clauses in $\mathcal{C}_{i-1}$ and replacing those clauses with the output of the rule.
	A \emph{MaxResW refutation} is a $\textup{MaxRes}$ refutation that is also allowed to use the weakening rule $C \vdash C \vee x, C \vee \overline x$.
\end{definition}

RevRes can simulate MaxResW proofs without much difficulty.
The weakening rule in MaxResW is the reversible weakening rule.
To simulate the MaxSAT resolution rule, starting from $x \vee A, \overline x \vee B$, apply the reversible weakening rule on $x \vee A$ to weaken it with the variable $b_1$, obtaining $x \vee A \vee b_1, x \vee A \vee \overline b_1$.
Then, weaken $x \vee A \vee b_1$ on the variable $b_2$ to obtain the clauses $x \vee A \vee b_1 \vee b_2, x \vee A \vee b_1 \vee \overline b_2$.
Repeating in this fashion on all literals in $B$, and similarly weakening $\overline x \vee B$, we obtain $x \vee A \vee B$, $\overline x \vee A \vee B$, and all the extra clauses output by the MaxSAT rule.
Finally applying the reversible resolution rule to $x \vee A \vee B$ and $\overline x \vee A \vee B$ deduces $A \vee B$.

The converse direction, however, is not clear and could very well be false.
A significant difference between RevRes and MaxResW is the fact that MaxResW proofs must have the initial configuration \emph{exactly} equal to $F$, while RevRes can start with any multiset of clauses from $F$.
As discussed above, this is because the goal of MaxRes is to preserve the number of satisfied clauses under any assignment, while RevRes has no such requirements and simply seeks to prove unsatisfiability.

We can formally interpret this as follows.
Suppose we are given an unsatisfiable CNF formula $F = C_1 \land \cdots \land C_m$, where every clause $C_i$ is equipped with a positive integer weight $w_i$.
Since $F$ is unsatisfiable, the maximum possible weight of satisfied clauses in any assignment to the variables of $F$ is at most $\sum_{i=1}^m w_i - 1$.
Thus, if we could \emph{prove} that this is true for some choice of weights $w_i > 0$, then we have verified that the formula $F$ is unsatisfiable.

RevRes implements this idea. 
Given $F$, we start by choosing positive integer weights $w_i$ for each clause $C_i$, and make $w_i$ copies of $C_i$ in the initial configuration $\mathcal{C}_1$.
The two proof rules in RevRes preserve the number of satisfied clauses under any assignment, and so it follows that if $\mathcal{C}_1, \dots, \mathcal{C}_t$ is a RevRes refutation of $F$ then, since $\mathcal{C}_t$ contains at least one instance of $\bot$, it must be that the maximum weight of satisfied clauses under any assignment is at most $\sum_{i=1}^m w_i - 1$ since $\bot$ is always false.
Hence the formula $F$ must be unsatisfiable.
Interpreted in this way, RevRes sits between MaxResW and the weighted MaxSAT resolution systems defined in \cite{Larrosa2008}.

\subsubsection*{Characterisation theorems}

Unlike the characterisation theorems for unary Nullstellensatz and unary Sherali--Adams, the easier direction for this characterisation theorem is showing that RevRes proofs imply $\sopl$-formulations.

\begin{lemma}
	\label{lem:revres-sopl}
	Let $F$ be an unsatisfiable CNF formula.
	If there is a RevRes refutation of $F$ with width $d$ and size $L$, then there is a depth-$(d+1)$ $\sopl_L$-formulation of $S(F)$.
	Furthermore, if there is a RevResT refutation, then there is a depth-$(d+1)$ $\eopl_L$-formulation of $S(F)$.
\end{lemma}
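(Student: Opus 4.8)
The plan is to read a given width-$d$, size-$L$ RevRes (resp.\ RevResT) refutation $\mathcal{C}_1,\dots,\mathcal{C}_t$ of $F$ \emph{backwards} and lay it out as a $\sopl_L$ (resp.\ $\eopl_L$) grid instance. The engine of the argument is the defining invariant of the reversible rules: for any \emph{fixed} assignment $x\in\B^n$, both the reversible weakening rule and the reversible resolution rule \emph{exactly preserve} the number of clause-occurrences in a configuration that are falsified by $x$. (If $C$ is falsified then weakening replaces $C$ by precisely one new falsified occurrence among $C\vee x_i, C\vee\overline{x}_i$, and resolution dually merges precisely one falsified premise into the falsified conclusion; satisfied occurrences contribute $0$ on both sides.) Hence, for each $i$, the map $\phi_i$ sending each falsified occurrence of $\mathcal{C}_i$ to the falsified occurrence of $\mathcal{C}_{i+1}$ that it becomes is a \emph{bijection} between the falsified occurrences of consecutive configurations, and composing the $\phi_i$'s decomposes the derivation, on input $x$, into $k$ vertex-disjoint directed paths, each one spanning \emph{all} configurations, from a falsified clause of $F$ in $\mathcal{C}_1$ to a falsified occurrence in $\mathcal{C}_t$, where $k\geq 1$ is the number of clauses of $F$ falsified by $x$. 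Since $\bot\in\mathcal{C}_t$ is always falsified, one of these paths always ends at $\bot$; we may assume without loss of generality that $\mathcal{C}_t$ contains a single occurrence of $\bot$ (for RevRes, truncate to the first configuration containing $\bot$; a similar normalization handles RevResT).

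Next I would build the grid. Take $N\coloneqq L$ and identify the rows $1,\dots,t$ of the $\sopl_N$ (resp.\ $\eopl_N$) grid with the configurations $\mathcal{C}_t,\mathcal{C}_{t-1},\dots,\mathcal{C}_1$ in \emph{reverse} order, identifying the columns of row $r$ with a fixed enumeration of the occurrences of $\mathcal{C}_{t+1-r}$; all leftover columns and all rows $t+1,\dots,N$ are dummy nodes (using $t\le L$ and $\max_i|\mathcal{C}_i|\le L$). The distinguished node $(1,1)$ is the unique occurrence of $\bot$ in $\mathcal{C}_t$. On input $x$ a node is meant to be \emph{active} exactly when its clause is falsified by $x$, and the successor/predecessor pointers are set so that the active substructure is precisely the family of $k$ reversed paths (read top-to-bottom), with every path \emph{ending} at its row-$t$ node as a \textbf{proper sink}: the tree $s_{(r,c)}$ for $r<t$ first reads the $\leq d$ variables of $c$'s clause, outputs $\nul$ if that clause is satisfied, and otherwise outputs the column of $\phi_{t-r}^{-1}(c)$, which is read off from the rule applied at step $t-r$ and is determined outright except in the resolution case, where one further query to the resolved variable pins it down; we set $s_{(t,c)}\coloneqq\nul$ for all row-$t$ nodes, and $p_{(r,c)}$ dually implements $\phi_{t+1-r}$. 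Thus every successor/predecessor decision tree has depth $\leq d+1$, the grid has $O(L)$ nodes, dummy/inactive nodes receive $\nul$ pointers, and — this is the crucial point — because the $k$ paths span \emph{all} rows, no active node can point to an inactive node on an intermediate row, so there are no intermediate proper sinks; on input $x$, exactly the row-$t$ nodes carrying a clause of $F$ falsified by $x$ are proper sinks.

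It remains to supply the output trees and verify correctness. For a row-$t$ node $(t,c)$ carrying the $i$-th clause of $F$, set $g_{(t,c)}\coloneqq i$ (a constant function); whenever $(t,c)$ is a solution of the $\sopl$ instance it is a proper sink, so $C_i$ is falsified by $x$, and $i$ is a solution of $S(F)$. Every other node is harmless: $(1,1)$ is always active (never an ``inactive distinguished source''), rows $t+1,\dots,N$ have no active nodes (no ``active sink''), and there are no intermediate proper sinks — so \emph{every} solution of the $\sopl$ instance is accounted for, giving the depth-$(d+1)$ $\sopl_L$-formulation for RevRes. For RevResT and $\eopl$ the only extra solution type is \emph{proper sources}; since all paths span every row, a proper source can occur only on row $1$, i.e.\ at a falsified occurrence of $\mathcal{C}_t$ other than $\bot$, which by the terminals condition is a weakening $W$ of some clause $C_i$ of $F$, and $W$ falsified by $x$ forces $C_i$ falsified, so the constant output $i$ is again correct; this yields the depth-$(d+1)$ $\eopl_L$-formulation.

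The main obstacle is conceptual rather than computational: it is precisely the step that reversibility forces the exact falsified-count invariant, hence ``all paths are full length'', hence no proper-sink (and no intermediate proper-source) solutions. This is exactly what would fail for ordinary Resolution, and it is what makes $\sopl$/$\eopl$ rather than $\sod$/$\eol$ the correct target here. The remaining points — normalizing $\mathcal{C}_t$ to contain a single $\bot$, fixing a coherent column layout so that identity-mapped occurrences keep their column throughout, and the one-extra-query bookkeeping in the decision trees — are routine.
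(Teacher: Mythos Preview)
Your proposal is correct and follows essentially the same approach as the paper: both lay out the reversed refutation on the $\sopl$/$\eopl$ grid with node $(1,1)$ corresponding to $\bot$, use the falsified-count invariant of the reversible rules to define depth-$(d+1)$ successor/predecessor trees, and conclude that all solutions land on the row corresponding to $\mathcal{C}_1$ (for $\sopl$) or additionally on the $\mathcal{C}_t$ row as weakenings of axioms (for $\eopl$). Your handling of the last row (setting $s_{(t,c)}=\nul$ so that falsified initial clauses become proper sinks) is if anything slightly cleaner than the paper's; the only point to be careful with is the ``similar normalization'' for RevResT when $\mathcal{C}_t$ contains multiple copies of $\bot$, since truncation can destroy the terminals condition---but this is a genuine edge case that the paper also glosses over.
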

\begin{proof}
	We focus on the case of RevRes and then describe what needs to be modified in the case of RevResT.
	Let $F = C_1 \land \cdots \land C_m$ be an unsatisfiable CNF formula.
	Let $\mathcal{C}_1, \mathcal{C}_2, \ldots, \mathcal{C}_\ell$ be a RevRes refutation of $F$ of the prescribed size and width and let $t \coloneqq \max_{i \in [\ell]} |\mathcal{C}_i|$.
	By the size bound we know that $t, \ell \leq L$.

	We create an $\sopl$-formulation of $S(F)$ on a grid of size $L \times L$, although we will only use the subgrid of size $\ell \times t$ and hardwire all other nodes to be inactive.
	This can be done for each node $(i, j)$ outside of the $\ell \times t$ grid by setting the successor for $(i,j)$ to be $\nul$ and the predecessor to be arbitrary.
	The relationship between the grid of the $\sopl$-formulation and the RevRes proof is straightforward: the node $(i, j) \in [\ell] \times [t]$ corresponds to the $j$-th clause in the multiset $\mathcal{C}_{\ell-i+1}$.
	Without loss of generality, we assume $\mathcal{C}_l$ is ordered so that the first clause is $\bot$, and thus the distinguished node $(1,1)$ in the $\sopl$ instance corresponds to $\bot$.

	Let $(i, j) \in [\ell] \times [t]$ be any node in the grid and let $C_{i,j}$ denote the corresponding clause in the proof.
	We define the successor function $s_{i,j} : \B^n \rightarrow [t] \cup \set{\nul}$, the predecessor function $p_{i,j} : \B^n \rightarrow [t]$, and the solution function $g_{i,j} : \B^n \rightarrow [m]$.
	The solution function $g_{i,j}$ queries no variables and outputs $C_{i,j}$ if $C_{i,j} \in F$, and otherwise outputs an arbitrary solution (in the second case, by construction $(i,j)$ will never be a solution to $\sopl$).
	To define $s_{i,j}$ and $p_{i,j}$ we introduce some notation.
	If $C \in \mathcal{C}_i$ and $C' \in \mathcal{C}_{i+1}$ are clauses in adjacent configurations then $C'$ is \emph{derived from} $C$, written $C \vdash C'$, if either $C'$ is the output of a reversible proof rule applied to $C$ or if no proof rule was applied to $C$ and $C' = C$ is just the same copy of $C$ in the next configuration.
	For any $x \in \B^n$ define
	\[ 
	s_{i,j}(x) \coloneqq \begin{cases}
		\nul & \text{ if } C_{i,j}(x) = 1 \\
		k & \text{ if } i < \ell, C_{i,j}(x) = C_{i+1, k}(x) = 0, \text{ and } C_{i+1,k} \vdash C_{i,j} \\
		1 & \text{ if } i = \ell \text{ and } C_{i,j}(x) = 0,
	\end{cases}
	\]
	and similarly, if $i > 1$, define
	\[ 
	p_{i,j}(x) \coloneqq \begin{cases}
		1 & \text{ if } C_{i,j}(x) = 1 \\
		k & \text{ if } C_{i,j}(x) = C_{i-1, k}(x) = 0 \text{ and } C_{i,j} \vdash C_{i-1,k}.
	\end{cases}
	\]
	Intuitively, if $C_{i,j}(x) = 0$ then we will make the successor and predecessors of $C_{i,j}$ point to the unique clauses in the adjacent configurations that are guaranteed to be false.
	These functions are well-defined since the reversible rules are of the form $C \vee x_i, C \vee \overline x_i \vdash C$ and $C \vdash C \vee x_i, C \vee \overline x_i$.
	In particular, under any assignment to the variables, the number of false clauses in the input and output of the rules are equal and at most $1$, and thus if $C$ is false then there are unique false clauses in the adjacent configurations that are derived from or used to derive $C$.
	Finally, we note that the successor and predecessor functions can each be computed by querying all the variables in $C_{i,j}$ and possibly one more variable (the one that was resolved or weakened on), and thus the decision tree depth of both of these functions is at most $d+1$.

	Now we argue that the $\sopl$ formulation correctly solves $S(F)$.
	By the definition of the successor and predecessor functions, if any node $(i,j)$ on layer $i < \ell$ is active, then that node has consistent pointers to successor nodes and predecessor nodes on the adjacent layers.
	This means that the node $(i,j)$ is a solution only if it is an active node on layer $i = \ell$, but such a node is active only if the corresponding clause $C_{i, j} \in \mathcal{C}_1$ is false. 
	But all such clauses occur in $F$, and in this case the solution function $g_{i,j}$ outputs $C_{i,j}$, which is a correct solution to $S(F)$.

	In case we started with a RevResT refutation, we observe that the same argument described above also works for $\eopl$ with one extra observation: any clause in the final configuration $\mathcal{C}_t$ that is falsified under an input $x$ is now a weakening of an input clause of $F$, and so this is a valid source node solution to the $\eopl$ problem.
\end{proof}

It remains to prove the converse, which is harder.
As a warmup, we begin by showing that the encoding of $\sopl$ ($\eopl$) as an unsatisfiable CNF formula can be efficiently refuted in RevRes (RevResT resp.).
The general case will follow the structure of this proof closely.
For the warmup it will be helpful to explicitly write the CNF encoding of $\sopl$  and $\eopl$ (\cref{sec:definitions}).
	
\paragraph{Explicit Encodings for $\sopl$ and $\eopl$.}
As we have discussed in \cref{sec:intro-tfnp}, any total search problem $R_n \subseteq \B^n \times O_n$ has a natural encoding as an unsatisfiable CNF formula by $\bigwedge_{o \in O_n} \neg T_o(x)$ where $T_o(x)$ is the decision tree that checks if $(x, o) \in R_n$.
Since $T_o$ is a low-depth decision tree we can encode it as a low-width DNF formula, and thus the resulting CNF formula also has low width.
In this section we describe the unsatisfiable CNF formulas corresponding to $\sopl_n$ and $\eopl_n$ explicitly.

The successor and predecessor pointers in the $\sopl_n$ instance will be encoded in binary, so, for the sake of convenience assume $n = 2^{\lambda} - 1$ for some integer $\lambda \geq 1$ and other cases can be handled similarly.
For each node $(i, j)$ the successor and predecessor pointers will be encoded by blocks of boolean variables $s_{i,j} \in \B^{\lambda}, p_{i,j} \in \B^\lambda$ encoding the value of the pointer in binary.
The pointer $\nul$ will always be encoded by the all-$0$ string.
We will abuse notation and often consider $s_{i,j}$ and $p_{i,j}$ as actual elements of $[n] \cup \set{\nul}$, rather than as short boolean strings.
So, we may write things like $s_{i,j} = k$ for $k \in [n]$ to mean that the bits of $s_{i,j}$ are equal to the binary encoding of $k$.

As everything is encoded in binary, it will be helpful to introduce the following notation.
In general, for a predicate $P : \B^n \rightarrow \B$ we let $\llbracket P \rrbracket$ represent the CNF encoding of $P$ over the $n$ underlying boolean variables.
For example, $\llbracket s_{i,j} = \ell \rrbracket$ for $\ell \in [n]$ represents the CNF encoding of the predicate ``$s_{i, j} = \ell$'' over the boolean variables underlying $s_{i,j}$.
Explicitly, $\llbracket s_{i,j} = \nul \rrbracket = \bigwedge_{t=1}^{\lambda} \overline s_{i,j,t},$ and similarly $\llbracket s_{i,j} \neq \nul \rrbracket$ can be represented by the clause $\bigvee_{t=1}^{\lambda} s_{i,j,t}$.
We can also form more complicated statements, writing e.g.~$\llbracket s_{i,j} = k \wedge p_{i+1, k} = j \rrbracket$ to mean the CNF encoding of ``the successor of $(i,j)$ is $(i+1, k)$ and the predecessor of $(i+1,k)$ is $(i,j)$''.

\begin{definition}\label{def:SOPL-CNF}
	Let $n$ be a positive integer, and for simplicity assume $n = 2^\lambda - 1$ for some integer $\lambda \geq 1$.
	Consider the following unsatisfiable CNF formula $\sopl_n$.
	For each $(i, j) \in \set{2, \dots, n-1} \times [n]$ we have two blocks of $\lambda$ variables $s_{i,j} \in \B^\lambda, p_{i,j} \in \B^\lambda$ encoding the successor and predecessor pointers of the node $(i,j)$ in binary, where $\nul$ is encoded by $0^\lambda$.
	For each $j \in [n]$, we additionally have a block of $\lambda$ variables $s_{1,j} \in \B^\lambda$ encoding the successor of $(1, j)$, a block of $\lambda$ variables $p_{n, j} \in \B^\lambda$ encoding the predecessor of $(n, j)$, and a single variable $s_{n,j} \in \B$ 
	encoding whether or not $(n, j)$ is active.

	The clauses of $\sopl_n$ are the following:
	\begin{itemize}
		\item For each $j \in [n]$, $\llbracket s_{1,1} \neq j \vee p_{2,j} = 1\rrbracket$ and $\llbracket s_{1,1} \neq 0 \rrbracket$ \hfill \emph{(active distinguished source)}
		\item For each $j \in [n]$, $\overline s_{n,j}$ for each $j \in [n]$, \hfill \emph{(inactive sink)}
		\item For each $(i, j) \in \set{1, \dots, n-2} \times [n]$ and each $a, b \in [n]$, $c \in [n] \cup \{0\}$, $a \neq c$, \hfill \emph{(no proper sinks)}
			\[\llbracket s_{i,j} \neq a \lor p_{i+1, a} \neq j \lor s_{i+1,a} \neq b \lor p_{i+2,b} \neq c \rrbracket\]
   as well as $\llbracket s_{i,j} \neq a \lor p_{i+1, a} \neq j \lor s_{i+1,a} \neq 0 \rrbracket$.
			Similarly, for each $a, b \in [n]$, \[ \llbracket s_{n-1, a} \neq b \lor p_{n, b} \neq a \lor s_{n,b} = 1 \rrbracket\]
	\end{itemize}
	The $\eopl_n$ formula is obtained by adding the following extra clauses to $\sopl_n$:
	\begin{itemize}
		\item For each $(i, j) \in \set{2, \dots, n-1} \times [n]$ and each $a, b \in [n]$, $c \in [n] \cup \{0\}$, $c \neq j$, \hfill \emph{(no proper sources)}
			\[\llbracket s_{i,j} \neq a \lor p_{i+1,a} \neq j \lor p_{i,j} \neq b \lor s_{i-1,b} \neq c \rrbracket\]
   as well as $\llbracket s_{i,j} \neq a \lor p_{i+1,a} \neq j \lor p_{i,j} \neq 0 \rrbracket$.
			Similarly, for any $a, b \in [n]$ with $a \neq 1$, \[ \llbracket s_{1, a} \neq b \lor p_{2,b} \neq a \rrbracket.\]
	\end{itemize}
\end{definition}

From the above definition we can see that both $\sopl_n$ and $\eopl_n$ are polynomial-size, $O(\log n)$-width CNF formulas, and they are unsatisfiable since the families of clauses simply encode the contradictory statements ``the $\sopl/\eopl$ problem has no solution''.

\paragraph{Proofs of Characterisations.}
Now, before proving that we can refute $\sopl_n$ in RevRes, we first prove a technical lemma that allow us to manipulate binary encodings in RevRes.

\begin{lemma}
	\label{lem:reversible-weakening}
	Let $\lambda > 0$ be a positive integer, and let $n = 2^{\lambda} - 1$.
	Let $C$ be a width-$k$ clause that does not depend on a block of boolean variables $z \in \B^\lambda$.
	Using the reversible weakening rule we can prove, from $C$, the set of clauses $\set{\llbracket C \lor z \neq i \rrbracket : i = 0, \dots, n}$ in width $k + \lambda$ and size $2^{\lambda}$. 
	Conversely, from the above set of clauses we can prove $C$ using the reversible resolution rule in the same size and width.
\end{lemma}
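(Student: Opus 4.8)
The plan is to exploit the fact that weakening a clause $C$ fully on a block of $\lambda$ fresh variables $z=(z_1,\dots,z_\lambda)$ produces exactly the $2^\lambda$ clauses in the statement, and to run this single derivation forwards (using the reversible weakening rule) for the first claim and backwards (using the reversible resolution rule) for the second. First I would fix the correspondence between integers and literal patterns: for $i\in\{0,\dots,n\}$, write $i=\sum_{t=1}^{\lambda} i_t 2^{t-1}$ with $i_t\in\B$, and let $L_t^{(i)}$ be the literal $z_t$ when $i_t=0$ and $\overline{z_t}$ when $i_t=1$. Since $\llbracket z=i\rrbracket=\bigwedge_t\overline{L_t^{(i)}}$, its negation is the single clause $\llbracket z\neq i\rrbracket=\bigvee_{t=1}^{\lambda}L_t^{(i)}$, so (using that $C$ does not mention $z$) $\llbracket C\lor z\neq i\rrbracket=C\lor L_1^{(i)}\lor\cdots\lor L_\lambda^{(i)}$ is a clause of width $k+\lambda$, and as $i$ ranges over $\{0,\dots,n\}$ these $2^\lambda$ clauses are precisely the ones obtained from $C$ by choosing, independently for each $t$, one of the two literals on $z_t$.

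For the weakening direction I would prove by induction on $r=0,1,\dots,\lambda$ that from the configuration $\{C\}$ one can reach, using only reversible weakening steps, the configuration $\mathcal{C}^{(r)}\coloneqq\{\,C\lor L_1\lor\cdots\lor L_r : L_t\in\{z_t,\overline{z_t}\}\,\}$ consisting of one copy of each of its $2^r$ clauses. The base case $r=0$ is immediate. For the inductive step, apply the reversible weakening rule $D\vdash D\lor z_{r+1},\,D\lor\overline{z_{r+1}}$ to each of the $2^r$ clauses of $\mathcal{C}^{(r)}$ in turn; by the ``consume-the-premise'' convention, each application deletes one clause and inserts its two one-variable-wider extensions, so after $2^r$ steps the configuration equals $\mathcal{C}^{(r+1)}$. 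At $r=\lambda$ the configuration is exactly $\{\llbracket C\lor z\neq i\rrbracket : 0\le i\le n\}$ by the correspondence above. Every clause appearing in the derivation has width at most $k+\lambda$, the derivation uses $\sum_{r<\lambda}2^r=2^\lambda-1$ steps, all intermediate configurations have size at most $2^\lambda$, and the final configuration has exactly $2^\lambda$ clauses, giving the claimed width and size bounds.

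The reversible resolution direction is literally this derivation read in reverse. Starting from $\mathcal{C}^{(\lambda)}=\{\llbracket C\lor z\neq i\rrbracket\}$, for $r=\lambda,\lambda-1,\dots,1$ I would pair the clause containing $z_r$ with the clause containing $\overline{z_r}$ that agrees with it on $L_1,\dots,L_{r-1}$ and apply the reversible resolution rule $D\lor z_r,\,D\lor\overline{z_r}\vdash D$ to each of the $2^{r-1}$ pairs, which turns $\mathcal{C}^{(r)}$ into $\mathcal{C}^{(r-1)}$; ending at $\mathcal{C}^{(0)}=\{C\}$ yields a RevRes derivation of $C$ with the same width and size. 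There is no genuine mathematical obstacle here; the only things requiring care are the binary-encoding bookkeeping (i.e.\ which literal on $z_t$ sits inside $\llbracket z\neq i\rrbracket$) and checking that the reversible rules' premise-consuming semantics make each forward weakening step correspond to a unique legal reverse resolution step. This lemma is then used purely as a gadget for manipulating binary pointer encodings inside the $\sopl$/$\eopl$ RevRes refutations.
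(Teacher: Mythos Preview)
Your proposal is correct and follows exactly the same approach as the paper: weaken $C$ successively on $z_1,z_2,\dots,z_\lambda$ to obtain all $2^\lambda$ completions, and run the derivation in reverse for the resolution direction. Your write-up is in fact more careful than the paper's (which is a two-sentence sketch), spelling out the induction, the binary-encoding correspondence, and the premise-consumption bookkeeping explicitly.
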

\begin{proof}
	Starting from $C$, apply the reversible weakening rule on the first bit $z_1$ to obtain $C \vee z_1$ and $C \vee \overline z_1$.
	Weakening each of the results on $z_2$, $z_3$, \dots, $z_\lambda$ in turn yields exactly the CNF formula described in the lemma, and the second statement follows from the reversibility of RevRes.
\end{proof}

\begin{theorem} \label{thm:sopl-in-revres}
	For each positive integer $n$, there is a $O(\log n)$-width, polynomial-size RevRes refutation (RevResT refutation, resp.) of $\sopl_n$ ($\eopl_n$, resp.).
\end{theorem}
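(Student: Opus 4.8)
The plan is to build the RevRes refutation by unfolding the path\nobreakdash-following argument showing $\sopl_n$ is contradictory: the distinguished node $(1,1)$ is forced active, so following successor edges one repeatedly invokes the ``no proper sink'' clauses to conclude that the next node is active, and since all of row $n$ is inactive this terminates in a contradiction. I would realise this as a derivation that sweeps the grid one row at a time from row $n$ up to row $1$, maintaining a configuration (a multiset of width-$O(\log n)$ clauses) that records, for each column $j$ of the current row $i$, a small cluster of clauses certifying ``$(i,j)$ is inactive'' in the binary encoding. The empty clause finally appears at $(1,1)$, obtained by resolving the cluster certifying ``$(1,1)$ is inactive'' against the ``active distinguished source'' axioms. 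The RevResT refutation of $\eopl_n$ follows the same skeleton, with the ``no proper source'' axioms treated symmetrically.

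First I would fix notation for the explicit CNF encoding above and record, via \cref{lem:reversible-weakening}, the basic bit-level manipulations: moving between ``$s_{i,j}=\nul$'' (equivalently ``$(i,j)$ inactive'') and its clausal form, and weakening a clause with an entire block of encoded pointer variables. The base case (row $n$) is immediate, since the ``inactive sink'' axioms $\overline{s_{n,j}}$ \emph{are} the clusters certifying ``$(n,j)$ inactive''. For the inductive step I would transform the row-$(i{+}1)$ configuration into the row-$i$ one: for a column $j$ of row $i$ and a candidate value $k$ of $s_{i,j}$, the corresponding ``no proper sink'' axiom together with the cluster certifying ``$(i{+}1,k)$ is inactive'' yields, by a short sequence of reversible resolutions, the clause $\llbracket s_{i,j}\neq k\vee p_{i+1,k}\neq j\rrbracket$; assembling these over all $k$ (using \cref{lem:reversible-weakening} to repackage) produces the cluster certifying ``$(i,j)$ is inactive''. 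The orientation of the sweep and of the clause-to-grid correspondence mirrors the easy direction \cref{lem:revres-sopl}.

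The main obstacle, and where I expect the real work to lie, is \emph{size control}. Because every reversible rule consumes its premises, a cluster can be reused only as often as it was duplicated, and the naive schedule --- derive ``row $i{+}1$ all inactive'' and then reuse it for every column of row $i$ --- multiplies the required multiplicities by a factor of roughly $n$ (more, once the binary encoding is counted) at each row, giving a superpolynomial proof. The intended fix is to exploit the fan-in-$1$ structure of the $\sopl$ line, which is exactly what the predecessor pointers buy us (and what $\sod_n$, for which the analogous statement is \emph{false}, lacks): each node $(i{+}1,k)$ is the successor-target of at most one active node on row $i$, namely $(i,p_{i+1,k})$, so by first peeling off the value of $p_{i+1,k}$ with reversible weakening, the cluster for ``$(i{+}1,k)$ inactive'' is needed by only a controlled number of columns of row $i$ and the multiplicities do not compound from row to row. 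Carrying out this scheduling, choosing the starting multiset of axioms with the right (polynomial) multiplicities, and verifying that every clause ever produced has width $O(\log n)$, is then a matter of careful bookkeeping.

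Finally, for $\eopl_n$ and RevResT the only additions are: the ``no proper source'' clauses are handled by the same mechanism with the roles of successor and predecessor pointers exchanged (sweeping also accounts for source-side obligations); and the definition of RevResT explicitly allows the final configuration to contain, besides $\bot$, weakenings of input clauses --- precisely the ``terminal'' clauses left over by the source-side bookkeeping --- so no further argument is required there. This yields, as claimed, an $O(\log n)$-width, polynomial-size RevRes refutation of $\sopl_n$ and RevResT refutation of $\eopl_n$.
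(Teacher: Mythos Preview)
Your plan is correct and is the same approach the paper takes: a sweep from row $n$ up to row $1$ maintaining, at stage $i$, a multiset $\mathcal{I}_i$ of $O(\log n)$-width clauses encoding ``every node on row $i$ is inactive'', with the crucial size-control step being to weaken over the predecessor pointer before moving up a row. Two points where the paper is more explicit than your sketch, and which your ``careful bookkeeping'' would have to supply:

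First, peeling off $p_{i+1,k}$ alone is not quite enough; one must also weaken over the successor pointer $s_{i,\cdot}$ on the row above so that the resulting clauses line up literal-for-literal with the no-proper-sink axioms. The paper organises this as a meet-in-the-middle: weaken $\mathcal{I}_i$ forward over $p_{i,j}$ and then $s_{i-1,a}$ to a set $\mathcal{A}_i$; weaken the \emph{target} $\mathcal{I}_{i-1}$ forward over $s_{i,j}$ and then $p_{i+1,k}$ to a set $\mathcal{B}_i$. The intersection $\mathcal{T}_i=\mathcal{A}_i\cap\mathcal{B}_i$ consists of the ``consistent-edge'' clauses, the rest of $\mathcal{B}_i$ is exactly a batch $\mathcal{F}_i$ of no-proper-sink axioms, and the rest of $\mathcal{A}_i$ is junk $\mathcal{J}_i$ that is carried along. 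Running the $\mathcal{I}_{i-1}\to\mathcal{B}_i$ weakening in reverse then gives the one-row step $\mathcal{I}_i\cup\mathcal{F}_i\vdash\mathcal{I}_{i-1}\cup\mathcal{J}_i$ with no compounding of multiplicities.

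Second, for RevResT and $\eopl_n$ the point is not a separate symmetric sweep but simply that the junk $\mathcal{J}_i$ produced above, together with the leftover clusters $I_{1,a}$ for $a\neq 1$ at the end, are \emph{exactly} the proper-source clauses of $\eopl_n$. So the terminal condition is satisfied for free, with no additional work.
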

\begin{proof}
	We give the proof for $\sopl_n$ and then describe what needs to be modified for $\eopl_n$.
	For each $(i,j) \in [n-1] \times [n]$ and each $k \in [n]$ define the clause $I_{i,j,k} \coloneqq \llbracket s_{i,j} \neq k \lor p_{i+1, k} \neq j \rrbracket$, and note that $I_{i,j,k}$ has width $2\log n$ in the variables $s_{i,j}$ and $p_{i+1,k}$.
	With this notation, the set of clauses
	\[ I_{i,j} \coloneqq \set{\llbracket s_{i,j} \neq k \lor p_{i+1,k} \neq j\rrbracket \st k \in [n]}\] encodes the statement ``the node $(i,j)$ is inactive''.
	Similarly, for any $j \in [n]$ we define \[ I_{n,j} \coloneqq \overline s_{n,j}\] encoding that the node $(n, j)$ is inactive, and note that $I_{n,j}$ is a clause in $\sopl_n$.
	Thus, for any $i \in [n]$, the collection of clauses $\mathcal{I}_i \coloneqq \bigcup_{j=1}^n I_{i,j}$ encodes the statement ``every node on layer $i$ is inactive''.
	We now state the main claim of the proof.

	\begin{claim}
		\label{claim:main}
		For any $i \in \set{2, \dots, n}$, there is a polynomial-size, $O(\log n)$-width RevRes proof of $\mathcal{I}_{i-1}$ from $\mathcal{I}_{i}$ and a polynomial-size collection of clauses from $\sopl_n$.
	\end{claim}

	Let us first use the claim to finish the proof of the theorem.	
	We start with the collection of clauses $\mathcal{I}_n = \bigcup_{j=1}^n I_{n, j}$, each of which is a clause from $\sopl_n$.
	Applying the claim yields the collection $\mathcal{I}_{n-1}$ in polynomial-size and $O(\log n)$ width from $\mathcal{I}_n$ and a polynomial-size collection of clauses from $\sopl_n$.
	Applying the claim $n-2$ more times then yields $\mathcal{I}_1$ in polynomial-size and $O(\log n)$ width.
	However, the clauses $I_{1,1} \subseteq \mathcal{I}_1$ are exactly \[ \llbracket s_{1,1} \neq j \vee p_{2, j} \neq 1 \rrbracket\] for each $j \in [n]$.
	By resolving these clauses with the clauses in $\llbracket s_{1,1} \neq j \vee p_{2, j} = 1 \rrbracket$ in $\sopl_n$, we can deduce the family of clauses $\llbracket s_{1,1} \neq j \rrbracket$ for all $j \neq 0$, and the clause $\llbracket s_{1,1} \neq 0 \rrbracket$ is already in $\sopl_n$.
	Applying \cref{lem:reversible-weakening} to the clauses $\set{\llbracket s_{1,1} \neq i \rrbracket \st i = 0, \dots, n}$ deduces the empty clause $\bot$ in $O(\log n)$ width and $O(n)$ size.
	In sum, the entire proof will have polynomial size and $O(\log n)$ width.

	So, it suffices to prove the claim.
	\begin{proof}[Proof of Claim.]
		We show how to prove the general case where $i \leq n-1$, and the case where $i = n$ is handled by an essentially identical argument.
		Consider the family of clauses $\mathcal{I}_{i} = \bigcup_{j=1}^n I_{i,j}$.
		For each clause $I_{i,j,k}$ apply \cref{lem:reversible-weakening} to weaken as follows.
		Initially, we weaken over all values of the predecessor pointer $p_{i, j}$, obtaining the family of clauses $\llbracket I_{i,j,k} \vee p_{i,j} \neq a \rrbracket$ for each $a \in [n]$.
		Then, from the clause in this family containing $p_{i, j} \neq a$, we weaken over all values of the successor pointer $s_{i-1, a}$, obtaining the family of clauses 
		\begin{align*}
			\mathcal{A}_i & = \set{\llbracket I_{i,j,k} \vee p_{i,j} \neq a \vee s_{i-1,a} \neq b\rrbracket \st j,k,a,b \in [n]} \\
			& = \set{\llbracket s_{i,j} \neq k \vee p_{i+1, k} \neq j \vee p_{i,j} \neq a \vee s_{i-1,a} \neq b \rrbracket \st j,k,a,b \in [n]}.
		\end{align*}
		Partition this family of clauses into two sets as follows.
			Define \[\mathcal{T}_i = \set{\llbracket s_{i,j} \neq k \lor p_{i+1,k} \neq j \lor p_{i,j} \neq a \lor s_{i-1,a} \neq j \rrbracket \st j, k, a \in [n]},\]
		which is the subfamily of clauses in $\mathcal{A}_i$ that have $b = j$, and let $\mathcal{J}_i = \mathcal{A}_i \setminus \mathcal{T}_i$ denote the subfamily where $b \neq j$.
		Next, we show how to use $\mathcal{T}_i$, along with some clauses in $\sopl_n$, to deduce $\mathcal{I}_{i-1}$ in width $O(\log n)$ and polynomial size.
		The clauses $\mathcal{J}_i$ are ``junk'' clauses that are maintained for the rest of the proof and output along with the bottom clause $\bot$ in the final configuration.

		\begin{figure}
		\centering
		\begin{subfigure}[b]{0.3\textwidth}
			\centering
			\begin{tikzpicture}[y=-1cm, scale=1]
	% node grid
	\coordinate (p11) at (-0.5*\spacex, -0.5*\spacey);
	\coordinate (p12) at (0.5*\spacex, -0.5*\spacey);
	\coordinate (p13) at (1.5*\spacex, -0.5*\spacey);
	\coordinate (p14) at (2.5*\spacex, -0.5*\spacey);
	\coordinate (p15) at (3.5*\spacex, -0.5*\spacey);

	\coordinate (p31) at (-0.5*\spacex, 0.5*\spacey);
	\coordinate (p32) at (0.5*\spacex, 0.5*\spacey);
	\coordinate (p33) at (1.5*\spacex, 0.5*\spacey);
	\coordinate (p34) at (2.5*\spacex, 0.5*\spacey);
	\coordinate (p35) at (3.5*\spacex, 0.5*\spacey);

	\coordinate (p51) at (-0.5*\spacex, 1.5*\spacey);
	\coordinate (p52) at (0.5*\spacex, 1.5*\spacey);
	\coordinate (p53) at (1.5*\spacex, 1.5*\spacey);
	\coordinate (p54) at (2.5*\spacex, 1.5*\spacey);
	\coordinate (p55) at (3.5*\spacex, 1.5*\spacey);

	\coordinate (p71) at (-0.5*\spacex, 2.5*\spacey);
	\coordinate (p72) at (0.5*\spacex, 2.5*\spacey);
	\coordinate (p73) at (1.5*\spacex, 2.5*\spacey);
	\coordinate (p74) at (2.5*\spacex, 2.5*\spacey);
	\coordinate (p75) at (3.5*\spacex, 2.5*\spacey);

	\coordinate (p91) at (-0.5*\spacex, 3.5*\spacey);
	\coordinate (p92) at (0.5*\spacex, 3.5*\spacey);
	\coordinate (p93) at (1.5*\spacex, 3.5*\spacey);
	\coordinate (p94) at (2.5*\spacex, 3.5*\spacey);
	\coordinate (p95) at (3.5*\spacex, 3.5*\spacey);

	\coordinate (p21) at (0, 0); 				 \coordinate (p22) at (\spacex, 0); 				 \coordinate (p23) at (2*\spacex, 0); 				\coordinate (p24) at (3*\spacex, 0);

	\coordinate (p41) at (0, \spacey);
	\coordinate (p42) at (\spacex, \spacey); 	 \coordinate (p43) at (2*\spacex, \spacey); 	\coordinate (p44) at (3*\spacex, \spacey);

	\coordinate (p61) at (0, 2*\spacey);
	\coordinate (p62) at (\spacex, 2*\spacey); \coordinate (p63) at (2*\spacex, 2*\spacey); \coordinate (p64) at (3*\spacex, 2*\spacey);

	\coordinate (p81) at (0, 3*\spacey);
	\coordinate (p82) at (\spacex, 3*\spacey); \coordinate (p83) at (2*\spacex, 3*\spacey); \coordinate (p84) at (3*\spacex, 3*\spacey);

	%Zones
	%\begin{scope}[fill opacity=0.5]
		\draw[gadget_ITER_small] (p31) -- (p35) -- (p95) -- (p91) -- cycle;
		%\draw[gadget_ITER_small] (p31) -- (p35) -- (p55) -- (p51) -- cycle;
		%\draw[gadget_ITER_small] (p14) -- (p94) -- (p93) -- (p13) -- cycle;
	%\end{scope}

	% node drawing

	%\node[node_regular]  (P11) at (p21) {};
	%\node[node_regular]  (P12) at (p22) {};
	%\node[node_regular]  (P13) at (p23) {};
	%\node[node_regular] (P14) at (p24) {};
	\node[node_regular, label={[label distance=12*\spacey]90:{$j$}}]  (P21) at (p41) {};
	\node[node_regular, label={[label distance=12*\spacey]90:{$k$}}]  (P22) at (p42) {};
	\node[node_regular, label={[label distance=12*\spacey]90:{$a$}}]  (P23) at (p43) {};
	\node[node_regular, label={[label distance=12*\spacey]90:{$b$}}]  (P24) at (p44) {};
	\node[node_regular]  (P31) at (p61) {};
	\node[node_regular]  (P32) at (p62) {};
	\node[node_regular]  (P33) at (p63) {};
	\node[node_regular]  (P34) at (p64) {};
	\node[node_regular]  (P41) at (p81) {};
	\node[node_regular]  (P42) at (p82) {};
	\node[node_regular]  (P43) at (p83) {};
	\node[node_regular]  (P44) at (p84) {};

	% edges drawing
	\draw[edge_eopl_small] (P31) to [bend right=15] (P42);
	\draw[edge_php_small] (P42) to [bend right=15] (P31);
	\draw[edge_php_small] (P31) to (P23);
	\draw[edge_eopl_small] (P23) to (P34);

\end{tikzpicture}
			\vspace{2mm}
			\caption{A clause in $\mathcal{J}_i \subseteq \mathcal{A}_i$.}
			\label{figure:claim3All}
		\end{subfigure}%
		\begin{subfigure}[b]{0.3\textwidth}
			\centering
			\begin{tikzpicture}[y=-1cm, scale=1]
	% node grid
	\coordinate (p11) at (-0.5*\spacex, -0.5*\spacey);
	\coordinate (p12) at (0.5*\spacex, -0.5*\spacey);
	\coordinate (p13) at (1.5*\spacex, -0.5*\spacey);
	\coordinate (p14) at (2.5*\spacex, -0.5*\spacey);
	\coordinate (p15) at (3.5*\spacex, -0.5*\spacey);

	\coordinate (p31) at (-0.5*\spacex, 0.5*\spacey);
	\coordinate (p32) at (0.5*\spacex, 0.5*\spacey);
	\coordinate (p33) at (1.5*\spacex, 0.5*\spacey);
	\coordinate (p34) at (2.5*\spacex, 0.5*\spacey);
	\coordinate (p35) at (3.5*\spacex, 0.5*\spacey);

	\coordinate (p51) at (-0.5*\spacex, 1.5*\spacey);
	\coordinate (p52) at (0.5*\spacex, 1.5*\spacey);
	\coordinate (p53) at (1.5*\spacex, 1.5*\spacey);
	\coordinate (p54) at (2.5*\spacex, 1.5*\spacey);
	\coordinate (p55) at (3.5*\spacex, 1.5*\spacey);

	\coordinate (p71) at (-0.5*\spacex, 2.5*\spacey);
	\coordinate (p72) at (0.5*\spacex, 2.5*\spacey);
	\coordinate (p73) at (1.5*\spacex, 2.5*\spacey);
	\coordinate (p74) at (2.5*\spacex, 2.5*\spacey);
	\coordinate (p75) at (3.5*\spacex, 2.5*\spacey);

	\coordinate (p91) at (-0.5*\spacex, 3.5*\spacey);
	\coordinate (p92) at (0.5*\spacex, 3.5*\spacey);
	\coordinate (p93) at (1.5*\spacex, 3.5*\spacey);
	\coordinate (p94) at (2.5*\spacex, 3.5*\spacey);
	\coordinate (p95) at (3.5*\spacex, 3.5*\spacey);

	\coordinate (p21) at (0, 0); 				 \coordinate (p22) at (\spacex, 0); 				 \coordinate (p23) at (2*\spacex, 0); 				\coordinate (p24) at (3*\spacex, 0);

	\coordinate (p41) at (0, \spacey);
	\coordinate (p42) at (\spacex, \spacey); 	 \coordinate (p43) at (2*\spacex, \spacey); 	\coordinate (p44) at (3*\spacex, \spacey);

	\coordinate (p61) at (0, 2*\spacey);
	\coordinate (p62) at (\spacex, 2*\spacey); \coordinate (p63) at (2*\spacex, 2*\spacey); \coordinate (p64) at (3*\spacex, 2*\spacey);

	\coordinate (p81) at (0, 3*\spacey);
	\coordinate (p82) at (\spacex, 3*\spacey); \coordinate (p83) at (2*\spacex, 3*\spacey); \coordinate (p84) at (3*\spacex, 3*\spacey);

	%Zones
	%\begin{scope}[fill opacity=0.5]
		\draw[gadget_ITER_small] (p31) -- (p35) -- (p95) -- (p91) -- cycle;
		%\draw[gadget_ITER_small] (p31) -- (p35) -- (p55) -- (p51) -- cycle;
		%\draw[gadget_ITER_small] (p14) -- (p94) -- (p93) -- (p13) -- cycle;
	%\end{scope}

	% node drawing

	%\node[node_regular]  (P11) at (p21) {};
	%\node[node_regular]  (P12) at (p22) {};
	%\node[node_regular]  (P13) at (p23) {};
	%\node[node_regular] (P14) at (p24) {};
	\node[node_regular, label={[label distance=12*\spacey]90:{$j$}}]  (P21) at (p41) {};
	\node[node_regular, label={[label distance=12*\spacey]90:{$k$}}]  (P22) at (p42) {};
	\node[node_regular, label={[label distance=12*\spacey]90:{$a$}}]  (P23) at (p43) {};
	\node[node_regular, label={[label distance=12*\spacey]90:{$b$}}]  (P24) at (p44) {};
	\node[node_regular]  (P31) at (p61) {};
	\node[node_regular]  (P32) at (p62) {};
	\node[node_regular] (P33) at (p63) {};
	\node[node_regular]  (P34) at (p64) {};
	\node[node_regular]  (P41) at (p81) {};
	\node[node_regular]  (P42) at (p82) {};
	\node[node_regular]  (P43) at (p83) {};
	\node[node_regular]  (P44) at (p84) {};

	% edges drawing
	\draw[edge_eopl_small] (P23) to [bend right=10] (P31);
	\draw[edge_php_small] (P31) to [bend right=10] (P23);
	\draw[edge_eopl_small] (P31) to (P42);
	\draw[edge_php_small] (P42) to (P34);

\end{tikzpicture}
			\vspace{2mm}
			\caption{A clause in $\mathcal{F}_i \subseteq \mathcal{B}_i$.}
			\label{figure:claim3B}
		\end{subfigure}%
		\begin{subfigure}[b]{0.3\textwidth}
			\centering
			\begin{tikzpicture}[y=-1cm, scale=1]
	% node grid
	\coordinate (p11) at (-0.5*\spacex, -0.5*\spacey);
	\coordinate (p12) at (0.5*\spacex, -0.5*\spacey);
	\coordinate (p13) at (1.5*\spacex, -0.5*\spacey);
	\coordinate (p14) at (2.5*\spacex, -0.5*\spacey);
	\coordinate (p15) at (3.5*\spacex, -0.5*\spacey);

	\coordinate (p31) at (-0.5*\spacex, 0.5*\spacey);
	\coordinate (p32) at (0.5*\spacex, 0.5*\spacey);
	\coordinate (p33) at (1.5*\spacex, 0.5*\spacey);
	\coordinate (p34) at (2.5*\spacex, 0.5*\spacey);
	\coordinate (p35) at (3.5*\spacex, 0.5*\spacey);

	\coordinate (p51) at (-0.5*\spacex, 1.5*\spacey);
	\coordinate (p52) at (0.5*\spacex, 1.5*\spacey);
	\coordinate (p53) at (1.5*\spacex, 1.5*\spacey);
	\coordinate (p54) at (2.5*\spacex, 1.5*\spacey);
	\coordinate (p55) at (3.5*\spacex, 1.5*\spacey);

	\coordinate (p71) at (-0.5*\spacex, 2.5*\spacey);
	\coordinate (p72) at (0.5*\spacex, 2.5*\spacey);
	\coordinate (p73) at (1.5*\spacex, 2.5*\spacey);
	\coordinate (p74) at (2.5*\spacex, 2.5*\spacey);
	\coordinate (p75) at (3.5*\spacex, 2.5*\spacey);

	\coordinate (p91) at (-0.5*\spacex, 3.5*\spacey);
	\coordinate (p92) at (0.5*\spacex, 3.5*\spacey);
	\coordinate (p93) at (1.5*\spacex, 3.5*\spacey);
	\coordinate (p94) at (2.5*\spacex, 3.5*\spacey);
	\coordinate (p95) at (3.5*\spacex, 3.5*\spacey);

	\coordinate (p21) at (0, 0); 				 \coordinate (p22) at (\spacex, 0); 				 \coordinate (p23) at (2*\spacex, 0); 				\coordinate (p24) at (3*\spacex, 0);

	\coordinate (p41) at (0, \spacey);
	\coordinate (p42) at (\spacex, \spacey); 	 \coordinate (p43) at (2*\spacex, \spacey); 	\coordinate (p44) at (3*\spacex, \spacey);

	\coordinate (p61) at (0, 2*\spacey);
	\coordinate (p62) at (\spacex, 2*\spacey); \coordinate (p63) at (2*\spacex, 2*\spacey); \coordinate (p64) at (3*\spacex, 2*\spacey);

	\coordinate (p81) at (0, 3*\spacey);
	\coordinate (p82) at (\spacex, 3*\spacey); \coordinate (p83) at (2*\spacex, 3*\spacey); \coordinate (p84) at (3*\spacex, 3*\spacey);

	%Zones
	%\begin{scope}[fill opacity=0.9]
		\draw[gadget_ITER_small] (p31) -- (p35) -- (p95) -- (p91) -- cycle;
		%\draw[gadget_ITER_small] (p31) -- (p35) -- (p55) -- (p51) -- cycle;
		%\draw[gadget_ITER_small] (p14) -- (p94) -- (p93) -- (p13) -- cycle;
	%\end{scope}

	% node drawing

	%\node[node_regular]  (P11) at (p21) {};
	%\node[node_regular]  (P12) at (p22) {};
	%\node[node_regular]  (P13) at (p23) {};
	%\node[node_regular] (P14) at (p24) {};
	\node[node_regular, label={[label distance=12*\spacey]90:{$j$}}]  (P21) at (p41) {};
	\node[node_regular, label={[label distance=12*\spacey]90:{$k$}}]  (P22) at (p42) {};
	\node[node_regular, label={[label distance=12*\spacey]90:{$a$}}]  (P23) at (p43) {};
	\node[node_regular, label={[label distance=12*\spacey]90:{$b$}}]  (P24) at (p44) {};%
	\node[node_regular]  (P31) at (p61) {};
	\node[node_regular]  (P32) at (p62) {};
	\node[node_regular] (P33) at (p63) {};
	\node[node_regular]  (P34) at (p64) {};
	\node[node_regular]  (P41) at (p81) {};
	\node[node_regular]  (P42) at (p82) {};
	\node[node_regular]  (P43) at (p83) {};
	\node[node_regular]  (P44) at (p84) {};

	% edges drawing
	\draw[edge_eopl_small] (P23) to [bend right=10] (P31);
	\draw[edge_php_small] (P31) to [bend right=10] (P23);
	\draw[edge_eopl_small] (P31) to [bend right=15] (P42);
	\draw[edge_php_small] (P42) to [bend right=15] (P31);

\end{tikzpicture}
			\vspace{2mm}
			\caption{A clause in $\mathcal{T}_i = \mathcal{A}_i \cap \mathcal{B}_i$.}
			\label{figure:claim3T}
		\end{subfigure}
		\caption{Illustration of the objects in the proof of \autoref{claim:main}. The blue edges are successor pointers and the red edges are predecessor pointers. Each clause says that at least one of the pointers in the above configuration must not be present.}
		\label{fig:claim_main}
	\end{figure}
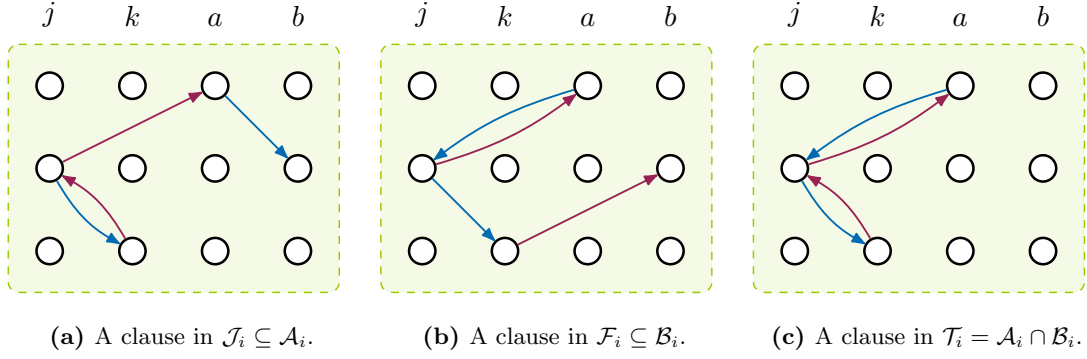

		To do this, we exploit the reversibility of RevRes and show how to deduce from $\mathcal{I}_{i-1}$ the collection $\mathcal{T}_i \cup \mathcal{F}_i$ using the reversible weakening rule, where $\mathcal{F}_i$ is a polynomial-size set of clauses all from $\sopl_n$.
		By running this proof in reverse and connecting it with the proof described above we prove $\mathcal{I}_{i-1}$ from $\mathcal{I}_{i}$, and we can add the clauses $\mathcal{F}_i$ to the initial configuration of the RevRes proof.
		
		This proof is very similar to the proof of $\mathcal{A}_i$ from $\mathcal{I}_{i}$.
		Starting from an arbitrary clause $I_{i-1,a,j} \in \mathcal{I}_{i-1}$, we apply \cref{lem:reversible-weakening} to weaken the clause on all possible values of the successor pointer $s_{i, j}$, obtaining $\llbracket I_{i-1,a,j} \vee s_{i, j} \neq k \rrbracket$ for all $k \in [n]$.
		Then, starting from the clause containing $s_{i, j} \neq k$, we weaken on all values of $p_{i+1, k}$, obtaining the family 
		\begin{align*}
			\mathcal{B}_i & = \set{\llbracket I_{i-1,a,j} \vee s_{i, j} \neq k \lor p_{i+1, k} \neq b \rrbracket \st j,k,a,b \in [n]} \\
			& = \set{\llbracket s_{i-1,a} \neq j \vee p_{i,j} \neq a \vee s_{i, j} \neq k \lor p_{i+1, k} \neq b \rrbracket \st j,k,a,b \in [n]}
		\end{align*}
		We again partition into two sets.
		The first is, of course, $\mathcal{T}_i$, which is the case where $j = b$ in $\mathcal{B}_i$.
		The second set is $\mathcal{F}_i = \mathcal{B}_i \setminus \mathcal{T}_i$, which is the case where $j \neq b$, and observe that every clause in $\mathcal{F}_i$ is a \emph{no proper sink} clause from $\sopl_n$.

		We can now finish the proof of the claim.
		Starting from $\mathcal{I}_{i} \cup \mathcal{F}_i$, use the clauses in $\mathcal{I}_{i}$ to deduce the set of clauses $\mathcal{T}_i \cup \mathcal{F}_i \cup \mathcal{J}_i$.
		Then, run the proof deducing $\mathcal{T}_i \cup \mathcal{F}_i$ from $\mathcal{I}_{i-1}$ in reverse to finally deduce $\mathcal{I}_{i-1} \cup \mathcal{J}_i$.
		The total proof has polynomial size and $O(\log n)$-width, and therefore the claim is proved.
	\end{proof}

	To modify the proof for $\eopl_n$ and RevResT, we make the following changes. 	
	First, we observe that all clauses in $\mathcal{I}_1$ that come from $I_{1, a}$ for $a \neq 1$ are proper source clauses from $\eopl_n$.
	All other clauses in the above proof that occur in the final line come from sets of the form $\mathcal{J}_i$ in the proof of the above claim.
	However, just like the clauses $\mathcal{F}_i$ are proper sink clauses from $\sopl_n$, the clauses in $\mathcal{J}_i$ are exactly proper source clauses from the $\eopl_n$.
	This completes the proof.
\end{proof}

It remains to modify the previous proof in order to accommodate decision-tree reductions to $\sopl$ and $\eopl$.
To do this we mimic the previous proof, but replace the construction of the sets of clauses in the proof with appropriate queries to the decision trees (which RevRes can simulate) in the reduction.

Before we prove the theorem we introduce some helpful notation for manipulating decision trees.
If $T$ is a decision tree then $\mathcal{P}(T)$ is the set of root-to-leaf paths in $T$.
If $o$ is an output (i.e.~leaf label) of $T$, then define $\mathcal{P}_o(T)$ to be the set of root-to-leaf paths in $T$ that output $o$.
Given any path $P \in \mathcal{P}(T)$, let $C_P \coloneqq \bigvee_{\ell \in P} \neg \ell$ be the negation of the literals along $P$; so, $C_P(x) = 1$ iff $P$ is not followed when $T$ is evaluated on $x$.
We also need an appropriate modification of \cref{lem:reversible-weakening} to arbitrary decision trees, which we prove next.
\begin{lemma}
	\label{lem:dt-weakening}
	Let $C$ be a width-$k$ clause, and let $T$ be a depth-$d$ decision tree querying a set of variables disjoint from $C$.
	Using the reversible weakening rule we can prove, from $C$, the set of clauses $\set{C \vee C_P \st P \in \mathcal{P}(T)}$ in width $d + k$ and size at most $2^{d}$.
	Conversely, from the above set of clauses we can prove $C$ using the reversible resolution rule in the same size and width.
\end{lemma}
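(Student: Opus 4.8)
The plan is to prove the lemma by structural induction on the decision tree $T$, generalizing the proof of \cref{lem:reversible-weakening}, which is exactly the special case where $T$ is the complete binary tree querying a block of $\lambda$ variables. The induction hypothesis will be the full statement of the lemma: for every clause $C$ of width $k$ and every decision tree $T$ of depth $d$ on variables disjoint from $C$, the reversible weakening rule derives the multiset $\{C \vee C_P : P \in \mathcal{P}(T)\}$ from $C$, using at most $2^d$ clauses and never exceeding width $d+k$; and running that derivation backwards (using only the reversible resolution rule) derives $C$ from the same multiset.

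For the base case $d=0$, the tree $T$ is a single leaf, so $\mathcal{P}(T)$ consists of the empty path $P$ with $C_P = \bot$, and the target multiset is just $\{C\}$ — nothing needs to be derived, and the size and width bounds hold trivially. For the inductive step, suppose $T$ queries $x_i$ at its root, with sub-trees $T_0, T_1$ (for the answers $x_i=0$ and $x_i=1$) of depth at most $d-1$. Since the variables of $T$ are disjoint from $C$, in particular $x_i$ does not occur in $C$, so we may apply the reversible weakening rule to $C$ to obtain $C \vee x_i$ and $C \vee \bar x_i$. The sub-tree $T_1$ queries variables disjoint from $C$ and distinct from $x_i$ (a decision tree does not re-query a variable along a path), hence disjoint from the width-$(k+1)$ clause $C \vee \bar x_i$; by the induction hypothesis we derive $\{(C \vee \bar x_i) \vee C_{P'} : P' \in \mathcal{P}(T_1)\}$ from $C \vee \bar x_i$ in width at most $(d-1)+(k+1) = d+k$, and symmetrically $\{(C \vee x_i) \vee C_{P'} : P' \in \mathcal{P}(T_0)\}$ from $C \vee x_i$.

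To check that this produces the correct multiset, observe that every root-to-leaf path $P$ of $T$ arises either by answering $x_i=1$ and then following some $P' \in \mathcal{P}(T_1)$ — in which case the literals on $P$ are $\{x_i\} \cup (\text{literals on } P')$, so $C_P = \bar x_i \vee C_{P'}$ and $C \vee C_P = (C \vee \bar x_i) \vee C_{P'}$ — or by answering $x_i=0$ and following $P' \in \mathcal{P}(T_0)$, giving $C \vee C_P = (C \vee x_i) \vee C_{P'}$. Thus the union of the two derived families is exactly $\{C \vee C_P : P \in \mathcal{P}(T)\}$, the number of clauses involved is $|\mathcal{P}(T_0)| + |\mathcal{P}(T_1)| = |\mathcal{P}(T)| \le 2^d$, and every clause appearing has width at most $d+k$. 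For the converse, the derivation just constructed uses only the reversible weakening rule $C' \vdash C' \vee x, C' \vee \bar x$; reversing each step turns it into an application of the reversible resolution rule $C' \vee x, C' \vee \bar x \vdash C'$, yielding a RevRes derivation of $C$ from $\{C \vee C_P : P \in \mathcal{P}(T)\}$ of the same size and width.

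The argument is routine bookkeeping, so there is no real obstacle; the only points needing care are (i) tracking the width bound through the recursion — the crucial feature being that each recursive call lengthens the clause by one literal while shortening the tree by one level, so $d+k$ is an invariant — and (ii) getting the sign convention right in the decomposition $C_P = (\text{literal for the root answer}) \vee C_{P'}$, where answering $x_i=1$ contributes $\bar x_i$ to $C_P$ and answering $x_i=0$ contributes $x_i$. As in \cref{lem:reversible-weakening}, ``size'' here counts the clauses produced; the total $\sum_i |\mathcal{C}_i|$ is at most a polynomial factor larger, which is immaterial for our applications.
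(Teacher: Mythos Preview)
Your proof is correct and is essentially the paper's own argument written out as a formal structural induction: both proceed by weakening on the root variable of $T$ and then recursing into the two subtrees, with the converse obtained by running the derivation in reverse. Your careful tracking of the sign convention and the width invariant $d+k$ makes explicit what the paper only sketches.
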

\begin{proof}
	This proof is essentially the same as in \cref{lem:reversible-weakening}.
	Now, starting from $C$, apply the reversible weakening rule on the first variable $x_i$ queried in the decision tree $T$ to derive the clauses $\set{C \vee x_i, C \vee \overline x_i}$.
	From there we can continue to apply the reversible weakening rule to simulate the queries of the decision tree.
	For instance, if after the decision tree learns $x_i = 0$ it queries $x_j$, we apply the reversible weakening rule to $x_i$ to obtain $x_i \vee x_j, x_i \vee \overline x_j$.
	Continuing in this manner we can derive all clauses $C \vee C_P$ for $P \in \mathcal{P}(T)$, and running the proof in reverse yields the lemma.
\end{proof}

\begin{theorem}
	Let $F$ be an unsatisfiable CNF formula.
	If there is a depth-$d$ $\sopl_L$-formulation ($\eopl_L$-formulation, resp.) of $S(F)$ then there is a RevRes refutation (with terminals, resp.) of $F$ with width $O(d)$ and size $L^{O(1)}2^{O(d)}$.
\end{theorem}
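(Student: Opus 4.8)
The plan is to prove this by \emph{lifting} the RevRes refutation of $\sopl_n$ constructed in \cref{thm:sopl-in-revres} (and its inductive core, \cref{claim:main}) into a RevRes refutation of $F$ over the variables of $F$. Write $F = C_1 \wedge \cdots \wedge C_m$. A depth-$d$ $\sopl_L$-formulation of $S(F)$ supplies, for every node $u$ of the $L\times L$ grid, depth-$\le d$ decision trees (in the variables of $F$) computing the successor pointer $s_u$, the predecessor pointer $p_u$, and the activity bits on the last row, together with, for every node $v$ that may be a $\sopl$-solution, a depth-$\le d$ solution tree $g_v$ with the property that whenever $v$ is an actual solution of the induced instance, the clause $C_{g_v(x)}$ of $F$ is falsified by $x$. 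I would first record the useful fact that any such $C_{g_v(x)}$ has width $\le d$, since all of its literals must be set to $0$ along the length-$\le d$ root-to-leaf path of $g_v$ followed by $x$.

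\textbf{The dictionary.} Every clause appearing in the refutation of \cref{thm:sopl-in-revres} is the CNF-encoding $\llbracket\neg\Gamma\rrbracket$ of the negation of a \emph{local pointer configuration} $\Gamma$ involving only $O(1)$ grid nodes (e.g.\ ``$s_{i,j}\ne k \vee p_{i+1,k}\ne j$'' says ``$(i,j)$ is not active towards $(i+1,k)$''). I would replace such a clause by the set of decision-tree path-clauses $C_P$ obtained by running the $O(1)$ pointer trees referenced by $\Gamma$ --- total depth $O(d)$ --- and keeping the branches $P$ on which $\Gamma$ holds; this set is logically equivalent to $\neg\Gamma$ over the variables of $F$. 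Each reversible weakening over a pointer block in \cref{thm:sopl-in-revres} is replaced by a reversible weakening over the corresponding depth-$\le d$ decision tree using \cref{lem:dt-weakening}, which costs $+d$ in width and a multiplicative $2^{d}$ in size; reversible resolution steps (including the uses of the converse half of \cref{lem:reversible-weakening}) lift to the converse half of \cref{lem:dt-weakening}. Since each lifted clause mentions only $O(1)$ pointer trees plus possibly one $g_v$, the width stays $O(d)$; and since the refutation of \cref{thm:sopl-in-revres} has $L^{O(1)}$ positions, each blowing up by $2^{O(d)}$, the final size is $L^{O(1)}2^{O(d)}$.

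\textbf{Discharging the $\sopl_L$ axioms.} The only clauses of the \cref{thm:sopl-in-revres} refutation that are \emph{axioms of $\sopl_L$} --- rather than derived --- are the ``inactive sink'', ``active distinguished source'', and ``no proper sink'' clauses; in the lifted proof there is no corresponding clause of $F$. The point is that each such axiom is ``$\neg\Gamma$'' for a configuration $\Gamma$ which, when fully present, makes a specific node $v$ a $\sopl$-solution. Therefore, on every branch $P$ on which $\Gamma$ holds one may further run $g_v$ (another $\le d$ queries, reusing already-queried variables for free), reaching a leaf $Q$, and by correctness of the formulation the resulting path-clause $C_P \vee C_Q$ is a \emph{weakening} of the input clause $C_{g_v}$ (all literals of $C_{g_v}$ are negated along $P\cup Q$). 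Consequently, the whole family of lifted axiom-substitute clauses can be \emph{produced} at the start of the proof from suitably many copies of clauses of $F$ by reversible weakening; then one runs the lifted derivation and arrives at the empty clause $\bot$. The residual ``junk'' --- the lifts of the families $\mathcal{J}_i$, which in \cref{thm:sopl-in-revres} are again $\sopl_n$-axioms --- is carried along in the final configuration, which is harmless for plain RevRes. For the $\eopl_L$/RevResT case, I would enlarge the dictionary with the ``proper source'' configurations and their solution trees; the argument goes through unchanged, and, exactly as in \cref{thm:sopl-in-revres}, every non-$\bot$ clause in the final configuration is a lift of a ``proper source'' configuration, hence a weakening of an input clause of $F$, which is precisely the terminal condition for RevResT.

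\textbf{Main obstacle.} The hard part will be the bookkeeping that makes the lift genuinely \emph{reversible}: one must verify that every reversible-weakening/reversible-resolution step of the proof of \cref{thm:sopl-in-revres} continues to apply when each pointer block is replaced by an \emph{arbitrary} depth-$\le d$ decision tree (in particular that the disjointness hypothesis of \cref{lem:dt-weakening} can always be met after restricting trees to their not-yet-queried variables), and that the ``discharge via $g_v$'' is organised so that one only ever needs to \emph{weaken} clauses of $F$ forward into the axiom-substitutes, never to un-weaken a clause one does not already possess. Once this is in place, the width and size accounting is routine.
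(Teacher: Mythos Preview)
Your proposal is correct and takes essentially the same approach as the paper: both lift the layer-by-layer RevRes refutation of $\sopl_n$ from \cref{thm:sopl-in-revres} by replacing each weakening over a pointer block (\cref{lem:reversible-weakening}) with a weakening over the corresponding decision tree (\cref{lem:dt-weakening}), and discharge each lifted $\sopl$-axiom by appending the solution tree $g_v$ so that the resulting path-clause is a weakening of a clause of $F$. The ``main obstacle'' you flag is exactly the one the paper treats explicitly: the lifted families $\mathcal{T}_i$ coincide whether one approaches from $\mathcal{I}_i$ or from $\mathcal{I}_{i-1}$ because the two derivations query the same decision trees in a different order, so the resulting path-clauses (which depend only on the \emph{set} of literals queried) are literally identical, making the reversal legitimate.
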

\begin{proof}
	We follow the proof of \cref{thm:sopl-in-revres} and focus on the case of $\sopl$.
	Assume $F = C_1 \land \cdots \land C_m$ is defined on $n$ variables $x_1, \dots, x_n$.
	In this proof we think of CNF formulas and sets of clauses interchangeably.
	In the $\sopl_L$-formulation of $S(F)$ we have functions \[s_{i,j} : \B^n \rightarrow [L] \cup \set{\nul},\ p_{i,j}: \B^n \rightarrow [L] \cup \set{\nul},\  g_{i,j} : \B^n \rightarrow [m]\] computing successors, predecessors, and solutions for each internal node, and we identify each function with the depth-$d$ decision tree computing it.

	For each $(i, j) \in [L-1] \times [L]$ consider the CNF formula \[I_{i,j,k}(x) = \llbracket s_{i,j}(x) \neq k \vee p_{i+1,k}(x) \neq j \rrbracket. \] 
	In other words, $I_{i,j,k}$ is the analogue of the clause using the same notation from the proof of \cref{thm:sopl-in-revres}.
	We can use the decision trees for $s_{i,j}$ and $p_{i,j}$ to encode $I_{i,j,k}$ as a CNF formula explicitly.
	To do this, define the decision tree $T_{i,j}$ as follows: take the decision tree $s_{i,j}$ and at each leaf labelled $k$, simulate the decision tree $p_{i+1,k}$ (skipping queries to variables already made) to obtain an output $a$, and then output the pair $(k, a)$.
	With this decision tree we can define $I_{i,j,k} = \set{C_P \st P \in \mathcal{P}_{(k,j)}(T_{i,j})}$.
	As in the proof of \cref{thm:sopl-in-revres}, define \[ I_{i,j} := \bigcup_{k=1}^n I_{i,j,k}, \quad \mathcal{I}_i := \bigcup_{j=1}^n I_{i,j},\] where we recall that we consider CNFs and sets of clauses interchangeably.
	When $i = L$, then for any $j \in [L]$ define the decision tree $T_{L,j}$ that simulates the decision tree $s_{L,j}$ and outputs $1$ if $(L,j)$ is active and $0$ otherwise.
	With this we define $I_{L,j} = \set{C_P \st P \in \mathcal{P}_1(T_{L,j})}$, and similarly define $\mathcal{I}_L = \bigcup_{j=1}^L I_{L,j}$.
	In this notation, the set of clauses $\mathcal{I}_i$ again encodes ``every node on layer $i$ is inactive'', where now the activity of a node is determined by the underlying decision trees in the formulation.

		The main step in this theorem is the following claim.
	\begin{claim}
		For any $i \in \set{2,\dots,L}$, there is a size $L^{O(1)}2^{O(d)}$, $O(d)$-width RevRes proof of $\mathcal{I}_{i-1}$ from $\mathcal{I}_{i}$ and a collection of weakenings of clauses from $F$.
	\end{claim}

	First we use the claim to finish the proof of the theorem.
	We begin by deriving from $F$ the clauses $\mathcal{I}_{L}$ (let us briefly postpone this argument), and then apply the claim $L-1$ times to derive $\mathcal{I}_1$.
	Let $\mathcal{Q} = \bigcup_{k \neq 0} \mathcal{P}_{(k,1)}(T_{1,1})$ be the set of paths of $T_{1,1}$ that end in a leaf labelled with $(k, 1)$ for some $k \neq 0$, and let $\mathcal{R} = \mathcal{P}(T_{1,1}) \setminus \mathcal{Q}$.
	Observe that $I_{1,1} \subseteq \mathcal{I}_1$ is, by definition, the set of clauses $\set{C_P \st P \in \mathcal{Q}}$.

	Consider any path $P \in \mathcal{R}$, and note that $P$ ends in a leaf labelled with $(k, a)$ where either $k = 0$ or $a \neq 1$. 
	Each leaf witnesses that the distinguished node $(1,1)$ is inactive, and so we can then simulate the decision tree $g_{1,1}$ and learn a solution of $S(F)$. 
	Therefore, for every path $P' \in \mathcal{P}(g_{1,1})$ the clause $C_P \vee C_{P'}$ is either a weakening of a clause in $F$, or, is trivially true if it contains both a literal and its negation. 
	Therefore, by applying \cref{lem:dt-weakening} we can deduce the clause $C_P$ from weakenings of clauses in $F$ in size $2^{O(d)}$ and width $O(d)$.
	Applying this argument for every $P \in \mathcal{R}$ allows us to deduce the clauses $\set{C_P \st P \in \mathcal{R}}$.
	We have now deduced all the clauses $\set{C_P \st P \in \mathcal{P}(T_{1,1})}$, and so applying \cref{lem:dt-weakening} to all of these clauses allows us to deduce $\bot$.

	Let us now describe how to derive from $F$ the clauses \[\mathcal{I}_{L} = \set{\llbracket (L,j) \text{ is inactive} \rrbracket \st j \in [L]} = \bigcup_{j = 1}^n \set{C_P \st P \in \mathcal{P}_1(T_{L,j})}.\]
	For any $j \in [L]$ consider the following decision tree $T'_{L,j}$: first run the decision tree $T_{L,j}$ that checks if $(L,j)$ is active and then, if $(L,j)$ is active, simulate the decision tree $g_{L,j}$ to find a solution to $S(F)$.
	It follows that for any $P \in \mathcal{P}_1(T_{L,j})$ and any $P' \in \mathcal{P}(g_{L,j})$ the clause $C_P \vee C_{P'}$ is a weakening of a clause of $F$ or is trivially true.
	We can therefore deduce $C_P$ from weakenings of clauses of $F$ using \cref{lem:dt-weakening}, and repeating this argument for every $j \in [L]$ and every $P \in \mathcal{P}_j$ we can derive every clause in $\mathcal{I}_L$.
	So, all that remains is to prove the claim.

	\begin{proof}[Proof of Claim.]
		The proof of this claim is modelled on the proof of the similar claim from the previous theorem.
		We again do the general case where $i \leq L-1$; the case where $i = L$ proceeds similarly.
		Consider the set of clauses $\mathcal{I}_{i}$ and $\mathcal{I}_{i-1}$.
		Our first goal is to derive the analogue of the set $\mathcal{A}_i$ in the proof of \cref{claim:main}.

		Let $j \in [L]$ be arbitrary and consider any clause $C \in I_{i, j}$.
		By definition, there is a $k \neq 0$ such that $C = C_P$ for some $P \in \mathcal{P}_{(k,j)}(T_{i, j})$.
		Starting from $C$ in the proof apply \cref{lem:dt-weakening} to the decision tree $p_{i,j}$ to derive a set of clauses, each of the form $C \vee C_P$, where $P \in \mathcal{P}(p_{i,j})$.
		Then, for every $a \in [L]$ and any $P' \in \mathcal{P}_a(p_{i,j})$, apply \cref{lem:dt-weakening} again to $C \vee C_P$ and the decision tree $s_{i-1, a}$ to obtain $C \vee C_P \vee C_{P'}$ for every $P' \in \mathcal{P}(s_{i-1,a})$.
		Performing this procedure for all $C \in \mathcal{I}_i$ yields
		\begin{align*}
			\mathcal{A}_i & \coloneqq \set{\llbracket s_{i,j} \neq k \vee p_{i+1, k} \neq j \vee p_{i,j} \neq a \vee s_{i-1,a} \neq b \rrbracket \st j,k,a,b \in [n]} \\
			& = \set{C \vee C_P \vee C_{P'} \st a \in [L], P \in \mathcal{P}_a(p_{i,j}), P' \in \mathcal{P}(s_{i-1,a})}.
		\end{align*}
		We partition $\mathcal{A}_i$ into two sets: the clauses in $\mathcal{T}_i$ where $b = j$, and the clauses in $\mathcal{J}_i = \mathcal{A}_i \setminus \mathcal{T}_i$.

		Now, as in the proof of \cref{claim:main}, we use $\mathcal{T}_i$ along with some clauses in $F$ to deduce $\mathcal{I}_{i-1}$, and we again will exploit the reversibility of RevRes to do so.
		Namely, starting from $\mathcal{I}_{i-1}$ we deduce $\mathcal{T}_i \cup \mathcal{F}_i$, where $\mathcal{F}_i$ is a collection of (weakenings of) clauses from $F$, and we can then just run the proof in reverse.

		Let $D$ be any clause in $\mathcal{I}_{i-1}$, and note that there is a $j \in [L]$ such that $D = C_P$ for some $P \in \mathcal{P}_{(j,a)}(T_{i-1,a})$.
		Starting from $D$, apply \cref{lem:dt-weakening} with the decision tree $T_{i, j}$ to obtain a collection of clauses of the form $D \vee C_{P'}$ where $P' \in \mathcal{P}(T_{i,j})$.
		Let $(k, b)$ be the output of the decision tree $T_{i,j}$ on the path $P'$.
		If $b = j$, then the clause $D \vee C_{P'}$ belongs to $\mathcal{T}_i$. 
		Moreover, if we repeat this argument for all $D \in \mathcal{I}_{i-1}$ then the collection of all such clauses obtained is \emph{exactly} $\mathcal{T}_i$.
		This is because from $\mathcal{I}_{i}$, the collection of clauses $\mathcal{T}_i$ was obtained by starting from all clauses at leaves of $T_{i, j}$ labelled with $(k, j)$ and then querying $p_{i, j}$ and $s_{i-1, a}$; here, we have performed the exact same queries except we have reversed the order in which we simulated the decision trees $p_{i,j}$ and $s_{i-1,a}$.

		On the other hand, if $b \neq j$, then the literals queried on the paths $P \in \mathcal{P}_{(j,a)}(T_{i-1,a})$ and $P' \in \mathcal{P}_{(k,b)}(T_{i,j})$ together witness that the node $(i, j)$ is a proper sink node, and thus is a solution to the $\sopl$ problem.
		Therefore, at the end of the path $P \cup P'$ we can run the decision tree $g_{i, j}$ to determine a solution to $S(F)$.
		This means that if $P'' \in \mathcal{P}(g_{i,j})$ is any root-to-leaf path in $g_{i,j}$, then the clause $D \vee C_{P'} \vee C_{P''} = C_P \vee C_{P'} \vee C_{P''}$ must be a weakening of a clause in $F$ (or, again, is trivially true).
		Let $\mathcal{F}_i$ denote the set of all of these weakenings of clauses of $F$, obtained by running the above procedure for every clause $D \in \mathcal{I}_{i-1}$.
		We have therefore shown that from $\mathcal{I}_{i-1}$ we can derive $\mathcal{T}_i \cup \mathcal{F}_i$.

		To finish the proof of the claim, we start with the clauses in $\mathcal{I}_{i} \cup \mathcal{F}_i$, deduce $\mathcal{T}_i \cup \mathcal{J}_i$ from $\mathcal{I}_i$ to obtain the clauses $\mathcal{T}_i \cup \mathcal{J}_i \cup \mathcal{F}_i$, and then deduce $\mathcal{I}_{i-1}$ from $\mathcal{T}_i \cup \mathcal{F}_i$.
		This yields the clauses $\mathcal{I}_{i-1} \cup \mathcal{J}_i$, and all of these steps required size $L^{O(1)}2^{O(d)}$ and width at most $O(d)$, completing the proof of the claim and the theorem.
	\end{proof}

	The above proof can be modified to capture $\eopl$ in the same manner as the proof of \cref{thm:sopl-in-revres}.
	In particular, we can argue via the same techniques that the ``junk'' clauses in $\mathcal{J}_i$ and the clauses in $\mathcal{I}_1 \setminus I_{1,1}$ each encode violations of the ``no proper source'' constraints of $\eopl$, and thus can be used to deduce weakenings of clauses in $F$ by querying the appropriate solution decision trees $g_{i,j}$.
	We omit the details.
\end{proof}

\section{Intersection Theorems}
\label{sec:intersection}

We can now finally prove \cref{thm:intersection-thm}, our intersection theorem for Reversible Resolution.
To prove the theorem we use the collapse theorems $\SOPL = \PLS \cap \PPADS$ and $\EOPL = \PLS \cap \PPAD$~\cite{GoosHJMPRT22-collapses}.
In particular, examining the proofs of the collapse theorems from \cite{GoosHJMPRT22-collapses}, we can extract the following black-box analogues.

\begin{theorem}
	Let $R \subseteq \B^n \times O$ be a total search problem, and suppose that there is a depth-$d_1$, $\sod_{s_1}$-formulation of $R$ and a depth-$d_2$, $\sol_{s_2}$-formulation of $R$.
	Then there is a depth $O(d)$ $\sopl_{s^3}$-formulation of $R$ where $d = \max \set{d_1, d_2}$ and $s = \max \set{s_1, s_2}$.
	\qed
\end{theorem}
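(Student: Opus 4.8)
The plan is to recover, in the decision-tree model and with the overheads made explicit, the collapse $\SOPL = \PLS \cap \PPADS$ of~\cite{GoosHJMPRT22-collapses}. Since $\sod$, $\sol$, and $\sopl$ are the canonical complete problems for $\PLS^{dt}$, $\PPADS^{dt}$, and $\SOPL^{dt}$, the statement is exactly the hard inclusion $\PLS^{dt}\cap\PPADS^{dt}\subseteq\SOPL^{dt}$ with the reduction's parameters tracked; I would not need the (easy) reverse inclusions, and I could not shortcut through the proof-system characterisations, since the intersection theorem is itself derived \emph{from} this statement. Concretely, I would take the reduction of~\cite{GoosHJMPRT22-collapses} that turns a search problem presented simultaneously as a $\sod$-instance and a $\sol$-instance into an equivalent $\sopl$-instance, and check that each of its steps replaces a pointer by a composition of a bounded number of the decision trees supplied by the two given formulations of $R$.

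First I would normalise. A depth-$d_i$ formulation is in particular a depth-$d$ formulation for $d = \max\set{d_1,d_2}$, so nothing is needed for the depth. For the size, I would pad the $[s_1]\times[s_1]$ grid of the $\sod$-formulation and the $[s_2]$ node set of the $\sol$-formulation with dummy inactive/isolated nodes (extending the pointer and solution trees trivially on the new coordinates), so that both formulations use the common parameter $s = \max\set{s_1,s_2}$; this costs only a constant factor and preserves correctness.

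Next I would assemble the combined instance. On an input $z$, the $\sod$-formulation exposes a fan-out-$1$ layered dag $G_{\sod}(z)$ on $[s]\times[s]$ with an active distinguished source and an all-inactive last row, while the $\sol$-formulation exposes a graph $G_{\sol}(z)$ on $[s]$ that is a disjoint union of directed paths with an out-unbalanced distinguished node. The only obstruction to $G_{\sod}(z)$ already being a $\sopl$-instance is that a grid node may have several predecessors, and the reduction of~\cite{GoosHJMPRT22-collapses} resolves this by a local gadget that uses the line structure of $G_{\sol}(z)$ to merge the incoming edges of each node into a single canonical one; the gadget works on a node set indexed by triples of the original coordinates, which is why the grid parameter becomes $s^3$ after absorbing constants, and the result is a genuine fan-in-$1$, fan-out-$1$ layered dag, i.e.\ a $\sopl$-instance. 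By the reduction's case analysis, every proper sink of this $\sopl$-instance corresponds either to a proper sink of $G_{\sod}(z)$ or to a solution of $G_{\sol}(z)$, and composing with the solution trees $g$ of the two formulations turns it into a solution of $R$. For the depth bound I would note that each pointer of the $\sopl$-instance is computed from a bounded number of evaluations of the depth-$d$ trees $s_u, p_u$ of the two formulations (reading off the local neighbourhood of a triple needs only a constant number of the original pointers), so the new pointer and solution trees have depth $O(d)$; together this gives the claimed depth-$O(d)$ $\sopl_{s^3}$-formulation of $R$.

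The hard part is not the parameter bookkeeping but faithfully porting the gadget of~\cite{GoosHJMPRT22-collapses} to $\TFNP^{dt}$: one has to verify that the extra bookkeeping coordinates of the triples can be advanced locally with only $O(d)$ queries, and that the gadget introduces no spurious sources or sinks, so that a solution of the combined instance always certifies a solution of $G_{\sod}(z)$ or $G_{\sol}(z)$. This is precisely the crux of the original collapse argument; the point I would emphasise is that that argument is already stated in terms of local gadget replacements, so it transfers to the query model essentially verbatim, and the bounds $O(d)$ and $s^3$ can simply be read off from it.
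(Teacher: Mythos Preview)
Your proposal is correct and matches the paper's treatment: the paper does not give its own proof of this statement but simply notes that ``examining the proofs of the collapse theorems from~\cite{GoosHJMPRT22-collapses}, we can extract the following black-box analogues,'' and marks the theorem with a \qed. Your plan to revisit the gadget reduction of~\cite{GoosHJMPRT22-collapses}, verify it is local enough to port to the decision-tree model, and read off the $O(d)$ and $s^3$ bounds is exactly what the paper is asking the reader to do.
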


\begin{theorem}
	Let $R \subseteq \B^n \times O$ be a total search problem, and suppose that there is a depth-$d_1$, $\sod_{s_1}$-formulation of $R$ and a depth-$d_2$, $\eol_{s_2}$-formulation of $R$.
	Then there is a depth $O(d)$ $\eopl_{s^3}$-formulation of $R$ where $d = \max \set{d_1, d_2}$ and $s = \max \set{s_1, s_2}$.
	\qed
\end{theorem}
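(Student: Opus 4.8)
The statement is the black-box incarnation of the collapse $\EOPL = \PLS \cap \PPAD$ of~\cite{GoosHJMPRT22-collapses}, so my plan is to reproduce that collapse construction and verify that every step of it is carried out by shallow decision trees, with the claimed parameters. Let me first note that one cannot shortcut through the preceding theorem: an $\eol$-formulation does yield a $\sol$-formulation (since $\PPAD^{dt}\subseteq\PPADS^{dt}$), and feeding that together with the $\sod$-formulation into the previous theorem produces a $\sopl$-formulation of $R$ — but $\EOPL\subsetneq\SOPL$, and a $\sopl$-formulation does \emph{not} give an $\eopl$-formulation. The ``proper source'' solutions of $\eopl$ (those it has beyond $\sopl$) must be matched against the ``proper source'' solutions of $\eol$ (those it has beyond $\sol$), so the $\eol$-formulation genuinely has to be used in its full strength.

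\emph{Construction.} Write $E(x)$ for the $\eol$-instance and $D(x)$ for the $\sod$-instance produced on input $x$, with solution-decoding trees $g^E_\bullet$, $g^D_\bullet$; all their successor, predecessor, and solution pointers are computed by depth-$d$ decision trees (where $d=\max\set{d_1,d_2}$). Regard $D(x)$ as living on the layered grid $[s]\times[s]$, so each of its nodes carries a \emph{layer} in $[s]$, and recall that following $\sod$-successors strictly increases the layer. I build an $\eopl$-instance on a grid $[s^3]\times[s^3]$ (the precise polynomial is immaterial) whose row coordinate encodes a triple $(\ell,\textrm{mode},\textrm{counter})$ with $\ell\in[s]$ the current layer and the remaining $O(s^2)$ bits devoted to bookkeeping. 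The distinguished path starting at $(1,1)$ simulates a \emph{combined walk}: it follows the $\eol$-line of $E(x)$ out of its distinguished source $v^*$; each time advancing along an $\eol$-edge $u\to u'$ would not strictly increase the layer associated with the current vertex, the walk instead inserts a ``potential-fixing'' detour that follows $\sod$-successor edges of $D(x)$ from the current layer upward until a strictly larger layer is reached, and only then resumes the $\eol$-walk. (This is precisely the mechanism the previous theorem uses to embed a non-layered line into a layered DAG; the fan-out-$1$ structure of $D(x)$ is what makes each detour a genuine path, and one passes through the sparse $\sopl$-style gadgeting to prevent merges coming from fan-in $>1$ in $D(x)$.) Predecessor pointers are defined symmetrically, using the $\eol$-predecessor pointers of $E(x)$ and the retraceability of $\sod$-detours. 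Each of the succ/pred/solution pointers of the $\eopl$-instance is obtained by composing $O(1)$ of the depth-$d$ trees of $E(x)$ and $D(x)$, so the whole reduction has depth $O(d)$ and size $s^3$ as claimed.

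\emph{Correctness.} One verifies that the resulting grid is a union of node-disjoint paths, each monotone in the row coordinate, hence a legal $\eopl$-instance, and that $(1,1)$ is active unless $v^*$ fails to be a proper source of $E(x)$, in which case $g^E_{v^*}$ already decodes an $R$-solution. Every $\eopl$-solution decodes: an active sink or proper sink occurs exactly when the combined walk gets stuck, which can happen only because the $\eol$-walk hit a sink or a degree-anomaly of $E(x)$ (an $\eol$-solution, hence an $R$-solution by the $\eol$-formulation), or because a $\sod$-detour hit a proper sink of $D(x)$ (a $\sod$-solution, hence an $R$-solution); the remaining possibility, the layer reaching $s$, is impossible since the grid has more than $s$ layers. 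A proper source other than $(1,1)$ occurs exactly where a fresh $\eol$-path begins, i.e.\ at a vertex of $E(x)$ that is a source different from $v^*$ or whose predecessor pointer is inconsistent — in all cases an $\eol$-solution, hence an $R$-solution; this is the step that consumes the extra solution type of $\eol$. In each case the decoding is performed by the appropriate $g^E_\bullet$ or $g^D_\bullet$, queried at the leaf of the relevant $\eopl$-pointer tree, and the variants of $\eopl$ with explicit activity bits are handled identically.

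\emph{Main obstacle.} The delicate point is the potential-fixing detour together with its bookkeeping: one must ensure the composed object has \emph{no spurious breaks}, i.e.\ that every new source and every new sink of the $\eopl$-line corresponds to an honest solution of $E(x)$ or $D(x)$, even though $D(x)$ has fan-in larger than $1$, so that a careless splicing of detours would create illegitimate in-degree-$2$ nodes. Managing this correctly while keeping the grid polynomial ($s^3$) and the depth $O(d)$ is exactly the technical core of the collapse construction of~\cite{GoosHJMPRT22-collapses}; the only new content here is the routine observation that that construction is a decision-tree reduction, so it transfers verbatim to the black-box model with the stated parameter dependence. (The companion theorem obtained by replacing $\eol$ with $\sol$ is the same argument with the proper-source bookkeeping deleted, yielding a $\sopl$-formulation instead.)
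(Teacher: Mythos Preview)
Your approach matches the paper's exactly: the paper does not prove this theorem at all --- it is stated with a bare \qed, the content being that ``examining the proofs of the collapse theorems from \cite{GoosHJMPRT22-collapses}, we can extract the following black-box analogues.'' Your proposal does precisely this (reproduce the collapse construction and observe it is a decision-tree reduction with the stated parameters), and in fact supplies more detail than the paper itself, which gives none.
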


\cref{thm:intersection-thm} is now an immediate corollary of the next theorem.
\begin{theorem}
	\label{thm:intersection-formal}
	Let $F$ be an unsatisfiable CNF formula.
	Let $d_1, d_2, s_1, s_2$ be positive integers and let $d = \max \set{d_1, d_2}$ and $s = \max \set{s_1, s_2}$.
	\begin{itemize}
		\item If there is a width-$d_1$, size-$s_1$ Resolution proof and a degree-$d_2$, size-$s_2$ unary Sherali--Adams proof of $F$ then there is width $O(d)$ and size $s^{O(1)}2^{O(d)}$ RevRes proof of $F$.
		\item If there is a width-$d_1$, size-$s_1$ Resolution proof and a degree-$d_2$, size-$s_2$ unary Nullstellensatz proof of $F$ then there is width $O(d)$ and size $s^{O(1)}2^{O(d)}$ RevResT proof of $F$.
	\end{itemize}
	In particular, $\textup{RevRes}(F) = \Theta(\textup{Res}(F) + \textup{uSA}(F))$ and $\textup{RevResT}(F) = \Theta(\textup{Res(F)} + \textup{uSA}(F))$.
\end{theorem}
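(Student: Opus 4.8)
The plan is to prove only the nontrivial inclusion --- that efficient proofs in Resolution and in unary Sherali--Adams together yield an efficient proof in Reversible Resolution --- since the matching lower bound $\textup{RevRes}(F) = \Omega(\textup{Res}(F) + \textup{uSA}(F))$ already follows from the $p$-simulations of RevRes by Res and by uSA recorded in \cref{sec:intro-proof}, and combining the two gives the stated $\Theta$. The engine of the argument is the dictionary between propositional proof systems and $\TFNP^{dt}$ classes built in \cref{sec:characterisations}, fed into the quantitative black-box collapse theorems stated just above. Crucially, the proof is forced to detour through total-search-problem land: as emphasised in \cref{sec:intro-tfnp}, there seems to be no way to transform a Resolution proof and a unary Sherali--Adams proof \emph{directly} into a Reversible Resolution proof.

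First I would translate the two hypothesised proofs into search-problem formulations of $S(F)$. From the width-$d_1$, size-$s_1$ Resolution refutation I obtain, via the characterisation $\PLS^{dt}(S(F)) = \Theta(\textup{Res}(F))$ of \cite{BussKT2014} --- whose easy direction simply turns the proof DAG into a sink-of-dag instance --- a $\sod_{O(s_1)}$-formulation of $S(F)$ of depth $O(d_1)$. From the degree-$d_2$, size-$s_2$ unary Sherali--Adams refutation I obtain, by the first bullet of \cref{thm:unarysa-ppads}, a $\sol_{O(s_2)}$-formulation of $S(F)$ of depth $O(d_2)$. Both conversions are size-polynomial; the exponential factors only appear at the very end.

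Next I would invoke the black-box collapse theorem $\SOPL = \PLS \cap \PPADS$ in the quantitative form extracted above: having both a $\sod_{O(s_1)}$-formulation and a $\sol_{O(s_2)}$-formulation of $S(F)$, each of depth $O(d)$ where $d = \max\{d_1,d_2\}$, produces a $\sopl_{s^3}$-formulation of $S(F)$ of depth $O(d)$ with $s = \max\{s_1, s_2\}$. Finally I would push this back into proof-land through the second bullet of \cref{thm:revres-sopl}: a depth-$O(d)$ $\sopl_{s^3}$-formulation of $S(F)$ yields a RevRes refutation of $F$ of width $O(d)$ and size $(s^3)^{O(1)} 2^{O(d)} = s^{O(1)} 2^{O(d)}$, as required. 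The RevResT statement is proved identically, replacing uSA by uNS (using the first bullet of \cref{thm:unaryns-ppad} to get an $\eol$-formulation), replacing the first collapse theorem by its analogue $\EOPL = \PLS \cap \PPAD$, and replacing the $\sopl$/RevRes characterisation by the $\eopl$/RevResT one.

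The part I expect to require the most care is the bookkeeping of parameters through this chain: checking that the collapse theorems of \cite{GoosHJMPRT22-collapses} really do relativise and can be stated with the polynomial size overhead $s \mapsto s^3$ and constant-factor depth overhead claimed above (this is granted by the two unnamed theorems preceding \cref{thm:intersection-formal}, but the write-up should point to where in \cite{GoosHJMPRT22-collapses} these bounds are visible), and that the easy directions of the three characterisation theorems are genuinely size-polynomial in $s$ rather than size-exponential in depth, so that the lone $2^{O(d)}$ factor enters exactly once, at the final $\sopl$-to-RevRes step. No new combinatorial ingredient is needed beyond these three characterisations and the two collapses.
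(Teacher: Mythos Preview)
Your proposal is correct and matches the paper's proof essentially step for step: translate the Resolution proof into a $\sod$-formulation (the paper cites \cite[Theorem 8.18]{Kamath2020} rather than \cite{BussKT2014}, but it is the same characterisation), translate the uSA proof into a $\sol$-formulation via \cref{thm:unarysa-ppads}, apply the black-box collapse $\SOPL=\PLS\cap\PPADS$ to obtain a $\sopl$-formulation, and finally convert back via \cref{thm:revres-sopl}; the RevResT case is handled analogously. Your parameter bookkeeping and identification of where the $2^{O(d)}$ factor enters are also accurate.
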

\begin{proof}
	Since RevRes can be efficiently simulated by both Resolution and unary Sherali--Adams we have $\textup{Res}(F) = O(\textup{RevRes}(F))$ and $\textup{uSA}(F) = O(\textup{RevRes}(F))$.
	For the converse direction, suppose that we have a width-$d_1$, size-$s_1$ Resolution proof and a degree-$d_2$, size-$s_2$ unary Sherali--Adams proof.
	By \cite[Theorem 8.18]{Kamath2020} there is a depth-$O(d_1)$ $\sod_{O(s_1)}$-formulation of $S(F)$ and by \cref{thm:unarysa-ppads}, there is a depth-$O(d_2)$ $\eol_{O(s_2)}$-formulation for $S(F)$.
	Applying the above collapse theorem, this implies that there is a depth-$O(d)$ $\sopl_{s^3}$-formulation of $S(F)$, where $d = \max \set{d_1, d_2}$ and $s = \max \set{s_1, s_2}$.
	Finally, applying \cref{thm:revres-sopl}, we obtain a RevRes proof of $F$ with width $O(d)$ and size $s^{O(1)}2^{O(d)}$.
	We therefore have
	\[
		\textup{RevRes}(F)
		= O(d + \log s)
		= O(d_1 + d_2 + \log s_1 + \log s_2)
		= O(\textup{Res}(F) + \textup{uSA}(F)). 
	\]
	A similar proof using \cref{thm:unaryns-ppad} instead yields the characterisation of $\textup{RevResT}$.
\end{proof}

\section{Two Further Separations}

In this section we prove \cref{thm:pls-ppp,thm:eopl-ueopl}, restated below.
\PLSPPP*
\EOPLUEOPL*

The proofs of these theorems rely on a ``glueing'' technique that was implicitly used in~\cite{Beame1998} and which we make more explicit in this paper. We use the glueing technique as a tool to alleviate the lack of good proof systems characterizing $\PPP$ and $\UEOPL$. In particular, The glueing technique reduces the separation in \cref{thm:pls-ppp} to the easier separation $\PLS^{dt}\not\subseteq\PPADS^{dt}$, which we already proved in \cref{cor:pls-ppads} and \cref{thm:eopl-ueopl} uses the glueing technique together with a query lower bound for~$\eopl$ from \cite{Hubacek2020}. This glueing technique was also recently generalized by Jain, Li, Robere and Xun~\cite{jain2024pigeonhole} to prove lower bounds for classes above \PPP corresponding to the generalized pigeonhole principles.

\subsection{Glueability}

Let $\textsc{R} = (R_n)$, $R_n\subseteq\{0,1\}^n\times O_n$, be a $\TFNP^{dt}$ problem. We consider \emph{partial assignments} $x\in\T^n$ that define partial inputs to $R_n$. An index $i$ with $x_i=*$ is interpreted as a boolean variable whose value is not yet assigned. The \emph{size} of a partial assignment is its number of non-$*$ bits. We say that two partial assignments $x, y \in \T^n$ are \emph{consistent} if $x$ and $y$ agree on their non-$*$ bits. If $x$ and $y$ are consistent, we can form the partial assignment $x\cup y$ that assigns values to all variables assigned values in $x$ or $y$. We further say that $x$ is \emph{witnessing} if there exists some solution $o \in O_n$ such that for any $y \in \B^n$ consistent with $x$ we have $o \in R_n(y)$.

\begin{definition}[Glueable sets of assignments]
A set of partial assignments $P\subseteq\{0,1,*\}^n$ is \emph{$k$-glueable} if for each non-witnessing and consistent $p, p' \in P$, their union $p \cup p'$ is non-witnessing, and moreover, if we restrict $R_n$ by the assignment $p\cup p'$, the resulting search problem $(R_n\upharpoonright p\cup p')$ has decision tree complexity greater than $k$.
\end{definition}

This and following definitions are mostly motivated by their use in \cref{lem:PPP_and_glueable_implies_PPADS} and \cref{lem:ueopl-fp}. For instance, in \cref{lem:PPP_and_glueable_implies_PPADS} we consider $P$ to be the set of all partial assignments obtained by collecting leaves pointing to a particular hole in the $\PPP^{dt}$-reduction. The main idea is that the glueability property of $P$ then allows to disambiguate between pigeons to find which (if any) is mapping to the particular hole.

\begin{definition}[Completions]
Let $x \in \T^n$ be a partial assignment and $T$ a decision tree over $\B^n$. The completion $C(T, x)$ of $x$ by $T$ is the set obtained by collecting all the partial assignments corresponding to leaves of $T$ that are consistent with $x$ and taking their union with $x$. That is, $C(T, x) \coloneqq \{x \cup p: \text{$p$ is a leaf of $T$ consistent with $x$}\}$.
\end{definition}

\begin{definition}[Glueable problem]
Let $f\colon \mathbb{N} \to \mathbb{N}$ be a function. We say $\textsc{R}$ is \emph{$f(k)$-glueable} if any set $P\subseteq\T^n$ of partial assignments of size at most $k$, where $k\leq \text{poly}(\log n)$, can be completed by decision trees of depth at most $f(k)$ such that the union of the completions is $k$-glueable. That is, if there exists for each $x \in P$, some decision tree $T_x$ such that $\cup_{x\in P} C(T_x, x)$ is $k$-glueable. We further say that $\textsc{R}$ is \emph{glueable} if it is $\text{poly}(k)$-glueable.
\end{definition}

For example, it is implicit in \cite[\S3.1]{Beame1998} that the $\PPA^{dt}$-complete problem $\lonely$ (given a matching of an odd number of nodes, find an isolated node) is $O(k)$-glueable. In the case of $\lonely$, if $x \in \T^n$ asserts that node $u$ points to node $v$, then $T_x$ queries the pointing node for $v$ so that a solution is immediately witnessed if $u$ is isolated. We will shortly prove that $\sod$ and $\eopl$ are glueable, too. In what follows, we slightly depart from the above notation and also consider pointer-like partial assignments (as opposed to assignments over $\T$ only). Those are treated naturally; for instance, we can assume that reductions are constrained to query either all or no bits corresponding to a pointer.

\subsection{$\PLS^{dt} \not\subseteq \PPP^{dt}$}
\label{sec:pls-ppp}

We introduce for convenience the $\reversiblepigeonlong$ problem, which is a variant of $\pigeon$ where a reverse pointer is provided for each hole.

\begin{description}
\item[$\reversiblepigeonlong$ ($\reversiblepigeon_n$).]
This problem is the same as $\pigeon$ except that we are also given \emph{reverse} pointers $p_u \in [n] \cup \{\nul\}$ for each hole $u \in [n-1]$. The goal is to output any solution of $\pigeon$ or 
\begin{enumerate}
	\item[$2.$] \label{rpigeon2} $u \in [n]$ such that $p_{s_u} \neq u$. \hfill \emph{(successor/predecessor mismatch)}
\end{enumerate}
\end{description}
This problem is known to be $\PPADS^{dt}$-complete (see, e.g., \cite[Lemma~1]{GoosHJMPRT22-collapses}) so that $\reversiblepigeon^{dt} = \PPADS^{dt}$. The following key lemma is implicit in~\cite[\S3.1]{Beame1998}.

\begin{lemma} \label{lem:PPP_and_glueable_implies_PPADS}
If $\textsc{R} \in \PPP^{dt}$ and $\textsc{R}$ is glueable, then $\textsc{R} \in \PPADS^{dt}$.
\end{lemma}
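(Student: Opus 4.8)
The plan is to turn a $\pigeon$-formulation of $\textsc{R}$ into a $\reversiblepigeon$-formulation; since $\reversiblepigeon^{dt}=\PPADS^{dt}$ (e.g.\ \cite[Lemma~1]{GoosHJMPRT22-collapses}), this shows $\textsc{R}\in\PPADS^{dt}$. So suppose $R_n$ has a depth-$d$ $\pigeon_N$-formulation with $d,\log N\le\poly(\log n)$: there are depth-$d$ decision trees $T_u$, $u\in[N]$, computing the hole $s_u(x)\in[N-1]$ of pigeon $u$, and depth-$d$ decision trees that convert any pigeon collision into a solution of $R_n$. What is missing for a $\reversiblepigeon$ instance is, for every hole $v$, a reverse pointer $p_v(x)\in[N]\cup\{\nul\}$ computable by a shallow decision tree, such that every successor/predecessor mismatch also yields an $R_n$-solution.

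First I would collect the set $P$ of all leaf partial assignments of all the trees $T_u$; each has size at most $d\le\poly(\log n)$. Applying $\poly(k)$-glueability of $\textsc{R}$ to $P$ produces, for each $p\in P$, a completion tree $\Theta_p$ of depth at most $\poly(d)=\poly(\log n)$ such that the union $Q\coloneqq\bigcup_{p\in P}C(\Theta_p,p)$ of all completions is $d$-glueable. I would then build the new $\reversiblepigeon$ instance over nodes indexed by pairs (pigeon $u$, partial assignment $q\in Q$ extending a leaf of $T_u$), refining both the pigeons and the holes by the ``context'' $q$. The forward pointers are inherited from the $\pigeon$-formulation (run $T_u$, then $\Theta_p$ on the leaf $p$ reached, producing a context $q$ and its hole $v$), while the reverse pointer of a refined hole $(v,q)$ points back, within the same context, to the unique pigeon-context assigned to it by the combinatorics of the construction — a value that depends only on the index $(v,q)$, so the reverse pointer only has to check that $q$ is consistent with $x$ (depth $|q|\le\poly(\log n)$) and otherwise outputs $\nul$. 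The glueing property is exactly what makes this legitimate: if two non-witnessing assignments in $Q$ are mutually consistent their union is still non-witnessing, so no spurious mismatch is created; and the clause ``$(R_n\restriction p\cup p')$ has decision tree complexity $>d$'' keeps the restricted instances non-degenerate, so that the refined instance retains the pigeonhole imbalance and remains total.

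For correctness I would check the two solution types of $\reversiblepigeon$. A collision in the new instance projects to two pigeon-contexts in the same hole, hence a collision of the original $\pigeon$-formulation, from which the inherited solution trees read off an $R_n$-solution after $O(d)$ further queries. A successor/predecessor mismatch means the local view of some pigeon $u$ and the local view recorded in a reverse pointer disagree although both are consistent with the actual input $x$; by construction such a disagreement can only occur when the union of the two relevant assignments from $Q$ is witnessing, and $d$-glueability guarantees this is the only possibility, so a short decision tree extracts the corresponding $R_n$-solution. All decision trees involved have depth $\poly(\log n)$ and the instance has size $N\cdot 2^{\poly(\log n)}=n^{\poly(\log n)}$, so this is a valid reduction witnessing $\textsc{R}\in\reversiblepigeon^{dt}=\PPADS^{dt}$.

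The main obstacle is the locality constraint on the reverse pointers: the naive pointer ``the pigeon currently occupying hole $v$'' requires inspecting all $N$ pigeons. The whole point of glueability is to trade this global search for a bounded-depth computation by pre-committing, via the completion trees $\Theta_p$, to enough of the input that the occupant of each refined hole is already determined by the node index, while the glueing axioms certify that these local commitments are globally consistent and that any inconsistency that does surface must originate from a genuine solution of $R_n$. Getting the bookkeeping of the refined node set, the handling of $\nul$ pointers and the distinguished pigeon $N$, and the exact match to the $\reversiblepigeon$ solution conditions right is the part that needs care; this is the ``glueing technique'' implicit in \cite[\S3.1]{Beame1998}.
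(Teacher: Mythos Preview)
Your high-level plan---convert the $\pigeon$-formulation into a $\reversiblepigeon$-formulation by manufacturing reverse pointers---is exactly right, and matches the paper. But the implementation you sketch diverges from the paper and has a real gap.

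\textbf{The paper does not refine the instance.} It keeps the \emph{same} pigeons $[m]$ and holes $[m-1]$; glueability is used only to replace each $T_i$ by a completed tree $T'_i$ (so that the union of all leaves is $k$-glueable), and then to build a reverse-pointer decision tree $H_j$ for each hole $j$ in the original instance. Your proposal to blow up pigeons and holes into pairs $(u,q)$ and $(v,q)$ creates two problems you do not address: the pigeonhole imbalance $N$ vs.\ $N-1$ need not survive the blow-up, and your claim that the reverse pointer of $(v,q)$ ``depends only on the index'' is ill-defined, since several pigeon-contexts can map to the same refined hole (that is precisely what a collision is).

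\textbf{The key step you are missing.} After completion, write $P_j$ for the set of \emph{non-witnessing} leaves (across all $T'_i$) labelled with hole $j$. The paper proves that any two distinct elements of $P_j$ are \emph{pairwise inconsistent}. Here is where the clause ``$(R_n\!\upharpoonright p\cup p')$ has decision-tree complexity $>k$'' is actually used: if $p\in\leaves(T'_i)$ and $p'\in\leaves(T'_{i'})$ are consistent, non-witnessing, and both labelled $j$, then on every input extending $p\cup p'$ the pair $(i,i')$ is a $\pigeon$ solution, so the depth-$k$ solution tree $S_{i,i'}$ solves $R_n\!\upharpoonright p\cup p'$---contradicting $k$-glueability. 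Your reading of that clause (``keeps the restricted instances non-degenerate, so the refined instance remains total'') is not what it is for.

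\textbf{How the reverse pointer is built.} Given pairwise inconsistency of $P_j$, at most one element of $P_j$ is consistent with the true input $x$. The tree $H_j$ finds it by repeatedly picking some $p\in P_j$ still consistent with what has been queried, querying all of $p$'s variables, and recursing; each round either terminates or eliminates at least one candidate, so depth $\le d^2$. If no $p\in P_j$ survives, output $\nul$; otherwise output the unique pigeon $i$ whose tree $T'_i$ contains the surviving leaf. A successor/predecessor mismatch ($T'_i(z)=j$ but $H_j(z)\neq i$) then forces the leaf of $T'_i$ reached on $z$ to be \emph{witnessing}, so rerunning $T'_i$ exposes an $R_n$-solution.

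In short: drop the refinement; the crucial lemma is the pairwise-inconsistency claim above, and the ``complexity $>k$'' part of glueability is exactly the hammer that proves it.
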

\begin{proof}
Fix a $\pigeon_m$-formulation $(f_i, g_{i,i'})_{i,i' \in [m]}$ of $R_n$ that witnesses $\textsc{R} \in \PPP^{dt}$ and let $(T_i, S_{i,i'})$ be decision trees of depth $k = \text{poly}(\log n)$ implementing this reduction. Since $\textsc{R}$ is glueable, it is possible to complete the root-to-leaf paths of each $T_i$ to get a reduction $(T_i', S_{i,i'})$ of depth $d = \text{poly}(\log n)$ for which the set $P = \cup_{i \in [m]} \leaves(T'_i)$ is $k$-glueable. (Note that each $S_{i,i'}$ remains unchanged and has depth at most $k$.) We show how to construct decision trees $(H_j)_{j \in [m]}$ of depth $\leq d^2$ that compute reverse pointers for each hole of the $\pigeon_m$ instance. We start with the following claim.
\begin{claim}\label{clm:inconsistent} 
Suppose $p\in\leaves(T'_i)$ and $p'\in\leaves(T'_{i'})$ are distinct leaves that are both non-witnessing and labelled with the same hole. Then $p$ and $p'$ are inconsistent.
\end{claim}
\begin{proof}
If $i=i'$, then the claim is true since any two distinct leaves of the same tree are inconsistent. Suppose $i\neq i'$ and suppose for contradiction that $p$ and $p'$ are consistent. Then, $(i, i')$ is a valid solution to the $\pigeon_m$ instance $(T'_1(z), T'_2(z), \dots, T'_m(z))$ for any $z \in \B^n$ extending $p \cup p'$. By correctness of the reduction, this further implies that $S_{i, i'}$ can solve $(R_n\upharpoonright p \cup p')$ with at most $k$ queries---but this contradicts the fact that $P$ is $k$-glueable.
\end{proof}
Let us write $P_j \subseteq P$ for the set of all non-witnessing partial assignment corresponding to leaves labelled with hole $j$. The predecessor tree $H_j$ computes as follows. Pick an arbitrary leaf~$p \in P_j$ and query all the variables contained in $p$. At every leaf $x \in \T^n$ of the current version of $H_j$, the next step depends on the set of $x$-consistent assignments $P_j^x = \{p \in P_j: p \text{ consistent with } x\}$.
\begin{enumerate}
    \item If $|P_j^x| = 0$, then output label $\nul$.
    \item If $|P_j^x| = 1$, then output the \emph{unique} $i \in [m]$ (by \cref{clm:inconsistent}) such that $P^x_j \cap \text{leaves}(T'_i) \neq \emptyset$.
    \item If $|P_j^x| \geq 2$, pick an arbitrary $p \in P_j^x$ and recurse by  querying its variables, etc.
\end{enumerate}
Note that each predecessor tree $H_j$ has depth at most $d^2$: by pairwise inconsistency of $P_j$, at most $d$ paths are queried each of depth at most $d$. To complete the $\reversiblepigeon_m$-formulation of $R_n$, it remains to specify decision trees $(S_i)_{i \in [m]}$ that transform $\reversiblepigeon_m$-solutions of type~(\hyperref[rpigeon2]{2}) into $R_n$-solutions. Indeed, suppose $T'_i(z)=j$ but $H_j(z)\neq i$ for some input $z$ to $R_n$. Then, since $H_j$ decides unambiguously which non-witnessing assignment in $P_j$ is consistent with $z$ (if any), it must be the case that the leaf outputting $T'_i(z)=j$ is not in $P_j$, which means that it is witnessing. Thus, $S_i(z)$ simply runs $T'_i(z)$ and an $R_n$-solution must be witnessed during its execution.
\end{proof}

We note that the method used to disambiguate pigeons in \cref{lem:PPP_and_glueable_implies_PPADS} is common. For instance, it is key to prove the folklore certificate-to-query result $\queryD(f) \leq \queryNP^1(f) \cdot \queryNP^0(f)$ for boolean functions $f$. To show \cref{thm:pls-ppp}, the last missing piece is to show that $\sod$ is glueable. Indeed, if $\sod \in \PPP^{dt}$, then \cref{lem:PPP_and_glueable_implies_PPADS} would imply that $\PLS^{dt} \subseteq \PPADS^{dt}$, which contradicts \cref{cor:pls-ppads}. We show that $\sod$ is glueable in \cref{lem:sod_glueable} below.

For technical convenience, we consider here a minor variation of how we encode the successor pointers in the input to $\sod$. We let the input consist of successor pointers $s_u \in [n]$ for each grid node $u \in [n] \times [n]$ as well as an ``active'' bit $a_u \in \B$, where $a_u=0$ means that $u$ has a $\nul$ pointer. This is merely a different way to encode $\nul$ successors, and indeed, there is a trivial reduction to and from the original $\sod$ problem. The advantage of this new encoding is that it allows for querying the activity of a node \emph{without} querying its successor. This simplifies the completion process in the proof below.

\begin{lemma}
$\sod$ is glueable.
\label{lem:sod_glueable}
\end{lemma}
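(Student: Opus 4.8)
The plan is to verify the two conditions in the definition of glueability for $\sod$: given any family $P\subseteq\T^{n'}$ of partial assignments each of size at most $k\leq\poly(\log n)$, I must complete each one by a decision tree of depth $\poly(k)$ so that (i) the union of the completions contains only assignments that are either witnessing or pairwise ``well-behaved'', and (ii) whenever two non-witnessing completed assignments $p,p'$ are consistent, their union $p\cup p'$ is again non-witnessing and the restricted problem $(\sod_n\upharpoonright p\cup p')$ still has decision-tree complexity $>k$.

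\textbf{The completion procedure.} For a single partial assignment $x$ of size at most $k$, I would first make it \emph{node-aligned}: for every grid node $u$ whose activity bit $a_u$ or some bit of its successor pointer $s_u$ has been read, query all $O(\log n)$ bits associated with $u$ (activity bit plus successor pointer). This multiplies the depth by an $O(\log n)$ factor. Next, I would \emph{trace the computation paths}: for each active node $u$ that has been touched, query its successor $s_u=v$ and then query all bits of $v$ (its activity and successor), and continue following the chain of successors; dually, for each touched node, look at the other nodes in $P$ that might feed into it. The key structural point about $\sod$ is that the successor pointers describe a fan-out-$1$ dag between consecutive rows, so each touched active node sits on at most one forward path, and this path has length at most $n$. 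I would truncate: a completed assignment records, for each of the original $\le k$ touched nodes, the entire directed path emanating from it until it either reaches row $n$ or reaches an inactive node (a proper sink — but then the assignment is \emph{witnessing} and we stop) or reaches a node already recorded. Since each path has length $\le n$ and each node costs $O(\log n)$ queries, but $n$ is too large — so instead I would only extend paths by $\poly(k)$ steps and, if a path has not terminated, leave its frontier node recorded with its successor pointer but nothing beyond. The effect is that a non-witnessing completed assignment describes a bounded collection of ``path fragments'' none of which ends at a revealed proper sink, none of which revealed the inactive distinguished source, and none of which revealed an active node on row $n$.

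\textbf{Checking glueability of the union.} Suppose $p,p'$ are non-witnessing completed assignments that are consistent. First, $p\cup p'$ is non-witnessing: a witness would have to be one of (\ref{sod1})–(\ref{sod3}); type (\ref{sod1}) and (\ref{sod2}) depend on the fixed distinguished node / fixed last row, whose relevant bits would already be revealed in $p$ or $p'$ individually, contradicting non-witnessing; a type-(\ref{sod3}) witness requires revealing an active node $u$ together with its successor $s_u=v$ and $v$'s activity bit showing $v$ inactive — but by the node-alignment and path-tracing in the completion, once $u$ and $s_u$ are revealed in $p\cup p'$, the bit $a_v$ is also revealed (it lies on the traced path fragment from $u$, or $v$ is the frontier node whose successor was recorded meaning $v$ is active), so if it were inactive the assignment already containing $u$ would be witnessing. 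Second, $(\sod_n\upharpoonright p\cup p')$ still has high decision-tree complexity: since $p\cup p'$ has size $\poly(k)\le\poly(\log n)\ll n^2\log n$, the vast majority of the grid is untouched, and one can plant a fresh hard $\sod$ instance (e.g.\ a long path with a hidden sink, or reuse the lower-bound structure underlying $\PLS^{dt}(\sod)=\Theta(\mathrm{Res}(\sod))$) in the untouched region, so that finding its solution requires far more than $k$ queries. Concretely, the restricted problem is still a bona fide $\sod$-type search problem on a near-full grid, and the standard adversary/random-path argument gives complexity $\Omega(n)\gg k$.

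\textbf{Main obstacle.} The delicate point is the path-tracing: I must complete each assignment so that whenever two completions are glued, no \emph{new} proper sink is exposed that wasn't already exposed (and hence witnessed) in one of them. This requires that the completion of $x$ reveals, for every active node it touches, enough of the forward structure that the gluing cannot ``surprise'' us — yet the forward path can be as long as $n$, which is too expensive to follow fully in depth $\poly(k)$. The resolution (which I expect is the real content of the lemma) is that it suffices to follow each path just far enough to guarantee node-alignment and the ``curious'' property at the frontier — i.e.\ the frontier node always has its activity bit known — so that $a_v=0$ is never a hidden consequence of gluing; a node becomes a proper sink only via its \emph{predecessor} on a traced path, and that predecessor's successor pointer is exactly what makes the activity query fire. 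Getting the bookkeeping right so that the completion depth stays $\poly(k)$ while this invariant is maintained is where I would spend the most care; the high-complexity-after-restriction condition is comparatively routine given the known $\sod$ lower bounds.
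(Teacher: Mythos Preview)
Your non-witnessing argument is on the right track: node-alignment together with the ``curious'' property (whenever $s_u$ is revealed, also reveal $a_{s_u}$) is exactly what is needed, and indeed the paper uses the same idea. With the charitable reading that the frontier's successor pointer is \emph{not} recorded (so the curious property holds everywhere), the glue of two non-witnessing completions is non-witnessing, by the case analysis you sketch.

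The genuine gap is in the high-complexity part. Your completion allows a non-witnessing assignment to have an active frontier node arbitrarily close to row $n$. Concretely: suppose the original $x$ touches an active node $u$ on row $n-L-1$, where $L$ is your tracing depth, and the $L$ traced successors are all active. Then the completion is non-witnessing and has an active frontier on row $n-1$. Now any decision tree for the restricted problem can simply query the $O(\log n)$ bits of $s_{\text{frontier}}$ to learn $v$ on row $n$, then query $a_v$; either answer yields a solution. Hence the restricted problem has complexity $O(\log n)$, which is $<k$ whenever $k=\omega(\log n)$, so the $k$-glueability condition fails. Your sentence ``plant a fresh hard $\sod$ instance in the untouched region'' does not rescue this: the algorithm for the restricted problem knows $p\cup p'$ and is free to ignore the untouched region and go straight to the frontier.

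The paper's fix is a small but decisive trick you are missing: rather than tracing all paths, first check whether the \emph{original} assignment $x$ touches any active node in the last $k+1$ rows. If so, follow that node's successor path; it reaches row $n$ in at most $k+1$ steps, so the completion becomes witnessing at cost $O(k)$. If not, just apply the curious property once (no iteration). This guarantees that every non-witnessing completion has all its active touched nodes on rows $\le n-k-1$, so after gluing, the adversary has a buffer of $k+1$ rows in which it can keep extending paths with fresh active nodes without ever reaching row $n$ in $k$ queries. Your path-tracing-with-truncation does not achieve this buffer, and no choice of $\poly(k)$ truncation length can: tracing $L$ steps from row $n-L-1$ always lands on row $n-1$.
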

\begin{proof}
We show that $\sod_n$ is $O(k)$-glueable. Fix some partial $\sod_n$-assignment $x = (s, a)$ of size $k = \text{poly}(\log n)$, that is, $s_u \in [n] \cup \{*\}$ and $a_u \in \T$ for each grid node $u \in [n] \times [n]$. The decision tree $T$ completing $x$ starts by checking whether $x$ queries any active node below row $n-k-1$. If yes, $T$ picks any one such active node and follows the successor path until a sink is found, making the completion witnessing. Note that this step incurs at most $O(k)$ queries. Finally, $T$ ensures that any successor query in $x$ is followed by a query to the active bit of the successor. This costs at most $O(k)$ further queries.

Let $P$ be an arbitrary set of partial assignments each of size at most $k = \text{poly}(\log n)$ and let $P'$ be its completion with respect to the procedure defined above. We first show that $P'$ is $k$-gluable. Pick any two non-witnessing and consistent $p, p' \in P'$ and suppose toward contradiction that their union $p \cup p'$ is witnessing. If it reveals a $\sod$ solution $u$ of type (\hyperref[sod1]{1}) or (\hyperref[sod2]{2}), then it must be that one of $p$ and $p'$ checks for the active bit of $u$: a contradiction with the fact that $p$ and $p'$ are non-witnessing. On the other hand, if $p \cup p'$ reveals a solution $u$ of type (\hyperref[sod3]{3}), then it must be that one of $p$ and $p'$ checks for the successor $s_u$ of $u$, but the completion $T$ forces this check to be followed by a query to the active bit of $s_u$, making one of the initial partial assignments witnessing as well. Hence $p\cup p'$ is non-witnessing.

We finally argue that $(R_n\upharpoonright p\cup p')$ has query complexity greater than $k$ by describing an adversary that can fool any further $k$ queries to $p \cup p'$ without witnessing a solution. Recall that $p \cup p'$ makes no queries to nodes below row $n-k-1$. The adversary answers queries as follows. If the successor pointer of an active node is queried, then we answer with a pointer to any unqueried node on the next row and make it active (there always exists one as $k \ll n$). If a node $u$ is queried that is not the successor of any node, we make $u$ inactive ($a_u = 0$ and $s_u$ is arbitrary). This scheme ensures that a solution can only lie on the very last row $n$, which is not reachable in $k$ queries starting from row $n - k - 1$.
\end{proof}

\subsection{$\EOPL^{dt} \not\subseteq \UEOPL^{dt}$}

We prove \cref{thm:eopl-ueopl} using a similar plan as in \cref{sec:pls-ppp} above. Namely, we first show (\cref{lem:ueopl-fp}) that if we have a problem $\textsc{R} \in \UEOPL^{dt}$ that is glueable, then in fact $\textsc{R} \in \FP^{dt}$, that is, $R_n$ admits a shallow decision tree solving it. Second, we show (\cref{lem:eopl-glue}) that $\eopl_n$ is glueable. The combination of these two lemmas implies that if $\EOPL^{dt}\subseteq\UEOPL^{dt}$, then $\EOPL^{dt}=\FP^{dt}$. But it is known from prior work~\cite{Hubacek2020} (building on~\cite{Aldous1983, Zhang2009}) that $\EOPL^{dt}\neq \FP^{dt}$. This proves \cref{thm:eopl-ueopl}.

It remains to prove \cref{lem:ueopl-fp,lem:eopl-glue}.

\begin{lemma}
\label{lem:ueopl-fp}
If $\textsc{R} \in \UEOPL^{dt}$ and $\textsc{R}$ is glueable, then $\textsc{R} \in \FP^{dt}$.
\end{lemma}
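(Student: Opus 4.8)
The plan is to follow the template of \cref{lem:PPP_and_glueable_implies_PPADS}, but to push it one step further: the \emph{uniqueness} of the line in $\ueopl$ (the ``two parallel lines'' solution type) lets us extract an honest shallow decision tree for $R_n$, not merely a cheaper formulation. So I would first fix a $\ueopl_m$-formulation of $R_n$ witnessing $\textsc{R}\in\UEOPL^{dt}$, with all reduction trees of depth $k_0=\poly(\log n)$. For each grid node $(i,j)$ I would form the combined decision tree $T_{i,j}$ that runs the successor tree of $(i,j)$ and then (for $i<m$) the predecessor tree of the resulting successor, labelling each leaf ``active'' or ``inactive'' according to whether $(i,j)$ is active in $f(z)$; this has depth $O(k_0)$, and crucially ``$(i,j)$ is active'' is now a single-leaf predicate. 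Then I would invoke glueability on the set of all leaves of the $T_{i,j}$: for a parameter $k=\poly(\log n)$ at least as large as the depth of every $T_{i,j}$ and of every solution-translation tree $S_o$, glueability produces completed trees $\widehat T_{i,j}$ of depth $d=\poly(\log n)$ whose leaf-set $P$ is $k$-glueable.

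The engine of the argument is the same observation used in \cref{lem:PPP_and_glueable_implies_PPADS}: \emph{no non-witnessing leaf of $P$, and no consistent pair of non-witnessing leaves of $P$, can force a $\ueopl_m$-solution of $f$}. Indeed, if a partial assignment $w=p$ (or $w=p\cup p'$ with $p,p'\in P$ non-witnessing and consistent) were such that some fixed $\ueopl_m$-solution $o$ is valid for $f(z)$ on every $z\supseteq w$, then $S_o$ would solve $(R_n\!\upharpoonright w)$ within $\le k$ queries, contradicting $k$-glueability. Specialising $o$ to the ``two parallel lines'' type yields the key combinatorial consequence: for every row $c$, the non-witnessing ``active'' leaves of $\widehat T_{c,1},\dots,\widehat T_{c,m}$ are pairwise inconsistent across distinct columns, since a consistent pair of them would certify two active nodes on row $c$.

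Next I would build, for each row $c$, a depth-$\poly(\log n)$ decision tree $H_c$ that on input $z$ returns either a valid $\textsc{R}$-solution of $z$, the column of an active node on row $c$ of $f(z)$, or the symbol ``none''. The construction mirrors the predecessor trees $H_j$ in \cref{lem:PPP_and_glueable_implies_PPADS}: repeatedly pick a candidate ``active'' leaf among the $\widehat T_{c,j}$ and query its variables; if the outcome is consistent with the candidate, pairwise inconsistency forces it to be the unique surviving candidate (and we additionally test whether the leaf is witnessing, in which case $S$-translating gives an $\textsc{R}$-solution); if the outcome is inconsistent, candidates are eliminated, and pairwise inconsistency bounds the number of rounds, keeping the depth $\poly(\log n)$. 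The shallow decision tree for $R_n$ then binary-searches over the $m=n^{\poly(\log n)}$ rows (so $\log m=\poly(\log n)$ rounds) to locate the last row $c^\ast$ at which the line of $f(z)$ beginning at $(1,1)$ is still active; $H_1$ takes care of the degenerate cases ``$(1,1)$ inactive'' and ``proper source on row $1$''. Once $c^\ast$ is found, one more query reveals the successor of $v_{c^\ast}=(c^\ast,j^\ast)$, and we translate to an $\textsc{R}$-solution via the appropriate $S_o$: either $v_{c^\ast}$ is an active sink (if $c^\ast=m$) or its successor is a proper sink. If any $H_c$ ever returns an $\textsc{R}$-solution directly, we stop and output it. Altogether the tree has depth $\poly(\log n)$, so $\textsc{R}\in\FP^{dt}$.

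I expect the last step to be the main obstacle. The formulation need not produce genuine $\ueopl_m$-instances satisfying the uniqueness promise, so $f(z)$ may carry several disjoint active line-segments, and then $c\mapsto H_c(z)$ need not be ``monotone'' (return a column exactly on a downward-closed set of rows), as a naive binary search would require. The fix is that any extra segment must contain a ``proper source'' (type 4) or create ``two parallel lines'' (type 5)—solutions that the trees $H_c$ are built to detect—so a slightly more elaborate, still $O(\log m)$-depth, search over the rows always terminates at a genuine $\ueopl_m$-solution. Making this case analysis precise, together with pinning down the exact round-count bound that certifies each $H_c$ is shallow, is the technical heart of the proof.
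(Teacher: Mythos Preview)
Your proposal is correct and shares the paper's key ideas: complete the activity-computing trees via glueability, use the ``two parallel lines'' solution type to show each row has at most one \emph{good} node (active and non-witnessing), build shallow per-row detectors, then binary-search over rows. The one real difference is the target of the binary search. You aim to locate the end of the specific line through $(1,1)$, which creates the non-monotonicity obstacle you flag in your last paragraph; the paper sidesteps this by searching for \emph{any} row $j$ such that row $j$ has a good node and row $j{+}1$ has none. Row $1$ is good (or $T'_{(1,1)}$ is witnessing and we are done at once), and row $m$ has no good node (an active last-row node is an ``active sink'' solution, so its leaf is witnessing), so ordinary bisection maintains the invariant ``left endpoint good, right endpoint not good'' and finds a transition in $O(\log m)$ rounds regardless of how many line segments $f(z)$ contains. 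At that transition, letting $u$ be the good node on row $j$ and $u'$ its successor, either $u'$ is inactive (a proper-sink $\ueopl$ solution, translated via the $S$-trees) or $u'$ is active but not good, hence $T'_{u'}(z)$ is witnessing and yields an $R_n$-solution directly. No case analysis over segments is needed, so the ``main obstacle'' you anticipate dissolves.
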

\begin{proof}
Fix an $\ueopl_m$-formulation $(f_u, g_{u,u'})_{u,u' \in [m] \times [m]}$ of $R_n$ that witnesses $\textsc{R} \in \UEOPL^{dt}$ and let $(T_u, S_{u,u'})$ be decision trees of depth $k = \text{poly}(\log n)$ implementing this reduction. Note that the leaves of each $T_u$ are labelled by a successor and predecessor pointers in $[m]$. At the cost of doubling the depth of each $T_u$, we may assume that each leaf is additionally labelled with an ``activity'' bit, which can be computed by appending to each leaf labelled with successor $v$ the decision tree $T_v$. Since $\textsc{R}$ is glueable, it is possible to further complete the leaves of each $T_u$ to get a reduction $(T_u', S_{u,u'})$ of depth $d = \text{poly}(\log n)$ for which the set of leaves $P = \cup_{u \in [m] \times [m]} \leaves(T'_u)$ is $k$-glueable and each leaf label carries the aforementioned activity bit. Let us say that a node $u$ is \emph{good} for input $z$ if the leaf reached by $T_u'(z)$ is non-witnessing and $u$ is active.
\begin{claim}
For every input $z$, there is at most one good node on each row.
\end{claim}
\begin{proof}
Fix a row $j \in [m]$ and suppose for the sake of contradiction that the $j$-th row contains two good nodes $u'$ and $u$ on some input $z$. Let $p\in\leaves(T'_u)$ and $p'\in\leaves(T'_{u'})$ be the leaves reached on input $z$. Then $p$ and $p'$ are a pair of non-witnessing and consistent assignments. Thus, $(u,u')$ is a solution to $\ueopl_m$ on any input that extends $p\cup p'$. Hence the depth-$k$ decision tree $S_{u,u'}$ solves the search problem $(R_n\upharpoonright p\cup p')$. But this contradicts the fact that $P$ is $k$-glueable.
\end{proof}
Using this claim similarly as in the proof of \cref{lem:PPP_and_glueable_implies_PPADS}, we can construct, for each row $j \in [m]$, a decision tree $A_j$ of depth $\leq d^2$ that computes the column-index of a good node on row $j$ or outputs $\nul$ if the row contains no good node. The main argument is again the disambiguation trick.

We can now design an efficient decision tree for $R_n$: At the cost of running $O(\log m) = \text{poly}(\log n)$ of the $A_j$ trees, perform a binary search over the $m$ rows to find a good node $u$ on row $j$ such that the next row $j+1$ contains no good nodes. This means that either (i) the successor of $u$ is inactive, in which case we have found a solution to $\ueopl_m$ and we can use the $S$-trees to find a solution to~$R_n$, or (ii) the successor $u'$ of $u$ is active and $T'_{u'}(z)$ is witnessing, which solves $R_n$.
\end{proof}

We next show that an $\EOPL^{dt}$-complete problem is glueable. Instead of working with $\eopl_n$, it is convenient again to vary the input encoding. We define $\eopl^*$ as a version of $\eopl$ where in addition to successor/predecessor pointers, we are also given an ``activity'' bit.

\begin{description}
\item[$\eoplLong^*$ ($\eopl^*_n$).]
In addition to predecessor/successor pointers, each $u\in[n]\times[n]$ has an activity indicator bit $a_u \in \B$. We add the following solutions to $\eopl$:
	\begin{enumerate}
		\item[$5$.] \label{it:eopl5}
		$u$, if $u$'s activity does not match $a_u$. \hfill \emph{(active bit mismatch)}
		\item[$6$.] \label{it:eopl6}
		$u$, if $a_v=0$ and ($s_v\neq \nul$ or $p_v\neq \nul$).
		\hfill \emph{(inactive node with a pointer)}
		\item[$7$.] \label{it:eopl7}
		$u$, if $a_v=1$ and ($s_v= \nul$ or $p_v= \nul$).
		\hfill \emph{(active node with a $\nul$-pointer)}
	\end{enumerate}
\end{description}
Note that $\eopl^*$ is efficiently reducible to and from $\eopl$, so that $\eopl^*$ is $\EOPL^{dt}$-complete.

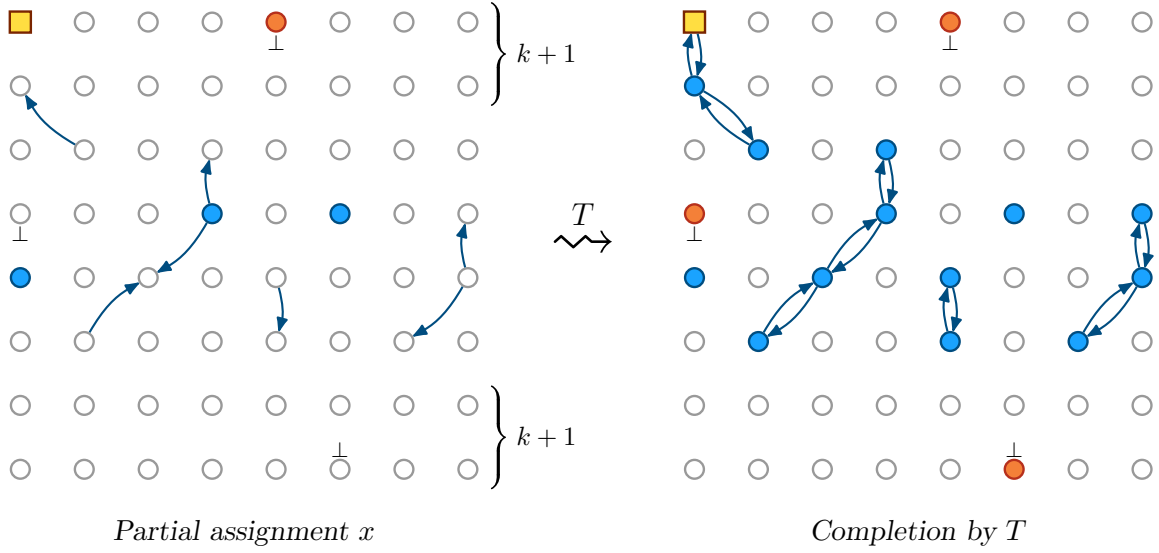
\begin{figure}[t!]
\centering
\begin{tikzpicture}[y=-1cm, scale=0.85]

% node drawing

\tikzstyle{node} = [circle, line width=0.9pt, draw = black!40!white, fill = white, inner sep = 0mm, minimum size = 2.5mm]
\tikzstyle{solution} = [fill=color_gadget_EOPL!90!,draw=black!50!color_gadget_EOPL]
\tikzstyle{node_a} = [node, side, rectangle, minimum size = 2.8mm]
\tikzstyle{node_solution} = [node, solution]

\tikzstyle{node_inact} = [node, fill=Orange,draw=BrickRed]

\tikzstyle{edge} = [-{Latex[round]}, line width=0.8pt, black!50!color_gadget_EOPL]

\tikzstyle{mybend} = [bend left=15]

\tikzstyle{b} = [label={[below,yshift=-5]\scriptsize$\bot$}]
\tikzstyle{bb} = [label={[below,yshift=10]\scriptsize$\bot$}]

{\Huge
\node at (8.8,3.5) {$\leadsto$};}
{\large
\node at (8.8,3) {$T$};}

\draw [very thick,decorate,
    decoration = {calligraphic brace,amplitude=5pt,raise=9pt}]
    (7,-.3) -- (7,1.3);
\node[right] at (7.6,0.5) {\small $k+1$};
\draw [very thick,decorate,
    decoration = {calligraphic brace,amplitude=5pt,raise=9pt}]
    (7,5.7) -- (7,7.3);
\node[right] at (7.6,6.5) {\small $k+1$};

\begin{scope}[xshift=300]

\node[node_a]       (P00) at (0,0) {};
\node[node] (P01) at (1,0) {};
\node[node] (P02) at (2,0) {};
\node[node] (P03) at (3,0) {};
\node[node_inact,b] (P04) at (4,0) {};
\node[node] (P05) at (5,0) {};
\node[node] (P06) at (6,0) {};
\node[node] (P07) at (7,0) {};

\node[node_solution]       (P10) at (0,1) {};
\node[node] (P11) at (1,1) {};
\node[node] (P12) at (2,1) {};
\node[node] (P13) at (3,1) {};
\node[node] (P14) at (4,1) {};
\node[node] (P15) at (5,1) {};
\node[node] (P16) at (6,1) {};
\node[node] (P17) at (7,1) {};

\node[node]       (P20) at (0,2) {};
\node[node_solution] (P21) at (1,2) {};
\node[node] (P22) at (2,2) {};
\node[node_solution] (P23) at (3,2) {};
\node[node] (P24) at (4,2) {};
\node[node] (P25) at (5,2) {};
\node[node] (P26) at (6,2) {};
\node[node] (P27) at (7,2) {};

\node[node_inact,b]       (P30) at (0,3) {};
\node[node] (P31) at (1,3) {};
\node[node] (P32) at (2,3) {};
\node[node_solution] (P33) at (3,3) {};
\node[node] (P34) at (4,3) {};
\node[node_solution] (P35) at (5,3) {};
\node[node] (P36) at (6,3) {};
\node[node_solution] (P37) at (7,3) {};

\node[node_solution] (P40) at (0,4) {};
\node[node] (P41) at (1,4) {};
\node[node_solution] (P42) at (2,4) {};
\node[node] (P43) at (3,4) {};
\node[node_solution] (P44) at (4,4) {};
\node[node] (P45) at (5,4) {};
\node[node] (P46) at (6,4) {};
\node[node_solution] (P47) at (7,4) {};

\node[node]       (P50) at (0,5) {};
\node[node_solution] (P51) at (1,5) {};
\node[node] (P52) at (2,5) {};
\node[node] (P53) at (3,5) {};
\node[node_solution] (P54) at (4,5) {};
\node[node] (P55) at (5,5) {};
\node[node_solution] (P56) at (6,5) {};
\node[node] (P57) at (7,5) {};

\node[node]       (P60) at (0,6) {};
\node[node] (P61) at (1,6) {};
\node[node] (P62) at (2,6) {};
\node[node] (P63) at (3,6) {};
\node[node] (P64) at (4,6) {};
\node[node] (P65) at (5,6) {};
\node[node] (P66) at (6,6) {};
\node[node] (P67) at (7,6) {};

\node[node] (P70) at (0,7) {};
\node[node] (P71) at (1,7) {};
\node[node] (P72) at (2,7) {};
\node[node] (P73) at (3,7) {};
\node[node] (P74) at (4,7) {};
\node[node_inact,bb] (P75) at (5,7) {};
\node[node] (P76) at (6,7) {};
\node[node] (P77) at (7,7) {};

\node at (3.5,8) {\slshape Completion by $T$};

% % edges drawing

\draw[edge] (P00) edge [mybend] (P10);
\draw[edge] (P10) edge [mybend] (P21);
\draw[edge] (P23) edge [mybend] (P33);
\draw[edge] (P33) edge [mybend] (P42);
\draw[edge] (P42) edge [mybend] (P51);
\draw[edge] (P44) edge [mybend] (P54);
\draw[edge] (P37) edge [mybend] (P47);
\draw[edge] (P47) edge [mybend] (P56);

\draw[edge] (P10) edge [mybend] (P00);
\draw[edge] (P21) edge [mybend] (P10);
\draw[edge] (P33) edge [mybend] (P23);
\draw[edge] (P42) edge [mybend] (P33);
\draw[edge] (P51) edge [mybend] (P42);
\draw[edge] (P54) edge [mybend] (P44);
\draw[edge] (P47) edge [mybend] (P37);
\draw[edge] (P56) edge [mybend] (P47);

\end{scope}

\begin{scope}

\node[node_a]       (P00) at (0,0) {};
\node[node] (P01) at (1,0) {};
\node[node] (P02) at (2,0) {};
\node[node] (P03) at (3,0) {};
\node[node_inact,b] (P04) at (4,0) {};
\node[node] (P05) at (5,0) {};
\node[node] (P06) at (6,0) {};
\node[node] (P07) at (7,0) {};

\node[node] (P10) at (0,1) {};
\node[node] (P11) at (1,1) {};
\node[node] (P12) at (2,1) {};
\node[node] (P13) at (3,1) {};
\node[node] (P14) at (4,1) {};
\node[node] (P15) at (5,1) {};
\node[node] (P16) at (6,1) {};
\node[node] (P17) at (7,1) {};

\node[node] (P20) at (0,2) {};
\node[node] (P21) at (1,2) {};
\node[node] (P22) at (2,2) {};
\node[node] (P23) at (3,2) {};
\node[node] (P24) at (4,2) {};
\node[node] (P25) at (5,2) {};
\node[node] (P26) at (6,2) {};
\node[node] (P27) at (7,2) {};

\node[node,b]       (P30) at (0,3) {};
\node[node] (P31) at (1,3) {};
\node[node] (P32) at (2,3) {};
\node[node_solution] (P33) at (3,3) {};
\node[node] (P34) at (4,3) {};
\node[node_solution] (P35) at (5,3) {};
\node[node] (P36) at (6,3) {};
\node[node] (P37) at (7,3) {};

\node[node_solution] (P40) at (0,4) {};
\node[node] (P41) at (1,4) {};
\node[node] (P42) at (2,4) {};
\node[node] (P43) at (3,4) {};
\node[node] (P44) at (4,4) {};
\node[node] (P45) at (5,4) {};
\node[node] (P46) at (6,4) {};
\node[node] (P47) at (7,4) {};

\node[node]       (P50) at (0,5) {};
\node[node] (P51) at (1,5) {};
\node[node] (P52) at (2,5) {};
\node[node] (P53) at (3,5) {};
\node[node] (P54) at (4,5) {};
\node[node] (P55) at (5,5) {};
\node[node] (P56) at (6,5) {};
\node[node] (P57) at (7,5) {};

\node[node]       (P60) at (0,6) {};
\node[node] (P61) at (1,6) {};
\node[node] (P62) at (2,6) {};
\node[node] (P63) at (3,6) {};
\node[node] (P64) at (4,6) {};
\node[node] (P65) at (5,6) {};
\node[node] (P66) at (6,6) {};
\node[node] (P67) at (7,6) {};

\node[node] (P70) at (0,7) {};
\node[node] (P71) at (1,7) {};
\node[node] (P72) at (2,7) {};
\node[node] (P73) at (3,7) {};
\node[node] (P74) at (4,7) {};
\node[node,bb] (P75) at (5,7) {};
\node[node] (P76) at (6,7) {};
\node[node] (P77) at (7,7) {};

\node at (3.5,8) {\slshape Partial assignment $x$};

% % edges drawing

\draw[edge] (P21) edge [mybend] (P10);

\draw[edge] (P33) edge [mybend] (P42);
\draw[edge] (P44) edge [mybend] (P54);
\draw[edge] (P47) edge [mybend] (P56);

\draw[edge] (P33) edge [mybend] (P23);
\draw[edge] (P51) edge [mybend] (P42);
\draw[edge] (P47) edge [mybend] (P37);

\end{scope}

\end{tikzpicture}
\vspace{2mm}
\caption{Example of a non-witnessing completion. A node is blue if $a_u=1$, orange if $a_u=0$, and white if~$a_u=*$ is not queried. The symbol $\bot$ indicates a $\nul$ pointer. The first and last $k+1$ rows contain no active nodes, besides those lying on the path starting at the distinguished node $(1,1)$.}
\label{fig:glueable}
\end{figure}

\begin{lemma} \label{lem:eopl-glue}
$\eopl^*$ is glueable.
\end{lemma}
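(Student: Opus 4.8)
The plan is to follow the proof of \cref{lem:sod_glueable} that $\sod$ is glueable, adapting it to the two features of $\eopl^*$ absent from $\sod$: predecessor pointers, which let a search algorithm probe a revealed active chunk from \emph{either} end, and the \emph{proper source} solutions (types 4--7) alongside proper sinks. As there, I would prove the stronger statement that $\eopl^*_n$ is $O(k^2)$-glueable by exhibiting, for a given partial assignment $x$ of size $k=\poly(\log n)$, a completion tree $T$ (depending on $x$) of depth $O(k^2)$. Besides confirming the values already in $x$, the tree $T$ does two things. First, \emph{local resolution}: whenever $T$ queries any variable of a grid node $u=(i,j)$, it immediately queries all of $a_u,s_u,p_u$ together with the matching pointers $p_{(i+1,s_u)}$ and $s_{(i-1,p_u)}$ of the two neighbours named by $s_u$ and $p_u$ --- but it does \emph{not} recurse further. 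This is $O(1)$ extra queries per touched node, and one checks that these at most five variables are exactly what is needed to determine $u$'s true activity and whether $u$ is a solution of any of the types 1--7. Second, \emph{boundary traces}: whenever $T$ discovers that an active node $u$ lies within $k+1$ rows of the top (resp.\ bottom), it traces the predecessor (resp.\ successor) pointers toward that boundary; since the trace runs for at most $k$ steps, it ends either at the distinguished node $(1,1)$ --- leaving the completion non-witnessing, with the whole traced chain on the forward orbit of $(1,1)$ --- or it uncovers a proper source, an active sink on row $n$, a proper sink, or an activity mismatch, making the completion witnessing. As $x$ touches at most $k$ nodes, $T$ issues at most $O(k)$ boundary traces, each of depth $\le k$, for a total depth $O(k^2)=\poly(k)$.

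It remains to show that $P'\coloneqq\bigcup_x C(T,x)$ is $k$-glueable. Let $p,p'\in P'$ be non-witnessing and consistent. For the first requirement --- that $p\cup p'$ is non-witnessing --- I would argue by solution type: every solution of $\eopl^*$, of any type, is determined by the at-most-five local variables that the local-resolution step attaches to the solution node $u$, so a solution revealed by $p\cup p'$ would already be revealed by whichever of $p,p'$ first mentions a variable of $u$, contradicting non-witnessingness. (In particular the ``$u$ inactive'' part of a proper-sink witness lies on $u$'s outgoing edge $(s_u,p_{u^+})$ and the ``$u$ has an active predecessor'' part on its incoming edge $(p_u,s_{u^-})$; resolving both together is what prevents a proper sink from being split between $p$ and $p'$.) The pairing of the $s_u$-query with its matching $p_{(i+1,s_u)}$-query also guarantees that $p\cup p'$ reveals no active node that $p$ or $p'$ does not individually reveal as active; combining this with the boundary traces and with the consistency of $p,p'$, one gets: $p\cup p'$ has no active node in the bottom $k+1$ rows, and its active nodes in the top $k+1$ rows all lie on a single directed path from $(1,1)$.

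For the second requirement --- that $(R_n\upharpoonright p\cup p')$ has decision-tree complexity greater than $k$ --- I would describe an adversary surviving any $k$ queries. It maintains a single downward line starting at $(1,1)$ (which absorbs the top-boundary active chain forced by $p\cup p'$), together with the disconnected active chunks forced by $p\cup p'$, treated as interior fragments of further disjoint downward lines; every other node is inactive with $\nul$ pointers, and all activity bits equal the true activity. Since every forced active node lies in rows $1,\dots,n-k-1$ and every one outside the top $k+1$ rows lies in rows $k+2,\dots,n-k-1$, reaching a solution requires chasing a path more than $k$ rows toward row $n$ or toward row $1$; on each query the adversary advances the relevant line by one row to a fresh column (possible since $k\ll n$) or answers $\nul$, so no solution surfaces within $k$ queries. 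This establishes that $P'$ is $k$-glueable, hence that $\eopl^*$ is glueable. Combined with \cref{lem:ueopl-fp} and the known separation $\EOPL^{dt}\neq\FP^{dt}$~\cite{Hubacek2020}, this yields \cref{thm:eopl-ueopl}.

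The step I expect to be the main obstacle is calibrating the local-resolution rule so that the completion is simultaneously correct and of bounded depth: resolving a touched node must cover \emph{both} of its edges (to prevent a proper sink or source from being witnessed only by the union $p\cup p'$), yet it must \emph{not} propagate along the line, since an uncontrolled cascade of resolutions would run $\Omega(n)$ rows deep. Getting this balance --- together with the bookkeeping ensuring that the boundary traces, which are the only legitimately deep part of $T$, stay confined to the $k+1$ boundary rows --- is where the real work lies; once it is in place, the gluing analysis and the adversary argument are close analogues of those in \cref{lem:sod_glueable}.
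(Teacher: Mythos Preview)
Your approach is essentially the paper's: your local-resolution step corresponds to the paper's iterated rule of always querying activity bits and reverse pointers, and your boundary traces correspond to its steps of following the distinguished path and tracing any early/late active chains toward the appropriate boundary. The paper obtains $O(k)$-glueability rather than your $O(k^2)$ (since at most one boundary trace can be non-witnessing before a solution is found), but both bounds are $\poly(k)$, so the difference is immaterial.
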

\begin{proof}
We show that $\eopl^*$ is $O(k)$-glueable. Fix some partial $\eopl^*$-assignment $x = (p, s, a)$ of size $k = \text{poly}(\log n)$. The tree $T$ that completes $x$ proceeds as follows. We start by querying all variables assigned in $x$. Then we iterate each of the following steps until a solution is found or no further queries are made.
\begin{enumerate}[itemsep=0.5em]
    \item \emph{Always query activity bits and reverse pointers.}
    If we have queried a $\nul$-pointer $s_v=\nul$ or $p_v=\nul$, then we also query the activity bit $a_v$. This activity bit is $a_v=0$ unless we have found a solution of type (\hyperref[it:eopl6]{7}).
    
    Moreover, if we have queried a non-$\nul$ pointer $s_u = v$ (resp.~$p_u=v$), then we also query the bits $a_u$, $a_v$ and the pointer $p_v$ (resp.~$s_v$). Note that both activity bits must be~$1$ and the reverse pointer must point back, $p_v=u$ (resp.~$s_v=u$), as otherwise we can find a solution by making a couple more queries. Indeed, if $a_u=0$, then there is a solution of type~(\hyperref[it:eopl6]{6}). If~$a_u=1$ and $a_v=0$, then we can find a solution by determining the activity of $v$: either $v$ is active, which is a mismatch with $a_v=0$ (type (\hyperref[it:eopl5]{5})), or $v$ is inactive, which creates a sink. Finally, if~$a_u=1$ and $p_v\neq u$, then $u$ is inactive, which is a mismatch with $a_u=1$ (type (\hyperref[it:eopl5]{5})).
    
    \item \label{it:path} \emph{Follow the distinguished path.}
    Follow the successor path starting at the distinguished source node $(1, 1)$ until some node on row $k + 1$ is reached or a sink is found.
    \item \emph{Follow early paths.}
    If we have queried $a_u=1$ for some node $u$ in the first $k+1$ rows that does not lie on the path discovered in \cref{it:path}, then we follow $u$'s predecessor path until a solution is found.
    \item \emph{Follow late paths.}
    If we have queried $a_u=1$ for some node $u$ in the last $k+1$ rows, then we follow $u$'s successor path until a solution is found.
\end{enumerate}
This completion adds at most $O(k)$ queries to $x$. An example of a completion that is non-witnessing is given in \cref{fig:glueable}. It is straightforward to argue that the resulting set of completed assignments is $k$-glueable using an adversary strategy similar to the one described in the proof of \cref{lem:sod_glueable}.
\end{proof}

\appendix

\section{Appendix: Coefficient Size in Algebraic Proofs} \label{sec:coeff-ub}

In this appendix we show that if there are low-degree Nullstellensatz and Sherali--Adams refutations over $\mathbb{Z}$, then the coefficients in the refutations will also be not too large in magnitude.
In particular, if the degree of the proofs are $d$, the the magnitude of the coefficients can be assumed to be at most~$\exp(n^{O(d)})$ without loss of generality.
For Sherali--Adams this follows easily as any Sherali--Adams refutation over the reals can be converted into a Sherali--Adams refutation over $\mathbb{Z}$ without badly affecting the coefficient size.

\begin{theorem}
	Let $F$ be an unsatisfiable CNF formula on $n$ variables and $m$ clauses.
	If there is a degree-$d$ Sherali--Adams refutation of $F$ then there is a degree-$d$ Sherali--Adams refutation of $F$ over $\Z$ where every coefficient is bounded in magnitude by $\exp(n^{O(d)})$.
\end{theorem}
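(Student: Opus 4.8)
The plan is to express the existence of a degree-$d$ Sherali--Adams refutation as the feasibility of a rational linear program in standard form and then invoke the standard fact that a feasible LP with small-bit-complexity data has a vertex of small bit complexity.

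First I would set up the LP. Working multilinearly (modulo $x_i^2-x_i=0$, which costs only a constant factor in degree), a degree-$d$ SA refutation of $F=C_1\wedge\cdots\wedge C_m$ is a choice of multilinear polynomials $p_i$ of appropriate degree together with nonnegative reals $\alpha_t$, one for each conjunction $t$ of at most $d$ literals, such that the identity $\sum_i p_i\overline{C}_i = 1+\sum_t \alpha_t t$ holds. Both sides are multilinear of degree $\le d$, so equating the coefficients of the $D\coloneqq\sum_{k\le d}\binom{n}{k}=n^{O(d)}$ monomials of degree $\le d$ converts the identity into $D$ linear equations in the (at most $mD$) coefficients of the $p_i$ and the $D'\coloneqq\sum_{k\le d}\binom{n}{k}2^k=n^{O(d)}$ variables $\alpha_t$. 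The key observation is that the multilinearisation of $p_i\overline{C}_i$, viewed as a linear function of the coefficients of $p_i$, and the multilinearisation of each $t$, have all coefficients in $\{-1,0,1\}$: collecting terms in the expansion, the coefficient of a given monomial is an alternating sum $\sum_{S:\,A\subseteq S\subseteq B}(-1)^{|S|}$ over an interval of the subset lattice, which is $\pm1$ if $A=B$ and $0$ otherwise. Splitting each free $p_i$-coefficient into a difference of two nonnegative variables, the set of refutations becomes a nonempty standard-form polyhedron $\{z\ge 0 : Mz=b\}$ with $M\in\{-1,0,1\}^{D\times N}$, $N\le 2mD+D'$, and $b\in\{0,1\}^D$.

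Next I would extract a small solution. A nonempty standard-form polyhedron is pointed and hence has a vertex $z^*$, obtained by selecting $\operatorname{rank}(M)\le D$ linearly independent columns, zeroing the remaining coordinates, and solving the resulting invertible square subsystem $M_B z_B = b$. By Cramer's rule each coordinate of $z^*$ is a ratio of determinants of submatrices of $[M\mid b]$, and by Hadamard's inequality any determinant of a $\le D\times D$ matrix with entries in $\{-1,0,1\}$ has magnitude at most $D^{D/2}=\exp(n^{O(d)})$. Thus every coefficient of the corresponding SA refutation is a rational of magnitude $\exp(n^{O(d)})$ whose denominator divides the basis determinant $\det(M_B)$, itself a positive integer of magnitude $\exp(n^{O(d)})$. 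Clearing this common denominator turns the identity into $\sum_i p_i\overline{C}_i = \det(M_B)+\sum_t \alpha_t t$ with all coefficients integers of magnitude $\exp(n^{O(d)})$ and $\sum_t\alpha_t t$ still a conical junta; this is the desired Sherali--Adams refutation over $\Z$ (and if one insists on constant term exactly $1$ one simply keeps the rational refutation, whose coefficients already satisfy the bound).

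The argument is essentially routine once set up, and I expect the only delicate points to be bookkeeping rather than genuine obstacles: (i) the polyhedron of refutations is \emph{not} pointed before one splits the free $p_i$-coefficients into nonnegative parts, since the map $(p_i)\mapsto\sum_i p_i\overline{C}_i$ has a large kernel, so one must pass to standard form to have vertices to appeal to; and (ii) clearing denominators replaces the constant $1$ by the positive integer $\det(M_B)$, so the cleanest integer statement carries an arbitrary positive constant on the right-hand side. Finally, one should note that $m$ --- though possibly as large as $3^n$ --- enters only as extra columns of $M$, affecting neither the number of rows $D$ nor the entry bound $1$, and hence does not disturb the $\exp(n^{O(d)})$ estimate.
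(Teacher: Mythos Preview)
Your approach is essentially the same as the paper's: both encode the existence of a degree-$d$ SA refutation as a feasible standard-form LP with $\{-1,0,1\}$ data and $n^{O(d)}$ rows, extract a rational vertex of bit-length $n^{O(d)}$, and clear denominators. The one small gap is at the very end: your integer identity has constant term $\det(M_B)\ge 1$ rather than $1$, and your proposed fallback (``keep the rational refutation'') is not over~$\Z$; the paper's fix---which you should adopt---is simply to absorb the excess $\det(M_B)-1\ge 0$ into the conical junta, since a nonnegative constant is itself a conical junta term, yielding a genuine integer SA refutation with constant~$1$.
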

\begin{proof}
	This is essentially the usual proof of completeness for Sherali--Adams (see e.g.~\cite{Fleming2019}).
	Consider a degree-$d$ Sherali--Adams refutation of $F$ which, by \cref{lem:sa-normal-form}, we can write as \[ \sum_{i=1}^m -J_i\overline C_i + J = -1.\]
	We can express the existence of such a proof as a system of integer linear inequalities of the form $Ax = b, x \geq 0$ over $mn^{O(d)} = n^{O(d)}$ variables and over $n^{O(d)}$ constraints where all coefficients of the matrix $A$ and $b$ are in $\set{1,0,-1}$, and indeed $b$ has a single non-zero entry with value $-1$ (see \cite[Chapter 2]{Fleming2019} for an explicit description of the system).
	By known results on linear programming this implies that the coefficients of the above Sherali--Adams refutation can be assumed to be rational with description length $n^{O(d)}$.
	Let $L$ be the least common multiple of the denominators all rational numbers occurring in the refutation.
	By multiplying through by $L$ we obtain the identity \[ \sum_{i=1}^m - LJ_i \overline C_i + LJ = -L.\]
	We can then add the integer $L - 1$ to both sides (noting that $LJ + L-1$ is a conical junta) to obtain an integer-coefficient Sherali--Adams refutation with the desired coefficient bound.
\end{proof}

For Nullstellensatz the proof is slightly different as we need to recruit known bounds for integer solutions to systems of linear equations.
\begin{theorem}
	Let $F$ be an unsatisfiable CNF formula on $n$ variables and $m$ clauses.
	If there is a degree-$d$ Nullstellensatz refutation of $F$ over $\Z$, then there is a degree-$d$ Nullstellensatz refutation over $\Z$ where every coefficient has magnitude at most $\exp({n^{O(d)}})$.
\end{theorem}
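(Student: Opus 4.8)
\end{theorem}
\begin{proof}
The plan is to re-express the existence of a degree-$d$ integer Nullstellensatz refutation of $F$ with small coefficients as the existence of a small integral solution to a system of linear Diophantine equations, and then to invoke a standard estimate on the size of such solutions.

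\textbf{Setting up the linear system.} As remarked before \cref{lem:ppad-unary-ns}, we may assume the refutation is multilinearised, that is, we work modulo the equations $x_j^2 - x_j = 0$; by \cite{Buss1998} this costs only a constant factor in the degree. Write $F = C_1 \wedge \cdots \wedge C_m$ and let $\overline C_i$ denote the standard polynomial encoding of $\neg C_i$, a product of $|C_i|$ linear factors. In any degree-$d$ refutation $\sum_i p_i \overline C_i = 1$ we must have $p_i = 0$ whenever $|C_i| > d$, and we may merge clauses that coincide; hence at most $m' \le (2n+1)^d = n^{O(d)}$ distinct clauses are relevant, and each relevant $p_i$ is multilinear of degree at most $d$. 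Introduce one rational unknown $c_{i,T}$ for the coefficient of the monomial $\prod_{j \in T} x_j$ in $p_i$, ranging over relevant $i$ and over $T$ with $|T| \le d - |C_i|$; there are $N = m' \cdot n^{O(d)} = n^{O(d)}$ unknowns. Expanding $\sum_i p_i \overline C_i$, reducing modulo $x_j^2 - x_j$, and equating the coefficient of each multilinear monomial $\prod_{j \in S} x_j$ with $|S| \le d$ to $[S = \emptyset]$ gives a linear system $Ac = b$ with $N$ unknowns and $n^{O(d)}$ equations. A short inclusion--exclusion computation shows that every entry of $A$ and of $b$ lies in $\{-1,0,1\}$: the contribution of a single unknown $c_{i,T}$ to a single equation is an alternating sum $\sum_U (-1)^{|U|}$ over an interval of the Boolean lattice, which vanishes unless that interval is a singleton.

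\textbf{Bounding the integral solution.} The hypothesised $\Z$-refutation is exactly an integral solution of $Ac = b$, so this system is consistent over $\Z$. By a standard bound on integral solutions of consistent linear Diophantine systems --- proved via Hermite normal form, or via a minimal-support exchange argument, and morally a Siegel's-lemma estimate --- there is an integral solution $c^*$ with $\|c^*\|_\infty \le (N+1)\,\Delta$, where $\Delta$ is the largest absolute value of the determinant of a square submatrix of the augmented matrix $[A \mid b]$. That matrix has $n^{O(d)}$ rows and columns with entries in $\{-1,0,1\}$, so Hadamard's inequality gives $\Delta \le N^{N/2} = \exp(O(N \log N)) = \exp(n^{O(d)} \cdot O(d \log n)) = \exp(n^{O(d)})$. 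Reading the polynomials $p_i$ off from $c^*$ yields a degree-$d$ Nullstellensatz refutation of $F$ over $\Z$ in which every coefficient has magnitude $\exp(n^{O(d)})$, proving the theorem.

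\textbf{The main point.} Unlike the Sherali--Adams case, one cannot take a rational refutation, clear denominators with their least common multiple, and repair the right-hand side by adding a positive integer absorbed into a conical junta --- the Nullstellensatz identity $\sum_i p_i \overline C_i = 1$ is rigid. The argument must therefore control an \emph{integral} solution of the exact linear system, which is why we invoke a Diophantine estimate rather than the weaker polyhedral bound that sufficed for SA. The only step that requires genuine care is verifying that the coefficient matrix of this system is small; this is precisely where multilinearisation is used, and the inclusion--exclusion cancellation above is what keeps the entries in $\{-1,0,1\}$ (any fixed constant bound would suffice for the final estimate).
\end{proof}
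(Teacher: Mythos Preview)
Your proof is correct and follows essentially the same approach as the paper's: encode the existence of a degree-$d$ integer Nullstellensatz refutation as an integer linear system $Ac=b$ with $n^{O(d)}$ variables and equations and $\{-1,0,1\}$ entries, then invoke Hermite normal form to extract an integral solution of size $\exp(n^{O(d)})$. Your write-up is in fact more detailed than the paper's sketch---you justify the $\{-1,0,1\}$ entries via the inclusion--exclusion collapse, bound the number of relevant clauses by $n^{O(d)}$ explicitly, and spell out the Hadamard step---but the underlying argument is the same.
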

\begin{proof}
	This follows the standard proof of completeness for Nullstellensatz proofs (see e.g.~\cite{Pitassi1996, Buss1998}).
	Write $F = C_1 \land \cdots \land C_m$ and suppose $F$ has $n$ variables.
	A degree-$d$ $\Z$-Nullstellensatz proof of $F$ can be written as \[ \sum_{i=1}^m q_i\overline C_i = 1\]
	for some integer-coefficient multilinear polynomials $q_i$.
	We can express the existence of such a proof as a system of $\mathbb{Z}$-linear equations $Ax = b$ over $mn^{O(d)}$ variables --- roughly one variable for each monomial $m$ of degree at most $d$ --- where each coefficient in $A$ and $b$ is small.
	The result then follows by the known strongly-polynomial time algorithms for finding integer solutions to systems of linear equations over $\Z$ (in particular, via the \emph{Hermite Normal Form} \cite{Kannan1979}).

	The system of linear equations is defined as follows.
	For each $i \in [m]$ and $S \subseteq [n]$ with $|S| \leq d$ we let $\hat q_i(S) \in \Z$ denote the coefficient of the monomial $x_S = \prod_{i \in S} x_i$ in the polynomial $q_i$.	
	Letting $C_{n, d}$ denote all subsets of $[n]$ of size $d$, we can write the Nullstellensatz refutation as \[ \sum_{i=1}^m \sum_{S \in C_{n,d}} \hat q_i(S) x_S \overline C_i = 1.\]
	From this, we get a system of $\mathbb{Z}$-linear equations over variables $\hat q_i(S)$ for each $i \in [m]$, $S \in C_{n,d}$ enforcing that all monomials in the proof of degree $d > 1$ must cancel out to $0$, and the monomials of degree $0$ must sum to $1$.
	The system of equations has one constraint for each monomial $x_S$ with $S \in C_{n,d}$ and at most $m |C_{n,d}| \leq mn^{O(d)}$ variables; each coefficient in the system of linear equations is $\pm 1$ from the expansion of $x_S \overline C_i$ into a sum of monomials.
	By reducing to Hermite Normal Form we can find an integer solution to this system with coefficients of size at most $\exp(n^{O(d)})$.
\end{proof}

Finally, we can consider RevRes proofs.
An obvious fact is that any Resolution proof with width $w$ has $n^{O(w)}$ distinct clauses without loss of generality.
However, this result fails for RevRes, since we can no longer reuse clauses an unlimited number of times.
By combining the previous results with the intersection theorem (\cref{thm:intersection-formal}), one can also immediately deduce the following result that gives a weak bound on the size of RevRes and RevResT proofs with bounded width.
We omit the proof.
\begin{corollary}
	Let $F$ be an unsatisfiable CNF formula on $n$ variables.
	If there is a width-$d$ RevRes refutation of $F$ (RevResT resp.) then there is a width-$O(d)$ and size $\exp(n^{O(d)})$ RevRes refutation of $F$ (RevResT resp.).
	\qed
\end{corollary}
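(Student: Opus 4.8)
The plan is to route everything through the intersection theorem \cref{thm:intersection-formal}. Given a width-$d$ RevRes refutation of $F$ (over $n$ variables) of arbitrary, possibly astronomical, size, I will extract two ingredients: (a)~a width-$O(d)$, size-$n^{O(d)}$ Resolution refutation of $F$; and (b)~a degree-$O(d)$, size-$\exp(n^{O(d)})$ unary Sherali--Adams refutation of $F$. Feeding these into the first bullet of \cref{thm:intersection-formal} with $d_1=d_2=O(d)$, $s_1=n^{O(d)}$ and $s_2=\exp(n^{O(d)})$ (hence $s=\exp(n^{O(d)})$) yields a RevRes refutation of width $O(d)$ and size $s^{O(1)}2^{O(d)}=\exp(n^{O(d)})$, which is exactly the claim. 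The RevResT statement is proved identically, replacing ``unary Sherali--Adams'' by ``unary Nullstellensatz'' and using the second bullet of \cref{thm:intersection-formal}.

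Ingredient (a) is the easy part. Forgetting the bookkeeping about which clauses are ``consumed'' at each step, a width-$d$ RevRes (or RevResT) refutation is simply a Resolution-with-weakening derivation of $\bot$ from the clauses of $F$, all of whose clauses have width at most $d$; eliminating weakening steps does not increase the width, and a width-$w$ Resolution refutation may always be assumed to have at most $n^{O(w)}$ distinct clauses. So $F$ has a width-$O(d)$, size-$n^{O(d)}$ Resolution refutation.

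Ingredient (b) has two steps, and the first is the only subtle point: getting an algebraic refutation whose \emph{degree} is governed by the \emph{width} of the RevRes proof alone and not by its size. For this I use the polynomial translation underlying MaxSAT resolution. Writing each clause $C$ as the polynomial $\overline C$ representing $\neg C$, both reversible rules $C \vdash C\vee x_i,\ C\vee\overline x_i$ and their inverse correspond to the trivial identity $\overline C = \overline{C\vee x_i}+\overline{C\vee\overline x_i}$, so the sum of the clause-polynomials over a configuration is invariant along the proof. Comparing the first configuration (a nonnegative integer combination $\sum_i w_i\overline{C_i}$ of input clauses) with the last one (which contains $\bot$, contributing the constant $1$, together with some further clauses $E_k$) yields a polynomial identity $\sum_i w_i\overline{C_i} = 1 + \sum_k \overline{E_k}$ over $\Z$. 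Each $\overline{E_k}$ is a conjunction of at most $d$ literals, so $\sum_k\overline{E_k}$ is a conical junta of degree $\le d$ and this is a degree-$O(d)$ Sherali--Adams refutation over $\Z$, of uncontrolled size but that is irrelevant. (Equivalently, for the RevRes case one may invoke the standard Dantchev-style simulation of width-$w$ Resolution by degree-$(w+1)$ Sherali--Adams.) In the RevResT case the ``with terminals'' condition forces each $E_k$ to be a weakening $C_{i_k}\vee D_k$ of an input clause, so $\overline{E_k}=\overline{C_{i_k}}\cdot\overline{D_k}$ can be folded into the multiplier of $\overline{C_{i_k}}$, leaving an honest degree-$O(d)$ Nullstellensatz refutation over $\Z$. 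The second step is to invoke the appendix theorems on coefficient size: a degree-$O(d)$ Sherali--Adams refutation (resp.\ $\Z$-Nullstellensatz refutation) can be replaced by one over $\Z$ all of whose coefficients have magnitude at most $\exp(n^{O(d)})$. Since such a refutation has at most $n^{O(d)}$ monomials in each of its multipliers (of which there are at most $\exp(n)$, one per axiom) and at most $n^{O(d)}$ junta terms, its unary size---the sum of all coefficient magnitudes---is at most $\exp(n) \cdot n^{O(d)} \cdot \exp(n^{O(d)}) = \exp(n^{O(d)})$. This gives ingredient (b), and hence the corollary.

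The genuinely delicate point, as flagged, is the first step of ingredient (b): producing an algebraic refutation whose degree does not degrade with the potentially enormous size of the RevRes proof. The ``sum of falsified clauses is invariant'' translation is exactly the right tool, since every polynomial it produces has degree bounded by the maximum clause width in the proof---so the characterisation-theorem route (passing through an $\eopl$-formulation whose size, and hence the resulting Nullstellensatz degree, tracks the RevRes size) would \emph{not} work here. A secondary thing worth remembering is why the RevResT case must land in unary Nullstellensatz rather than Sherali--Adams: the ``with terminals'' restriction is precisely what guarantees that the left-over clauses in the final configuration are weakenings of input clauses, which is what lets us absorb the junta part into the Nullstellensatz multipliers. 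One harmless bookkeeping point: should the final configuration contain several copies of $\bot$, one first reduces to a single copy (e.g.\ by truncating at the first appearance of $\bot$), which suffices for the degree bound that drives the argument.
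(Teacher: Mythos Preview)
Your approach---feed a small Resolution proof and a small uSA (resp.\ uNS) proof into \cref{thm:intersection-formal}, using the appendix coefficient bounds to control the algebraic side---is exactly what the paper intends. The RevRes case is correct as written; even if the final configuration has $c>1$ copies of $\bot$, the invariant-sum identity $\sum_i w_i\overline{C_i} = c + \sum_k\overline{E_k}$ is already a valid integer Sherali--Adams refutation of degree $\le d$ (the surplus $c-1$ is absorbed into the junta), so the SA appendix theorem applies directly and your truncation remark is not even needed.

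The RevResT case has a real gap. With $c>1$ copies of $\bot$ in the final configuration, your folding yields $\sum_i q_i\overline{C_i} = c$, not $1$, which is not a Nullstellensatz identity; and your proposed fix of truncating at the first $\bot$ destroys precisely the ``with terminals'' guarantee---the leftover clauses at that intermediate configuration need not be weakenings of axioms, so the residual junta cannot be absorbed into the multipliers and you only recover an SA refutation, not NS. Your dismissal of the characterisation route is also mistaken: \cref{lem:revres-sopl} converts a width-$d$ RevResT proof of \emph{any} size $L$ into an $\eopl_L$-formulation of depth $d{+}1$, and the argument of \cref{lem:ppad-unary-ns} (which adapts verbatim to $\eopl$) then produces a uNS refutation of degree $O(d)$---the degree is governed by the formulation \emph{depth}, not its size $L$. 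Concretely, in that specific formulation each successor/predecessor value is determined by the $\le d{+}1$ variables of a single proof clause, so the source/sink indicators $S_v$ are computed in depth $O(d)$ regardless of $L$. This supplies the missing degree-$O(d)$ $\Z$-Nullstellensatz refutation, after which the NS appendix theorem and the intersection theorem finish the RevResT case exactly as you describe.
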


\bigskip\bigskip
\subsection*{Acknowledgements}

We thank Albert Atserias, Ilario Bonacina, Pritish Kamath, and David Steurer for discussions, and the anonymous reviewers for their suggestions that helped us improve the presentation of the paper. M.G., A.H., S.J., and G.M.\ were supported by the Swiss State Secretariat for Education, Research and Innovation (SERI) under contract number MB22.00026. S.J.\ did part of the work while being supported by the Quantum Systems Accelerator through DOE. W.P., R.R., and R.T. were supported by NSERC.

\bigskip\bigskip
\pagebreak

% -----------------------------------------------
\DeclareUrlCommand{\Doi}{\urlstyle{sf}}
\renewcommand{\path}[1]{\small\Doi{#1}}
\renewcommand{\url}[1]{\href{#1}{\small\Doi{#1}}}
\bibliographystyle{alphaurl}
\bibliography{tfnp-refs}

\end{document}